\theoremstyle{plain}
\newtheorem{theorem}{Theorem}[section]
\newtheorem*{theorem*}{Theorem}
\newtheorem{corollary}[theorem]{Corollary}
\newtheorem{lemma}[theorem]{Lemma}
\newtheorem{proposition}{Proposition}[section]
\theoremstyle{definition}
\renewcommand{\L}{\mathbf{L}}
\renewcommand{\vec}[1]{\mathbf{#1}}
\newcommand{\ul}[1]{\underline{#1}}
\renewcommand{\Re}[0]{\operatorname{Re}}
\renewcommand{\Im}[0]{\operatorname{Im}}
\newcommand{\amu}[0]{a_\mu^{\text{HVP,LO}}}
\newcommand{\MartinStable}[0]{Luscher:1985dn} %
\newcommand{\TMR}[0]{Bernecker:2011gh} %
\newcommand{\ShortPaper}[0]{Hansen:2019rbh} %
\newcommand{\Cstar}[0]{Lucini:2015hfa} %
\newcommand{\GraphBook}[0]{nakanishi1971graph} %
\newcommand{\citeFermi}[0]{Carey:2009zzb,Grange:2015fou,Flay:2016vuw,Hong:2018kqx} %
\newcommand{\citeJPARC}[0]{Shimomura:2015aza,Sato:2017sdn} %
\newcommand{\whitepaper}[0]{whitepaper:2020} %
\newcommand{\discrepexp}[0]{Bennett:2004pv,Bennett:2006fi} %
\newcommand{\discrepth}[0]{Jegerlehner:2009ry,Jegerlehner:2017lbd,Davier:2017zfy,Keshavarzi:2018mgv} %
\newcommand{\latticeHVP}[0]{Blum:2002ii,Burger:2013jya,Chakraborty:2014mwa,Chakraborty:2015ugp,
Blum:2015you,Chakraborty:2016mwy,Blum:2016xpd,DellaMorte:2017dyu,
Giusti:2017jof,Borsanyi:2017zdw,Giusti:2018mdh,Giusti:2018vrc,
Blum:2018mom,Davies:2019efs,Gerardin:2019rua,Borsanyi:2020mff,Gulpers:2020pnz} %
\newcommand{\discrepvalref}[0]{Keshavarzi:2018mgv,Keshavarzi:2019abf} %
\newcommand{\hLbLandNLO}[0]{Blum:2016lnc,Giusti:2018vrc,Asmussen:2018ovy} %
\newcommand{\sizeofFVC}[0]{Blum:2018mom,\whitepaper,Borsanyi:2020mff} %
\newcommand{\TMRaint}[0]{\TMR} %
\newcommand{\powerlawME}[0]{Huang:1957im,Luscher:1986pf,Lellouch:2000pv,Meyer:2011um} %
\newcommand{\specFVcorr}[0]{Lellouch:2000pv,Meyer:2011um} %
\newcommand{\corrOBandBP}[0]{Burger:2013jya,Chakraborty:2014mwa,Chakraborty:2015ugp,
Blum:2015you,Chakraborty:2016mwy,Blum:2016xpd,DellaMorte:2017dyu,
Giusti:2017jof,Borsanyi:2017zdw,Giusti:2018mdh,Giusti:2018vrc,
Blum:2018mom,Davies:2019efs,Gerardin:2019rua,Borsanyi:2020mff} %
\newcommand{\monopole}[0]{Brommel:2006ww}
\newcommand{\convergence}[0]{DellaMorte:2017dyu}
\newcommand{\HVPdisp}[0]{Jegerlehner:2017lbd,Davier:2017zfy,Keshavarzi:2018mgv,Davier:2019can} %
\newcommand{\ward}[0]{Ward:1950xp}
\newcommand{\BD}[0]{Bjorken:1965zz}
\newcommand{\BMW}[0]{Borsanyi:2020mff}
\newcommand{\aunc}[0]{4} %
\newcommand{\hvpuncper}[0]{60} %
\newcommand{\hvpunc}[0]{2.5} %
\newcommand{\discrepval}[0]{27(7)} %
\title{Finite-volume and thermal effects in the leading-HVP contribution to muonic $(g-2)$}
\author[a]{M.~T.~Hansen}
\author[b]{and A.~Patella}
\affiliation[a]{Theoretical Physics Department, CERN, 1211 Geneva 23, Switzerland}
\affiliation[b]{Institut f\"ur Physik und IRIS Adlershof, Humboldt-Universit\"at zu Berlin, \\ \hspace{100pt} Zum Gro{\ss}en Windkanal 6, D-12489 Berlin, Germany}
\emailAdd{maxwell.hansen@cern.ch}
\emailAdd{agostino.patella@physik.hu-berlin.de}
\preprint{CERN-TH-2020-053, HU-EP-20/08}
\abstract{The leading finite-volume and thermal effects, arising in numerical lattice QCD calculations of $\amu \equiv (g- 2)^{\text{HVP,LO}}_\mu/2$, are determined to all orders with respect to the interactions of a generic, relativistic effective field theory of pions. In contrast to earlier work \cite{\ShortPaper} based in the finite-volume Hamiltonian, the results presented here are derived by formally summing all Feynman diagrams contributing to the Euclidean electromagnetic-current two-point function, with any number of internal pion loops and interaction vertices. As was already found in Ref.~\cite{\ShortPaper}, the leading finite-volume corrections to $\amu$ scale as $\exp[- m L]$ where $m$ is the pion mass and $L$ is the length of the three periodic spatial directions. In this work we additionally control the two sub-leading exponentials, scaling as $\exp[- \sqrt{2} m L]$ and $\exp[- \sqrt{3} m L]$.
As with the leading term, the coefficient of these is given by the forward Compton amplitude of the pion, meaning that all details of the effective theory drop out of the final result. 
 Thermal effects are additionally considered, and found to be sub-percent-level for typical lattice calculations. All finite-volume corrections are presented both for $\amu$ and for each time slice of the two-point function, with the latter expected to be particularly useful in correcting small to intermediate current separations, for which the series of exponentials exhibits good convergence.
}
\begin{document} 
\maketitle
\flushbottom

\abovedisplayskip 11pt
\belowdisplayskip 11pt

\section{Introduction}
\label{sec:intro}

The anomalous magnetic moment of the muon, $(g-2)_\mu$, has become a central focus in the broader particle physics community, due to significant tension between the best experimental \cite{\discrepexp} and theoretical determinations \cite{\discrepth}. The $\sim 3$ to $4$ sigma discrepancy represents a real opportunity to discover new physics beyond the Standard Model (BSM), especially given  the incredibly clean determinations on both the experimental and theoretical sides, together with the quadratic sensitivity to BSM effects, $(g-2)_\mu^{\text{BSM}} \sim (m_\mu / \Lambda_{\text{BSM}})^2$, where $m_\mu$ is the muon mass and $\Lambda_{\text{BSM}}$ the scale of the putative new physics. 

Experiments are underway at both Fermilab \cite{\citeFermi} and JPARC \cite{\citeJPARC}, with an update expected some time this year, and at least a factor of 2 uncertainty reduction targeted in the coming years.
The theory community is committed to maximizing the impact of this update, by providing a Standard Model determination of comparable overall uncertainty. The effort on this side is also very advanced and is summarized in detail in a forthcoming theory white paper \cite{\whitepaper}.

On the theoretical side, the leading uncertainties in $a_\mu \equiv (g-2)_\mu/2$, arise from hadronic contributions, generated via the coupling of QCD fields to the muon-photon vertex. These break into three categories: the leading hadronic-vacuum-polarization (HVP) contribution, $\amu$, the leading hadronic-light-by-light and the sub-leading corrections to the HVP. Outstanding progress has been made in the determination of the latter two contributions such that these are well in line to reach the overall target uncertainty \cite{\hLbLandNLO}. Since both the hadronic light-by-light and the sub-leading HVP are suppressed relative to the leading HVP, the targeted relative uncertainties here are $\sim 10\%$ 
 and, despite the complicated nature of these quantities, well within reach; see again ref.~\cite{\whitepaper}. In this work we restrict attention to the $\amu$ contribution, for which sub-percent uncertainty is required to reach the overall $ (g-2)_\mu$ precision target. 

Numerical lattice QCD (LQCD) provides an ideal tool in the determination of $\amu$, and many leading collaborations have already presented very advanced calculations \cite{\latticeHVP}. The observable can be directly extracted from a Euclidean electromagnetic-current two-point function and is thus well-suited to high-precision lattice determinations.
To make progress in practice, a deep theoretical understanding of all uncertainties is crucial, with the dominant sources being discretization effects, scale-setting uncertainty, statistical uncertainty (especially for large separations of the vector currents as well as those arising from quark-disconnected diagrams) and, finally, uncertainties arising from the effects of working in a finite-volume spacetime, the focus of this article.

The approximate value of $\amu$ is $700 \times 10^{-10}$. It is instructive to compare this to two other values. First, the overall theoretical uncertainty of $(g - 2)_\mu$ is approximately $\aunc \times 10^{-10}$ of which $\hvpuncper \%$ ($\hvpunc \times 10^{-10}$) is due to the uncertainty of $\amu$, with the present highest-precision determination arising from a data-driven dispersive approach \cite{\HVPdisp}. Second, the current discrepancy between the best theoretical and experimental determinations has been reported, by some groups, to be as high as $\sim \discrepval \times 10^{-10}$ \cite{\discrepvalref}. As we will discuss in more detail below, and as is also described in refs.~\cite{\sizeofFVC}, the finite-volume effect on $\amu$ for a standard spatial extent of $L \sim 4/m$, with $m$ the physical mass of the pion, is itself $\sim 20 \times 10^{-10}$. Thus, this effect alone is commensurate with the entire discrepancy, and it is of crucial importance that finite-volume corrections be treated reliably. 

A possible estimator for the leading-order HVP contribution to $(g-2)_\mu$ is
given by the integral \cite{\TMRaint}
\begin{gather}
   \label{eq:stat:amu-def}
   \amu(T,L) =  \int_0^{\frac{T}{2}} dx_0 \ \mathcal{K}(x_0) \, G(x_0|T,L)
   \ ,
\end{gather}
written in terms of the finite-volume, zero-momentum two-point function
of the Euclidean electromagnetic current
\begin{gather}
   \label{eq:stat:G-def}
   G(x_0|T,L)
   =
   - \frac{1}{3} \sum_{k=1}^3
   \int_0^L d^3 x \ 
   \langle j_k(x) j_k(0) \rangle_{T,L}
   \ ,
\end{gather}
and an analytically known function, $\mathcal{K}(x_0)$, usually
referred to as the \textit{kernel}. 
The kernel depends on the muon mass and the
electromagnetic fine-structure constant and its explicit expression is given in eq.~\eqref{eq:stat:K-def} below. 
The two-point function, $G(x_0|T,L)$,
can be determined numerically using lattice QCD calculations. In this work, we are not
concerned with lattice-spacing effects, and we assume that the continuum
limit has already been performed. We have used notation to emphasize that the
calculation is performed in a finite box with size $T \times L^3$. 

The
leading-order HVP contribution  to $(g-2)_\mu$ is then simply given by the
infinite-volume limit ($T,L \to \infty$) of the estimator $\amu(T,L)$. In
order to reliably quantify  the systematic error generated by the
extrapolation of lattice data to the infinite volume, it is necessary
to develop theoretical understanding of the finite-volume correction to $\amu$,
which is simply defined as the difference
\begin{gather}
   \Delta a(T,L) = \amu(T,L) - \lim_{L,T \to \infty} \amu(T,L) \ .
   \label{eq:stat:DeltaA}
\end{gather}

Naturally, finite-volume corrections depend on the choice of boundary conditions.
Here we consider a toroidal geometry with periodic boundary conditions for
gluons and periodic or antiperiodic boundary conditions for quarks. Our analysis
can be easily generalized to the case of phase-periodic boundary conditions for
quarks (as described also in ref.~\cite{\ShortPaper}), and perhaps less easily to the case of non-periodic geometry in the
temporal direction (e.g.~open boundary conditions).

\begin{figure}
\begin{center}
\includegraphics[width=\textwidth]{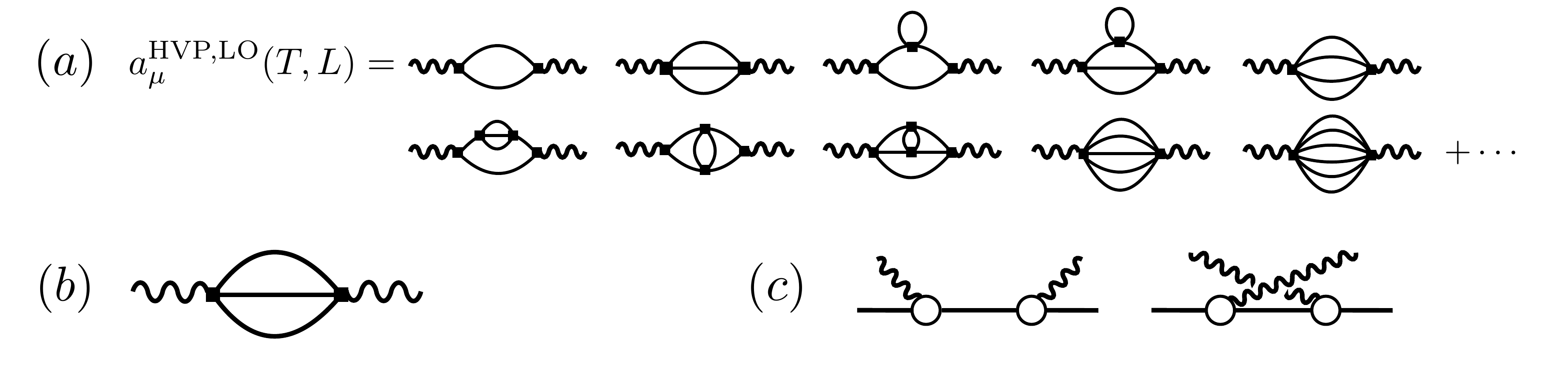}
\caption{(a) Examples of diagrams appearing in the all-orders expansion. (b) Sunset diagram that generates a finite-volume effect beyond the order we control.  (c) Single-pion exchange diagrams in the pion Compton scattering amplitude. Here the white circles denote the pion form factor, $F(k^2)$ sampled for space-like $k^2$. \label{fig:diags}}
\end{center}
\end{figure}

The results described in this paper are derived in an effective theory of pions
with generic local interactions in the isospin-symmetric limit, in the same
spirit as ref.~\cite{\MartinStable}, from which our analysis borrows the main
ingredients and ideas. In figure~\ref{fig:diags}(a) we schematically illustrate the infinite-series of Feynman diagrams contributing to $\amu(T,L)$. The interaction Lagrangian can be arbitrarily
complicated, and in order to make sense of the Feynman integrals, an ultraviolet
cutoff that preserves all relevant symmetries is assumed. However, the resulting
formulae and their proof are insensitive to these details. One could also
consider an effective theory that contains all stable hadrons. This would
introduce technical difficulties essentially related to the fact that one needs
to consider the constraints imposed by flavour conservation, as done for
instance in appendix A of ref.~\cite{\Cstar}. Including additional flavours would not, however, change the results
presented here, provided the quark masses are such that there is no other degree of freedom 
generating a finite-volume effect larger than our first neglected exponential. As explained in eq.~\eqref{eq:spec:Das} below, the leading neglected term falls (slightly) slower than $e^{- 2 m L}$ with increasing $L$, so that other states should be considered only if their mass is below $2m$. 
From a technical point of view, many similarities exist between this work and
the calculation of finite-volume effects to masses of stable hadrons, presented
in ref.~\cite{\MartinStable}. Here we focus on the differences. The observable
$\amu(T,L)$ is purely Euclidean, while in ref.~\cite{\MartinStable} one is interested
in the on-shell (therefore Minkowskian) hadron self-energy. This distinction makes
the present calculation technically simpler. On the other hand, this article includes an
analysis of finite-$T$ corrections and subleading finite-$L$
corrections, which have no counterpart in ref.~\cite{\MartinStable}.

The main results of this paper are summarized in the following points.
\begin{enumerate}
   \item Finite-volume corrections to $\amu(T,L)$, and to $G(x_0 \vert T, L)$ (for any fixed $x_0$), are exponentially suppressed and the
   exponential is controlled by the pion mass, $m$. The muon mass, entering $\amu(T,L)$ through the kernel, appears only in a polynomial function of $L$ and $T$ multiplying the exponentials. 
   \item Finite-volume corrections to $\amu(T,L)$ are given by the sum of finite-$L$ and
   finite-$T$ contributions, up to subleading terms. More precisely, the
   following asymptotic formula holds
   \begin{gather}
      \Delta a(T,L)
       =
      \Delta a(\infty,L)
      + \Delta a(T,\infty)
      + O\left(e^{-m\sqrt{L^2+T^2}}\right)
      \ , \label{eq:stat:separation} 
      \end{gather}
      with
      \begin{gather}
         \Delta a(\infty,L)   = O(e^{-mL}) \ , \ \ \ \ \ \ \ \ \ 
          \Delta a(T,\infty)   = O(e^{-mT}) \ ,
          \label{eq:stat:LTscaling}
   \end{gather}
   together with a similar decomposition for $G(x_0 \vert T, L)$, see eq.~\eqref{eq:stat:Gexp} below.
   \item The finite-$L$ corrections to both $\amu(\infty,L)$ and $G(x_0 \vert \infty, L)$ 
   can be written as an infinite series of
   exponentials of the type $e^{- m_\text{eff} L}$, with $m_\text{eff} \ge m$,
   multiplied by functions that grow at most polynomially in $L$. Intuitively,
   different exponentials are generated by hadron loops wrapping around the
   periodic spatial directions. A large fraction of this paper is devoted to
   making this intuitive picture precise. The finite-$L$ correction to the leading-order HVP can be
   decomposed as
   \begin{gather}
   \label{eq:spec:Das}
      \Delta a(\infty,L)
      =
      \Delta a_\text{s}(L) + O\left(e^{-\sqrt{2+\sqrt{3}}mL}\right)
      \ , 
   \end{gather}
   where $\Delta a_\text{s}(L)$ contains all diagrams with a single pion loop
   wrapping around the torus, and a generic number of topologically trivial
   loops. If the pion loop wraps $n_k$ times around the $k$-th spatial
   direction, the diagram contributes at order $e^{-|\vec{n}| m L}$. 
   
   \item Diagrams
   with at least two pion loops wrapping around the torus contribute to the
   neglected exponential in the above formula, where we observe that
   \begin{gather}
      \sqrt{2+\sqrt{3}} \simeq 1.93 \ .
   \end{gather}
   In fact, finite-$L$ contributions of order $e^{-\sqrt{2+\sqrt{3}}mL}$ can be
   found in the sunset diagram, figure~\ref{fig:diags}(b), by explicit calculation. At physical quark
   masses, heavier hadrons contribute with subleading exponentials. We obtain an
   explicit formula for $\Delta a_\text{s}(L)$, valid at the non-perturbative
   level, in terms of the Compton scattering amplitude of an off-shell spacelike
   photon against a pion in the forward limit. This formula is presented and
   discussed in section~\ref{sec:statement:analytical}, in particular in eqs.~\eqref{eq:stat:DeltaGs} and \eqref{eq:stat:DeltaIs}. 
   The result captures entirely contributions
   of order $e^{-mL}$, $e^{-\sqrt{2} mL}$ and $e^{-\sqrt{3} mL}$,
   corresponding to wrapping numbers with $\vec{n}^2=1,2,3$.
   \item The infinite series of $e^{-m_{\text{eff}} L}$-type exponentials (item 3), 
   in fact arises generically 
   whenever an observable has exponentially suppressed finite-volume effects. In such cases one must consider not only the form of the leading exponential, but also the convergence of the series. A striking example is given by $G(x_0 \vert \infty, L)$, evaluated at an arbitrarily large value of the separation coordinate:
   \begin{equation}
     G(x_0|\infty,L)
   =
    L^3    
\vert   \langle 0 \vert j_z(0) \vert E_0, L \rangle  \vert^2 e^{- E_0(L) x_0} +   O \left ( e^{- E_1(L) x_0} \right ) \,,
\label{eq:stat:specdecom}
   \end{equation}
   where $\vert E_k, L \rangle$ is the $k$th finite-volume, zero-momentum state, that overlaps $\langle 0 \vert j_z(0)$, and $E_k(L)$ is the corresponding energy. As has been shown in refs.~\cite{\powerlawME}, both $E_k(L)$ and $\vert   \langle 0 \vert j_z(0) \vert E_0, L \rangle  \vert^2$ receive power-like finite-volume corrections. Thus, for very large $x_0$, though the leading corrections scale as $e^{- m L}$, the convergence becomes arbitrarily poor, with the multi-exponential series reproducing the $1/L$-behavior.\footnote{Many of these observations are due to Harvey Meyer, private discussion. See also ref.~\cite{\convergence}.} The results presented in this work provide a framework to study the $x_0$-dependent convergence of the $e^{-m_{\text{eff}} L}$ series, and its implications for $\Delta a(T,L)$.
   \item In direct analogy to the finite-$L$ corrections, finite-$T$ corrections to both $\amu(T,\infty)$ and $G(x_0 \vert T, \infty)$ can be written as an infinite series of
   exponentials of the type $e^{- m_\text{eff} T}$ with $m_\text{eff} \ge m$,
   multiplied by functions that grow at most polynomially in $T$. In the case of $\amu(T,\infty)$,
   the analysis is complicated by the fact that $T$ appears also in the
   integration domain in eq.~\eqref{eq:stat:amu-def}. One finds %
   \begin{gather}
      \Delta a(T,\infty)
      =
      \Delta a_\text{t}(T) + O\left(e^{-\frac{3}{2}mT}\right)
      \ ,
      \label{eq:stat:Dat}
   \end{gather}
   where $\Delta a_\text{t}(T)$ is expressed in terms of 
   the infinite-volume two-point function
   as well as a function that is related to the forward Compton scattering amplitude
   of the pion by analytic continuation.
   This formula is presented and discussed
   in section~\ref{sec:statement:analytical}, in particular in eq.~\eqref{eq:stat:DeltaGt} and eqs.~\eqref{eq:stat:DeltaIt}-\eqref{eq:stat:DeltaIt-WP}.    %
\end{enumerate}

   Given a model or an experimental determination of the pion Compton amplitude, our result for $\Delta a_\text{s}(L)$ [eqs.~\eqref{eq:stat:DeltaGs} and \eqref{eq:stat:DeltaIs}] can be used to estimate
   finite-$L$ effects up to contributions of order $e^{-1.93 mL}$. 
   We discuss a possible modeling strategy in section \ref{sec:statement:finiteL} where we find that, for $mL=4$,
   finite-$L$ corrections lead to a $3\%$ reduction as compared to the infinite-volume value of $\amu$. 
   We additionally argue that
   finite-$L$ effects are dominated by the single-pion exchange in the Compton
   scattering amplitude, see figure~\ref{fig:diags}(c), which is completely described by the pion
   electromagnetic form factor in the space-like region. We see explicitly that
   the subleading exponential $e^{-\sqrt{2}mL}$ contributes as much as the leading $e^{-mL}$ when $mL=4$, likely due to the poor convergence of the finite-volume expansion for $G(x_0 \vert \infty, L)$ at large $x_0$. However, the
   relative contribution of higher exponentials
    drops rapidly, so that our strategy gives an infinite-volume estimate of $\amu$, with residual volume effects at the few per mille level.  %
   We also find that approximating the pion as a
   pointlike particle underestimates the finite-$L$ effects by about $30\%$ if
   $mL=4$.

   In realistic numerical calculations, the long-distance behaviour of the
   two-point function is dominated by noise. Different
   estimators for $\amu$ are then used, which often involve restricting the
   integral in eq.~\eqref{eq:stat:amu-def} to the interval $0 \le x_0 \le
   x_0^\text{cut}$, and reconstructing the long-distance behaviour of the
   two-point function with alternative techniques. 
   A leading approach is to independently determine the finite-volume energies and matrix elements, and use these to reconstruct the large $x_0$ behavior, along the lines of eq.~\eqref{eq:stat:specdecom} \cite{\TMR}. 
   Though this step can, in principle, be performed independently of the finite-volume correction, many groups advocate using the well-established formalism governing matrix elements and energies, to simultaneously estimate a finite-$L$ correction for the large $x_0$ region \cite{\specFVcorr}.

One can investigate the robustness of such procedures by using our expression for the finite-$L$ corrections to $G(x_0|T,L)$ at fixed $x_0$, eq.~\eqref{eq:stat:DeltaGs}. The expansion studied here and the spectral-decomposition-based method are complementary, as the series in $e^{- m_{\text{eff}}L}$ converges best for small $x_0$ while the sum over finite-volume states requires that the time coordinate is taken large. In fact, our full estimate for $\Delta a_\text{s}(L)$ shows remarkable consistency with the spectral method for $mL=4$ \cite{\sizeofFVC}, indicating that the relevant time slices are well described by both methods for this volume. As $mL$ is increased in future calculations, the $e^{- m_{\text{eff}}L}$ expansion will improve, whereas the spectral method will struggle with the need to extract more finite-volume matrix elements and energies. 

In a much shorter paper~\cite{Hansen:2019rbh} we have calculated the leading
$e^{-mL}$ contribution by means of completely different techniques, making use of the finite-volume Hamiltonian. The
Hamiltonian formalism and the spectral decomposition, with quantization along a spatial direction, allow one to bypass the
diagrammatic expansion, making the derivation dramatically more compact. One can
easily use the Hamiltonian formalism also to calculate the leading $e^{-mT}$
contribution. Unfortunately there is no obvious way to use the approach to prove the decomposition, eq.~\eqref{eq:stat:separation}, nor to obtain
the subleading exponentials $e^{-\sqrt{2}mL}$ and $e^{-\sqrt{3}mL}$. This
leads us to the lengthy derivation presented in this article.

The remainder of the manuscript is organized as as follows. In section~\ref{sec:statement} we summarize
and describe our formulae for finite-$L$ and finite-$T$ corrections to $G(x_0|T,L)$ and $\amu(T,L)$,
we discuss in detail strategies to evaluate them numerically, and we provide
estimates. The rest of the paper is devoted to the derivation of these formulae.
In section~\ref{sec:diag} we analyze the diagrammatic expansion of the integral
in eq.~\eqref{eq:stat:amu-def}, we separate the finite-volume corrections,
recognize that the infinite-volume limit is of the saddle-point type, provide
upper bounds for the asymptotic behaviour of each Feynman diagram, identify the
contributions to the leading and first subleading exponentials, and sum
these contributions in terms of proper vertices and dressed propagators. While
the exponential decay of the finite-volume effects is manifest in coordinate
space, this is not the case in momentum space. In section~\ref{sec:calculation}
we deform the contours of the momentum integrals in the formulae obtained in the
preceding section to make the exponential decay manifest. This procedure leads to
our final formulae, written in terms of the forward Compton scattering amplitude
of the pion, and a specific analytic continuation of the same. Several theorems need to be proven on the way. Most of them are only
stated in the main text, with the proof relegated to
various appendices.

\section{Statement of the results}
\label{sec:statement}

\subsection{Analytical formulae}
\label{sec:statement:analytical}

If $x_0$ is kept constant while $T$ and $L$ are sent to infinity, the
two-point function has the following expansion
\begin{align}
   \Delta G(x_0|T,L)
   & = \Delta G(x_0|\infty,L) + \Delta G(x_0|T,\infty) 
   + O\left(e^{-m\sqrt{L^2+T^2}}\right)
   \label{eq:stat:Gexp}
   \,, 
\end{align}
where
\begin{align}
 \Delta  G(x_0|T,L)
   & =
     G(x_0|T,L) - G(x_0|\infty) \,,
     \label{eq:stat:DGdef}
\end{align}
with the shorthand $G(x_0 \vert \infty) = \lim_{T,L \to \infty}   G(x_0|T,L)$.

The finite-$L$ correction to the two-point function is given by
\begin{align}
& \Delta G(x_0|\infty,L)   =  \Delta G_\text{s}(x_0|L) + O\left(e^{-\sqrt{2+\sqrt{3}}mL}\right)  \,,  \label{eq:stat:DGsexp} \\[3pt]
 & \Delta G_\text{s}(x_0|L)
   =
   - \sum_{ \vec{n} \neq \vec{0} }
   \int \frac{d p_3}{2\pi}
   \frac{
   e^{-|\vec{n}| L \sqrt{m^2+p_3^2}}
   }{
   24 \pi |\vec{n}| L
   }
   \int \frac{d k_3}{2\pi} \cos (k_3x_0)
   \Re T(-k_3^2,-p_3k_3)
   \label{eq:stat:DeltaGs}
   \ ,
\end{align}
where $\vec{n}$ takes value in $\mathbb{Z}^3$, and $T(k^2, p \cdot k)$ is the forward Compton
scattering amplitude of the pion, summed over the Lorentz indices of the
off-shell photon and over the charge of the pion \begin{gather}
   T(k_\mu k^\mu, p_\mu k^\mu)
   =
   i \lim_{\vec{p}' \to \vec{p}} \sum_{q=0,\pm 1} \int d^4 x \ e^{ik_\mu x^\mu}  \langle \vec{p}', q | \mathrm{T} \mathcal{J}_\rho(x) \mathcal{J}^\rho(0) | \vec{p}, q \rangle
   \ .
   \label{eq:stat:Compton}
\end{gather}
In this formula, $\mathcal{J}_\rho(x)$ is the Minkowskian electromagnetic
current in the Heisenberg picture, $| \vec{p}, q \rangle$ is the state of a pion
with momentum $\vec{p}$ and charge $q$, and $p_0 =
\sqrt{m^2+\vec{p}^2}$, normalized using the standard relativistic convention. Note that the $\textbf n$th term in
eq.~\eqref{eq:stat:DeltaGs} is of order $e^{-|\vec{n}| m  L}$. Therefore our
analysis captures the exponentials $e^{-mL}$, $e^{-\sqrt{2} mL}$ and
$e^{-\sqrt{3} mL}$, corresponding to the terms with $\vec{n}^2=1,2,3$, while the
terms with $\vec{n}^2 \ge 4$ are at least of order $e^{-2mL}$ and are hence
subleading with respect to the $e^{-\sqrt{2+\sqrt{3}}mL}$-term, already
neglected in eq.~\eqref{eq:stat:Gexp}.

The finite-$T$ correction to the two-point function is given by
\begin{align}
 \Delta G(x_0|T, \infty) & =  \Delta G_\text{t}(x_0|T) + O\left(e^{-2mT}\right)  \,,
 \label{eq:stat:DGtexp}
 \end{align}
 where
 \begin{align}
  & \Delta G_\text{t}(x_0|T)
    = - \frac{1}{3} \int \! \frac{d^3 p}{(2\pi)^3} \frac{ e^{-T \sqrt{m^2+\vec{p}^2}} }{ 2 \sqrt{m^2+\vec{p}^2} } \hat{S}(x_0,\vec{p}^2)
   \label{eq:stat:DeltaGt}
   \ , \\
 &  \hat{S}(x_0,\vec{p}^2)
   =
   \lim_{\vec{p}' \to \vec{p}} \sum_{\rho=1}^3 \sum_{q=0,\pm 1}   \int d^3x \, \langle \vec{p}',q | \text{T} j_\rho(x) j_\rho(0) | \vec{p},q \rangle
   \ ,
   \label{eq:stat:Sdef}
\end{align}
and where $j_\rho(x)$ is the Euclidean electromagnetic current in the Heisenberg
picture. 
We will see that the Fourier transform of $\hat{S}(x_0,\vec{p}^2)$ with
respect to $x_0$ is related to the forward Compton scattering amplitude by
analytic continuation. It is evident that $\Delta G_\text{t}(x_0|T)$, for any fixed $x_0$, is of order
$e^{-mT}$.

As discussed in the Introduction, analogs of eqs.~\eqref{eq:stat:Gexp}, \eqref{eq:stat:DGsexp} and \eqref{eq:stat:DGtexp} hold for $\Delta a(T,L)$ as well. [See eqs.~\eqref{eq:stat:separation}, \eqref{eq:spec:Das} and \eqref{eq:stat:Dat}.]
Note that, for $\Delta a(T,L)$, a sub-leading correction of $e^{-\frac{3}{2}mT}$ appears in \eqref{eq:stat:DGtexp}, in place of the
corresponding exponential, $e^{-2mT}$, in eq.~\eqref{eq:stat:DGtexp}. This
is a consequence of the fact that values of $x_0$ that scale proportionally to
$T$ contribute to the integral defining $\amu(T,L)$.

 It turns out
that the finite-$L$ corrections to the integral, $\amu(T,L)$, are given by the
finite-volume corrections of $G(x_0|L,T)$ integrated with the kernel \begin{gather}
   \label{eq:stat:DeltaIs}
   \Delta a_\text{s}(L) = \int_0^\infty dx_0 \  \mathcal{K}(x_0) \, \Delta G_\text{s}(x_0|L)
   \ ,
\end{gather}
with
\begin{gather}
   \label{eq:stat:K-def}
   \mathcal{K}(x_0)
   =
   \frac{2 \alpha^2}{ m_\mu^2} \left[
   t^2 -2 \pi  t + 8 \gamma_E - 2 +\frac{4}{t^2}+8 \log t
   - \frac{8 K_1(2 t)}{t}
   - 8 \int_0^\infty dv \frac{e^{- t \sqrt{ v^2+4} }}{ (v^2+4 )^{3/2}}
   \right]_{t = m_\mu x_0}
   \ ,
\end{gather}
where $m_\mu$ is the muon mass, $\alpha$ is the fine-structure constant and
$K_1(x)$ is a Bessel function. This result is not obvious and relies on the polynomial scaling of $\mathcal K(x_0)$ for large $x_0$.

The finite-$T$ corrections are given by three
different pieces
\begin{gather}
   \Delta a_\text{t}(T) = \Delta a_\text{t}^\text{OB}(T) + \Delta a_\text{t}^\text{BP}(T) + \Delta a_\text{t}^\text{WP}(T)
   \label{eq:stat:DeltaIt}
   \ ,
\end{gather}
where we have introduced the out-of-the-box (OB) contribution
\begin{gather}
   \Delta a_\text{t}^\text{OB}(T)
   =
   - \int_{\frac{T}{2}}^\infty dx_0 \ \mathcal{K}(x_0) G(x_0|\infty)
   \ ,
   \label{eq:stat:DeltaIt-OB}
\end{gather}
the backpropagating-pion (BP) contribution
\begin{gather}
   \Delta a_\text{t}^\text{BP}(T)
   =
   \int_0^{\frac{T}{2}} dx_0 \ \mathcal{K}(x_0) G(T-x_0|\infty)
   \label{eq:stat:DeltaIt-BP}
   \ ,
\end{gather}
and the wrapped-pion (WP) contribution
\begin{gather}
   \Delta a_\text{t}^\text{WP}(T)
   =
   \int_0^{\frac{T}{2}} dx_0 \ \mathcal{K}(x_0) \, \Delta G_\text{t}(x_0|T)
   \ .
   \label{eq:stat:DeltaIt-WP}
\end{gather}
Each of these terms is of order $e^{-mT}$. It follows that the finite-$T$
corrections can be neglected in eq.~\eqref{eq:stat:separation} in the commonly-used
setup $T=2L$.

It is interesting to compare our calculation to others available in
the literature. The formulae for the finite-volume corrections obtained in this
paper are valid in any theory under very general assumptions:~particles with
mass $m$ and with the quantum numbers of the pions exist, and no particle or
bound state exists with energy below $2m$ (an assumption that can be easily
relaxed). Besides full QCD, these formulae apply to two other relevant
frameworks,  for non-interacting pions and in chiral perturbation theory ($\chi$PT).
In the case of the free particle, finite-$L$ corrections have been calculated
in ref.~\cite{DellaMorte:2017dyu} (Appendix C). We note that our
neglected terms, $O(e^{-\sqrt{2+\sqrt{3}}mL})$, come only from contributions to
the two-point function with more than one pion loop,\footnote{We
stress that higher-loop diagrams contribute also to the leading exponentials.} and are
therefore absent in the free theory. If one uses the free-pion
Compton scattering amplitude, our eq.~\eqref{eq:stat:DeltaGs} describes the
free-particle finite-$L$ corrections entirely and coincides with eq.~(C.3)
in ref.~\cite{DellaMorte:2017dyu}. We have also checked this correspondence numerically.  

In the case of $\chi$PT, the finite-$L$ corrections to the two-point
function have been calculated to NLO in ref.~\cite{Aubin:2015rzx} and to N$^2$LO in refs.~\cite{Bijnens:2017esv,Aubin:2019usy}. If one uses the NLO
Compton scattering amplitude in our eq.~\eqref{eq:stat:DeltaGs}, one should
obtain the result of refs.~\cite{Bijnens:2017esv,Aubin:2019usy}, up to $O(e^{-\sqrt{2+\sqrt{3}}mL})$
terms. In fact, one could use the difference between the two approaches to
estimate those terms. However we point out that $\chi$PT in this context should
be used with care, since the integral~\eqref{eq:stat:DeltaIs} includes small
values of $x_0$, for which $\chi$PT breaks down.  At sufficiently high order, the
small-$x_0$ suppression provided by the kernel is not enough to balance the
increasing singularity of the two-point function, and the integral becomes
divergent.\footnote{For instance, it is straightforward to check that the
Compton scattering amplitude $\Re T(-k_3^2,-p_3k_3)$ at N$^3$LO contains terms
given by $k_3^6$ times a low-energy constant. Direct calculation shows the
contribution of this term to $\Delta a_\text{s}(L)$ is divergent.} Our formula does
not suffer of this problem since the behaviour of the fully non-perturbative
Compton scattering amplitude is such that the integral~\eqref{eq:stat:DeltaIs}
is finite.

\subsection{Estimate of finite-\texorpdfstring{$L$}{L} corrections}
\label{sec:statement:finiteL}

\subsubsection{Preliminaries}

Up to terms of order $e^{-\sqrt{2+\sqrt{3}}mL}$, the finite-$L$ corrections to
$\amu(T,L)$, encoded by $\Delta a_{\text{s}}(L)$, can be rewritten in a form that is more suitable for
numerical evaluation, assuming that an estimate of the forward Compton
scattering amplitude is available.

As detailed in appendix~\ref{sec:CompDecom}, general principles allow one to represent the forward Compton scattering amplitude
of a pion against a space-like photon as
\begin{align}
   T(-k_3^2,-p_3k_3) & =
   T_\text{pole}(-k_3^2,-p_3k_3)
   + T_\text{reg}(-k_3^2,-p_3k_3) \ ,
   \nonumber \\  & = %
   \frac{2(4m^2 + k_3^2) \, F(-k_3^2)^2}{ k_3^2 + 2k_3p_3 - i\epsilon }
   + \frac{2(4m^2 + k_3^2) \, F(-k_3^2)^2}{ k_3^2 - 2k_3p_3 - i\epsilon }
   + T_\text{reg}(-k_3^2,-p_3k_3)
   \label{eq:estL:Compton}
   \ ,
\end{align}
where $F(k_\mu k^\mu)$ is the electromagnetic form factor of the pion, i.e.
\begin{gather}
   \langle \vec{p}, q | \mathcal{J}_\rho(0) | \vec{p}',q \rangle = q \, (p+p')_\rho \ F \big( (p'-p)_\mu (p'-p)^\mu \big)
   \ ;
   \label{eq:pionFF}
\end{gather}
see, again, figure~\ref{fig:diags}(c).
The pole due to the single-pion exchange is made manifest in
eq.~\eqref{eq:estL:Compton} in a Lorentz-invariant fashion. The reminder,
$T_\text{reg}$, can be seen to be analytic in $p_3$ and $k_3$ as long as $p_3^2
< 3m^2$. The behaviour of the Compton scattering amplitude around $k_3=0$ is
dictated by the Ward identities,\begin{gather}
   F(0) = 1 \ , \qquad T_\text{reg}(0,0) = 8 \ .
\end{gather}
The behaviour as $|k_3| \to \infty$ is dictated by operator product
expansion, which yields $T(-k_3^2,-p_3k_3) \propto k_3^{-2}$ up to logarithms.
On the other hand, one has for the form factor~\cite{Farrar:1979aw}
\begin{gather}
   F(-k_3^2) \propto k_3^{-2}  \qquad \text{for} \qquad \vert k_3 \vert \to \infty \ ,
\end{gather}
which, together with eq.~\eqref{eq:estL:Compton}, also implies
implies that
\begin{gather}
   T_\text{reg}(-k_3^2,-p_3k_3) \propto k_3^{-2}  \qquad \text{for} \qquad \vert k_3 \vert \to \infty  \ ,
\end{gather}
always up to logarithms.

\subsubsection{$\Delta a_s(L)$}

By substituting the above decomposition of the Compton scattering amplitude into
eq.~\eqref{eq:stat:DeltaGs} and then into eq.~\eqref{eq:stat:DeltaIs}, and by
using the following representation for the kernel~\cite{DellaMorte:2017dyu},
\begin{gather}
   \mathcal{K}(x_0)
   =
   \frac{8 \alpha^2}{ m_\mu^2}
   \int_0^\infty \frac{d\omega}{\omega^2} \ g(\omega)
   \left[ \omega^2t^2 - 4 \sin^2 \left( \frac{\omega t}{2} \right) \right]_{t=m_\mu x_0}
   \ , \\
   g(\omega) = \frac{16}{\sqrt{\omega^2+4} \left( \sqrt{\omega^2+4} + \sqrt{\omega^2} \right)^4}
   \label{eq:estL:g}
   \ ,
\end{gather}
one obtains, after some lengthy but straightforward algebra,
\begin{gather}
   \label{eq:estL:first}
   \Delta a_\text{s}(L)
   =
   \sum_{ \vec{n} \neq \vec{0} }
   \frac{\alpha^2}{12 \pi n L}
   \left\{
   \mathcal{T}'( 0 | nL )
   - \frac{4}{m_\mu} \int_{0}^\infty dk_3
   \frac{ \mathcal{T}(k_3^2 | nL ) - \mathcal{T}(0 | nL ) }{ k_3^2 }
   g\left( \frac{k_3}{m_\mu} \right)
   \right\}
   \ ,
\end{gather}
where $\mathcal{T}'( k_3^2 | nL ) = \partial_{k_3^2}  \mathcal{T}(k_3^2|nL) $, with the definitions
\begin{gather}
   \label{eq:estL:Tcal}
   \mathcal{T}(k_3^2|nL)
   =
   \mathcal{T}_\text{pole}(k_3^2|nL) + \mathcal{T}_\text{reg}(k_3^2|nL)
   \ , \\[5pt]
   \label{eq:estL:Tcal-pole}
   \mathcal{T}_\text{pole}(k_3^2|nL)
   =
   2(4m^2 + k_3^2) \, F(-k_3^2)^2 \, \zeta(k_3^2|nL)
   \ , \\
   \label{eq:estL:Tcal-reg}
   \mathcal{T}_\text{reg}(k_3^2 | nL )
   =
   \int%
   \frac{d p_3}{2\pi}
   e^{-nL \sqrt{m^2+p_3^2}}
   \Re T_\text{reg}(-k_3^2,-p_3k_3)
   \,,
\end{gather}
discussed in detail in appendix~\ref{sec:CompDecom}.
The function $ \zeta(k_3^2|nL)$ can be represented in a number of ways, e.g.
\begin{gather}
   \label{eq:estL:zeta}
 \hspace{-10pt}  \zeta(k_3^2|nL)
   =
   \int \frac{d p_3}{2\pi}
   \left[
   \frac{ e^{-nL Z_+} }{ Z_+ }
   \frac{
   \sinh \left( nL Z_- \right)
   }{
   2 Z_-
   }
   \right]_{
   Z_\pm = \sqrt{ \frac{1}{2} \left(m^2 + p_3^2 + \frac{k_3^2}{4}\right) \pm \frac{1}{2} \sqrt{ \left(m^2 + p_3^2 + \frac{k_3^2}{4}\right)^2 - p_3^2k_3^2 } }
   }
   \ ,
\end{gather}
which we find numerically stable [see also eq.~\eqref{eq:zetadef}]. Note that this representation allows one to take
the $\epsilon \to 0^+$ limit in eq.~\eqref{eq:estL:Compton} analytically.
A more compact formula can also be obtained integrating by parts twice
\begin{gather}
   \label{eq:estL:second}
   \Delta a_\text{s}(L)
   =
   - \sum_{ \vec{n} \neq \vec{0} }
   \frac{\alpha^2 m_\mu}{3 \pi n L}
   \int_{0}^\infty dk_3 \,
   \hat{g}\left( \frac{k_3}{m_\mu} \right) \,
   \mathcal{T}''(k_3^2 | nL )
   \ ,
\end{gather}
where the following auxiliary function has been introduced
\begin{align}
   \label{eq:estL:ghat}
   \hat{g}(\omega) & = \omega \int_{\omega^2}^\infty dy \int_{y}^\infty dz \frac{g(z^{1/2})}{ z^{3/2} } \,, \\
  &  =
\frac{\omega}{4}  \left[      \omega^3 \sqrt{\omega^2+4}-\omega^4 +2 \omega^2  -4 \omega \sqrt{\omega^2+4}   -8 \omega^2 \log \left(\frac{2 \omega}{\sqrt{\omega^2+4}+\omega}\right)+2\right]
   \ .
\end{align}
The function $\hat{g}(\omega)$ is positive, has a maximum at $\omega \simeq
0.37$, vanishes linearly as $\omega \to 0$, and has a long tail at infinity,
vanishing proportionally to $\omega^{-3}$. It turns out that only
values of $k_3$ smaller than a few times the muon mass contribute significantly to the
integral in eq.~\eqref{eq:estL:second}. In practice we find
eq.~\eqref{eq:estL:first} to be numerically more stable for the pole
contribution, while eq.~\eqref{eq:estL:second} is more convenient for the
regular part.

\begin{table}
\begin{center}
$-100 \times \Delta a_\text{s}(L) / a_\mu^\text{HVP,LO}$ \\[8pt]
\begin{tabular}{c||c|c|c}
   $mL$ & pole, $F(k^2)=1$ & pole, $F(k^2)=\frac{1}{1-k^2/M^2}$ & regular, NLO $\chi$PT \\
   \hline
   4 & 2.12 & 3.17 &    0.0242 \\
   5 & 1.05 & 1.42 &    0.00576 \\
   6 & 0.491 & 0.630 &  0.00151 \\
   7 & 0.223 & 0.274 &  0.000423 \\
   8 & 0.0986 & 0.118 & 0.000124
\end{tabular}
\end{center}
\caption{Numerical estimates of the pole and regular contributions to $\Delta a_\text{s}(L)$, normalized to $\amu=700 \times 10^{-10}$ and summed over the full series of $e^{- \vert \vec n \vert m L}$ arising from single-pion wrappings. For the pole contribution we take two models of the spacelike pion form factor, $F(k^2)$, and for the regular contribution we take the $\chi$PT expression given in eq.~\eqref{eq:stat:chpt}. The following inputs are used: $m_\mu/m = 106/137$, $m/M = 137/727$, $f_\pi/m=132/137$, $\alpha = 1/137$. We stress that our reference value of $\amu$ does not enter the determination of $\Delta a_{\text{s}}(L)$ and thus multiplying out $700 \times 10^{-10}$ gives a prediction based solely on the four inputs listed.\label{tab:summary}}
\end{table}

In table~\ref{tab:summary} we report numerical calculations of the pole and
regular part of $-100 \times \Delta a_\text{s}(L) / a_\mu^\text{HVP,LO}$, for
physical values of the pion and muon mass, and various values of $L$. Two models
have been used for the electromagnetic pion form factor, the point-like
limit, $F(k^2)=1$, and the so-called \textit{monopole} model \begin{gather}
   F(k^2) = \frac{1}{1 - {k^2}/{M^2}} \,,
\end{gather}
which is argued in ref.~\cite{\monopole} to describe quite well both the experimental and lattice data in
the space-like region $k^2<0$, with $M \simeq 727 \text{ MeV}$. The regular
part has been calculated using the NLO $\chi$PT result (we do not report the
calculation here)
\begin{gather}
   T_\text{reg}(k^2,pk)
   =
   8
   + c k^2
   - \frac{21m^2-16k^2}{18 \pi^2 f_\pi^2}
   + \frac{7m^2-4k^2}{6 \pi^2 f_\pi^2}
   \left[
   \sigma \coth^{-1} \sigma
   \right]_{\sigma = \sqrt{1 - \frac{4m^2}{k^2}}}
   + \dots
   \ ,   \label{eq:stat:chpt}
\end{gather}
where the convention $f_\pi \simeq 132 \text{ MeV}$ has been used, and $c$ is a %
low-energy constant (related to $\ell_5$ and $\ell_6$ in the notation
of ref.~\cite{Gasser:1983yg}). Note that %
$c$ drops out in
eq.~\eqref{eq:estL:second} because of the second derivative with respect to
$k^2$. 
The pole contribution turns out to be greatly enhanced with respect to
the regular contribution, which is perhaps not surprising.

In order to make these numerical predictions
 solid, one would need to assess the systematic errors due to the
chosen models, which is outside of the scope of this paper. Intuitively, the
dominant contribution comes from a single pion loop wrapping around the torus.
The structure of the pion matters quantitatively; for example at $mL=4$ we see that
the point-like approximation accounts only for $\sim 2/3$ of the full finite-$L$
corrections (table \ref{tab:summary}). At larger values, the point-like
approximation is increasingly more accurate. The summation indices $n_{k=1,2,3}$
in eqs.~\eqref{eq:estL:first} and~\eqref{eq:estL:second} can be interpreted as
the wrapping numbers of the pion loop around the $k$th direction of the spatial
torus.

\begin{table}
\begin{center}
\addtolength{\leftskip} {-2cm} %
\addtolength{\rightskip}{-2cm}
$-100 \times \Delta a_\text{s}(L) / a_\mu^\text{HVP,LO}$ \\[8pt]
\begin{tabular}{c||c|c|c|c|c|c|c|c||c}
   $mL$ &
   $|\vec{n}|=1$ & $\sqrt{2}$ & $\sqrt{3}$ &
   $2$ & $\sqrt{5}$ & $\sqrt{6}$ & $2\sqrt{2}$ & $3$ &
   $\sum_{\vec{n}}$ \\
   \hline
   4 &
   1.26 & 1.16 & 0.317 &
   0.104 & 0.193 & 0.0944 & 0.0128 & 0.0174 &
   3.17 \\
   5 &
   0.852 & 0.428 & 0.0813 &
   0.0199 & 0.0287 & 0.0112 & 0.00102 & 0.00117 &
   1.42 \\
   6 &
   0.461 & 0.141 & 0.0189 &
   0.00349 & 0.00394 & 0.00124 & 0.0000764 & 0.0000735 &
   0.630 \\
   7 &
   0.226 & 0.0433 & 0.00417 &
   0.000582 & 0.000515 & 0.000130 & 5.46 $\times 10^{-6}$ & 4.41 $\times 10^{-6}$ &
   0.274 \\
   8 &
   0.104 & 0.0128 & 0.000883 &
   0.0000936 & 0.0000652 & 0.0000132 & 3.79 $\times 10^{-7}$ & 2.57 $\times 10^{-7}$ &
   0.118
\end{tabular}
\end{center}
\caption{Numerical estimates of the pole contributions to $\Delta a_\text{s}(L)$, normalized to $\amu=700 \times 10^{-10}$ for various fixed values of $\vert \vec n \vert$, corresponding to contributions scaling as $e^{- \vert \vec n \vert m L}$. Here we use the monopole model for the spacelike pion form factor, with input parameters as in 
table \ref{tab:summary}.\label{tab:pole-various-n}}
\end{table}

In table~\ref{tab:pole-various-n} we show the breakdown of the pole
contribution to the finite-$L$ corrections for individual fixed values of $|\vec{n}|$. At
$mL=4$, pions wrapping more than once account for more than half of the
finite-$L$ corrections. As already noted in ref.~\cite{\convergence}, terms with
$|\vec{n}|=1$, which generate the leading $e^{-mL}$ exponentials, are
insufficient to capture the finite-$L$ corrections at $mL \lesssim 6$. Terms
with $|\vec{n}| \ge 2$ are subleading with respect to the neglected
$O\left(e^{-\sqrt{2+\sqrt{3}}mL}\right)$ terms at asymptotically large $L$.
However, if the single-pion pole produces an enhancement of prefactors, then it is well motivated to include those terms in the estimate of the finite-$L$ corrections at
moderate values of $L$, as has been done in table %
\ref{tab:summary}.

Here we also comment that these results have already been compared to a full lattice calculation as presented by the BMW collaboration in ref.~\cite{\BMW}. There the authors perform a dedicated analysis of finite-$L$ effects for two volumes $m L_{\text{big}} = 7.35$ and $m L_{\text{ref}} = 4.29$. They compare their results to various estimates including our predictions, in the form
\begin{gather}
[- \Delta a_{\text{s}}^{\text{pole}}(L_{\text{big}} )]  \times 10^{10} = 1.4 \,, \\
[\Delta a_{\text{s}}^{\text{pole}}(L_{\text{big}} )   - \Delta a_{\text{s}}^{\text{pole}}(L_{\text{ref}} )  ] \times 10^{10} = 16.3 \,,
\end{gather}
where we have confirmed their numerical estimates here, in both cases summed over the infinite series of single-pion wraps. The first result matches NNLO $\chi$PT to the digits reported and the second is consistent with the dedicated lattice study from BMW, leading to a reported value of 18.1(2.0)(1.4), as well as with alternative estimates, as discussed in ref.~\cite{\BMW}.

Our numerical results also show excellent consistency with approaches based in models (or direct lattice calculations) of the finite-volume matrix elements and energies entering the spectral decomposition of $G(x_0 \vert \infty, L)$. These give a reasonable approximation of $G(x_0 \vert \infty, L)$, especially for moderate to large current separations, and combining this with a corresponding description of $G(x_0 \vert \infty)$ allows one to estimate $\Delta a(\infty, L)$. 
To compare to values quoted in the literature, we sum the second and third columns of table~\ref{tab:summary} and multiply out the reference value, $\amu$, to reach the following estimates for $\{m L, - \Delta a(\infty, L) \times 10^{10}\}:$
 \[ \Big  \{ \{4, \ 22.4(3.1) \},     \ \          \{5, \ 10.0(0.4) \},    \ \          \{6, \ 4.42(0.06) \},    \ \          \{7, \ 1.924(0.009)  \},    \ \          \{8, \ 0.826 (0.001)   \} \Big \} \,, \]
where we have included an incomplete uncertainty estimate, determined by summing the pole contribution over the modes that we do not fully control, i.e.~from $\vert \textbf n \vert = 2$ to $\infty$. These numbers are consistent with values reported in refs.~\cite{Giusti:2018mdh,Gerardin:2019rua}. 
For example, from table 7 of ref.~\cite{Gerardin:2019rua}, one can infer that the Mainz group estimates a correction is in the range 21.0--22.6 for the E250 CLS ensemble with parameters $m=130(1) \, {\text{MeV}}$, $mL=4.1$. Similarly, in a private correspondence, the authors of ref.~\cite{Giusti:2018mdh} report an estimate of 24.0(1.0) for $mL=4$ at physical masses using the model described in that reference. This value is also plotted in figure 19 of ref.~\cite{Giusti:2018mdh}. 
We find the same clear consistency with preliminary estimates from the RBC/UKQCD collaboration as presented, for example, in ref.~\cite{Lehner2019}.


\subsubsection{$\Delta G_\text{s}(x_0|L)$}

In realistic numerical calculations, the long-distance behavior of the
electromagnetic-current two-point function is dominated by noise. Different
estimators for $\amu$ are then used, which often involve restricting the
integral in eq.~\eqref{eq:stat:amu-def} to the interval $0 \le x_0 \le
x_0^\text{cut}$ and reconstructing the long-distance behaviour of the two-point
function with different techniques. In these approaches it is useful to
understand which region of the $x_0$ integral contributes the most to the
finite-$L$ effects. This is done by calculating the finite-$L$ corrections to the
two-point function in coordinate space. In the following analysis we focus only
on the pole contribution and set $T_\text{reg}=0$, which yields
\begin{equation}
   \Delta G_\text{s}^\text{pole}(x_0|L)
   =
   - \sum_{ \vec{n} \neq \vec{0} }
   \int \frac{d p_3}{2\pi}
   \frac{
   e^{-|\vec{n}| L \sqrt{m^2+p_3^2}}
   }{
   6 \pi |\vec{n}| L
   }
   \Re
   \int \frac{d k_3}{2\pi} e^{i k_3 x_0} 
   \frac{(4m^2 + k_3^2) \, F(-k_3^2)^2}{ k_3^2 + 2k_3p_3 - i\epsilon }
   \label{eq:stat:DeltaGs-pole-1}
   \ .
\end{equation}
The form factor $F(k^2)$ extends to an analytic function on the whole complex
plane except on a cut on the real axis located at $k^2>(2m)^2$. Therefore, the
integral in $k_3$ can be shifted in the complex plane to $\mathbb{R}+i\mu$ where
$\mu$ is any mass with $0 < \mu < 2m$. In doing so, the pole at $k_3 = - p_3 +
\sqrt{p_3^2 + i \epsilon}$ is encircled. After some lengthy algebra one obtains the
representation
\begin{gather}
   \Delta G_\text{s}^\text{pole}(x_0|L)
   = \nonumber \\
    \sum_{ \vec{n} \neq \vec{0} }
   \frac{1}{ 6 \pi |\vec{n}| L }
   \int \frac{d p_3}{2\pi}
   \Bigg\{
   \frac{ (m^2 + p_3^2) \, F(-4 p_3^2)^2
   e^{-|\vec{n}| L \sqrt{m^2+p_3^2}} - m^2 e^{- |\vec{n}| L m}
   }{p_3}
   \sin ( 2 p_3 |x_0| )
   \nonumber \\ \hspace{3.5cm}
   - \Re
   \int_{\mathbb{R}+i\mu} \frac{d k_3}{2\pi} e^{i k_3 |x_0|} 
   \frac{(4m^2 + k_3^2) \, F(-k_3^2)^2 e^{-|\vec{n}| L \sqrt{m^2+p_3^2}}}{ k_3^2 - 4p_3^2 }
   \Bigg\}
   \ ,
   \label{eq:stat:DeltaGs-pole-2}
\end{gather}
which is numerically stable for small and moderate values of $x_0$. However at
large values of $x_0$ the first term in the curly brackets become rapidly
oscillating. In fact, the above representation hides the fact that $\Delta
G_\text{s}^\text{pole}(x_0|L)$ decays exponentially at large $x_0$. This problem
can be cured by representing the sine as imaginary part of the complex
exponential and by shifting the $p_3$ integral for the first term only to
$\mathbb{R}+i\frac{\mu}{2}$. Some algebra yields
\begin{gather}
   \Delta G_\text{s}^\text{pole}(x_0|L)
   =
    \sum_{ \vec{n} \neq \vec{0} }
   \frac{1}{ 6 \pi |\vec{n}| L }
   \Im \int_{\mathbb{R}+i\mu} \frac{d k_3}{2\pi}
   e^{i k_3 |x_0|} (4m^2 + k_3^2) \, F(-k_3^2)^2
   \nonumber \\ \hspace{5cm} \times
   \Bigg\{
   \frac{
   e^{-|\vec{n}| L \sqrt{m^2+\frac{k_3^2}{4}}}
   }{4 k_3}
   -i \int \frac{d p_3}{2\pi}
   \frac{e^{-|\vec{n}| L \sqrt{m^2+p_3^2}}}{ k_3^2 - 4p_3^2 }
   \Bigg\}
   \ .
   \label{eq:stat:DeltaGs-pole-3}
\end{gather}
For the monopole ansatz, one of the integrals can be analytically calculated
\begin{gather}
   \Delta G_\text{s}^\text{pole}(x_0|L) 
   =
   \sum_{ \vec{n} \neq \vec{0} }
   \frac{1}{ 6 \pi |\vec{n}| L }
   \Bigg\{
   \Im
   \int_{\mathbb{R}+i\mu} \frac{d k_3}{2\pi}
   \frac{ e^{i k_3 |x_0|} (4m^2 + k_3^2) M^4}{(M^2+k_3^2)^2}
   \frac{
   e^{-|\vec{n}| L \sqrt{m^2+\frac{k_3^2}{4}}}
   }{4 k_3}
   \nonumber \\ \hspace{3cm}
   + \int \frac{d p_3}{2\pi}
   e^{-|\vec{n}| L \sqrt{m^2+p_3^2}}
   \frac{d}{dz} \left[
   \frac{ e^{- z |x_0|} (z^2-4m^2) M^4}{(z+M)^2 (z^2 + 4p_3^2)}
   \right]_{z = M}
   \Bigg\}
   \ .
   \label{eq:stat:DeltaGs-pole-4}
\end{gather}
In figure~\ref{fig:KDeltaG} we plot the function
\begin{gather}
   \rho(x_0|L) = 
   \frac{
   \mathcal{K}(x_0) \, \Delta G_\text{s}(x_0|L)
   }{\Delta a_\text{s}(L)}
   \ ,
\end{gather}
which is normalized such that $\int_0^\infty dx_0 \, \rho(x_0|L) = 1$.
The quantity $\rho(x_0|L) \, dx_0$ gives the relative contribution of the
interval $[x_0 , x_0+dx_0]$ to the integral defining the finite-$L$ effects. For
$mL=4$ most of the finite-$L$ corrections come from the region $m_\pi x_0
\lesssim 3$ and $25 \%$ of the correction arises from $m x_0 \lesssim 1.3$ corresponding to $x_0 \lesssim  1.9 \, \text{fm} \,$. The dominant region shifts to somewhat larger values of $x_0$ for
larger volumes.

\begin{figure}
   \centering
   \includegraphics[scale=.5]{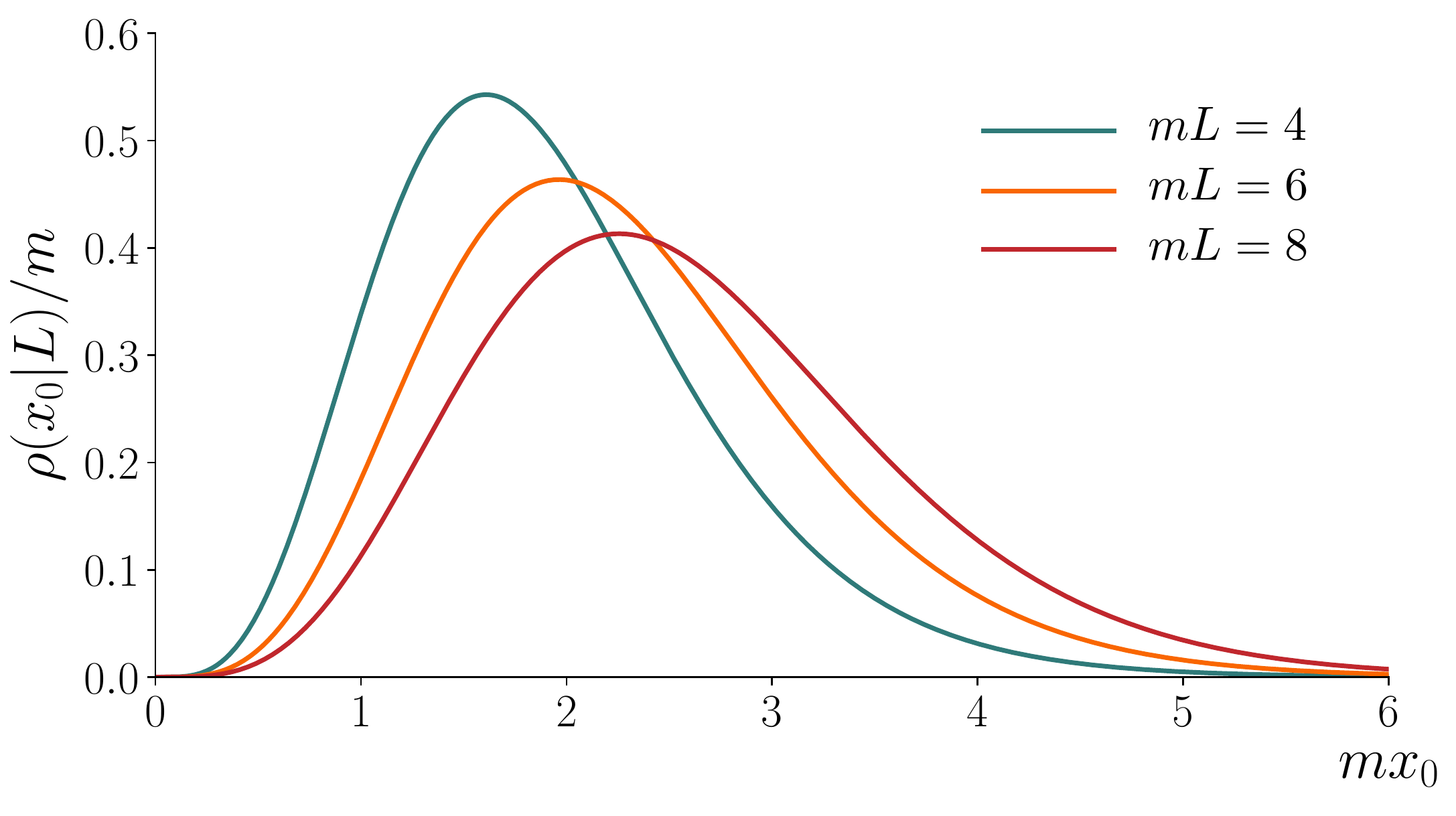}
   \caption{Plot of $\rho(x_0|L)/m$ vs.~$m x_0$ for three values of
   $mL$. The quantity $\rho(x_0|L) \, dx_0$ gives the relative contribution of
   the interval $[x_0 , x_0+dx_0]$ to the integral in
   eq.~\eqref{eq:stat:DeltaIs} defining the finite-$L$ corrections. Note that the area under each curve is equal to 1, by construction.}
   \label{fig:KDeltaG}
\end{figure}

\subsection{Estimate of finite-\texorpdfstring{$T$}{T} corrections}
\label{sec:statement:finiteT}

\begin{table}
\begin{center}
$ 100 \times \Delta  a_{\text{t}}^{\text{WP} \text{--\,} \text{SP}}( x_0^{\text{cut}} \, \vert\, T) /\amu$\\[8pt]
\addtolength{\leftskip} {-2cm} %
\addtolength{\rightskip}{-2cm}
\begin{tabular}{c||c|c|c|c|c|c|c}
& \multicolumn{7}{c}{\hspace{0cm} $x_0^{\text{cut}} \  (m x_0^{\text{cut}})$}   \\[2pt]
$mT$ & $1\, \text{fm} \, (0.69)$ & $2\, \text{fm} \, (1.4)$ & $3\, \text{fm} \,(2.1)$ & $4\, \text{fm}$ \,(2.8) & $5\, \text{fm} \,(3.5)$ & $6\, \text{fm} \,(4.2) $ & $T/2$\\ \hline
 4 & 0.0104 & 0.2944 &  &  &   &  & 1.6262 \\
 5 & 0.0022 & 0.0623 & 0.4149 & &   &  & 0.9539 \\
 6 & 0.0005 & 0.0146 & 0.0972 & 0.3585 & & & 0.5054 \\
 7 & 0.0001 & 0.0037 & 0.0244 & 0.0901 & 0.2411 &  & 0.2498 \\
 8 & $< 10^{-4}$ & 0.0010 & 0.0065 & 0.0238 & 0.0638 &   & 0.1175 \\
 9 & $< 10^{-4}$ & 0.0003 & 0.0018 & 0.0066 & 0.0175 & 0.0384 & 0.0532 \\
 10 & $< 10^{-4}$ &  $0.0001$ & 0.0005 & 0.0019 & 0.0050 & 0.0109 & 0.0234 
\end{tabular}
\end{center}
\caption{Numerical estimates of the single-particle contribution to $\Delta a^{\text{WP}}_\text{t}(T)$ (the wrapped-pion term), normalized to $\amu=700 \times 10^{-10}$ for various fixed values of the cutoff on the $x_0$ integral, denoted by $x_0^{\text{cut}}$. These values, given by eq.~\eqref{eq:stat:WPSP}, depend only on the pion mass and muon mass, taken as $m=137\,\text{MeV}$ and $m_\mu = 106\, \text{MeV}$. \label{tab:sp-various-cuts}}
\end{table}

Our final task for this section is to briefly consider the finite-$T$ corrections, focusing on $\Delta G_\text{t}(x_0 \vert T)$, defined in eq.~\eqref{eq:stat:DeltaGt} above, as well as its contribution to $\amu(T,L)$, the wrapped-pion contribution denoted by $\Delta a_\text{t}^\text{WP}(T)$ and defined in eq.~\eqref{eq:stat:DeltaIt-WP}. We do not discuss the other two contributions, $\Delta a_\text{t}^\text{OB}(T)$ (out of the box) and $\Delta a_\text{t}^\text{BP}(T)$ (backpropagating pion) defined in eqs.~\eqref{eq:stat:DeltaIt-OB} and \eqref{eq:stat:DeltaIt-BP} respectively, as these depend only on the two-point function and thus can be corrected using the lattice data, as is done, for example, in refs.~\cite{\corrOBandBP}. Following an approach analogous to the previous subsection, here we again divide the contributions into a single- and a multi-particle piece (SP and MP, respectively):
\begin{align}
\hat S(x_0, \vec p^2)  & =  \hat S^{\text{SP}}(x_0, \vec p^2)  + \hat S^{\text{MP}}(x_0, \vec p^2)  \,, \\[5pt]
\hspace{-60pt} \hat S^{\text{SP}}(x_0, \vec p^2) & = 2 \sum_{\rho=1}^3  \int d^3 x \int \! \frac{d^3 k}{(2 \pi)^3}  \frac{\langle \textbf p, + \vert j_\rho (x) \vert \textbf k, + \rangle \, \langle \textbf k, + \vert j_\rho (0) \vert \textbf p, + \rangle}{2 \sqrt{m^2 + \vec k^2}} = - 8     \frac{ \vec p^2  }{2 \sqrt{m^2 + \vec p^2}}\,, \end{align}
where the multi-particle contribution is defined by removing $\hat S^{\text{SP}}(x_0, \vec p^2) $ from the full function, defined in eq.~\eqref{eq:stat:Sdef}.

Focusing on the single-particle piece, we next note
\begin{align}
\Delta G^{\text{SP}}_{\text{t}}(x_0 \vert T) & = - \frac{1}{3} \int \! \frac{d^3 p}{(2\pi)^3} \frac{ e^{-T \sqrt{m^2+\vec{p}^2}} }{ 2 \sqrt{m^2+\vec{p}^2} } \ \hat{S}^{\text{SP}}(x_0,\vec{p}^2)  =   \frac{1}{3\pi^2}  \int_0^\infty \! \! dp  \  \frac{p^4  }{   m^2+p^2  }   e^{- T  \sqrt{m^2+ {p}^2}} \,.
\end{align}
Observe that both $\hat S^{\text{SP}}(x_0, \vec p^2) $ and $\Delta G^{\text{SP}}_{\text{t}}(x_0 \vert T)$ are in fact $x_0$ independent, due to an exact cancellation between the single-pion energies of the external state (arising from the thermal trace) and the internal state (arising from inserting a complete set).
The corresponding contribution to $\amu$ is then given by
\begin{equation}
\Delta  a_{\text{t}}^{\text{WP} \text{--\,} \text{SP}}( x_0^{\text{cut}} \, \vert\, T) =   \frac{1}{3\pi^2}  \int_0^\infty \! \! dp  \  \frac{p^4  }{   m^2+p^2  }   e^{- T  \sqrt{m^2+ {p}^2}}  \int_0^{x_0^{\text{cut}}} d x_0 \,  \mathcal K(x_0) \,.
\label{eq:stat:WPSP}
\end{equation}
Here we have extended the definition given in eq.~\eqref{eq:stat:DeltaIt-WP} above by allowing the cutoff on the $x_0$ integral, denoted by $x_0^{\text{cut}}$, to vary independently of $T$. The results, for various values of $T$ and $x_0^{\text{cut}}$, are given in Table \ref{tab:sp-various-cuts}. For realistic parameter choices the correction is found to be at the per-mille level and, for this reason, the multi-particle contribution to $\Delta a^{\text{WP}}_{\text{t}}(T)$ is not considered.

 
\section{Diagrammatic analysis}
\label{sec:diag}

We proceed now to derive all results described in the previous sections.
As already mentioned in the introduction, we work in an effective
theory of pions with generic local interactions in the isospin-symmetric limit.
We also make use of an all-orders %
 {renormalized} perturbative expansion in an infinite-volume on-shell
renormalization scheme. In particular, this means that the interaction Lagrangian
does not depend on the volume, and the bare propagator in momentum space is given by
$(p^2+m^2)^{-1}$ where $m$ is the (pole) mass of the pion. Periodic boundary
conditions are assumed for the pion fields, but all results can readily be
generalized to the case of phase-periodic boundary conditions, as described in ref.~\cite{\ShortPaper}. Results
concerning finite-$L$ effects also hold with non-periodic boundary
conditions in time, e.g.~with open boundaries. However the structure of
the finite-$T$ corrections is expected to be quite different in this case.

The electromagnetic-current two-point function, and consequently the integral, $\amu(T,L)$,
are represented as a sum of finite-volume Feynman integrals in coordinate space.
The general structure of these integrals is presented in
subsection~\ref{subsec:diag:saddle}. One can show that the large-volume
expansion is of the saddle-point type, which exposes the exponential suppression
of the finite-volume corrections. It turns out that Feynman integrals can be
written as a sum of terms labeled by integers $(n)$ which, roughly speaking,
correspond to winding numbers of pion loops around the various periodic
directions. The asymptotic behaviour of each of these terms is found to be of
the form $e^{-m \mathcal{E}_n(T,L)}$, where the function $\mathcal{E}_n(T,L)$ depends
on $T$, $L$ and the winding numbers $n$, as shown, as well as some geometric properties of the
considered Feynman graph. These exponentials are characterized in
subsection~\ref{subsec:diag:asymptotic}. The next step is to classify the terms
that contribute to the largest exponentials. This is done in two steps, for
convenience of presentation. First, in section~\ref{subsec:diag:separation} it is
shown that, up to subleading terms, finite-spacetime corrections are given by
the sum of finite-$L$ and finite-$T$ contributions, as displayed in eqs.~\eqref{eq:stat:separation} and \eqref{eq:stat:Gexp}. Then, in
sections~\ref{subsec:diag:leading-L} and~\ref{subsec:diag:leading-T}, the
terms contributing to the leading exponentials are completely characterized.

By the end of section~\ref{subsec:diag:leading-T}, the leading large-volume corrections are completely identified, 
but are still represented as formal sums over Feynman diagrams. The final task, then, is to find a representation in terms of 1PI proper vertices, thereby effectively resumming the all-orders perturbative expansion.
Before turning to this, in section~\ref{subsec:diag:2pt-function} we briefly comment on differences that arise in the scaling of finite-volume corrections to the two-point function, as compared to the integral of the latter with the kernel. We conclude section~\ref{sec:diag}, in section~\ref{subsec:diag:skeleton}, by identifying compact, momentum-space expressions for both the correlator, $G(x_0 \vert T, L)$, and the estimator, $\amu(T,L)$.
The derivation remains incomplete, as the resulting formulae hide the exponential decay
of the finite-volume corrections. This is the starting point in
section~\ref{sec:calculation} in the derivation of our final results.

\subsection{Finite-volume diagrammatic expansion}%
\label{subsec:diag:saddle}

The two-point function~\eqref{eq:stat:G-def} can be represented as a sum of Feynman
integrals
\begin{gather}
   G(x_0|T,L) = \sum_{\mathcal{D}} \mathcal{F}_{T,L}(x_0|\mathcal{D})
   \label{eq:G-feynman-diagram}
   \ ,
\end{gather}
where $\mathcal{F}_{T,L}(x_0|\mathcal{D})$ is the contribution to
$G(x_0|T,L)$ associated to the Feynman diagram $\mathcal{D}$. Let
$\mathcal{G}$ be the abstract graph associated to $\mathcal{D}$. Let
$\mathcal{V}$ and $\mathcal{L}$ be the sets of its vertices and lines
respectively. Following refs.~\cite{\MartinStable} and~\cite{\GraphBook}, we
consider all lines with arbitrary orientation. Given a line, $\ell$, we denote by
$i(\ell)$ and $f(\ell)$ its initial and final vertices, respectively. Two special
vertices, $a$ and $b$, are identified to correspond to the insertion of the
operators $j_\mu(0)$ and $j_\mu(x)$ respectively. We introduce a number of
definitions which will be useful throughout this paper.

\textit{Paths.} A path $P$ is defined as a set of lines with the property that
a sequence $v_1, v_2, \dots , v_N$ of pairwise distinct vertices exists,
together with a labelling, $\ell_1,\ell_2,\dots,\ell_{N-1}$, of all lines in $P$,
such that $v_k$ and $v_{k+1}$ are the endpoints of $\ell_k$. An orientation of the
path, $P$, is specified by assigning a number $[P:\ell] \in \{-1,1\}$ for every
line, $\ell \in P$, with the following properties:
\begin{gather*}
   [P:\ell] = [P:\ell'] \ , \quad \text{if } i(\ell) = f(\ell') \ , \\
   [P:\ell] = -[P:\ell'] \ , \quad \text{if } i(\ell) = i(\ell') \text{ or } f(\ell)=f(\ell')\,, \qquad (\ell \neq \ell') \ .
\end{gather*}
We also define $[P:\ell]=0$ if $\ell \not\in P$. Given an
oriented path, its initial and final vertices are defined in the natural way. If
$v$ and $w$ are respectively the initial and final vertices of $P$, we
will say that $P$ is a path from $v$ to $w$.

\textit{Loops.} A loop $C$ is defined as a set of lines, with the
property that a sequence of pairwise distinct vertices, $v_1,\dots,v_{N}$, and a
labelling, $\ell_1,\dots,\ell_N$, of the elements of $C$ exist, such that
$v_k$ and $v_{k+1}$ are the endpoints of $\ell_k$ for $1 \le k < N$, and $v_N$ and $v_1$ are
the endpoints of $\ell_N$. The orientation of loops is defined in complete analogy
to the orientation of paths.

\textit{Connected graphs.} The graph $\mathcal{G}$ is said to be connected if
and only if, for any pair of distinct vertices $v$ and $w$, a path exists from
$v$ to $w$. The graph $\mathcal{G}$ is said to be disconnected if and only if it
is not connected. A disconnected graph can be split in connected components in
the natural way. (See ref.~\cite{\GraphBook}, sections 1.1 and 1.2, for more
details.)

\textit{Deletion of lines.} Given a set of lines, $A \subseteq \mathcal{L}$, we
denote by $\mathcal{G} - A$ the graph obtained from $\mathcal{G}$ by deleting
the lines in $A$. $\mathcal{G} - A$ and $\mathcal{G}$ have the same set of
vertices, $\mathcal{V}$, and $\mathcal{L} \setminus A$ is the set of lines of
$\mathcal{G} - A$. Further, $\mathcal{G} - A$ inherits the incidence relation
from $\mathcal{G}$ in the natural way. (See ref.~\cite{\GraphBook}, sections 1.1
and 1.2, for more details.) Note that even if $\mathcal{G}$ is connected,
$\mathcal{G} - A$ may be disconnected.

\textit{Cut-sets.} A cut-set $S$ is a minimal set of lines with the property
that one connected component in $\mathcal{G}$ becomes disconnected in
$\mathcal{G}-S$. Note that, if $S$ is a cut-set, then $\mathcal{G}-S$ has
exactly one more connected component than $\mathcal{G}$. If $\mathcal{G}$ is
connected, given two vertices $v$ and $w$, we say that $S$ disconnects $v$ and
$w$, if and only if they belong to different connected components of $\mathcal{G} - S$. An
orientation of the cut-set $S$ is specified by assigning a number, $[S:\ell] \in
\{-1,1\}$, for every line $\ell \in S$ with the requirement that $[S:\ell] =
[S:\ell']$ if $i(\ell)$ and $i(\ell')$ are in the same connected component of $G
- S$, and $[S:\ell] = -[S:\ell']$ otherwise. We also define $[S:\ell]=0$ if
$\ell \not\in S$.

\textit{$n$-particle (ir)reducibility.} The connected graph $\mathcal{G}$ is
said the be $n$-particle irreducible ($n$PI) between its distinct vertices $v$
and $w$ if and only if $v$ and $w$ always belong to the same connected component
of $\mathcal{G} - \{ \ell_1 , \dots, \ell_n\}$, no matter which lines $\ell_1 ,
\dots, \ell_n$ are deleted. $\mathcal{G}$ is said the be $n$PI if and only if it
is $n$PI between any distinct pair of its vertices. $\mathcal{G}$ is said the be
$n$-particle reducible if and only if it is not $n$PI. As noted in
ref.~\cite{Luscher:1985dn}, theorem 2.2, if $\mathcal{G}$ is $n$PI between $v$
and $w$, there exists $n+1$ disjoint paths connecting $v$ and $w$. This fact
will be used often in this paper, and can be proven using the techniques
described in ref.~\cite{\GraphBook}, chapter 5 (in particular theorem 5-4).

\bigskip

\begin{proposition}[One-particle irreducibility]\label{prop:1PI}
   
   In a generic effective theory of pions with exact isospin symmetry, the
   Feynman diagrams contributing to eq.~\eqref{eq:G-feynman-diagram} are
   connected and 1PI. This is true also in a finite periodic box with
   periodic boundary conditions.
   
\end{proposition}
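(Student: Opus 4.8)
The plan is to establish the two claims—connectedness and one-particle irreducibility—separately, in each case reducing the statement to the vanishing of a simple amplitude that is forbidden by an exact symmetry of the pion effective theory, and then checking that each such symmetry argument survives the restriction to a finite periodic box. Throughout I would use the fact that, for a connected graph, being 1PI in the sense defined above (1PI between \emph{every} distinct pair of vertices) is equivalent to the absence of any bridge, i.e.~any single line whose deletion disconnects the graph.

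First, connectedness. The two-point function in eq.~\eqref{eq:stat:G-def} is built from the full (not connected) expectation value $\langle j_k(x) j_k(0)\rangle_{T,L}$, whose only potentially disconnected piece factorizes as $\langle j_k(x)\rangle_{T,L}\,\langle j_k(0)\rangle_{T,L}$. I would argue this vanishes because the one-point function of the electromagnetic current is zero. The cleanest route is charge conjugation: $j_\mu$ is $C$-odd and the vacuum is $C$-even, so $\langle j_\mu\rangle = -\langle j_\mu\rangle = 0$; equivalently, the isovector part vanishes by isospin (there is no invariant vector in the adjoint representation) while the spatial isoscalar part vanishes under the reflection $x_k \to -x_k$. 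These are internal or discrete spacetime symmetries preserved by periodic boundary conditions, so the conclusion holds in finite volume, and only connected diagrams survive.

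Second, one-particle irreducibility. I would show that any connected diagram possessing a bridge vanishes. A bridge $\ell$ is a single pion propagator; deleting it splits $\mathcal{G}$ into exactly two components, and the current insertions $a$ and $b$ are distributed among them in one of two ways. If both currents lie in the same component, the other component carries a single external pion line and no current, i.e.~it is a pion tadpole proportional to $\langle \pi^c\rangle$; this vanishes diagram by diagram because the pion sits in a nontrivial isospin multiplet and no isospin-invariant tensor carries a single adjoint index, so the group-theoretic factor is zero. If instead the two currents lie in different components, each component together with the cut line computes a current-to-single-pion amplitude $\langle 0| j_\mu |\pi^c\rangle$. For charged pions this vanishes by charge conservation (the current is neutral) and by isospin, and for the neutral pion it vanishes by charge conjugation or, equivalently, G-parity, the isovector vector current carrying $G=+1$ while the pion carries $G=-1$. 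In every case a bridge forces a vanishing factor, so the non-vanishing diagrams contributing to eq.~\eqref{eq:G-feynman-diagram} are bridge-free, hence 1PI.

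The main obstacle, demanding the most care, is the neutral-pion case of the current-to-single-pion amplitude, $\langle 0| j_\mu |\pi^0\rangle$. Unlike the tadpole and charged-pion cases, this is not killed by the isospin group factor alone, and the familiar infinite-volume argument—Lorentz covariance forcing the amplitude $\propto p_\mu$, then current conservation forcing it to vanish—is not directly available, since rotational invariance is broken on the torus and the cut line is off-shell. I would therefore rely on charge conjugation (or G-parity), which remains exact in the periodic box: it maps each offending diagram to its charge-conjugate partner with an opposite overall sign, so the reducible contribution cancels. Making this rigorous requires, first, verifying that $C$ and $G$ are genuine symmetries of the regulated finite-volume theory, including the current insertions, and, second, being explicit about whether the cancellation holds diagram by diagram or only after pairing a diagram with its conjugate; once these points are settled, the remainder of the argument is routine.
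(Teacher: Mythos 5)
Your proposal is correct in substance, but it takes a genuinely different route from the paper's own proof. The paper never touches individual graphs: it introduces sources $J^a$ for the pion and $A_\mu$ for the current, Legendre-transforms $W(J,A)$ with respect to $J$ to obtain the effective action $\Gamma(\pi,A)=-\tfrac12\left(\pi,\{-\partial^2+m^2\}\pi\right)+\Gamma_\text{I}(\pi,A)$, whose interaction part $\Gamma_\text{I}$ is by construction a sum of 1PI diagrams, and then writes the current two-point function as $\delta^2\Gamma/\delta A_\mu\delta A_\nu$ plus a reducible term proportional to $\langle \pi^a(z)\, j_\nu(0)\rangle_\text{c}$; isospin symmetry guarantees that $\pi=0$ solves the equations of motion at $A=0$, and the mixing correlator vanishes by charge conservation (for $q=\pm1$) and C-invariance (for $\pi^0$). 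These are exactly the two vanishing facts your bridge analysis isolates: your tadpole case plays the role of the paper's isospin/$\pi=0$ step, and your current-to-pion amplitude is the paper's $\langle\pi^a j_\nu\rangle_\text{c}=0$. So the symmetry input is identical; what differs is the bookkeeping, and that is where each approach pays or collects. The functional method absorbs all combinatorics automatically: diagrams with several bridges, the chain (geometric-series) structure of reducible contributions, and the fact that the neutral-pion cancellation is only pairwise under C rather than diagram by diagram are all handled silently by the Legendre-transform identity, and the argument transfers to the torus verbatim because it never invokes Lorentz covariance. Your route is more elementary and makes visible which symmetry kills which topology, but the two points you defer as "routine" are precisely where the residual labour sits: (i) a 1PR diagram can contain many bridges, so to sum without double counting you should organize by the block-tree of bridge-free components — any leaf block carrying no current insertion already vanishes diagram by diagram through the isospin group factor, leaving only chains whose end factors are current-to-pion amplitudes; and (ii) for $\pi^0$ the cancellation is between a diagram and its C-conjugate, which suffices here because the proposition only asserts that the sum in eq.~\eqref{eq:G-feynman-diagram} can be written over connected 1PI diagrams, but this must be stated rather than left implicit. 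With those two points made explicit, your argument stands as a complete and valid alternative proof.
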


\begin{proof}
   Connectedness of the Feynman diagrams follows from the fact that $\langle
   j_\mu \rangle_{T,L}=0$, e.g. because of charge-conjugation (C) invariance.
      Let us prove one-particle irreducibility. Let $\pi^a(x)$ be the real pion
   field with isospin index $a$, and let $S(\pi)$ be the action. We introduce a
   source, $J^a(x)$, for the pion field and a source, $A_\mu(x)$, for the Euclidean electromagnetic
   current $j_\mu(x)$. The Euclidean-signature generating functional for connected
   graphs is then defined as
   \begin{gather}
      e^{W(J,A)} = \int [d\pi] \, e^{-S(\pi) + (A,j) + (J,\pi)} \ .
   \end{gather}
   Let $\Gamma(\pi,A)$ be the Legendre transform of $W(J,A)$ with respect to $J$, i.e.
   \begin{gather}
      \Gamma(\pi,A) = \left[
      W(J,A) - (J,\pi)
      \right]_{
      \pi = \frac{\delta W}{\delta J}(J,A)
      }
      \ .
   \end{gather}
   The 1PI irreducible effective action has the form
   \begin{gather}
      \Gamma(\pi,A) = - \frac{1}{2} \left( \pi , \{-\partial^2+m^2\} \pi \right) + \Gamma_\text{I}(\pi,A) \ ,
   \end{gather}
   where $\Gamma_\text{I}(\pi,A)$ can be seen as a sum of 1PI Feynman diagrams
   in which the external lines correspond to insertions of fields $\pi^a(x)$ (rather
   than propagators). In this framework the term $(A,j)$ is seen as yet another
   interaction term which breaks isospin symmetry. Therefore, the field, $A$, is
   attached to the internal vertices of the Feynman diagrams. Taking $n$ derivatives
   with respect to $A$ and setting $A=0$ amounts to selecting only the Feynman
   diagrams in which the interaction $(A,j)$ has been inserted $n$ times. The
   derivative with respect to $A$ does not change the property of
   $\Gamma_\text{I}(\pi,A)$ to be 1PI.
   Standard algebra yields
   \begin{gather}
      \langle j_\mu(x) j_\nu(0) \rangle
      =
      \left. \frac{\delta^2 W}{\delta A_\mu(x) \delta A_\nu(0)} \right|_{J=A=0}
      = \nonumber \\ \qquad
      \left. \frac{\delta^2 \Gamma}{\delta A_\mu(x) \delta A_\nu(0)} \right|_{\pi=A=0}
      + \int_z \left. \frac{\delta^2 \Gamma}{\delta A_\mu(x) \delta \pi^a(z)} \right|_{\pi=A=0}
      \left. \frac{\delta^2 W}{\delta J^a(z) \delta A_\nu(0)} \right|_{J=A=0}
      \ ,
   \end{gather}
   where we have already used that the equation of motion, $\frac{\delta
   \Gamma}{\delta \pi} =0$ for $A=0$, is solved by $\pi=0$ because of isospin
   symmetry. We also note that
   \begin{gather}
      \left. \frac{\delta^2 W}{\delta J^a(z) \delta A_\nu(0)} \right|_{J=A=0}
      =
      \langle \pi^a(z) j_\mu(0) \rangle_\text{c}
      =
      0 \ ,
   \end{gather}
   where the subscript, c, indicates that only the connected contribution is to be kept.
   This is better understood in the electric-charge basis. Let $\pi_q(x)$ be the
   pion field with charge $q=0,\pm 1$. We observe that electric charge
   conservation implies that $ \langle  \pi_q(z)  j_\mu(0) \rangle$ vanishes if $q
   \neq 0$. Since $j_\mu(0)$ is C-odd and $\pi_0(x)$ is C-even, by C-invariance
   $ \langle \pi_0(x)  j_\mu(0) \rangle=0$ as well. We deduce %
   \begin{gather}
      \langle j_\mu(x) j_\nu(0) \rangle
      =
      \left. \frac{\delta^2 \Gamma_\text{I}}{\delta A_\mu(x) \delta A_\nu(0)} \right|_{\pi=A=0}
      \ ,
   \end{gather}
   i.e.~only 1PI diagrams contribute to the current two-point function. We stress that, although the $T,L$ labels have been suppressed here, all arguments used hold in a finite-volume spacetime.
\end{proof}

\bigskip

Let us discuss the structure of the Feynman integral,
$\mathcal{F}_{T,L}(x_0|\mathcal{D})$. In coordinate space, each line $\ell$
contributes with a finite-volume propagator
$\Delta_{T,L}\big (x[f(\ell)]-x[i(\ell)] \big )$ to the Feynman integrand.\footnote{We use $x(v)$ and $x[v]$ interchangeably, to improve the readability of equations.}
Here the propagator can indicate both a charged an neutral state and, as this has no effect on its functional form, we do not include a charge label on $\Delta_{T,L}$. 
The interaction
vertices are associated to homogeneous partial differential operators with
constant coefficients, acting on the arguments of the propagators in the
diagram. These differential operators do not depend on $T$ and $L$. We denote by
$\mathbb{V}$ the product of all these differential operators associated to the
diagram $\mathcal{D}$. The Feynman integral can then be written as
\begin{gather}
   \mathcal{F}_{T,L}(x_0|\mathcal{D})
   =
   \int_\L d^3x(b)
   \left[ \prod_{v \in \mathcal{V} \setminus \{a,b\} }
   \int_\L d^4x(v)
   \right]
   \mathbb{V} \left\{
   \prod_{\ell \in \mathcal{L}} \Delta_{\L} \big (x[f(\ell)] - x[i(\ell)]   \big )
   \right\}_{\substack{x(a)=0\\x_0(b)=x_0}} \ .
   \label{eq:F-D-1}
\end{gather}
The symbols $\int_\L d^3x$ and $\int_\L d^4x$ are shorthand notations for $\prod_{k=1}^3
\int_0^L dx_k$ and $\int_0^T dx_0 \prod_{k=1}^3
\int_0^L dx_k$, respectively.  

The finite-volume propagator is conveniently represented in coordinate space by
making use of the Poisson summation formula
\begin{gather}
   \Delta_{\L}(x ) = \sum_{n \in \mathbb{Z}^4} \Delta(x + \L n) \ ,
   \label{eq:prop-poisson}
\end{gather}
in terms of the infinite-volume propagator, $\Delta(x)$,
\begin{gather}
   \Delta(x) = \int \frac{d^4 p}{(2 \pi)^4} \frac{e^{i p x}}{p^2 + m^2} \ ,
\end{gather}
and the diagonal matrix
\begin{gather}
   \L = \text{diag}(T,L,L,L) \ .
\end{gather}
For every line, $\ell$, we then have a new summation variable, $n(\ell)
\in \mathbb{Z}^4$. By exchanging summations and integrations, one can split
$\mathcal{F}_{T,L}(x_0|\mathcal{D})$ into a sum of terms: one for each possible
configuration $(n(\ell))_{\ell \in \mathcal{L}}$ of Poisson parameters, i.e.
\begin{multline}
   \label{eq:F-D-2}
   \mathcal{F}_{T,L}(x_0|\mathcal{D}) = 
   \sum_n \int_\L d^3x(b)
   \left[ \prod_{v \in \mathcal{V} \setminus \{a,b\} }
   \int_\L d^4x(v)
   \right]
   \\
   \times
   \mathbb{V}
   \left\{
   \prod_{\ell \in \mathcal{L}} \Delta \big (x[f(\ell)] - x[i(\ell)] + \L n(\ell) \big )
   \right\}_{\substack{x(a)=0\\x_0(b)=x_0}} \ .
\end{multline}
One shows easily that the integrand, but not the integration domain, is
invariant under the \textit{gauge transformation}
\begin{gather}
   n(\ell) \to n^{\lambda}(\ell) = n(\ell) + \lambda[f(\ell)] - \lambda[i(\ell)] \ , \\
   x(v) \to x(v) - \L \lambda(v) \ ,
\end{gather}
where $\lambda(v) \in \mathbb{Z}^4$, with the \textit{boundary conditions}
\begin{gather}
   \lambda(a) = 0 \ , \qquad
   \lambda_0(b) = 0 \ ,
\end{gather}
which are needed in order to preserve the constraints $x(a)=0$ and $x_0(b)=x_0$.
We will call $\lambda(v)$ an \textit{admissible gauge transformation} if it
satisfies the above boundary conditions. The definition of the gauge
transformation motivates the identification of $n(\ell)$ as a $\mathbb{Z}^4$
gauge field over the graph $\mathcal{D}$. Two gauge fields are said to be
\textit{gauge-equivalent} if they are related by an admissible gauge transformation.

\bigskip

\begin{proposition}\label{prop:factorization}
   Let $\mathfrak{G}$ be the set of gauge fields, let $\Lambda$ be the set of
   admissible gauge transformations, and let $\mathfrak{G}/\Lambda$ be the set
   of gauge orbits. Let $\mathfrak{R}$ be a section of $\mathfrak{G}/\Lambda$,
   i.e.~a set constructed by picking exactly one representative per gauge orbit.
   Given a gauge field, $n \in \mathfrak{G}$, a unique pair $(n_\mathfrak{R} ,
   \lambda) \in \mathfrak{R} \times \Lambda$ exists, such that $n =
   n_\mathfrak{R}^\lambda$.
\end{proposition}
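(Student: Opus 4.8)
The plan is to recognise this as the standard statement that a free group action admits a well-defined decomposition through any transversal, isolating the single theory-specific ingredient, namely freeness, which will follow from connectedness of the diagram established in Proposition~\ref{prop:1PI}.

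First I would check that $\Lambda$ is a group and that $n \mapsto n^\lambda$ is a genuine action of it on $\mathfrak{G}$. Writing $(d\lambda)(\ell) = \lambda[f(\ell)] - \lambda[i(\ell)]$ so that $n^\lambda(\ell) = n(\ell) + (d\lambda)(\ell)$, a one-line computation gives $(n^\lambda)^{\lambda'} = n^{\lambda + \lambda'}$, so composition of gauge transformations corresponds to pointwise addition of the $\lambda$'s. Since the boundary conditions $\lambda(a)=0$ and $\lambda_0(b)=0$ are linear, $\lambda+\lambda' \in \Lambda$ whenever $\lambda,\lambda' \in \Lambda$, and $\lambda \equiv 0$ acts as the identity. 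Hence $(\Lambda,+)$ acts on $\mathfrak{G}$ and the orbit space $\mathfrak{G}/\Lambda$ is well defined.

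Existence is then immediate: because $\mathfrak{R}$ meets each orbit in exactly one point, the orbit of $n$ contains a unique element $n_\mathfrak{R} \in \mathfrak{R}$, and since $n$ and $n_\mathfrak{R}$ share this orbit there is some $\lambda \in \Lambda$ with $n = n_\mathfrak{R}^\lambda$. For uniqueness I would proceed in two steps. Suppose $n = r^\lambda = (r')^{\lambda'}$ with $r,r' \in \mathfrak{R}$ and $\lambda,\lambda' \in \Lambda$; then $r$ and $r'$ both lie in the orbit of $n$, so the single-representative property of $\mathfrak{R}$ forces $r = r' = n_\mathfrak{R}$. It remains to deduce $\lambda = \lambda'$, which is precisely freeness of the action.

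The hard part, though still short, is this freeness step. I would show that $(d\lambda)(\ell) = 0$ for all $\ell$ forces $\lambda \equiv 0$. The hypothesis reads $\lambda[f(\ell)] = \lambda[i(\ell)]$ for every line, so $\lambda$ takes equal values at the two endpoints of each line. By Proposition~\ref{prop:1PI} the diagram is connected, hence any vertex is joined to $a$ by a path; propagating this equality along the path shows $\lambda$ is constant on $\mathcal{V}$, and the boundary condition $\lambda(a)=0$ pins the constant to zero. Applying the group law to $r^\lambda = r^{\lambda'}$ gives $r^{\lambda-\lambda'} = r$, whence $\lambda - \lambda' \equiv 0$ and $\lambda = \lambda'$. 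I expect connectedness to be the genuine crux: on a disconnected graph one could support a nonzero admissible $\lambda$ on a component not containing $a$, and the pair $(n_\mathfrak{R},\lambda)$ would no longer be unique.
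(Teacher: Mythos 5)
Your proof is correct and takes essentially the same route as the paper: both reduce the question to uniqueness of $\lambda$, and both settle it by propagating values along a path from $a$ (using connectedness, Proposition~\ref{prop:1PI}) together with the boundary condition $\lambda(a)=0$. The only difference is packaging --- the paper writes an explicit telescoping formula expressing $\lambda(v)$ in terms of $n-n_\mathfrak{R}$ along the path, while you phrase the same fact as freeness of the $\Lambda$-action (a transformation with $\lambda[f(\ell)]=\lambda[i(\ell)]$ on every line is constant, hence zero).
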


\begin{proof}
   Given $n \in \mathfrak{G}$, by definition of $\mathfrak{R}$, there is a
   unique $n_\mathfrak{R} \in [n]$ (where $[n]$ denotes the gauge orbit of $n$).
   This clearly implies that $\lambda \in \Lambda$ exists such that $n =
   n_\mathfrak{R}^\lambda$, i.e.
   \begin{gather}
      n(\ell) = n_\mathfrak{R}(\ell) + \lambda[f(\ell)] - \lambda[i(\ell)] \ .
   \end{gather}
   We need to show only that such $\lambda$ is unique. Given a vertex $v$, since
   $\mathcal{G}$ is connected, a path $P(v)$ exists from $a$ to $v$.
   Using $\lambda(a)=0$, one finds
   \begin{gather}
      \lambda(v)
      =
      \sum_{\ell \in P(v)} [ P(v) : \ell ] \, \{ \lambda[f(\ell)] - \lambda[i(\ell)] \}
      =
      \sum_{\ell \in P(v)} [ P(v) : \ell ] \, \{ n(\ell) - n_\mathfrak{R}(\ell) \}
      \ ,
   \end{gather}
   i.e.~$\lambda$ is uniquely determined by $n$ and $n_\mathfrak{R}$.
\end{proof}

\bigskip

By using proposition~\ref{prop:factorization}, one can make the following replacement in
eq.~\eqref{eq:F-D-2}
\begin{gather}
   \sum_{n \in \mathfrak{G}} f(n) \to \sum_{n \in \mathfrak{R}} \sum_{\lambda \in \Lambda} f(n^\lambda)
   \ .
\end{gather}
This corresponds to replacing the argument of each propagator as follows:
\begin{multline}
   x[f(\ell)] - x[i(\ell)]+ \L n(\ell)
  \ \    \to \ \ 
   x[f(\ell)] - x[i(\ell)]+ \L n^\lambda(\ell) =   \\
   \Big ( x[f(\ell)] + \L\lambda[f(\ell)] \Big ) - \Big ( x[i(\ell)] + \L\lambda[i(\ell)] \Big ) + \L n(\ell)
   \ .
\end{multline}
Then one can apply a change variables in the $x$ integrals
\begin{gather}
   x(v) \to x(v) - \L\lambda(v) \ ,
\end{gather}
which shifts the integration domain by $\L\lambda(v)$, and the sum over
$\lambda(v)$ reconstructs the integral over the whole $\mathbb{R}^4$ (or
$\mathbb{R}^3$ if $v=b$), i.e.
\begin{gather}
   \mathcal{F}_{T,L}(x_0|\mathcal{D}) = \sum_{n \in \mathfrak{R}} \mathcal{F}_{T,L}(x_0|\mathcal{D},n)
   =
   \sum_{[n] \in \mathfrak{G}/\Lambda} \mathcal{F}_{T,L}(x_0|\mathcal{D},n)
   \label{eq:F-D-3}
   \ ,
\end{gather}
where we have introduced
\begin{multline}
   \label{eq:F-Dn-1}
   \mathcal{F}_{T,L}(x_0|\mathcal{D},n)
   =
   \int_{\mathbb{R}^3} d^3x(b)
   \Bigg[ \prod_{v \in \mathcal{V} \setminus \{a,b\} }
   \int_{\mathbb{R}^4} d^4x(v)
   \Bigg] \\ \times
   \mathbb{V} \left\{
   \prod_{\ell \in \mathcal{L}} \Delta \big (x[f(\ell)] - x[i(\ell)]+ \L n(\ell) \big )
   \right\}_{\substack{x(a)=0\\x_0(b)=x_0}} \ .
\end{multline}
In the second equality in eq.~\eqref{eq:F-D-3}, we have used the observation
that $\mathcal{F}_{T,L}(x_0|\mathcal{D},n)$ is gauge invariant and therefore one
can replace the sum over the section $\mathfrak{R}$ with the sum over the set
 of gauge orbits $\mathfrak{G}/\Lambda$. \emph{The advantage of this
representation lies in the fact that $\L$ appears now only in the argument of
the propagator.}

The decomposition~\eqref{eq:F-D-3} induces the corresponding decomposition for
the integrals
\begin{gather}
   \amu(T,L)
   =
   \sum_{\mathcal{D}} \sum_{[n] \in \mathfrak{G}/\Lambda} \mathcal{I}_{T,L}(\mathcal{D},n)
   \label{eq:I-feynman-diagram}
   \ , \\
   \mathcal{I}_{T,L}(\mathcal{D},n) =
   \int_0^{\frac{T}{2}} dx_0 \,  { \mathcal K}(  x_0) \, \mathcal{F}_{T,L}(x_0|\mathcal{D},n)
   \ .
   \label{eq:I-Dn}
\end{gather}
 
\subsection{Large-volume asymptotic behaviour}
\label{subsec:diag:asymptotic}

In this section we study the asymptotic behaviour of
$\mathcal{I}_{T,L}(\mathcal{D},n)$ in the $T,L \to \infty$ limit, with the ratio
\begin{gather}
   r = \frac{T}{L} \,,
\end{gather}
held constant. Specifically, we show that the asymptotic behaviour is
given by a saddle-point expansion and provide an upper bound for the asymptotic
scaling with $T$ and $L$.

Using the heat-kernel representation of the propagator,
\begin{gather}
   \Delta(x) = \int_0^\infty ds \ (4\pi s)^{-2} \exp \left\{ - \frac{x^2}{4s} - m^2 s \right\}
   \ ,
\end{gather}
and rescaling $x(v) \to \L x(v)$, $s \to L s/2m$, the
integral~\eqref{eq:I-Dn} assumes the general form
\begin{gather}
   \mathcal{I}_{T,L}(\mathcal{D},n)
   =
   \int_0^{1/2} dx_0(b) \   {\mathcal K}( L_0 x_0(b))
   \int_{\mathbb{R}^3} d^3x(b) \Bigg [ \prod_{v \in \mathcal{V} \setminus \{a,b\} } \int_{\mathbb{R}^4} d^4x(v) \Bigg ]
   \nonumber \\ \hspace{3cm}
   \times  \left[ \prod_{\ell \in \mathcal{L} } \int_0^\infty ds_\ell \right] 
   (\det \L)^{|\mathcal{V}|-1} \frac{P(s,\L x,\L n)}{Q(s)}
   \exp \big [ -m L S(s,x,n) \big ]
   \ ,
   \label{eq:I-Dn-2}
\end{gather}
where $P(s,x,n)$ and $Q(s)$ are polynomials and
\begin{gather}
   S(s,x,n)
   =
   \frac{1}{2}
   \sum_{\ell \in \mathcal{L}}
   \left\{
   s_\ell
   + \frac{1}{s_\ell}
   r^2 \left[ \delta x_0(\ell) + n_0(\ell) \right]^2
   + \frac{1}{s_\ell}
   \left\| \delta \vec{x}(\ell) + \vec{n}(\ell) \right\|_2^2
   \right\}
   \ ,
\end{gather}
with the definition
\begin{gather}
   \delta x_\mu(\ell) = x_\mu[f(\ell)] - x_\mu[i(\ell)] \ .
   \label{eq:deltaxDef}
\end{gather}
Note that $S$ does not depend on $L$, and the $L$-dependence is explicit in
the exponential in eq.~\eqref{eq:I-Dn-2}. The $L \to \infty$ limit of
integral~\eqref{eq:I-Dn-2} is of the saddle-point type, obtained by expanding
around the minima of $S$. We use crucially the fact that the kernel
$\widehat{\mathcal K}(m_\mu x_0)$ diverges polynomially at $x_0 \to +\infty$,
and therefore it does not take part in the determination of the saddle point. In
particular
\begin{gather}
   \ln \mathcal{I}_{T,L}(\mathcal{D},n) = - m L \epsilon_r(n) + O(\ln L)
   \ ,
   \label{eq:saddle-1}
\end{gather}
with the definition
\begin{gather}
   \epsilon_r(n) = \min_{\substack{s \in (0,\infty) , \\ x \text{ with } x(a)=0 \\ x_0(b) \in [0,1/2]}} S(s,x,n)
   =
   \min_{\substack{x \text{ with } x(a)=0 \\ x_0(b) \in [0,1/2]}}
   \sum_{\ell \in \mathcal{L}}
   \sqrt{
   r^2 \left[ \delta x_0(\ell) + n_0(\ell) \right]^2
   + \left\| \delta \vec{x}(\ell) + \vec{n}(\ell) \right\|_2^2
   }
   \ ,
   \label{eq:epsilon-1}
\end{gather}
where the minimum with respect to the variables $s_\ell$ has been explicitly
calculated. The function $\epsilon_r(n)$ determines the asymptotic behaviour of
$\ln \mathcal{I}_{T,L}(\mathcal{D},n)$. We will write eq.~\eqref{eq:saddle-1} as
\begin{gather}
   \mathcal{I}_{T,L}(\mathcal{D},n) = O \left( e^{ - m L \epsilon_r(n) } \right)
   \ ,
   \label{eq:saddle-2}
\end{gather}
where we ignore multiplicative powers of $L$ in the $O( \cdot )$ symbol.

A complicated aspect of the function to be minimized in eq.~\eqref{eq:epsilon-1}
is that it intertwines various components of the gauge field. We aim at a
formula that allows us to study the temporal and spatial directions separately.
The Cauchy-Schwartz inequality implies
\begin{gather}
   \sqrt{
   r^2 \left[ \delta x_0(\ell) + n_0(\ell) \right]^2
   + \left\| \delta \vec{x}(\ell) + \vec{n}(\ell) \right\|_2^2
   }
   \ge
   \frac{r^2 \left| \delta x_0(\ell) + n_0(\ell) \right|
   + \left\| \delta \vec{x}(\ell) + \vec{n}(\ell) \right\|_2 }
   { \sqrt{r^2+1} }
   \ .\label{eq:cauchy-schwartz}
\end{gather}
After summing over the lines and minimizing with respect to $x$ one finds
\begin{gather}
   \epsilon_r(n)
   \ge
   \frac{ r^2 \hat{\epsilon}_0(n_0) + \hat{\epsilon}_\text{s}(\vec{n}) }
   { \sqrt{r^2+1} }
   \ , \label{eq:epsilon-2}
\end{gather}
with the definitions
\begin{gather}
   \hat{\epsilon}_0(n_0) = 
   \min_{\substack{x_0 \text{ with } x_0(a)=0 \\ x_0(b) \in [0,1/2]}}
   \sum_{\ell \in \mathcal{L}} \left| \delta x_0(\ell) + n_0(\ell) \right|
   \ , \label{eq:hatepsilon-0} \\
   \hat{\epsilon}_\text{s}(\vec{n}) = 
   \min_{\vec{x} \text{ with } \vec{x}(a)=\vec{0}}
   \sum_{\ell \in \mathcal{L}} \left\| \delta \vec{x}(\ell) + \vec{n}(\ell) \right\|_2
   \ . \label{eq:hatepsilon-s}
\end{gather}
Finally, combining with the trivial inequalities $\epsilon_r(n) \ge r
\hat{\epsilon}_0(n_0)$ and $\epsilon_r(n) \ge \hat{\epsilon}_\text{s}(\vec{n})$,
and multiplying by $L$, one obtains
\begin{gather}
   L \epsilon_r(n)
   \ge
   \max \left\{
   T \hat{\epsilon}_0(n_0)
   ,
   L \hat{\epsilon}_\text{s}(\vec{n})
   ,
   \frac{ T^2 \hat{\epsilon}_0(n_0) + L^2 \hat{\epsilon}_\text{s}(\vec{n}) }
   { \sqrt{T^2+L^2} }
   \right\}
   \ . \label{eq:epsilon-3}
\end{gather}
This inequality provides a lower bound for the function $\epsilon_r(n)$, i.e.~an
upper bound for the asymptotic behaviour of $\mathcal{I}_{T,L}(\mathcal{D},n)$. In
order to estimate the asymptotic behaviour of $\mathcal{I}_{T,L}(\mathcal{D},n)$,
one needs to understand the possible values of the functions
$\hat{\epsilon}_0(n_0)$ and $\hat{\epsilon}_\text{s}(\vec{n})$. In particular,
the leading asymptotic behaviour is governed by the gauge orbits $[n]$ for which
those functions take the smallest possible values.\footnote{
An important point to keep in mind
is that the functions $\hat{\epsilon}_0(n_0)$ and
$\hat{\epsilon}_\text{s}(\vec{n})$ are gauge invariant, and can thus be
equivalently thought of as functions of gauge orbits rather than gauge fields.
} Appendix~\ref{app:epsilon} is devoted to the analysis of the functions
$\hat{\epsilon}_0(n_0)$ and $\hat{\epsilon}_\text{s}(\vec{n})$. In the next
sections, we will report only a small number of selected results as we need
them, focusing on their consequences for the asymptotic behaviour of
$\mathcal{I}_{T,L}(\mathcal{D},n)$.
 
\subsection{Separation of finite-\texorpdfstring{$L$}{L} and finite-\texorpdfstring{$T$}{T} corrections}
\label{subsec:diag:separation}

The goal of this section is to prove eq.~\eqref{eq:stat:separation}, i.e.~that $\Delta a(T,L)$ separates into a sum of finite-$L$ and finite-$T$ corrections up to subleading exponentials that mix the two scales.
It is convenient to first introduce the following definition.

\textit{Pure gauge field.} The gauge field $n_\mu$ (or $n$, $\vec{n}$) is said
to be pure if and only if it is gauge equivalent to zero.

Let us separate the contributions of pure gauge fields in
eq.~\eqref{eq:I-feynman-diagram}, which yields
\begin{gather}
   \amu(T,L)
   =
   \sum_{\mathcal{D}} \mathcal{I}_{T,L}(\mathcal{D},0)
   + 
   \sum_{\mathcal{D}}
   \Bigg\{
   \sum_{ \substack{ [n_0] = 0 \\ [\vec{n}] \neq \vec{0}}}
   + \sum_{ \substack{ [n_0] \neq 0 \\ [\vec{n}] = \vec{0}}}
   + \sum_{ \substack{ [n_0] \neq 0 \\ [\vec{n}] \neq \vec{0}}}
   \Bigg\}
   \mathcal{I}_{T,L}(\mathcal{D},n)
   \ .
   \label{eq:I-leading-0}
\end{gather}
We next rewrite the $n=0$ term on the right-hand side of this equation. Using
eq.~\eqref{eq:I-Dn} gives
\begin{gather}
   \sum_{\mathcal{D}} \mathcal{I}_{T,L}(\mathcal{D},0)
   =
   \int_0^{\frac{T}{2}} dx_0 \ {\mathcal K}(x_0) \sum_{\mathcal{D}} \mathcal{F}_{T,L}(x_0|\mathcal{D},0)
   = \nonumber \\ \qquad
   \int_0^{\frac{T}{2}} dx_0 \ {\mathcal K}(x_0) \sum_{\mathcal{D}} \mathcal{F}_{\infty}(x_0|\mathcal{D},0)
   =
   \int_0^{\frac{T}{2}} dx_0 \ {\mathcal K}(x_0) G(x_0|\infty)
   \ ,
\end{gather}
where we have used that $\mathcal{F}_{T,L}(\mathcal{D},n)$ does not depend on either $T$ or $L$ if $n=0$,
as follows from eq.~\eqref{eq:F-Dn-1}. We have additionally substituted
eq.~\eqref{eq:G-feynman-diagram} for $T=L=\infty$. 
Splitting the integrals in
$x_0$ as $\int_0^{\frac{T}{2}} = \int_0^{\infty} - \int_{\frac{T}{2}}^\infty$,
we readily get
\begin{gather}
   \sum_{\mathcal{D}} \mathcal{I}_{T,L}(\mathcal{D},0)
   =
   \amu(\infty)
   +   \Delta a_\text{t}^\text{OB}(T)
   \ ,
   \label{eq:asympt:nzero}
\end{gather}
where the out-of-the-box (OB) contribution is defined in eq.~\eqref{eq:stat:DeltaIt-OB} above,\footnote{Throughout the manuscripts we will repeat equations for convenience a number of times and, in all subsequent cases, simply do this without stating it explicitly, using an unnumbered equation for the repeated instance.}
\begin{gather}
     \Delta a_\text{t}^\text{OB}(T)
   =
   - \int_{\frac{T}{2}}^\infty dx_0 \ {\mathcal K}(x_0) G(x_0|\infty)
   \, . \nonumber
\end{gather}
Since $G(x_0|\infty) = O(e^{-2mx_0})$, and for large $x_0$ and the kernel diverges
polynomially, it follows that
\begin{gather}
    \Delta a_\text{t}^\text{OB}(T)
   =
   O \left( e^{ - m T} \right)
   \ .
   \label{eq:asympt-0}
\end{gather}

At this stage we have recast eq.~\eqref{eq:I-leading-0} as
\begin{gather}
   \Delta a(T,L)
   =
   \Delta a_\text{t}^\text{OB}(T)
   + 
   \sum_{\mathcal{D}}
   \Bigg \{
   \sum_{ \substack{ [n_0] = 0 \\ [\vec{n}] \neq \vec{0}}}
   + \sum_{ \substack{ [n_0] \neq 0 \\ [\vec{n}] = \vec{0}}}
   + \sum_{ \substack{ [n_0] \neq 0 \\ [\vec{n}] \neq \vec{0}}}
   \Bigg \}
   \mathcal{I}_{T,L}(\mathcal{D},n)
   \ .
  \label{eq:I-leading-0-vB} 
\end{gather}
In order to analyze the remaining terms,
we use the
following results about pure gauge fields.
\begin{proposition} \label{prop:pure-t}
   $\hat{\epsilon}_0(n_0) = 0$ if and only if $n_0$ is a pure gauge field. If
   $n_0$ is not a pure gauge field, then $\hat{\epsilon}_0(n_0) \ge 1$.
   \emph{(Corollary of theorem~\ref{theo:onehalf}.)}
\end{proposition}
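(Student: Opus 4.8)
The plan is to reduce everything to the gauge-invariant holonomy content of the temporal gauge field $n_0$. Two elementary facts drive the argument: the objective in eq.~\eqref{eq:hatepsilon-0} depends on $x_0$ only through the line differences $\delta x_0(\ell)$; and summing $[C:\ell]\,\delta x_0(\ell)$ around any loop $C$ telescopes to zero, whereas summing $[P:\ell]\,\delta x_0(\ell)$ along a path $P$ from $a$ to $b$ telescopes to $x_0(b)-x_0(a)=x_0(b)$. First I would dispose of the easy implication: if $n_0$ is pure there is an integer field $\lambda_0$ with $\lambda_0(a)=\lambda_0(b)=0$ and $n_0(\ell)=\lambda_0[i(\ell)]-\lambda_0[f(\ell)]$, so the admissible choice $x_0(v)=\lambda_0(v)$ (note $x_0(b)=0\in[0,1/2]$) annihilates every summand and gives $\hat\epsilon_0(n_0)=0$.

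For the converse and the spectral gap I would establish two lower bounds valid for \emph{every} admissible $x_0$, which then pass to the infimum. The first is a loop bound: for any loop $C$, the telescoping identity gives $\oint_C n_0\equiv\sum_{\ell\in C}[C:\ell]\,n_0(\ell)=\sum_{\ell\in C}[C:\ell](\delta x_0(\ell)+n_0(\ell))$, so $|\oint_C n_0|\le\sum_\ell|\delta x_0(\ell)+n_0(\ell)|$ and hence $\hat\epsilon_0(n_0)\ge|\oint_C n_0|$, an integer. The second is the path bound, and it is here that I expect the real difficulty — and the role of Theorem~\ref{theo:onehalf}. Along a single $a$--$b$ path one only gets $\hat\epsilon_0\ge\mathrm{dist}\!\big(-h_{ab},[0,1/2]\big)$, with $h_{ab}=\oint_P n_0$ the $a$-to-$b$ Wilson line (path-independent once all loop holonomies vanish); for $h_{ab}=\pm1$ this is merely $1/2$. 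To sharpen it I would use one-particle irreducibility: by Proposition~\ref{prop:1PI} the graph is 1PI between $a$ and $b$, so the Menger-type statement recalled before Proposition~\ref{prop:1PI} supplies two edge-disjoint $a$--$b$ paths $P_1,P_2$. Adding the single-path inequality over these disjoint edge sets yields $2\,|x_0(b)+h_{ab}|\le\sum_\ell|\delta x_0(\ell)+n_0(\ell)|$, hence $\hat\epsilon_0(n_0)\ge2\,\mathrm{dist}\!\big(-h_{ab},[0,1/2]\big)$.

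Finally I would assemble these bounds using the characterization that $n_0$ is pure precisely when all its loop holonomies \emph{and} $h_{ab}$ vanish: when the loop holonomies vanish one integrates $n_0$ along paths from $a$ to build an integer $\lambda_0$ with $\lambda_0(a)=0$ and $\lambda_0(b)=-h_{ab}$, which is admissible exactly when $h_{ab}=0$. Thus if $n_0$ is not pure then either some $\oint_C n_0\ne0$, and the loop bound forces $\hat\epsilon_0\ge1$, or else all loop holonomies vanish while $h_{ab}\in\mathbb Z\setminus\{0\}$, in which case $\mathrm{dist}(-h_{ab},[0,1/2])\ge1/2$ and the path bound forces $\hat\epsilon_0\ge2\cdot\tfrac12=1$. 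This proves the claimed gap, and contrapositively shows $\hat\epsilon_0(n_0)=0$ can hold only when $n_0$ is pure, completing the equivalence. The one genuinely delicate ingredient is the factor of two in the path bound: it requires honestly edge-disjoint $a$--$b$ paths — the 1PI input — together with careful bookkeeping of the integer $h_{ab}$ against the half-width box $x_0(b)\in[0,1/2]$, which is exactly what Theorem~\ref{theo:onehalf} is set up to deliver.
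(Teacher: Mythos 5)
Your proof is correct, and it takes a genuinely different, more elementary route than the paper's. The paper deduces Proposition~\ref{prop:pure-t} from Theorem~\ref{theo:onehalf}, whose proof runs through the full machinery of appendix~\ref{app:epsilon}: Lemma~\ref{lemma:minumum-0} (a minimizer of eq.~\eqref{eq:hatepsilon-0} can be chosen to vanish on a 2-tree disconnecting $a$ and $b$), the closed formula of Theorem~\ref{theo:epshat} expressing $\hat{\epsilon}_0(n_0)$ through Wilson loops and Wilson lines attached to that 2-tree, and its axial-gauge form, Corollary~\ref{corollary:epshat:axial}; there, the 1PI property enters by excluding the value $\tfrac{1}{2}$, which would force a one-line cut-set separating $a$ from $b$. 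You bypass all of that: a telescoping/triangle-inequality loop bound $\hat{\epsilon}_0(n_0)\ge|n_0(C)|$, a doubled path bound obtained from the two disjoint $a$--$b$ paths that 1PI guarantees (giving $\hat{\epsilon}_0(n_0)\ge 2\,|x_0(b)+h_{ab}|$ when all loop holonomies vanish), and the holonomy characterization of purity (all Wilson loops and the $a$-to-$b$ Wilson line vanish) suffice for the dichotomy $\hat{\epsilon}_0=0$ or $\hat{\epsilon}_0\ge 1$. Your disjoint-paths device is in fact the same one the paper uses in section~\ref{subsec:diag:separation} to prove $\hat{\epsilon}'_{0}(0)\ge 1$, and 1PI is genuinely indispensable in your argument just as in the paper's: on a graph where a single line separates $a$ from $b$ (hence not 1PI), a unit Wilson line yields $\hat{\epsilon}_0=\tfrac{1}{2}$, so your factor of two is exactly what closes the gap between $\tfrac{1}{2}$ and $1$. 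The trade-off: your argument is short and self-contained but yields only this proposition, whereas the paper's machinery also delivers the full spectrum $\hat{\epsilon}_0(n_0)\in\tfrac{\mathbb{N}}{2}\setminus\{\tfrac{1}{2}\}$ and the explicit formula~\eqref{eq:theo:epshat:0}, both of which are reused downstream, e.g.\ to classify the gauge fields with $\hat{\epsilon}_0(n_0)=1$ in Theorem~\ref{theo:one-0}. One cosmetic point: your closing sentence credits Theorem~\ref{theo:onehalf} with the final bookkeeping, but your proof never actually invokes that theorem --- the argument stands on its own.
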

\begin{proposition} \label{prop:pure-s}
   $\hat{\epsilon}_\text{s}(\vec{n}) = 0$ if and only if $\vec{n}$ is a pure
   gauge field. If $\vec{n}$ is not a pure gauge field, then
   $\hat{\epsilon}_\text{s}(\vec{n}) \ge 1$.
   \emph{(Theorem~\ref{theo:pure-s}.)}
\end{proposition}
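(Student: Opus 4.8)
The plan is to prove the quantitative bound first, since it immediately yields the nontrivial direction of the stated biconditional. Concretely, I would establish (i) that a pure gauge field gives $\hat{\epsilon}_\text{s}(\vec{n})=0$, and (ii) that a non-pure gauge field gives $\hat{\epsilon}_\text{s}(\vec{n})\ge 1$; together these cover both the equivalence $\hat{\epsilon}_\text{s}(\vec{n})=0 \Leftrightarrow \vec{n}\text{ pure}$ and the final inequality, since (ii) is exactly the contrapositive of ``$\hat{\epsilon}_\text{s}(\vec{n})=0 \Rightarrow \vec{n}\text{ pure}$''. Direction (i) is immediate: if $\vec{n}$ is pure there is an integer field $\vec{\lambda}(v)\in\mathbb{Z}^3$ with $\vec{\lambda}(a)=\vec{0}$ and $\vec{n}(\ell)=\vec{\lambda}[i(\ell)]-\vec{\lambda}[f(\ell)]$; choosing $\vec{x}(v)=\vec{\lambda}(v)$ makes every summand $\|\delta\vec{x}(\ell)+\vec{n}(\ell)\|_2$ in eq.~\eqref{eq:hatepsilon-s} vanish, and since that sum is non-negative its minimum is $0$.

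The key preliminary for direction (ii) is a loop characterization of purity. For any loop $C$ define the winding $\vec{w}(C)=\sum_{\ell\in C}[C:\ell]\,\vec{n}(\ell)\in\mathbb{Z}^3$. I would argue that $\vec{n}$ is pure if and only if $\vec{w}(C)=\vec{0}$ for every loop $C$: given vanishing windings, fix $\vec{\mu}(a)=\vec{0}$ and set $\vec{\mu}(v)=\sum_{\ell\in P}[P:\ell]\,\vec{n}(\ell)$ along any path $P$ from $a$ to $v$, which is path-independent precisely because all windings vanish and the graph is connected (Proposition~\ref{prop:1PI}); one checks that $\vec{n}(\ell)=\vec{\mu}[f(\ell)]-\vec{\mu}[i(\ell)]$, that $\vec{\mu}$ is integer-valued because $\vec{n}$ is, and that $\vec{\mu}(a)=\vec{0}$, so $\vec{\lambda}=-\vec{\mu}$ realizes $\vec{n}$ as a pure gauge field. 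Contrapositively, if $\vec{n}$ is not pure then some loop $C$ has $\vec{w}(C)\neq\vec{0}$, and since $\vec{w}(C)$ has integer entries, $\|\vec{w}(C)\|_2\ge 1$.

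The crux (ii) is then a short triangle-inequality argument valid for every field $\vec{x}$. Around the loop $C$ the differences telescope, $\sum_{\ell\in C}[C:\ell]\,\delta\vec{x}(\ell)=\vec{0}$, so $\sum_{\ell\in C}[C:\ell]\,(\delta\vec{x}(\ell)+\vec{n}(\ell))=\vec{w}(C)$. Using $[C:\ell]\in\{-1,1\}$ and the triangle inequality in $\mathbb{R}^3$, $\sum_{\ell\in C}\|\delta\vec{x}(\ell)+\vec{n}(\ell)\|_2 \ge \|\vec{w}(C)\|_2\ge 1$. Since the remaining summands over $\mathcal{L}\setminus C$ are non-negative, the full sum in eq.~\eqref{eq:hatepsilon-s} is $\ge 1$ for every $\vec{x}$, and minimizing over $\vec{x}$ with $\vec{x}(a)=\vec{0}$ gives $\hat{\epsilon}_\text{s}(\vec{n})\ge 1$.

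I expect the main obstacle to be the loop characterization of purity, specifically showing that vanishing windings force $\vec{n}$ to be an \emph{integer} coboundary (the reconstruction of $\vec{\mu}$, its path-independence, and the verification that it inherits integrality and the boundary condition at $a$); everything else is either immediate or a one-line estimate. The integrality of $\vec{n}(\ell)$ is what pins the bound to the sharp value $1$ rather than to an arbitrarily small number, so it must be used explicitly. In contrast with the temporal analogue (Proposition~\ref{prop:pure-t}), no complication from the restricted range $x_0(b)\in[0,1/2]$ arises here, because $\vec{x}(b)$ is unconstrained; this is why the spatial case can be handled directly rather than through the auxiliary ``one-half'' theorem.
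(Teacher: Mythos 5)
Your proof is correct, and it takes a genuinely different route from the paper's. The paper deduces Proposition~\ref{prop:pure-s} (Theorem~\ref{theo:pure-s}) from the componentwise inequality $\hat{\epsilon}_\text{s}(\vec{n}) \ge \hat{\epsilon}_k(n_k)$ together with Theorem~\ref{theo:onehalf}, whose proof runs through the tree/2-tree solution of the minimization problem (Theorem~\ref{theo:epshat}) and the axial-gauge construction (Corollary~\ref{corollary:epshat:axial}); purity and the lower bound are thus handled one spatial component at a time, never with the 3-vector norm directly. You instead stay with the full Euclidean norm and use only two elementary ingredients: (a) the characterization of purity by vanishing of all Wilson loops $\vec{n}(C)$, established by an explicit potential (coboundary) construction, and (b) the telescoping identity around a loop with $\vec{n}(C)\neq\vec{0}$, combined with the triangle inequality and the integrality of $\vec{n}(C)$, which pins the bound at the sharp value $1$. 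Your argument is more self-contained, needing none of the appendix machinery; the paper's choice is economical in context because the tree/axial-gauge apparatus must be developed anyway for the sharper results (Theorems~\ref{theo:sqrt} and \ref{theo:one-0}), and it yields the stronger statement that each $\hat{\epsilon}_k(n_k)$ is a non-negative integer, not merely $\ge 1$. One step you should make airtight in a full write-up: path-independence of your potential $\vec{\mu}$ does not follow in one line from vanishing windings on loops, because two paths from $a$ to $v$ may overlap and their ``difference'' is a closed walk, not necessarily a single loop; you need the standard cycle-space fact that the signed incidence vector of a closed walk is an integer combination of those of simple loops, or you can sidestep it by defining $\vec{\mu}$ along the unique paths of a fixed spanning tree and checking the coboundary property on non-tree lines via their fundamental loops --- which is essentially the content of the paper's Theorem~\ref{theo:gauge-equivalence-k}. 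Since you explicitly flag this as the crux and the fact is standard, this is a presentational point rather than a gap.
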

A direct consequence of these facts and inequality~\eqref{eq:epsilon-3} is that,
if neither $n_0$ nor $\vec{n}$ are pure gauge fields, then $L \epsilon_r(0,\vec{n})
\ge L$,  $L \epsilon_r(n_0,\vec{0}) \ge T$ and $L \epsilon_r(n) \ge
\sqrt{T^2+L^2}$, which implies the following for the terms in the curly brakets in
eq.~\eqref{eq:I-leading-0-vB}
\begin{gather}
   \sum_{\mathcal{D}}
   \sum_{ \substack{ [n_0] = 0 \\ [\vec{n}] \neq \vec{0}}}
   \mathcal{I}_{T,L}(\mathcal{D},n)
   =
   O \left( e^{ - m L} \right)
   \ ,
   \label{eq:asympt-1}
   \\
   \sum_{\mathcal{D}}
   \sum_{ \substack{ [n_0] \neq 0 \\ [\vec{n}] = \vec{0}}}
   \mathcal{I}_{T,L}(\mathcal{D},n)
   =
   O \left( e^{ - m T} \right)
   \ ,
   \label{eq:asympt-2}
   \\
   \sum_{\mathcal{D}}
   \sum_{ \substack{ [n_0] \neq 0 \\ [\vec{n}] \neq \vec{0}}}
   \mathcal{I}_{T,L}(\mathcal{D},n)
   =
   O \left( e^{ - m \sqrt{T^2+L^2}} \right)
   \ .
   \label{eq:asympt-3}
\end{gather}
In the following, we neglect terms of order $O ( e^{ - m
\sqrt{T^2+L^2}} )$.

Focus first on the term in eq.~\eqref{eq:I-leading-0-vB} with $[n_0] \neq 0$
and $[\vec{n}] = \vec{0}$, i.e.~on th left-hand side of eq.~\eqref{eq:asympt-2}.  
From eqs.~\eqref{eq:F-Dn-1} and \eqref{eq:I-Dn} it is
evident that $\mathcal{F}_{T,L}(\mathcal{D},n)$, and hence
$\mathcal{I}_{T,L}(\mathcal{D},n)$, do not depend on $L$ if $\vec{n}=0$. In
particular, one can replace $L \to \infty$ in $\mathcal{I}_{T,L}(\mathcal{D},n)$
without changing its value. Combining this term with the OB contribution, we
define the finite-$T$ corrections as
\begin{gather}
   \Delta a(T, \infty)
   =
   \Delta a^\text{OB}_\text{t}(T)
   + 
   \sum_{\mathcal{D}}
   \sum_{ \substack{ [n_0] \neq 0 \\ [\vec{n}] = \vec{0}}}
   \mathcal{I}_{T,\infty}(\mathcal{D},n)
   \ .
   \label{eq:DeltaIt}
\end{gather}
From eqs.~\eqref{eq:asympt-0} and \eqref{eq:asympt-2}, it follows clearly that
\begin{gather}
    \Delta a(T, \infty)
   =
   O \left( e^{ - m T} \right)
   \ .
   \label{eq:asympt-4}
\end{gather}

We next turn to the term in eq.~\eqref{eq:I-leading-0} with $[n_0] = 0$ and
$[\vec{n}] \neq \vec{0}$, i.e.~the left-hand side of eq.~\eqref{eq:asympt-1}. This term depends on $L$ via
$\mathcal{F}_{T,L}(\mathcal{D},n)$ and on $T$ via the $x_0$ integral in
eq.~\eqref{eq:I-Dn}. Again, $\mathcal{F}_{T,L}(x_0|\mathcal{D},n)$ does not
depend on $T$ for $n_0=0$. By splitting the integrals in $x_0$ as
$\int_0^{\frac{T}{2}} = \int_0^{\infty} - \int_{\frac{T}{2}}^\infty$, and
replacing $T \to \infty$ in $\mathcal{F}_{T,L}(\mathcal{D},n)$, we readily get
\begin{gather}
   \sum_{\mathcal{D}}
   \sum_{ \substack{ [n_0] = 0 \\ [\vec{n}] \neq \vec{0}}}
   \mathcal{I}_{T,L}(\mathcal{D},n)
   =
   \Delta a(\infty, L) + \Delta R(T,L)
   \ , \\
   \Delta a(\infty, L)
   =
   \sum_{\mathcal{D}}
   \sum_{ \substack{ [n_0] = 0 \\ [\vec{n}] \neq \vec{0}}}
   \mathcal{I}_{\infty,L}(\mathcal{D},n)
   \label{eq:DeltaIs}
   \ , \\
   \Delta R(T,L)
   =
   \sum_{\mathcal{D}}
   \sum_{ \substack{ [n_0] = 0 \\ [\vec{n}] \neq \vec{0}}}
   \int_{\frac{T}{2}}^\infty dx_0 \ {\mathcal K}(x_0) \sum_{\mathcal{D}} \mathcal{F}_{\infty,L}(x_0|\mathcal{D},n)
   \label{eq:DeltaR}
   \ .
\end{gather}
We want to prove that the term $\Delta R(T,L)$ is subleading. A trivial
extension of the anaysis presented in section~\ref{subsec:diag:asymptotic} yields
\begin{gather}
   \ln \int_{\frac{T}{2}}^\infty dx_0 \ {\mathcal K}(x_0)
   \mathcal{F}_{T,L}(x_0|\mathcal{D},n)
   =
   - m L \epsilon'_r(n) + O(\ln L)
   \label{eq:saddleprime}
   \ , \\
   L \epsilon'_r(n)
   \ge
   \frac{
   T^2 \hat{\epsilon}'_{0}(n_0) + L^2 \hat{\epsilon}_\text{s}(\vec{n})
   }{
   \sqrt{T^2+L^2}
   }
   \ ,
   \\
   \hat{\epsilon}'_{0}(n_0) = 
   \min_{\substack{x_0 \text{ with } x(a)=0 \\ x_0(b) \in [1/2,\infty)}}
   \sum_{\ell \in \mathcal{L}} \left| \delta x_0(\ell) + n_0(\ell) \right|
   \ .
\end{gather}
Note that $\hat{\epsilon}'_{0}(n_0)$ differs from
$\hat{\epsilon}_{0}(n_0)$ for the domain over which the minimum is taken,
and this reflects the fact that the $x_0$ integral in $\Delta R(T,L)$ runs over
$[T/2,\infty)$. We specialize to the case of interest, $n_0=0$ and $[\vec{n}]
\neq \vec{0}$. Thanks to proposition~\ref{prop:pure-s},
$\hat{\epsilon}_\text{s}(\vec{n}) \ge 1$. Moreover, as we will see in a moment,
$\hat{\epsilon}'_{0}(0) \ge 1$, implying
\begin{gather}
   L \epsilon'_r(n)
   \ge
   \sqrt{T^2 + L^2}
   \ .
   \label{eq:boundprime}
\end{gather}
To prove the inequality $\hat{\epsilon}'_{0}(0) \ge 1$ note that, since
$\mathcal{D}$ is 1PI, two disjoint paths $P_1$ and $P_2$ from $a$ to $b$ exist.
Then, by using the triangular inequality,
\begin{gather}
   \sum_{\ell \in \mathcal{L}} \left| x_0[f(\ell)] - x_0[i(\ell)] \right|
   \ge
   \sum_{j=1,2} \sum_{\ell \in P_j} \left| x_0[f(\ell)] - x_0[i(\ell)] \right|
   \ge
   2 \left| x_0(b) - x_0(a) \right|
   \ ,
\end{gather}
we deduce
\begin{gather}
   \hat{\epsilon}'_{0}(0)
   \ge
   \min_{x_0(b) \in [1/2,\infty)} 2|x_0(b)|
   =
   1
   \ .
\end{gather}
Finally, using inequality \eqref{eq:boundprime} in eqs.~\eqref{eq:DeltaR} and
\eqref{eq:saddleprime}, we obtain
\begin{gather}
   \Delta R(T,L)
   =
   O\left( e^{ - m \sqrt{T^2+L^2}} \right)
   \ ,
\end{gather}
which is of the same order of terms that we have already neglected in
eq.~\eqref{eq:I-leading-0}. We also note that eq.~\eqref{eq:asympt-1} implies
\begin{gather}
   \Delta a(\infty,L)
   =
   O\left( e^{ - m L} \right)
   \ .
\end{gather}
This completes our demonstration of items 1 and 2 as listed in the Introduction, and in particular of eqs.~\eqref{eq:stat:separation} and \eqref{eq:stat:LTscaling}.

\subsection{Leading finite-\texorpdfstring{$L$}{L} corrections}
\label{subsec:diag:leading-L}

We come now to the task of isolating and characterizing the leading exponentials
in $\Delta a(\infty, L)$, defined in eq.~\eqref{eq:DeltaIs} %
\begin{gather}
   \Delta a(\infty, L)
   =
   \sum_{\mathcal{D}}
   \sum_{ \substack{ [n_0] = 0 \\ [\vec{n}] \neq \vec{0}}}
   \mathcal{I}_{\infty,L}(\mathcal{D},n)
   \ .
   \nonumber
\end{gather}
In the following analysis, a special role will be played by the following class
of gauge fields and orbits.

\textit{Gauge fields localized on a line. Simple gauge fields and orbits.} Given
a line $\ell$, the gauge field $\vec{n}$ is said to be localized on $\ell$ if
and only if it is nonzero on $\ell$ and zero on any other line. The gauge field
$\vec{n}$ and the gauge orbit $[\vec{n}]$ are said to be simple if and only if
$\vec{n}$ is gauge equivalent to a gauge field which is localized on a line.

\begin{proposition} \label{prop:sqrt}
   If $1 \le \hat{\epsilon}_\text{s}(\vec{n}) < \sqrt{2+\sqrt{3}}$ then
   $\vec{n}$ is simple. \emph{(Theorem~\ref{theo:sqrt}.)}
\end{proposition}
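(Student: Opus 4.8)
The plan is to recast $\hat{\epsilon}_\text{s}(\vec{n})$, defined in eq.~\eqref{eq:hatepsilon-s}, as a purely geometric network-minimization problem, and then to argue by contradiction. Setting $\vec{y}(\ell) = \delta\vec{x}(\ell) + \vec{n}(\ell)$, the vector-valued edge assignment $\vec{y}$ ranges over all real-valued cochains whose holonomy around every loop $C$ equals the integer flux of $\vec n$, so that
\begin{gather}
   \hat{\epsilon}_\text{s}(\vec{n})
   =
   \min\Bigl\{ \textstyle\sum_{\ell} \lVert \vec{y}(\ell)\rVert_2 \ : \ \sum_{\ell\in C}[C:\ell]\,\vec{y}(\ell) = \sum_{\ell\in C}[C:\ell]\,\vec{n}(\ell)\ \ \text{for all loops } C \Bigr\}
   \ .
\end{gather}
This makes manifest that $\hat{\epsilon}_\text{s}$ depends only on the gauge orbit, since only the fluxes enter, consistent with its gauge invariance. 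In the same language, $\vec{n}$ is simple precisely when its flux class is carried by a single line, i.e.\ there is a line $\ell_0$ and $\vec{m}\in\mathbb{Z}^3$ with $\sum_{\ell\in C}[C:\ell]\vec n(\ell)=[C:\ell_0]\,\vec{m}$ for every $C$; equivalently, the optimal $\vec{y}$ can be supported on a single winding loop, whence $\hat{\epsilon}_\text{s}(\vec{n}) = \lVert\vec{m}\rVert \in \{1,\sqrt{2},\sqrt{3},\dots\}$.

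Next I would study the optimizer $\vec{y}^\star$ directly. Minimizing a sum of Euclidean norms is a Steiner/taut-string problem: each line with $\vec{y}^\star(\ell)\neq 0$ pulls with unit tension along $\vec{y}^\star(\ell)$, each slack line ($\vec{y}^\star(\ell)=0$) exerts a force of magnitude at most $1$, and these balance at every vertex. Contracting all slack lines yields a reduced graph $H$ on which $\vec{y}^\star$ is everywhere nonzero and every surviving loop still carries the original integer flux; since the hypothesis $\hat{\epsilon}_\text{s}\ge 1$ means, by Proposition~\ref{prop:pure-s}, that $\vec{n}$ is non-pure, $H$ contains at least one loop with nonzero flux. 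The goal is then a dichotomy: either $H$ is a single winding loop, whence $\vec{n}$ is simple, or $H$ has a branch vertex of degree $\ge 3$ joined by three internally disjoint taut arcs, in which case I will show $\hat{\epsilon}_\text{s}\ge\sqrt{2+\sqrt{3}}$.

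For the second alternative, the relevant minimal obstruction is exactly the sunset topology of figure~\ref{fig:diags}(b): two branch points connected by three arcs carrying effective winding vectors $\vec{a}_1,\vec{a}_2,\vec{a}_3$ that, because the configuration is not simple, cannot be made to coincide pairwise by a global integer shift, i.e.\ the three lattice points $-\vec{a}_1,-\vec{a}_2,-\vec{a}_3$ are distinct. Straightening each arc, the total length is the sum of distances from a common point to these three vertices, minimized by the Fermat point. Using the Fermat-length formula $d=\sqrt{\tfrac12(a^2+b^2+c^2)+2\sqrt{3}\,S}$ for a triangle with sides $a,b,c$ and area $S$, the smallest value over distinct lattice triangles is attained by the right-isosceles triangle with legs $1$ (sides $1,1,\sqrt{2}$, area $\tfrac12$), giving $d=\sqrt{2+\sqrt{3}}$; collinear or larger lattice triangles only increase $d$ (the collinear case gives $2$). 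This pins down the threshold and shows that every non-simple configuration is bounded below by $\sqrt{2+\sqrt{3}}$, contradicting $\hat{\epsilon}_\text{s}<\sqrt{2+\sqrt{3}}$ and forcing $\vec n$ to be simple.

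The hard part is the dichotomy for an \emph{arbitrary} graph: ruling out that a network with many lines and higher-degree branch points could undercut the sunset bound through cancellations among the taut tension vectors. Making this rigorous requires the graph-theoretic machinery already introduced---disjoint paths guaranteed by $n$-particle irreducibility, together with the analysis of cut-sets---to reduce the reduced graph $H$ to the three-arc core, and a careful use of the integrality of the holonomies to guarantee that three genuinely distinct lattice vectors appear whenever $\vec{n}$ fails to be simple. I expect this reduction, rather than the Fermat-point estimate itself, to be the main technical burden, and it is presumably the content of Theorem~\ref{theo:sqrt}.
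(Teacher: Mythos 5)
Your proposal correctly identifies the geometric endpoint---the Fermat--Torricelli problem for a unit right-isosceles lattice triangle, giving $\sqrt{2+\sqrt{3}}$, with collinear/disjoint configurations bounded by $2$---and this is indeed the same mechanism that closes the paper's proof of theorem~\ref{theo:sqrt}. But as a proof the proposal has a genuine gap, and you name it yourself: the reduction of an arbitrary non-simple $\vec{n}$ to a three-arc configuration with three distinct integer ``winding vectors'' is deferred as ``presumably the content of Theorem~\ref{theo:sqrt}.'' That reduction \emph{is} the theorem; everything you actually carry out (the cochain reformulation and the Fermat-point evaluation) is the easy part. Moreover, the dichotomy you propose for the contracted optimizer graph $H$---either a single winding loop or a branch vertex of degree $\ge 3$ with three internally disjoint taut arcs---is incomplete as stated: $H$ can consist of two vertex-disjoint winding loops carrying independent fluxes (e.g.\ $(1,0,0)$ and $(0,1,0)$), which is neither a single loop nor branched; this is precisely one of the two cases the paper must treat (there the bound is $2$, so the conclusion survives, but the case analysis cannot omit it). Finally, the Steiner/tension-balance machinery (existence and structure of optimizers, that contracting slack lines preserves the integer holonomies, that taut arcs may be straightened at no cost, and that non-simplicity forces three \emph{distinct} lattice vertices) is asserted rather than proven, and the claim that the right-isosceles triangle minimizes the Fermat length over all distinct lattice triples also needs an argument (one must show $a^2+b^2+c^2\ge 4$ and area $\ge \tfrac12$ for non-degenerate lattice triangles, and handle the $\ge 120^\circ$ case separately).

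The paper avoids all of this optimizer analysis with a component-wise argument that you did not find. From $\hat{\epsilon}_\text{s}(\vec{n}) \ge \hat{\epsilon}_k(n_k)$ [eq.~\eqref{eq:hatepsilon-s-k-ineq}] and the hypothesis $\hat{\epsilon}_\text{s}(\vec{n}) < \sqrt{2+\sqrt{3}} < 2$, theorem~\ref{theo:onehalf} (integrality of $\hat{\epsilon}_k$) and theorem~\ref{theo:one-k} force each component $n_k$ to be, up to gauge, either zero or localized on a single line with unit value. If $\vec{n}$ is not simple, two components localize on lines $\ell_1,\ell_2$ that are not $s$-equivalent; by theorem~\ref{theo:s-equivalence}, $\{\ell_1,\ell_2\}$ is then not a cut-set, so (using 1PI) a tree $T$ of $\mathcal{G}-\{\ell_1,\ell_2\}$ exists, and the two fundamental loops $C_1,C_2$ of $\ell_1,\ell_2$ with respect to $T$ have projected fluxes $\ul{n}(C_1),\ul{n}(C_2)$ that are \emph{orthogonal unit vectors}---the integrality you hoped to exploit ``carefully'' is delivered for free here. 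The triangle inequality applied line-by-line over $C_1,C_2$ then reduces the whole minimization to exactly two cases: $C_1\cap C_2=\varnothing$ (bound $2$) or $C_1\cap C_2$ a shared path, which yields your Fermat problem in eq.~\eqref{eq:fermat:min} with the triangle pinned to be right-isosceles with legs $1$, hence the bound $\sqrt{2+\sqrt{3}}$. So the correct skeleton is: per-component localization first, then a two-loop case analysis---not a structural analysis of the optimizer of the continuous network problem.
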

Recall that $\hat{\epsilon}_\text{s}(\vec{n}) <1$ if and only if $\vec{n}$ is
pure (proposition~\ref{prop:pure-s}). A direct consequence of these facts and
inequality~\eqref{eq:epsilon-3} is that, if $[\vec{n}] \neq \vec{0}$ is not
simple then $L \epsilon_r(0,\vec{n}) \ge L \sqrt{2+\sqrt{3}}$. By restricting
the sum over $[\vec{n}] \neq \vec{0}$ to the set of simple gauge orbits, we
obtain
\begin{align}
   \Delta a(\infty, L)
   & =
   \Delta a_{\text s}(L)
   + O \left( e^{ - \sqrt{2+\sqrt{3}} m L } \right)
   \, , \\
    \Delta a_{\text s}(L) & =   \sum_{\mathcal{D}}
   \sum_{ [\vec{n}] \text{ simple} }
   \mathcal{I}_{\infty,L}(\mathcal{D},n) \Big|_{n_0=0}  \,.
      \label{eq:leading-L-simple-0}
\end{align}
This formula is not yet useful in practice. Ideally, we would like
to replace the sum over simple gauge orbits with the sum over the gauge fields
localized on a line. However, this would lead to some double counting, since
gauge fields localized on different lines may still be gauge equivalent to one
another. It turns out that gauge equivalence of gauge fields localized on a
single line can be understood in geometrical terms. We introduce a notion of
equivalence between lines:

\textit{$s$-equivalence.} The lines $\ell_1$ and $\ell_2$ are said to be
$s$-equivalent if and only if every loop containing $\ell_1$ contains also $\ell_2$ and
vice versa. One proves that this is indeed an equivalence relation (theorem
\ref{theo:s-equivalence}). The $s$-equivalence classes are denoted by
$[\ell]_\text{s}$.

The following proposition provides the exact relation between gauge fields
localized on a single line, up to a gauge transformation, and the concept of
$s$-equivalence.

\begin{proposition} \label{prop:s-localization}
   With no loss of generality, the orientation of the lines of $\mathcal{G}$ can
   be chosen in such a way that, if $\ell$ and $\ell'$ are $s$-equivalent and
   $C$ is a loop that contains both, then $[C:\ell]=[C:\ell']$, i.e.~either both
   $\ell$ and $\ell'$ have the same orientation as $C$, or they both have the
   opposite orientation of $C$. This choice of orientation is assumed
   here.\\[1mm]
   Let $\vec{n}$ and $\vec{n}'$ be gauge fields localized on $\ell$ and $\ell'$
   respectively. $\vec{n}$ and $\vec{n}'$ are gauge equivalent if and only if
   $\ell$ and $\ell'$ are $s$-equivalent and $\vec{n}(\ell)=\vec{n}'(\ell')$.
   \emph{(Theorem~\ref{theo:s-localization}.)}
\end{proposition}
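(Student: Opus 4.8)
The plan is to reduce the whole statement to a single gauge-invariant quantity attached to oriented loops. For an oriented loop $C$ and a gauge field $\vec{n}$, define the flux
\[
\Phi_C(\vec{n}) = \sum_{\ell \in C} [C:\ell]\,\vec{n}(\ell) \in \mathbb{Z}^3 .
\]
The first step is to establish the key lemma that two gauge fields are gauge equivalent if and only if they carry the same flux around every loop. The forward implication is immediate: under $\vec{n}(\ell)\to \vec{n}(\ell)+\lambda[f(\ell)]-\lambda[i(\ell)]$ the extra terms telescope around a closed loop and cancel, so $\Phi_C$ is gauge invariant. The reverse implication is where the work lies: given $\vec{m}=\vec{n}-\vec{n}'$ with $\Phi_C(\vec{m})=\vec{0}$ for all $C$, I would build an admissible gauge transformation by setting $\lambda(a)=\vec{0}$ and $\lambda(v)=\sum_{\ell\in P(v)}[P(v):\ell]\,\vec{m}(\ell)$ along any path $P(v)$ from $a$ to $v$, which exists since $\mathcal{G}$ is connected. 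Path-independence of this definition is exactly the vanishing of the flux of $\vec{m}$ around the loop formed by two competing paths, so the hypothesis makes $\lambda$ well-defined, and one then checks $\delta\lambda=\vec{m}$ on every line. This is the discrete ``closed implies exact''.

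Next I would specialize to line-localized fields. If $\vec{n}$ is localized on $\ell$ with value $\vec{c}=\vec{n}(\ell)$, then $\Phi_C(\vec{n})=[C:\ell]\,\vec{c}$ when $\ell\in C$ and $\vec{0}$ otherwise; likewise for $\vec{n}'$ on $\ell'$ with value $\vec{c}'$. Since the relevant diagrams are 1PI (Proposition~\ref{prop:1PI}), hence $2$-edge-connected, every line lies on at least one loop and each such localized field is non-pure, so $\vec{c}\neq\vec{0}$ produces genuinely nonzero fluxes. Applying the lemma, $\vec{n}$ and $\vec{n}'$ are gauge equivalent iff the two flux patterns coincide. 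A loop containing exactly one of $\ell,\ell'$ would force one flux to be $\pm\vec{c}$ (or $\pm\vec{c}'$) and the other to vanish, which is impossible; hence every loop through $\ell$ passes through $\ell'$ and conversely, i.e.\ $\ell$ and $\ell'$ are $s$-equivalent. On each loop $C$ containing both lines the matching condition reads $[C:\ell]\,\vec{c}=[C:\ell']\,\vec{c}'$.

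Here I would invoke the orientation convention of the first part of the proposition. To justify that convention I would use the series-pair structure forced by $s$-equivalence: since every loop through $\ell$ uses $\ell'$, deleting $\ell$ turns $\ell'$ into a bridge, so $\mathcal{G}-\{\ell,\ell'\}$ splits into two components $A$ and $B$ joined only by $\ell$ and $\ell'$. Any loop through both must cross $A\leftrightarrow B$ once along $\ell$ and once along $\ell'$, in opposite senses relative to the direction of traversal; therefore the product $[C:\ell]\,[C:\ell']$ depends only on the fixed line orientations, not on $C$. This loop-independence is precisely what allows a global reorientation realizing $[C:\ell]=[C:\ell']$ for all common loops. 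Granting it, the matching condition collapses to $\vec{c}=\vec{c}'$, i.e.\ $\vec{n}(\ell)=\vec{n}'(\ell')$, completing the equivalence in both directions.

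I expect the main obstacle to be the reverse direction of the flux lemma together with the loop-independence claim in the first part, both of which hinge on the combinatorial fact that $s$-equivalent lines behave as an in-series pair. The cleanest route is to lean on the graph-theoretic results already established for $s$-equivalence (theorem~\ref{theo:s-equivalence}) and to phrase the bridge/two-component argument carefully, since it is what guarantees that the competing-paths loops needed for well-definedness of $\lambda$ interact consistently with localized fields. The remaining verifications---the telescoping computation and the check $\delta\lambda=\vec{m}$ on the non-tree lines via their fundamental cycles---are routine.
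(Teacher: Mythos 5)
Your proof is correct and follows essentially the same route as the paper's proof of theorem~\ref{theo:s-localization}: gauge equivalence is characterized by matching Wilson loops (your fluxes), which for line-localized fields forces $s$-equivalence and equality of the localized values once the orientation convention removes the sign ambiguity, and conversely matching fluxes on all loops gives gauge equivalence. The only cosmetic differences are that you establish the flux criterion inline via a discrete path-independence (``closed implies exact'') construction where the paper invokes its axial-gauge results (theorems~\ref{theo:axial} and~\ref{theo:gauge-equivalence-k}), and you justify the orientation convention through the bridge/two-component decomposition of $\mathcal{G}-\{\ell,\ell'\}$ where the paper defines the map $\omega_\ell$ and cites the loop--cut-set orthogonality relation of ref.~\cite{\GraphBook}.
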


A consequence of this proposition is that the simple gauge orbits are in
one-to-one correspondence with the pairs $([\ell],\vec{u})$ with $\vec{u} \in
\mathbb{Z}^3$, in the following way. Given a pair $([\ell],\vec{u})$ the
corresponding gauge orbit $[\vec{n}]$ is constructed by considering the gauge field
$\vec{n}$ localized on $\ell$ with $\vec{n}(\ell)=\vec{u}$. Therefore the sum in
eq.~\eqref{eq:leading-L-simple-0} can be represented as follows
\begin{gather}
     \Delta a_{\text s}(L) 
   =
   \sum_{\mathcal{D}}
   \sum_{[\ell]_\text{s}} \sum_{\vec{u} \in
   \mathbb{Z}^3}
   \mathcal{I}_{\infty,L}(\mathcal{D},n) \Big|_{n(\ell') = (0,\vec{u}) \delta_{\ell,\ell'}}
   \label{eq:DeltaIs-simple}
   \ .
\end{gather}
This completes our demonstration of items 3 and 4 as listed in the Introduction, and gives a precise definition to $\Delta a_{\text s}(L)$.

\subsection{Leading finite-\texorpdfstring{$T$}{T} corrections}
\label{subsec:diag:leading-T}

We come now to the task of isolating and characterizing the leading exponentials
in $ \Delta a(T, \infty) $, using eq.~\eqref{eq:DeltaIt} \begin{gather}
 \Delta a(T, \infty) - \Delta a^{\text{OB}}_{\text t}(T)  =  \sum_{\mathcal{D}}
   \sum_{ \substack{ [n_0] \neq 0 \\ [\vec{n}] = \vec{0}}}
   \mathcal{I}_{T,\infty}(\mathcal{D},n)
   \ . \label{eq:isolatingBPandWP}
\end{gather}

\begin{proposition} \label{prop:epshat-t-one}
   If $n_0$ is not pure then either $\hat{\epsilon}_{0}(n_0) = 1$ or else
   $\hat{\epsilon}_{0}(n_0) \ge \frac{3}{2}$. \emph{(Corollary of
   theorem~\ref{theo:onehalf}.)}
\end{proposition}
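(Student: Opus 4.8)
The plan is to reduce the threefold minimization in eq.~\eqref{eq:hatepsilon-0} to a single real parameter and then exploit an integrality structure of the resulting value function. Concretely, I would fix the temporal coordinate of the vertex $b$ to a value $\tau$ and set
\[
   h(\tau) = \min_{\substack{x_0 \text{ with } x_0(a)=0 \\ x_0(b)=\tau}} \sum_{\ell \in \mathcal{L}} \left| \delta x_0(\ell) + n_0(\ell) \right| \ ,
\]
so that $\hat{\epsilon}_0(n_0) = \min_{\tau \in [0,1/2]} h(\tau)$. The function $h$ is convex and piecewise linear in $\tau$, being the value function of an $\ell_1$-minimization whose only dependence on $\tau$ enters through the linear constraint $x_0(b)=\tau$. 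The whole proposition will follow once I control the locations of the breakpoints of $h$ and its values at integer arguments; this is precisely the structural information encoded in theorem~\ref{theo:onehalf}.

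The key step is to show that $h$ takes integer values at integer $\tau$ and has breakpoints only at integer $\tau$. I would obtain this from total unimodularity of the oriented incidence matrix of $\mathcal{G}$. Dualising the $\ell_1$ minimization turns it into a max-flow--type problem: writing $\left|\delta x_0(\ell)+n_0(\ell)\right| = \max_{|f_\ell|\le 1} f_\ell\,(\delta x_0(\ell)+n_0(\ell))$ and exchanging $\min$ and $\max$, the free interior coordinates enforce flow conservation, leaving
\[
   h(\tau) = \max_{\phi} \big[ \tau\,\phi + g(\phi) \big] \ , \qquad g(\phi) = \max_{\substack{|f_\ell|\le 1,\ \text{div} f = 0 \text{ at interior} \\ (\text{div} f)_b = \phi}} \ \sum_{\ell \in \mathcal{L}} f_\ell\, n_0(\ell) \ ,
\]
where $\phi$ is the net flow from $a$ to $b$. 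By integrality of flows with unit capacities, the admissible $\phi$ form an integer interval and $g(\phi)\in\mathbb{Z}$ for integer $\phi$; moreover $g$ is discretely concave, so no integer slope is skipped in the upper envelope. Hence every breakpoint of $h$ is of the form $g(\phi)-g(\phi+1)\in\mathbb{Z}$, and $h(\tau)\in\mathbb{Z}$ at integer $\tau$. Since the open interval $(0,1)$ contains no integer, $h$ is affine on $[0,1]$; in particular $h\big(\tfrac12\big) = \tfrac12\big(h(0)+h(1)\big)$ with $h(0),h(1)\in\mathbb{Z}_{\ge 0}$.

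It then remains to assemble the pieces. Because $h$ is affine on $[0,1/2]$, its minimum there is attained at an endpoint, so $\hat{\epsilon}_0(n_0) = \min\{h(0),\,h(\tfrac12)\} = \min\{h(0),\,\tfrac12(h(0)+h(1))\}$. Both candidates lie in $\tfrac12\mathbb{Z}_{\ge 0}$, whence $\hat{\epsilon}_0(n_0)\in\tfrac12\mathbb{Z}_{\ge 0}$. If $n_0$ is not pure, proposition~\ref{prop:pure-t} gives $\hat{\epsilon}_0(n_0)\ge 1$; but the only half-integers that are $\ge 1$ are $1,\tfrac32,2,\dots$, and none of them lies strictly between $1$ and $\tfrac32$. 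Therefore either $\hat{\epsilon}_0(n_0)=1$ or $\hat{\epsilon}_0(n_0)\ge\tfrac32$, as claimed. The value $\tfrac32$ is genuinely attained, e.g.\ when $h(0)=2$ and $h(1)=1$, so the bound is sharp.

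The main obstacle is the middle step: proving rigorously that the breakpoints of $h$ sit at integers, i.e.\ that the flow value function $g$ is integer-valued and discretely concave with no skipped slopes. This is exactly where the total unimodularity of the incidence structure---the substance of theorem~\ref{theo:onehalf}---does the work; once it is in hand, the half-integrality of $\hat{\epsilon}_0$ and the absence of values in $(1,\tfrac32)$ are immediate.
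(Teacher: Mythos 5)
Your proof is correct, but it takes a genuinely different route from the paper's. In the paper, the proposition is a one-line corollary of theorem~\ref{theo:onehalf}: that theorem asserts $\hat{\epsilon}_0(n_0) \in \frac{\mathbb{N}}{2}\setminus\{\tfrac12\}$ together with $\hat{\epsilon}_0(n_0)=0 \Leftrightarrow n_0$ pure, so a non-pure field automatically has $\hat{\epsilon}_0(n_0)\in\{1,\tfrac32,2,\dots\}$. Theorem~\ref{theo:onehalf} is in turn proved with the combinatorial machinery of appendix~\ref{app:epsilon}: lemma~\ref{lemma:minumum-0} produces a minimizer vanishing on a 2-tree, theorem~\ref{theo:epshat} and corollary~\ref{corollary:epshat:axial} convert this into the explicit axial-gauge formula \eqref{eq:corollary:epshat:axial:0}, from which membership in $\frac{\mathbb{N}}{2}$ is manifest, and the value $\tfrac12$ is then excluded by a dedicated argument that uses one-particle irreducibility (a cut-set of size one would contradict proposition~\ref{prop:1PI}). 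You instead obtain half-integrality from parametric LP duality: fixing $x_0(b)=\tau$, dualizing the $\ell_1$ problem into a unit-capacity flow problem, and using total unimodularity of the incidence matrix to conclude that $h(\tau)$ has integer values and integer breakpoints, hence is affine on $[0,1]$, giving $\hat{\epsilon}_0(n_0)=\min\bigl\{h(0),\tfrac12\bigl(h(0)+h(1)\bigr)\bigr\}\in\tfrac12\mathbb{Z}_{\ge 0}$. This is a clean, standard argument and is fully independent of the paper's 2-tree/axial-gauge formalism. What it does \emph{not} deliver is the exclusion of the value $\tfrac12$: on a graph consisting of a single line from $a$ to $b$ with $n_0(\ell)=-1$ one has exactly $\hat{\epsilon}_0=\tfrac12$, so any argument ruling this out must use irreducibility, and you import that input wholesale through proposition~\ref{prop:pure-t}. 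This is legitimate and non-circular (proposition~\ref{prop:pure-t} precedes the present statement and its proof does not depend on it), but it means your closing remark that total unimodularity is ``the substance of theorem~\ref{theo:onehalf}'' is only half right: TU reproduces the $\frac{\mathbb{N}}{2}$ part, while the $\setminus\{\tfrac12\}$ part---equivalently the $\ge 1$ bound you borrow---is precisely the portion of theorem~\ref{theo:onehalf} where the 1PI property of the diagrams does irreplaceable work. As for what each approach buys: yours is more self-contained and elementary for the half-integrality statement, while the paper's heavier machinery additionally characterizes \emph{which} gauge fields attain the small values (theorem~\ref{theo:one-0} and the type-1/type-2 classification), information that is essential later in the derivation and that the LP value-function argument does not provide.
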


We use this proposition and eq.~\eqref{eq:epsilon-3} to estimate the error made
when the sum over $n_0$ is restricted to the contributions with
$\hat{\epsilon}_{0}(n_0)=1$,
\begin{gather}
   \sum_{\mathcal{D}}
   \sum_{ \substack{ [n_0] \neq 0 \\ [\vec{n}] = \vec{0}}}
   \mathcal{I}_{T,\infty}(\mathcal{D},n)
   =
   \sum_{\mathcal{D}}
   \sum_{ \substack{ [n_0] \text{ with } \hat{\epsilon}_{0}(n_0)=1 \\ [\vec{n}] = \vec{0}}}
   \mathcal{I}_{T,\infty}(\mathcal{D},n)
   +
   O \left( e^{- \frac{3}{2} mT} \right)
   \ .
   \label{eq:leading-T-0}
\end{gather}
The analysis of the leading exponentials is more involved in this case, because
some non-simple gauge fields contribute as well. We generalize the definition of
localized gauge fields.

\textit{Localized gauge fields. Localizable and simple gauge fields and orbits.}
Given a subset of lines $A \subset \mathcal L$, the gauge field $n_0$ is said to
be localized on $A$ if and only it is zero for each line in $\mathcal{L}
\setminus A $, and non-zero on each line in $A$. The gauge field $n_0$ and the
gauge orbit $[n_0]$ are said to be localizable on $A$ if and only if $n_0$ is gauge
equivalent to a gauge field which is localized on $A$. The gauge field $n_0$ and
the gauge orbit $[n_0]$ are said to be simple if and only if they are
localizable on a single line.

\begin{proposition} \label{prop:one-0}
   If $\hat{\epsilon}_0(n_0) = 1$, then one of the following two possibilities
   is realized:
   \begin{enumerate}[topsep=0pt,itemsep=1ex]
      \item (\textit{Type-1} gauge fields.) Up to a gauge transformation, $n_0$
      is localized on a line $\ell$ and $|n_0(\ell)| = 1$ (in particular, $n_0$
      is simple).
      \item (\textit{Type-2} gauge fields.) Up to a gauge transformation, $n_0$
      is localized on a cut-set $S=\{\ell_1,\ell_2\}$ which disconnects $a$ and
      $b$. Assuming that, with no loss of generality, $i(\ell_{1,2})$ is
      connected to $a$ in $\mathcal G - S$, then $n_0(\ell_{1,2}) = -1$.
   \end{enumerate}
   \emph{(Theorem~\ref{theo:one-0}.)}
\end{proposition}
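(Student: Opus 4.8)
The plan is to recast $\hat{\epsilon}_0(n_0)$ as an $\ell_1$-minimization over a real-valued potential and to exploit the total unimodularity of the graph's incidence structure. Writing $g_\ell = x_0[f(\ell)] - x_0[i(\ell)] + n_0(\ell)$, the quantity in eq.~\eqref{eq:hatepsilon-0} is $\min \sum_\ell |g_\ell|$ over potentials with $x_0(a)=0$ and $x_0(b)\in[0,1/2]$, all other $x_0(v)$ free in $\mathbb{R}$. Since the only non-integer datum is the bound $x_0(b)\le 1/2$, a standard total-unimodularity argument (the structural content I expect theorem~\ref{theo:onehalf} to supply) shows that a minimizer exists whose potential takes values in $\tfrac12\mathbb{Z}$, with $x_0^*(b)\in\{0,1/2\}$. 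Performing simultaneously an integer gauge transformation $\lambda(v)=\lfloor x_0^*(v)\rfloor$ on the internal vertices and the matching shift $x_0\to x_0-\lambda$ (which leaves every $g_\ell$, hence the value of the sum, unchanged, while replacing $n_0$ by a gauge-equivalent representative) we may assume $x_0^*(v)\in\{0,1/2\}$ for all $v$, with $x_0^*(a)=0$. This is the combinatorial core on which the whole classification rests.

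Next I would partition the vertices as $V_0=\{v:x_0^*(v)=0\}\ni a$ and $V_{1/2}=\{v:x_0^*(v)=1/2\}$, and let $S\subseteq\mathcal{L}$ be the set of lines joining the two parts. On a line $\ell\notin S$ the jump $x_0^*[f(\ell)]-x_0^*[i(\ell)]$ vanishes, so $g_\ell=n_0(\ell)\in\mathbb{Z}$ and $|g_\ell|\in\{0,1,2,\dots\}$; on a line $\ell\in S$ the jump is $\pm\tfrac12$, so $g_\ell\in\tfrac12+\mathbb{Z}$ and $|g_\ell|\ge\tfrac12$. Because $\sum_\ell|g_\ell|=1$ and every term lies in $\tfrac12\mathbb{Z}_{\ge0}$, only two patterns can occur: a single unit contribution with all others zero, or two half-unit contributions with all others zero. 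A crossing line can never produce $|g_\ell|=1$ (it lies in $\tfrac12+\mathbb{Z}$), so the first pattern forces $S=\emptyset$, a single line $\ell$ with $|n_0(\ell)|=1$ and $n_0\equiv0$ elsewhere. This is precisely a Type-1 gauge field.

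In the second pattern the two contributing lines must both be crossing lines (a non-crossing line contributes an integer, never $\tfrac12$), so $S=\{\ell_1,\ell_2\}$ and $n_0$ vanishes off $S$; thus $n_0$ is localized on $S$. Since $\mathcal{G}$ is 1PI, hence 2-edge-connected, the minimal cut size is at least two, so $S$ is a genuine cut-set and $V_0,V_{1/2}$ are the two components of $\mathcal{G}-S$. Orienting $\ell_1,\ell_2$ so that $i(\ell_{1,2})\in V_0$ (the $a$-side), each jump equals $+\tfrac12$, whence $|g_{\ell_i}|=|n_0(\ell_i)+\tfrac12|=\tfrac12$ forces $n_0(\ell_i)\in\{0,-1\}$. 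If some $n_0(\ell_i)=0$ the field is really supported on a single line and we fall back to Type-1, so for a genuine two-line case both values equal $-1$, which is the Type-2 statement. It remains to show $b\in V_{1/2}$, i.e.\ that $S$ separates $a$ from $b$: if instead $b\in V_0$, then $V_{1/2}$ would contain neither $a$ nor $b$, all its vertices being internal, so one could raise the potential on $V_{1/2}$ from $\tfrac12$ to the integer value $1$ without violating any constraint; each jump would become $+1$, each $g_{\ell_i}=-1+1=0$, and the total would drop to $0$, making $n_0$ pure and contradicting $\hat{\epsilon}_0(n_0)=1$.

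The main obstacle, and the step I would spend the most care on, is the very first one: establishing that a half-integer minimizer with $x_0^*(b)\in\{0,1/2\}$ exists and that one may gauge its internal values into $\{0,1/2\}$. Everything downstream is a short finite case analysis, but this reduction is where total unimodularity, the boundary condition $x_0(b)\in[0,1/2]$, and the interplay between the integer gauge freedom at internal vertices and the values pinned at $a$ and $b$ all enter. It is also the place where the distinction between an \emph{internal} cut (reducible, hence pure) and a cut separating $a$ from $b$ (irreducible, giving the genuine $\hat{\epsilon}_0=1$) is ultimately forced. This dovetails with Proposition~\ref{prop:epshat-t-one}: any configuration escaping the two patterns above must push the total strictly past $1$, in fact to at least $\tfrac32$.
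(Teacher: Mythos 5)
Your argument is correct, and it takes a genuinely different route from the paper's. The paper proves Theorem~\ref{theo:one-0} without ever touching a minimizer at that stage: it places $n_0$ in axial gauge with respect to a 2-tree $\mathring{T}_0$ disconnecting $a$ and $b$ (Theorem~\ref{theo:axial}), and then enumerates the solutions of the counting bound of Corollary~\ref{corollary:epshat:axial}, i.e.\ the ways $c_0$, $p_0^{+}$, $p_0^{0}$, $p_0^{-}$ can combine to give $\hat{\epsilon}_0(n_0)=1$; type-1 and type-2 emerge as the two non-trivial cases of that enumeration. You instead classify the support of $n_0$ directly from a half-integral minimizer: gauge its values into $\{0,\tfrac12\}$, partition the vertices into level sets $V_0\sqcup V_{1/2}$, and observe that the only ways to pay total cost $1$ are a single non-crossing line with $|n_0(\ell)|=1$ (type-1), or exactly two crossing lines --- the level-set boundary --- which 1PI forces to be a genuine two-element cut-set, with values in $\{0,-1\}$, the mixed case collapsing to type-1 and the $(-1,-1)$ case being type-2. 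Your closing perturbation (raising $V_{1/2}$ to the integer value $1$ when $b\in V_0$) is exactly the right way to force the cut to separate $a$ from $b$, since otherwise $n_0$ is pure. What your route buys is geometric transparency: the type-2 cut-set is literally the boundary between the two level sets of the minimizer. What the paper's route buys is reuse: the axial-gauge formula is the same tool that drives Theorems~\ref{theo:onehalf}, \ref{theo:one-k} and~\ref{theo:sqrt}.

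The step you rightly single out --- existence of a minimizer with values in $\tfrac12\mathbb{Z}$ and $x_0^*(b)\in\{0,\tfrac12\}$ --- is true, but be careful which LP you apply total unimodularity to. The formulation with auxiliary variables $t_\ell\ge|g_\ell|$ is \emph{not} totally unimodular (the two inequality rows attached to a single line already admit no Ghouila--Houri signing). The standard splitting $|g_\ell|=u_\ell+v_\ell$, $g_\ell=u_\ell-v_\ell$ with $u_\ell,v_\ell\ge 0$ does work: the constraint matrix $[\,I\ \ {-I}\ \ {-B^\top}\,]$, with $B$ the vertex--line incidence matrix, augmented by the unit rows fixing $x_0(a)$ and boxing $x_0(b)$, is totally unimodular; the right-hand side lies in $\tfrac12\mathbb{Z}$, and the polyhedron is pointed (the graph is connected and $x_0(a)$ is pinned), so an optimal vertex exists, is half-integral, and automatically has $x_0^*(b)\in\{0,\tfrac12\}$. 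Alternatively, you could borrow the paper's own Lemma~\ref{lemma:minumum-0} together with the proof of Theorem~\ref{theo:epshat}: a minimizer exists vanishing on a 2-tree disconnecting $a$ and $b$, hence with integer values on the $a$-side and $x_0(b)$ plus integers on the $b$-side, and the objective is linear in $x_0(b)$ on $[0,\tfrac12]$ (kinks only at integers), so an endpoint is optimal. Note also that Theorem~\ref{theo:onehalf}, which you hoped would supply this, only records the possible \emph{values} of $\hat{\epsilon}_0$; the minimizer structure you need is exactly what Lemma~\ref{lemma:minumum-0} provides.
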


The first crucial observation is that type-1 and type-2 gauge orbits are distinct,
i.e.~no type-1 gauge field is gauge equivalent to a type-2 gauge field and vice
versa. This fairly obvious statement follows from theorem~\ref{theo:simple-0b}.
The sum over $[n_0]$ with $\hat{\epsilon}_{0}(n_0)=1$ can thus be split
into two pieces, schematically
\begin{gather}
   \sum_{[n_0] \text{ with } \hat{\epsilon}_{0}(n_0)=1}
   =
   \sum_{[n_0] \text{ is type-1}}
   + \sum_{[n_0] \text{ is type-2}}
   \ .
\end{gather}
A less obvious fact is that all type-2 gauge fields are gauge equivalent.
This follows from theorem~\ref{theo:simple-0c}. Therefore two exclusive
possibilities are given:
\begin{enumerate}[topsep=0pt,itemsep=1ex]
   \item $\mathcal{G}$ is 2PI between $a$ and $b$, i.e.~no cut-set with two
   elements exists. In this case no type-2 gauge orbit exists.
   \item $\mathcal{G}$ is two-particle reducible (2PR) between $a$ and $b$. In this case
   exactly one type-2 gauge orbit exists, and the sum over the type-2 gauge
   orbits reduces to only one term, for instance
   \begin{gather}
      \sum_{ \substack{ [n_0] \text{ is type-2} \\ [\vec{n}] = \vec{0}}}
      \mathcal{F}_{T,\infty}(x_0 \vert \mathcal{D},n)
      =
      \mathcal{F}_{T,\infty}(x_0 \vert \mathcal{D},n) \Big|_{n_\mu(\ell') = - (\delta_{\ell_1,\ell'} + \delta_{\ell_2,\ell'}) \delta_{\mu,0} }
      \label{eq:leading:type2-0}
      \ ,
   \end{gather}
   where $S=\{\ell_1,\ell_2\}$ is a cut-set as in proposition~\ref{prop:one-0}.\footnote{Note here we have switched to a discussion in terms of $\mathcal F_{T,L}(x_0 \vert \mathcal D, n)$ rather than $\mathcal I_{T,L}( \mathcal D, n)$. Recall that these are related according to eq.~\eqref{eq:I-Dn}, in particular that the former defines a contribution to $G(x_0 \vert T, L)$ and the latter a contribution to  $\amu$.}
   The right-hand side of this equation can be dramatically simplified with the
   following trick. Let $\mathcal{V}_b$ be the set of vertices connected to $b$
   in $\mathcal{G}-S$. By changing variables $x_0(v) \to x_0(v) + L_0$ for $v
   \in \mathcal{V}_b$ in the integrals which define
   $\mathcal{F}_{T,L}(x_0|\mathcal{D},n)$ in eq.~\eqref{eq:F-Dn-1}, one easily
   proves that
   \begin{gather}
      \mathcal{F}_{T,L}(x_0|\mathcal{D},n) \Big|_{n_\mu(\ell') = - (\delta_{\ell_1,\ell'} + \delta_{\ell_2,\ell'}) \delta_{\mu,0} }
      =
      \mathcal{F}_{T,L}(x_0-T|\mathcal{D},0)
      =
      \mathcal{F}_\infty(x_0-T|\mathcal{D})
      \label{eq:leading:type2-1}
      \ ,
   \end{gather}
   where we have used that $\mathcal{F}_{T,L}(x_0-T|\mathcal{D},0)$ does
   not depend on $L$, and depends on $T$ only through the argument, $x_0 - T$.
\end{enumerate}
When combining eqs.~\eqref{eq:leading:type2-0} and \eqref{eq:leading:type2-1},
and integrating over the kernel, one obtains
\begin{gather}
  \sum_{\mathcal{D}} \sum_{ \substack{ [n_0] \text{ is type-2} \\ [\vec{n}] = \vec{0}}}
  \mathcal{I}_{T,\infty}(\mathcal{D},n)
  =
  \int_0^{\frac{T}{2}} dx_0 \ {\mathcal K}(x_0) \sum_{\substack{\mathcal{D} \text{ 2PR} \\ \text{between $a$ and $b$}}} \mathcal{F}_\infty(x_0-T|\mathcal{D})
  \ .
  \label{eq:leading:type2-2}
\end{gather}
Note that, if the diagram $\mathcal{D}$ is 2PI between $a$ and $b$, then
$\mathcal{F}_\infty(x_0-T|\mathcal{D}) \le O\left(e^{- \frac{3}{2} mT}\right)$
if $x_0 \in [0,T/2]$, since, in this case, at least three pions propagate in between the
two electromagnetic currents. Therefore the restriction to 2PR diagrams in
eq.~\eqref{eq:leading:type2-2} can be lifted up to a term of an order we are
already neglecting, and the sum over all diagrams reconstructs the
infinite-volume two-point function, $G(x_0-T|\infty)$, inside the integrand,
yielding
\begin{gather}
   \sum_{\mathcal{D}} \sum_{ \substack{ [n_0] \text{ is type-2} \\ [\vec{n}] = \vec{0}}}
   \mathcal{I}_{T,\infty}(\mathcal{D},n)
   =
   \Delta a_\text{t}^\text{BP}(T) + O\left(e^{- \frac{3}{2} m T}\right)
   \ ,
   \label{eq:leading:type2-3}
\end{gather}
where we have introduced the backpropagating-pion (BP) contribution, defined in eq.~\eqref{eq:stat:DeltaIt-BP} above, \begin{gather}
\nonumber
   \Delta a_\text{t}^\text{BP}(T)
   =
   \int_0^{\frac{T}{2}} dx_0 \ {\mathcal K}(x_0) G(T-x_0|\infty)
   \ .
\end{gather}
In this equation we have also used the fact that $G(x_0|\infty)$ is even in $x_0$.

The sum over type-1 gauge orbits is manipulated in a very similar way to the
spatial case. The relevant notion of equivalence between lines needs to be
slightly modified to account for the different structure of gauge
transformations for the temporal component.

\textit{$t$-equivalence.} The lines $\ell_1$ and $\ell_2$ are said to be
$t$-equivalent if and only if every loop containing $\ell_1$ contains also
$\ell_2$ and vice versa, \textit{and} every path from $a$ to $b$ containing
$\ell_1$ contains also $\ell_2$ and vice versa. One proves that this is indeed
an equivalence relation (theorem \ref{theo:t-equivalence}). The $t$-equivalence
classes are denoted by $[\ell]_\text{t}$.

The following proposition provides the exact relation between gauge fields
localized on a single line up to a gauge transformation and the concept of
$t$-equivalence.

\begin{proposition} \label{prop:t-localization}
   The orientation of the lines of $\mathcal{G}$ can be chosen in such a way
   that, if $\ell$ and $\ell'$ are $t$-equivalent and $C$ is a loop that
   contains both, then $[C:\ell]=[C:\ell']$, i.e.~either both $\ell$ and $\ell'$
   have the same orientation as $C$, or they both have the opposite orientation
   of $C$. This choice of orientation is assumed here.\\[1mm]
   Let $n_0$ and $n'_0$ be gauge fields localized on $\ell$ and $\ell'$
   respectively. $n_0$ and $n'_0$ are gauge equivalent if and only if $\ell$ and
   $\ell'$ are $t$-equivalent and $n_0(\ell)=n'_0(\ell')$.
   \emph{(Theorem~\ref{theo:t-localization}.)}
\end{proposition}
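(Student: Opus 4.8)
The plan is to establish Proposition~\ref{prop:t-localization} as the temporal counterpart of Proposition~\ref{prop:s-localization}, the one genuinely new feature being the extra boundary condition $\lambda_0(b)=0$ obeyed by admissible gauge transformations of the temporal component. First I would dispose of the orientation claim. Since $t$-equivalence requires, among other things, that every loop through $\ell$ contains $\ell'$, every $t$-equivalent pair is a fortiori $s$-equivalent. Hence the orientation already fixed in Proposition~\ref{prop:s-localization}, which arranges $[C:\ell]=[C:\ell']$ for all $s$-equivalent $\ell,\ell'$ lying on a common loop $C$, automatically satisfies the orientation requirement stated here for $t$-equivalent lines, and no new choice is needed.

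For the characterization of gauge equivalence I would set up a complete system of gauge invariants for $n_0$. For any loop $C$ the \emph{loop sum} $\sum_\ell [C:\ell]\,n_0(\ell)$ is invariant, because a coboundary telescopes to zero around a closed loop; for any path $P$ from $a$ to $b$ the \emph{path sum} $\sum_\ell [P:\ell]\,n_0(\ell)$ is invariant, because the coboundary collapses to $\lambda_0(b)-\lambda_0(a)=0$ by the boundary conditions. This path invariant is precisely the new ingredient compared with the spatial case: it survives because $\lambda_0(b)$ is constrained, whereas $\vec{\lambda}(b)$ was free. I would then prove completeness: if all loop sums of $n_0$ and one fixed $a$--$b$ path sum vanish, then $n_0$ is pure. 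The construction is the standard discrete potential along any path $P(v)$ from $a$ to $v$,
\begin{gather}
   \mu(v) = \sum_{\ell \in P(v)} [P(v):\ell]\, n_0(\ell) \,;
\end{gather}
vanishing loop sums make $\mu$ path-independent, $\mu(a)=0$ by construction, and $\mu(b)=0$ from the vanishing path sum shows $\mu$ is admissible, so $n_0=\delta\mu$. Two temporal gauge fields are thus gauge equivalent if and only if all their loop sums and their fixed $a$--$b$ path sum coincide.

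With these invariants in hand the equivalence is a direct translation. In the forward direction I take $n_0$ localized on $\ell$ with value $u\neq 0$ and $n_0'$ on $\ell'$ with value $u'\neq 0$, assumed gauge equivalent. Since $\mathcal{G}$ is 1PI (Proposition~\ref{prop:1PI}) it has no bridges, so every line lies on a loop; matching loop sums then force $\ell\in C\Leftrightarrow\ell'\in C$ for every loop (otherwise one side is a nonzero multiple of $u$ or $u'$ against $0$), and matching path sums force the analogous statement for every $a$--$b$ path --- together exactly $t$-equivalence. Evaluating the loop sum on a common loop $C_0$ gives $[C_0:\ell]\,u=[C_0:\ell']\,u'$, and the orientation property, now applicable to the $t$-equivalent pair, yields $u=u'$. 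Conversely, assuming $\ell\sim_t\ell'$ and $u=u'$, the difference field supported on $\{\ell,\ell'\}$ has vanishing loop sums (by loop-containment and the orientation choice) and vanishing $a$--$b$ path sum, hence is pure, so the two fields are equivalent.

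The main obstacle is the one genuinely new orientation fact needed in the converse: that the loop-consistent orientation of Proposition~\ref{prop:s-localization} is \emph{automatically} path-consistent, i.e.~$[P:\ell]=[P:\ell']$ for every $a$--$b$ path $P$ through a $t$-equivalent pair. I would prove this by contradiction. With a common loop $C_0$ oriented so that it traverses $\ell$ and $\ell'$ in the same sense, suppose some path $Q$ traversed $\ell$ forward but $\ell'$ backward. Then the sub-path of $Q$ from $f(\ell)$ to $f(\ell')$, spliced with the arc of $C_0$ running from $f(\ell')$ back to $i(\ell)$ and closed through $\ell$, is a closed walk through $\ell$ that avoids $\ell'$; decomposing it into simple cycles yields a loop containing $\ell$ but not $\ell'$, contradicting $s$-equivalence. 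This splicing argument is the crux; everything else is bookkeeping on the cochain complex defined by the gauge action.
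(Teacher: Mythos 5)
Your proof is correct, and its first half coincides with the paper's argument: the orientation statement is inherited from the $s$-equivalent case exactly as in the paper (whose proof of that part simply points back to theorem~\ref{theo:s-localization}), and the forward implication is the same use of gauge invariance of Wilson loops and of the $a$--$b$ Wilson line applied to localized fields. Where you genuinely diverge is the converse. The paper handles it by noting that, since $\mathcal{G}$ is 1PI, two disjoint paths from $a$ to $b$ exist, and $t$-equivalence forces at least one of them, say $P_1$, to avoid both $\ell$ and $\ell'$; it then checks $n_0(P_1)=0=n'_0(P_1)$ and invokes theorem~\ref{theo:gauge-equivalence-0} (statement 3, with a tree containing $P_1$), so the question of how a path through \emph{both} lines is oriented never arises. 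You instead prove the additional fact that the loop-consistent orientation is automatically path-consistent, $[P:\ell]=[P:\ell']$ for every $a$--$b$ path $P$ containing both lines, via the splice-and-decompose argument, and then verify all invariants of the difference field directly; you also re-derive the completeness of the Wilson-loop/Wilson-line system (the content of theorem~\ref{theo:gauge-equivalence-0}) with a discrete-potential construction rather than the paper's axial-gauge route (theorem~\ref{theo:axial}). Both are sound; the paper's converse is shorter because it sidesteps path-consistency entirely, while yours is self-contained and establishes a strictly stronger orientation statement. Two small remarks: (i) your splicing lemma is not actually needed even within your own framework --- once the loop sums of the difference field vanish, all $a$--$b$ path sums are equal, so it suffices to evaluate on a single path avoiding both lines, which exists by exactly the disjoint-paths observation above; (ii) as written, your splice treats only the ordering in which $Q$ traverses $\ell$ before $\ell'$ --- in the opposite ordering one must splice the sub-path of $Q$ from $i(\ell')$ to $i(\ell)$ with the arc of $C_0$ from $f(\ell)$ to $i(\ell')$, which is symmetric but should be stated.
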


We refer to the sum over type-1 gauge orbits in eq.~\eqref{eq:leading-T-0}
as the wrapped-pion (WP) contribution. As for the spatial directions, this sum can
be represented as follows
\begin{gather}
    \Delta a_\text{t}^\text{WP}(T)
   =
   \sum_{\substack{[n_0] \text{ is type-1} \\ [\vec{n}] = \vec{0} }} \mathcal{I}_{\infty,T}(x_0|\mathcal{D},n)
   =
   \sum_{\mathcal{D}} \sum_{[\ell]_\text{t}} \sum_{\alpha= \pm 1} \mathcal{I}_{\infty,T}(x_0|\mathcal{D},n)
   \Big|_{n(\ell') = (\alpha,\vec{0}) \delta_{\ell,\ell'}}
   \label{eq:DeltaIt-WP}
   \ .
\end{gather}

Combining all equations in this subsection gives our final decomposition for the leading finite-$T$ corrections to $\Delta a(T,L)$, eq.~\eqref{eq:stat:DeltaIt},
\begin{gather}
   \Delta a_\text{t}(T) = \Delta a_\text{t}^\text{OB}(T) + \Delta a_\text{t}^\text{BP}(T) + \Delta a_\text{t}^\text{WP}(T) 
   \ , \nonumber
\end{gather}
providing also the proof of eq.~\eqref{eq:stat:Dat}, i.e.~that the difference between $\Delta a_\text{t}(T) $ and $\Delta a_\text{t}(T) $ scales as $O(e^{-\frac32 m T})$. At this stage we have demonstrated all 8 items listed in the Introduction.
 
\subsection{Differences in the expansions of $G(x_0 \vert T, L)$ and $\amu(T,L)$}%
\label{subsec:diag:2pt-function}

Before describing the skeleton expansion required to bring the preceding decompositions into a useful form, we comment here on the differences between finite-volume
corrections of the electromagnetic-current two-point function, $G(x_0|T,L)$, and the integral defining $\amu(T,L)$.

We stress that a term in the expansion of $\amu(T,L)$ with a given scaling (e.g.~$e^{- m L}$ or $e^{- m T}$) \textit{cannot}, in general, be determined by identifying the contribution to $G(x_0|T,L)$ with the same scaling, and then integrating with the kernel.
 This is simply because values of $x_0$
that scale proportionally to $T$ contribute to the integral so that
for any given diagram, $\mathcal D$, and gauge orbit, $n$, the two contributions, $\mathcal I_{T,L}(\mathcal D, n)$ and $\mathcal F_{T,L}(x_0 \vert \mathcal D, n)$, can have different asymptotic behavior.
It is for this reason that in
section~\ref{subsec:diag:asymptotic} the analysis of the saddle-point was
performed directly on the full integral. {As we have already noted, the fact
that the divergence of the kernel as $x_0 \to +\infty$ is bounded by a positive
power of $x_0$, implies that the location of the saddle-point does not depend on
the kernel itself. %
For any choice of kernel with power-like asymptotic behavior, the same class of terms will contribute. }

By contrast, to identify corrections to $G(x_0 \vert T, L)$ directly,
one can repeat the entire analysis %
as described for $\amu$ but with a Dirac %
delta function in place of $\mathcal K(x_0)$. This %
does modify the location of the saddle point, such that some terms
that survive with a power-like kernel turn out to be subleading in this special case. We
give here only the final results, leaving to the reader the task to sort out the
details.

The first key result from a direct analysis of the correlation function is eq.~\eqref{eq:stat:Gexp}, 
\begin{align}
   \Delta G(x_0|T,L)
   & = \Delta G(x_0|\infty,L) + \Delta G(x_0|T,\infty) 
   + O\left(e^{-m\sqrt{L^2+T^2}}\right)
   \,,  \nonumber
\end{align}
i.e.~that a separation formula holds for $\Delta G(x_0 \vert T, L)$ directly matching that for $\amu(T,L)$ [eq.~\eqref{eq:stat:separation}].
We stress that, in this expression, $x_0$ is kept fixed (i.e.~not scaled with
the volume).

The finite-$L$ corrections to the two-point functions are obtained
by replacing the kernel with a delta function in eq.~\eqref{eq:DeltaIs-simple},
yielding eq.~\eqref{eq:stat:DGsexp},
\begin{equation}
  \Delta G(x_0| \infty, L) =   \Delta G_\text{s}(x_0|L)  + O \left( e^{ - \sqrt{2+\sqrt{3}} m L } \right) \,, \nonumber
\end{equation}
with $  \Delta G_\text{s}(x_0|L)$ defined as
\begin{gather}
    \Delta G_\text{s}(x_0|L)
   =
   \sum_{\mathcal{D}}
   \sum_{[\ell]_s} \sum_{\vec{u} \in
   \mathbb{Z}^3}
   \mathcal{F}_{\infty,L}(\mathcal{D},n) \Big|_{n(\ell') = (0,\vec{u}) \delta_{\ell,\ell'}}
   \label{eq:DeltaGs-simple}
   \ .
\end{gather}

Turning to the finite-$T$ corrections, when the same analysis is performed for the out-of-the-box \eqref{eq:stat:DeltaIt-OB} and backpropagating-pion  \eqref{eq:stat:DeltaIt-BP} contributions, one finds that the former is zero while the latter contributes at $O(e^{-2mT})$, and should thus be dropped to the order we work.
Thus, the only leading finite-$T$ scaling comes from
the wrapped-pion contribution of eq.~\eqref{eq:DeltaIt-WP}. This yields eq.~\eqref{eq:stat:DGtexp}
\begin{equation}
  \Delta G(x_0| T, \infty ) =   \Delta G_\text{t}(x_0 | T)  + O \left( e^{ - 2 m T } \right) \,, \nonumber
\end{equation}
with $ \Delta G_\text{t}(x_0|T)$ defined as
\begin{gather}
    \Delta G_\text{t}(x_0|T)
   =
   \sum_{\mathcal{D}} \sum_{[\ell]_t} \sum_{\alpha= \pm 1} \mathcal{F}_{T,\infty}(x_0|\mathcal{D},n)
   \Big|_{n(\ell') = (\alpha,\vec{0}) \delta_{\ell,\ell'}}
   \label{eq:DeltaGt-simple}
   \ .
\end{gather}
We stress that the new analysis shows that the neglected contributions are of order $e^{ -
2 m T }$ in this case and not $e^{ - \frac{3}{2} m T }$ as for $\Delta a(T,\infty)$.
This is intuitive as the half-integer
factor in the scaling comes from integrating
 up to $T/2$. Since this distinction does not affect any conclusion drawn in
this paper, we will omit its proof. %

We close the subsection by noting that our decomposition of the leading finite-$L$ correction to $\amu(T,L)$  [$\Delta a_{\text s}(L)$ in eq.~\eqref{eq:DeltaIs-simple}] and that of $G(x_0 \vert T, L)$  [$\Delta G_{\text s}(x_0 \vert L)$ in eq.~\eqref{eq:DeltaGs-simple}], leads to the simple relation expressed by eq.~\eqref{eq:stat:DeltaIs}
\begin{gather}
   \Delta a_\text{s}(L) = \int_0^\infty dx_0 \  \mathcal{K}(x_0) \, \Delta G_\text{s}(x_0|L) \nonumber
   \,.
\end{gather}
And similarly, while this correspondence fails in general for the finite-$T$ expressions, it does hold for the wrapped-pion. Combining eqs.~\eqref{eq:DeltaIt-WP} and \eqref{eq:DeltaGt-simple} yields 
\eqref{eq:stat:DeltaIt-WP}
\begin{gather}
   \Delta a_\text{t}^\text{WP}(T)
   =
   \int_0^{\frac{T}{2}} dx_0 \ \mathcal{K}(x_0) \, \Delta G_\text{t}(x_0|T) \nonumber
   \ .
\end{gather}

\subsection{Skeleton expansion}
\label{subsec:diag:skeleton}

\newcommand{\wild}{\eta}

We now derive more useful expressions for $\Delta G_{\text s}(x_0 \vert L)$ and $\Delta G_{\text t}(x_0 \vert T)$ [defined in eqs.~\eqref{eq:DeltaGs-simple} and \eqref{eq:DeltaGt-simple} respectively] in terms of proper vertices and a dressed propagators.
The two quantities can be written in a unified form 
\begin{gather}
   \Delta G_\wild(x_0 \vert T, L)
   =
   \sum_{\mathcal{D}} \sum_{[\ell]_\wild}
   \sum_{u \in U_\wild} \mathcal{F}_{T,L}(x_0|\mathcal{D},n)
   \bigg|_{n(\ell') = u \delta_{\ell',\ell}}
   \label{eq:skeleton:S0}
   \ , \qquad \qquad
   \text{for } \wild = \text{s}, \text{t} \ ,
\end{gather}
with the definitions
\begin{gather}
   U_\text{s} =  \big \{ (0,\vec{u}) \ \vert \ \vec{u} \in \mathbb{Z}^3 \setminus \{0\} \big \}
   \ , \qquad 
   U_\text{t} = \big \{ (-1,\vec{0}) , (+1,\vec{0}) \big \}
   \ .
\end{gather}

The aim is to evaluate the sums over diagrams and
equivalence classes, in order to obtain a compact representation for $\Delta G_\wild(x_0 \vert T, L)$, independent of the details of the effective-field theory. %
We begin with an
overview of the main logical steps. First it is convenient to define an auxiliary quantity, obtained
from the right-hand side of eq.~\eqref{eq:skeleton:S0} by replacing the sum over the
equivalence classes with a sum over all lines $\ell$,
\begin{gather}
   \Delta G^{\mathcal L}_\wild(x_0 \vert T, L)
   =
   \sum_{\mathcal{D}} \sum_{\ell \in \mathcal{L}}
   \sum_{u \in U_\wild} \mathcal{F}_{T,L}(x_0|\mathcal{D},n)
   \bigg|_{n(\ell') = u \delta_{\ell',\ell}}
   \label{eq:skeleton:Sprime}
   \, ,
\end{gather}
where the superscript indicates that the sum runs over the set $\mathcal L$.
Each term in $\Delta G_\wild(x_0 \vert T, L)$ is counted $N_\wild(\ell)$ many times in
$\Delta G^{\mathcal L}_\wild(x_0 \vert T, L)$, where $N_\wild(\ell)$ is the number of
elements in the equivalence class $[\ell]_\wild$. In fact, the sum over the
equivalence classes in $\Delta G_\wild(x_0 \vert T, L)$ can be equivalently replaced by a
sum over all lines $\ell$ divided by $N_\wild(\ell)$, i.e.
\begin{gather}
   \Delta G_\wild(x_0 \vert T, L)
   =
   \sum_{\mathcal{D}} \sum_{\ell \in \mathcal{L}} \frac{1}{N_\wild(\ell)}
   \sum_{u \in U_\wild} \mathcal{F}_{T,L}(x_0|\mathcal{D},n)
   \bigg|_{n(\ell') = u \delta_{\ell',\ell}}
   \label{eq:skeleton:S1}
   \ .
\end{gather}
As we will see, the auxiliary quantity $\Delta G^{\mathcal L}_\wild(x_0 \vert T, L)$ has a
straightforward skeleton expansion in terms of proper vertices and dressed
propagators. By analysing the various terms of the expansion, one can
identify the multiplicity factor, $N_\wild(\ell)$, and divide it out by hand,
obtaining in this way the desired skeleton expansion for $\Delta G_\wild(x_0 \vert T, L)$.

Before jumping into the core of the calculation, we find convenient to switch to
the momentum representation of the Feynman integrals. This proceeds as in the
 infinite-volume case, with only small modifications. We introduce the Fourier
transform of $\mathcal{F}_{T,L}(x_0|\mathcal{D},n)$
\begin{gather}
   \widetilde{\mathcal{F}}_{T,L}(k_0|\mathcal{D},n) =  \int_{-\infty}^{\infty} d x_0 \, e^{- i k_0 x_0} \,  \mathcal{F}_{T,L}(x_0|\mathcal{D},n) \,,
\end{gather}
and, analogously, the Fourier transforms $\Delta \widetilde{G}_\wild(k_0 \vert T, L)$
and $\Delta \widetilde{G}^{\mathcal L}_\wild(k_0 \vert T, L)$, of $\Delta G_\wild(x_0 \vert T, L)$
and $\Delta G^{\mathcal L}_\wild(x_0 \vert T, L)$ respectively.

Let us examine the structure of
$\widetilde{\mathcal{F}}_{T,L}(k_0|\mathcal{D},n)$. A momentum four-vector, $p(\ell)$, is
associated to each line $\ell$. Each infinite-volume propagator in
eq.~\eqref{eq:F-Dn-1} is represented as
\begin{gather}
   \Delta \big (\delta x (\ell)+ \L n(\ell) \big)
   =
   \int \frac{d^4p(\ell)}{(2\pi)^4}
   e^{ip(\ell) \cdot \delta x (\ell)}
   \frac{
   e^{ip(\ell) \cdot \L n(\ell)}
   }{
   p(\ell)^2 + m^2
   }
   \ .
\end{gather}
The integrals over the coordinates in eq.~\eqref{eq:F-Dn-1} are calculated
explicitly, yielding
\begin{gather}
   \widetilde{ \mathcal{F}}_{T,L}(k_0|\mathcal{D},n) = 
   \int \left[ \prod_{\ell \in \mathcal{L}} \frac{d^4p(\ell)}{(2\pi)^4}
   \frac{
   e^{ip(\ell) \cdot \L n(\ell)}
   }{
   p(\ell)^2 + m^2
   }
   \right]
   \widetilde{\mathbb{V}}(p,k_0)
   \label{eq:skeleton:Fmom}
   \ ,
\end{gather}
where $\widetilde{\mathbb{V}}(p,k_0)$ is the product of the vertex functions,
which are (volume-independent) polynomials of the momenta, times the delta of
momentum conservation at each vertex. If $n=0$, this is nothing but the standard
infinite-volume Feynman integral in momentum space associated to $\mathcal{D}$.

We use the representation~\eqref{eq:skeleton:Fmom} in the definition of
$\Delta \widetilde{G}^{\mathcal L}_\wild(k_0 \vert T, L)$, obtained by taking the Fourier transform of
eq.~\eqref{eq:skeleton:Sprime}. Since the gauge fields, $n$, that contribute to
$\Delta \widetilde{G}^{\mathcal L}_\wild(k_0 \vert T, L)$ are localized on a single line $\ell$, all
propagators in the Feynman integral must be the infinite-volume ones except for
the one in $\ell$ which needs to be replaced according to the rule
\begin{gather}
   \frac{
   1
   }{
   p^2 + m^2
   }
   \ \rightarrow \ 
   K_\wild(p)
   =
   \sum_{u \in U_\wild}
   \frac{
   e^{i u \L p }
   }{
   p^2 + m^2
   }
   =
   \sum_{u \in U_\wild}
   \frac{
   \cos \left( u \L p \right)
   }{
   p^2 + m^2
   }
   \label{eq:skeleton:Kdef}
   \ ,
\end{gather}
where we have used the fact that the set, $U_\wild$, is invariant under $u \to
-u$. Therefore the sum $\Delta \widetilde{G}^{\mathcal L}_\wild(k_0 \vert T, L)$ can be
written as
\begin{gather}
   \Delta \widetilde{G}^{\mathcal L}_\wild(k_0 \vert T, L)
   =
   \sum_{\mathcal{D}} 
   \int \bigg[ \prod_{\ell' \in \mathcal{L}} \frac{d^4p(\ell')}{(2\pi)^4} \bigg]
   \widetilde{\mathbb{V}}(p,k_0)
   \sum_{\ell \in \mathcal{L}}
   \bigg[ \prod_{\ell' \neq \ell}
   \frac{1}{p(\ell')^2 + m^2}
   \bigg]
   K_\wild(p(\ell))
   \label{eq:skeleton:Sprime-2}
   \ .
\end{gather}

The operation of replacing the propagator in each line with a modified
propagator can be understood in terms of a simple deformation of the action. Let
$\pi_q(x)$ be the pion field with charge $q=0,\pm 1$. We use the notation
$\bar{q}=-q$ such that fields satisfy $\pi_q(x)^* = \pi_{\bar{q}}(x) =
\pi_{-q}(x)$. Expressing the infinite-volume action of our general effective theory as
\begin{gather}
   S(\pi) = \frac{1}{2} \sum_q \int \frac{d^4 p}{(2\pi)^4} \, (p^2+m^2) \left|\widetilde{\pi}_{q}(p) \right|^2  + S_\text{I}(\pi)
   \ ,
\end{gather}
we add a source term to define
\begin{gather}
   S(\pi,K)
   =
   S(\pi) - \frac{1}{2} \sum_q \int \frac{d^4 p}{(2 \pi)^4} \, (p^2+m^2)^2 K_\wild(p) \left|\widetilde{\pi}_{q}(p) \right|^2
   \ .
\end{gather}
The modified action, $S(\pi,K)$, has the same vertices as the original action,
$S(\pi)$, while the canonical propagator is replaced by a more complicated
function that depends on $K_\wild(p)$. Here we do not need the general form, but
only note that the source term is designed in such a way that, at leading order
in $K_\wild(p)$, the new propagator is simply
\begin{gather}
  \frac{1}{p^2 + m^2} \ \rightarrow \ \frac{1}{p^2 + m^2} + K_\wild(p) + O(K^2) \ .
\end{gather}
When this replacement rule is used in a Feynman diagram, the $O(K)$ term is
obtained by replacing the propagator with $K_\wild(p)$ in one line at a time, and
then summing over \textit{all} lines. For instance, let $\langle j_\rho(x)
j_\rho(0) \rangle^{[K]}$ be the infinite-volume expectation value of $j_\rho(x)
j_\rho(0)$, with the action $S(\pi,K)$, and define 
\begin{equation}
\widetilde G^{[K]}(k_0 \vert \infty) =  - \frac{1}{3} \sum_{\rho=1}^3 \int d^4 x \, e^{ik_0x_0} \langle j_\rho(x) j_\rho(0) \rangle^{[K]} \,.
\end{equation}
Recalling that
$\widetilde{\mathcal{F}}_{T,L}(k_0|\mathcal{D},0) =
\widetilde{\mathcal{F}}_\infty(k_0|\mathcal{D})$ is the general Feynman integral
contributing to the infinite-volume limit of the current two-point function
given in eq.~\eqref{eq:stat:G-def} (up to the Fourier transfrom), one finds
\begin{align}
\widetilde G^{[K]}(k_0 \vert \infty) %
   &  =
   \sum_{\mathcal{D}}  \widetilde{ \mathcal{F}}_\infty(k_0|\mathcal{D})  \bigg \vert_{(p^2+m^2)^{-1} \to (p^2+m^2)^{-1} + K_\wild(p) + O(K^2)} \,,
   \nonumber \\
   &=   
   \widetilde G(k_0 \vert \infty)
   + \Delta \widetilde{G}^{\mathcal L}_\wild(k_0 \vert T, L)
   + O(K^2)
   \label{eq:skeleton:jjK-1}
   \ ,
\end{align}
with $\widetilde G(k_0 \vert \infty)$ defined as the Fourier transform of $G(x_0 \vert \infty)$, equivalently as $\widetilde G^{[K=0]}(k_0 \vert \infty) $. %
In words, we have found that the $O(K)$ term
of the current two-point function in the modified theory is nothing but
$\Delta \widetilde{G}^{\mathcal L}_\wild(k_0 \vert T, L)$, as follows from the representation
given in eq.~\eqref{eq:skeleton:Sprime-2}. 

Alternatively, one can treat the
$K$-source term as a small interaction. The $O(K)$ term of the two-point
function then is obtained by inserting the interaction and can be written in
terms of a four-point function. Explicit calculation yields
\begin{multline}
\widetilde G^{[K]}(k_0 \vert \infty) 
   =
\widetilde G(k_0 \vert \infty) 
   \\
   - \frac{1}{6} \sum_{\rho=1}^3 \sum_q \int \frac{d^4 p}{(2 \pi)^4} \, K_\wild(p) \, (p^2 + m^2)^2 \, D(p)^2 \,  C^{\pi\gamma\gamma\pi}_{q \rho\rho \bar{q}}(p,k,-k,-p)
   +
   O(K^2)
   \label{eq:skeleton:jjK-2}
   \ ,
\end{multline}
where $k = (k_0,\vec{0})$. Here we have introduced $D(p)$ as the momentum-space
pion two-point function (or dressed propagator)
\begin{gather}
   \delta_{q_1 \bar{q}_2}
   D(p)
   = 
   \int d^4x
   \, e^{-ipx} \langle \pi_{q_1}(x) \pi_{q_2}(0) \rangle_\text{c}
   =
   \frac{\delta_{q_1 \bar{q}_2}}{p^2 + m^2 - \Sigma(p)}
   \ ,
\end{gather}
where the subscript c indicates that only the connected contribution is kept. 
We have also introduced the
connected, amputated four-point function, $C^{\pi\gamma\gamma\pi}_{q
\rho_1 \rho_2 \bar{q}}$, defined via
\begin{multline}
   \langle \pi_{q}(x_1) j_{\rho_1}(y_1) j_{\rho_2}(y_2) \pi_{\bar{q}}(x_2) \rangle_\text{c}
   =
   \int \frac{d^4 p_1}{(2\pi)^4} \frac{d^4 p_2}{(2\pi)^4} \frac{d^4 k_1}{(2\pi)^4} \frac{d^4 k_2}{(2\pi)^4}
   e^{i p_1x_1 + i k_1y_1 + i k_2y_2 + i p_2x_2 }
 \\ \times
   (2\pi)^4 \delta(p_1+k_1+k_2+p_2) \ 
   D(p_1) C^{\pi\gamma\gamma\pi}_{q \rho_1 \rho_2 \bar{q}}(p_1,k_1,k_2,p_2) D(p_2)
   \label{eq:skeleton:Cdef}
   \ .
\end{multline}
By expanding the dressed propagator in powers of the self-energy, $\Sigma(p)$, for the
combination appearing in eq.~\eqref{eq:skeleton:jjK-2} we obtain
\begin{gather}
   K_\wild(p) \, (p^2 + m^2)^2 \, D(p)^2
   =
   K_\wild(p) \, \left\{ \sum_{Q=0}^\infty \left[ \frac{\Sigma(p)}{p^2 + m^2}  \right]^Q \right\}^2
   =
   \sum_{N=1}^\infty N \, K_\wild(p) \left[ \frac{\Sigma(p)}{p^2 + m^2}  \right]^{N-1}
   \ .
\end{gather}
Each term of this expression has a simple diagrammatic interpretation: It is a
chain of alternating $N$ propagators and $N-1$ self-energy insertions, in which each
propagator is replaced once by $K_\wild(p)$ (yielding the overall factor $N$).
Comparing eqs.~\eqref{eq:skeleton:jjK-1} and \eqref{eq:skeleton:jjK-2}, and substituting the above identity, one immediately gets
\begin{gather}
   \Delta \widetilde{G}^{\mathcal L}_\wild(k_0 \vert T, L)
   =
   - \frac{1}{6} \sum_{\rho=1}^3 \sum_{N=1}^\infty N \ 
   \sum_q \int \frac{d^4 p}{(2 \pi)^4} \, K_\wild(p) \, \left[ \frac{\Sigma(p)}{p^2 + m^2}  \right]^{N-1} \,  C^{\pi\gamma\gamma\pi}_{q \rho\rho \bar{q}}(p,k,-k,-p)
   \label{eq:skeleton:Sprime-3}
   \ ,
\end{gather}
which is a representation for $\Delta \widetilde{G}^{\mathcal L}_\wild(k_0 \vert T, L)$ in which all
Feynman integrals have been resummed.

\begin{figure}
   \centering
  \includegraphics[width=\textwidth]{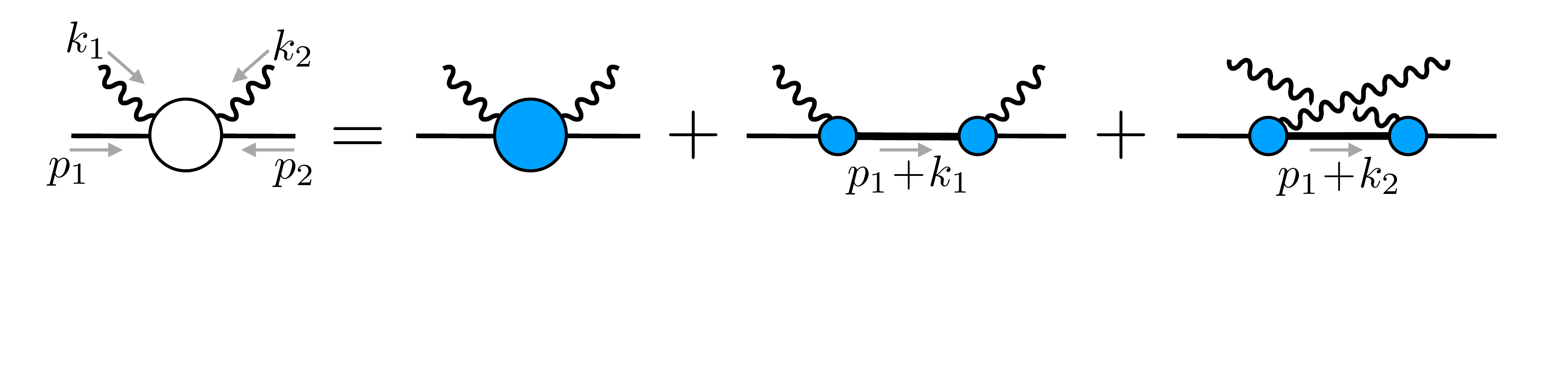}%
   \vspace{-50pt}
   \caption{Skeleton expansion of the $\pi\gamma\gamma\pi$ amputated connected 4-point
   function (white blob) in terms of the 1PI proper vertices (blue blobs) and
   dressed pion propagators (thick lines).}
   \label{fig:4pt-function}
\end{figure}

To complete the skeleton expansion, it remains only to use our compact form of the auxiliary quantity, $\Delta \widetilde{G}^{\mathcal L}_\wild(k_0 \vert T, L)$, 
to reach a form for $\Delta \widetilde{G}_\wild(k_0 \vert T, L)$, in which the sum runs only over the gauge orbits. We will find that the conversion from $\Delta \widetilde{G}^{\mathcal L}_\wild$ to $\Delta \widetilde{G}_\wild$ is \emph{almost} {given} by deleting the factor of $N$ appearing next to the sum in eq.~\eqref{eq:skeleton:Sprime-3}, up to a subtlety that we described in the remainder of this subsection.

The first step is to introduce the decomposition of
$C^{\pi\gamma\gamma\pi}_{\bar{q} \rho_1 \rho_2 q}$ in terms of 1PI vertices
$\Gamma^{\pi\gamma\pi}_{q \rho \bar{q}}$ and $\Gamma^{\pi\gamma\gamma\pi}_{q \rho
\rho \bar{q}}$ 
\begin{multline}
   \begin{split}
   C_{q \rho_1 \rho_2 \bar{q}}(p_1,k_1,k_2,p_2)
   & = 
   \Gamma^{\pi\gamma\pi}_{q \rho_1 \bar{q}}(p_1,k_1,-p_1-k_1) \, D(p_1+k_1) \, \Gamma^{\pi\gamma\pi}_{q \rho_2 \bar{q}}(p_1+k_1,k_2,p_2) \\[3pt]
   & + \Gamma^{\pi\gamma\pi}_{q \rho_2 \bar{q}}(p_1,k_2,-p_1-k_2) \, D(p_1+k_2) \, \Gamma^{\pi\gamma\pi}_{q \rho_1 \bar{q}}(p_1+k_2,k_1,p_2) \\[3pt]
   & + \Gamma^{\pi\gamma\gamma\pi}_{q \rho_1 \rho_2 \bar{q}}(p_1,k_1,k_2,p_2)
   \label{eq:skeleton:skeleton}
   \ .
   \end{split}
\end{multline}
This standard decomposition is represented diagrammatically in
figure~\ref{fig:4pt-function}. 
Substituting into eq.~\eqref{eq:skeleton:Sprime-3} then yields
$\Delta \widetilde{G}^{\mathcal L}_\wild(k_0 \vert T, L)$ as the sum of two terms: The first
contains the 1PI proper vertex, $\Gamma^{\pi\gamma\gamma\pi}_{q \rho \rho
\bar{q}}$, and is diagrammatically represented in figure~\ref{fig:first},
\begin{gather}
 \Delta \widetilde{G}^{\mathcal L}_\wild(k_0 \vert T, L) \supset   - \frac{1}{6} \sum_{\rho=1}^3 \sum_{N=1}^\infty N \ 
   \sum_q \int \frac{d^4 p}{(2 \pi)^4} \, K_\wild(p) \, \left[ \frac{\Sigma(p)}{p^2 + m^2}  \right]^{N-1} \,  \Gamma^{\pi\gamma\gamma\pi}_{q \rho \rho \bar{q}}(p,k,-k,-p)
   \label{eq:skeleton:Sprime-G4}
   \ ,
\end{gather}
and the second contains the 1PI proper vertex, $\Gamma^{\pi\gamma\pi}_{q \rho
\bar{q}}$, and is diagrammatically represented in figure~\ref{fig:second}
\begin{gather}
\Delta \widetilde{G}_\wild(k_0 \vert T, L) \supset   - \frac{1}{3} \sum_{\rho=1}^3 \sum_{N=1}^\infty N \ 
   \sum_{P=0}^\infty 
   \sum_q \int \frac{d^4 p}{(2 \pi)^4} \, K_\wild(p) \, \left[ \frac{\Sigma(p)}{p^2 + m^2}  \right]^{N-1} \,
   \Gamma^{\pi\gamma\pi}_{q \rho \bar{q}}(p,k,-p-k)
   \times \nonumber \\
   \hspace{3cm} \times
   \frac{1}{(p+k)^2 +m^2 } \left[\frac{\Sigma (p+k)}{(p+k)^2 +m^2 }  \right]^{P-1}
   \Gamma^{\pi\gamma\pi}_{q \rho \bar{q}}(p+k,-k,-p)
   \label{eq:skeleton:Sprime-G3}
   \ ,
\end{gather}
where also the dressed propagator appering in eq.~\eqref{eq:skeleton:skeleton}
has been expanded in powers of the self-energy.   Here we have used $\Gamma^{\pi\gamma\pi}_{q_1 \rho q_2}(p_1,k,p_2) = \Gamma^{\pi\gamma\pi}_{q_1 \rho q_2}(p_2,k,p_1)$ and $K(p) = K(-p)$ to combine the two terms, leading to the prefactor of $1/3$ rather than $1/6$.

We are now ready to reconstruct $\Delta \widetilde{G}_\wild (k_0 \vert T, L)$, from the
auxiliary function, $\Delta \widetilde{G}^{\mathcal L}_\wild(k_0 \vert T, L)$, by identifying and
dividing out the multiplicity factor, $N_\wild(\ell)$, appearing in
eq.~\eqref{eq:skeleton:S1}. In order to calculate $N_\wild(\ell)$, one needs to
consider the line, $\ell$, of the Feynman diagram to which the modified propagator,
$K_\wild(p)$, is associated, and then count all the lines which are equivalent to
$\ell$. As we will see, all lines in the equivalence classes, $[\ell]_\text{s}$ and
$[\ell]_\text{t}$, are associated to propagators which appear explicitly in
eqs.~\eqref{eq:skeleton:Sprime-G4} and \eqref{eq:skeleton:Sprime-G3}, and the
multiplicity factor, $N_\wild(\ell)$, is trivially related to the $N$ factor in the
same equations.

\begin{figure}
   \centering
\includegraphics[width=\textwidth]{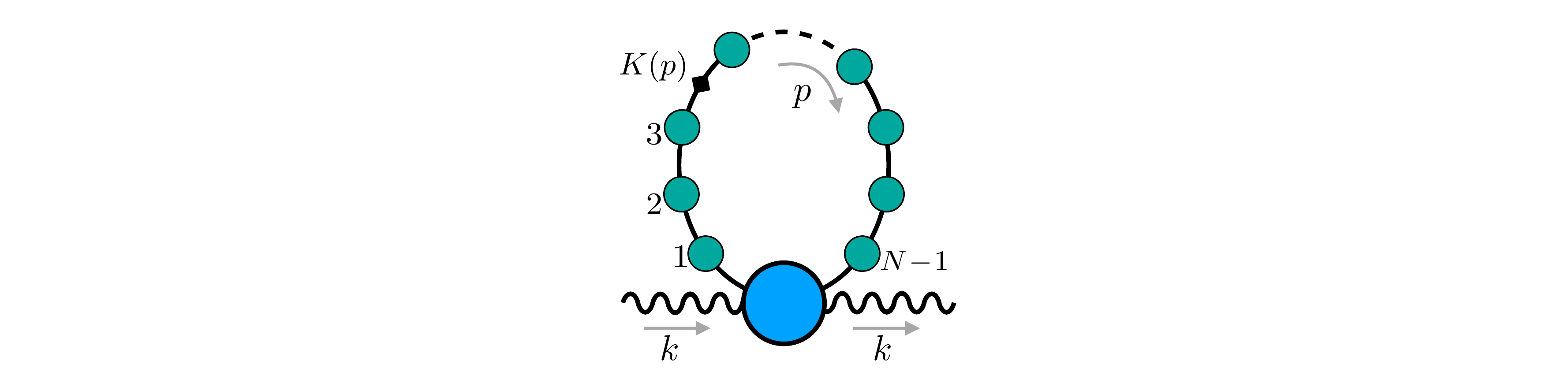}
   \caption{Diagrammatic representation of eq.~\eqref{eq:skeleton:Sprime-G4}. The
   number of self-energy insertions is $N-1$. There are in total $N$ lines in
   the loop, of which $N-1$ are standard propagators, and one corresponds to the
   insertion of $K$. There are in total $N$ possible ways to insert $K$. If
   $\ell$ is the line where $K$ is located, the equivalence class $[\ell]_\text{s} =
   [\ell]_\text{t}$ is the set of all pion lines explicitly drawn in this diagram.}
   \label{fig:first}
\end{figure}

To see how this works out, first consider the term
in~\eqref{eq:skeleton:Sprime-G4} for a given value of $N$, which is
diagrammatically represented in figure~\ref{fig:first}. Let $\ell$ be the line
where $K$ is located. By using the definition of $s$-equivalence classes (see
section \ref{subsec:diag:leading-L}) and the fact that the insertions are 1PI, it is
easy to show that the $s$-equivalence class, $[\ell]_\text{s}$, is the set of all pion
lines explicitly drawn in this diagram. By using the definition of
$t$-equivalence classes (see section \ref{subsec:diag:leading-T}) one further finds that, in
this case, $[\ell]_\text{s}=[\ell]_\text{t}$. Therefore the number, $N_\wild(\ell)$, of lines in
these equivalence classes is equal to $N$. Dropping the factor of $N$
in~\eqref{eq:skeleton:Sprime-G4} amounts to dividing out the multiplicity factor
$N_\wild(\ell)$, yielding the following contribution to $\Delta \widetilde{G}_\wild(k_0 \vert T, L)$:
\begin{gather}
 \Delta \widetilde{G}_\wild(k_0 \vert T, L) \supset  - \frac{1}{6} \sum_{\rho=1}^3 \sum_{N=1}^\infty
   \sum_q \int \frac{d^4 p}{(2 \pi)^4} \, K_\wild(p) \, \left[ \frac{\Sigma(p)}{p^2 + m^2}  \right]^{N-1} \,  \Gamma^{\pi\gamma\gamma\pi}_{q \rho \rho \bar{q}}(p,k,-k,-p)
   \ .
   \label{eq:skeleton:piece4}
\end{gather}

\begin{figure}
   \centering
  \includegraphics[width=\textwidth]{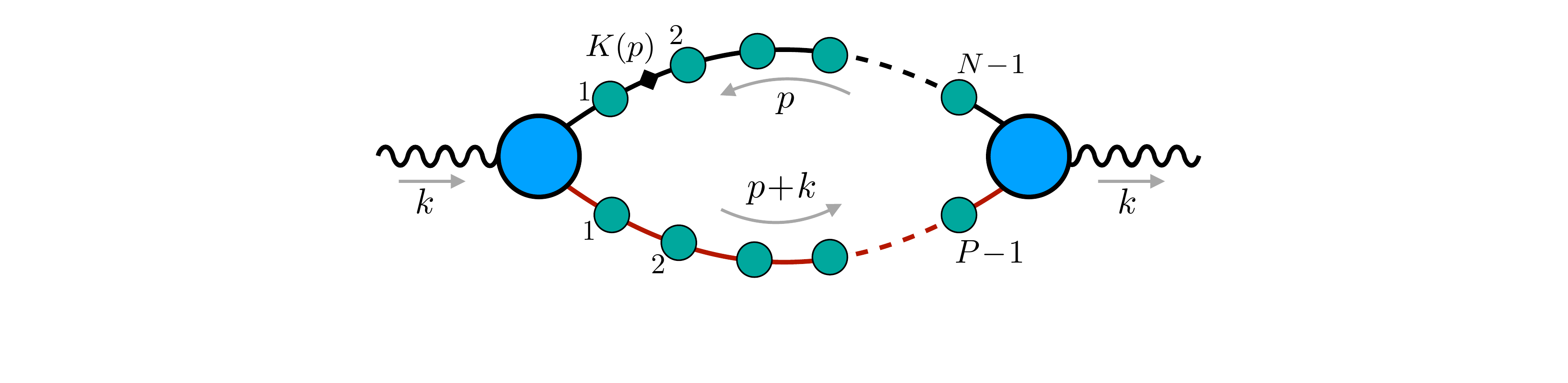}
  \vspace{-30pt}
   \caption{Diagrammatic representation of eq.~\eqref{eq:skeleton:Sprime-G3}.
   There are two self-energy chains, the black one and the red one. The number
   of black self-energy insertions is $N-1$. There are in total $N$ black lines
   in the loop, of which $N-1$ are standard propagators, and one corresponds to
   the insertion of $K$. The number of red self-energy insertions is $P-1$.
   There are in total $P$ red lines in the loop, all correspoinding to standard
   propagators. If $\ell$ is the line where $K$ is located, the equivalence
   class $[\ell]_\text{t}$ is the set of all black pion lines explicitly drawn in this
   diagram. By contrast, the equivalence class $[\ell]_\text{s}$ is the set of all (black
   and red) pion lines explicitly drawn in this diagram.}
   \label{fig:second}
\end{figure}

Consider now the term in~\eqref{eq:skeleton:Sprime-G3} for  given values of $N$
and $P$, diagrammatically represented in figure~\ref{fig:second}. Again,
let $\ell$ be the line where $K$ is located. As is the case for the $\Gamma^{\pi \gamma \gamma \pi}$ term,
discussed in the previous paragraph, here the $s$-equivalence class
$[\ell]_\text{s}$ is the set of all pion lines explicitly drawn in the diagram.
However, in this case the $s$-equivalence class splits into two $t$-equivalence
classes. In particular $[\ell]_\text{t}$ is the set of all black pion lines
explicitly drawn in the figure. This implies that the number, $N_\text{t}(\ell)$, of
lines in $[\ell]_\text{t}$ is equal $N$. Dropping the factor of $N$
in~\eqref{eq:skeleton:Sprime-G3} amounts to dividing out the multiplicity factor,
$N_\text{t}(\ell)$, yielding the following contribution to $\Delta
\widetilde{G}_\text{t}(k_0 \vert T)$
\begin{gather}
\Delta
\widetilde{G}_\text{t}(k_0 \vert T) \supset   - \frac{1}{3} \sum_{\rho=1}^3 \sum_{N=1}^\infty
   \sum_{P=0}^\infty 
   \sum_q \int \frac{d^4 p}{(2 \pi)^4} \, K_\wild(p) \, \left[ \frac{\Sigma(p)}{p^2 + m^2}  \right]^{N-1} \,
   \Gamma^{\pi\gamma\pi}_{q \rho \bar{q}}(p,k,-p-k)
   \times \nonumber \\
   \hspace{3cm} \times
   \frac{1}{(p+k)^2 +m^2 } \left[\frac{\Sigma (p+k)}{(p+k)^2 +m^2 }  \right]^{P-1}
   \Gamma^{\pi\gamma\pi}_{q \rho \bar{q}}(p+k,-k,-p)
   \ .
   \label{eq:skeleton:piece3}
\end{gather}

So, in the cases analyzed so far, we have found that $N_\eta(\ell) = N$ so that one can convert $ \Delta \widetilde{G}^{\mathcal L}_\wild(k_0 \vert T, L) $ to $ \Delta \widetilde{G}_\wild(k_0 \vert T, L) $, simply by dropping the explicit factor of $N$ appearing in eq.~\eqref{eq:skeleton:Sprime-G3}.
The remaining case to consider, the $\Gamma^{\pi \gamma \pi}$-dependent contribution to $\Delta \widetilde{G}_{\text s}(k_0 \vert  L) $, breaks this pattern.
To see why, first note that $[\ell]_\text{s}$ is the combined set of all black and red pion lines, explicitly drawn in
figure~\ref{fig:second}. It is evident that there is a symmetry between the black
and the red sets, after summing over $N$ and $P$. This can be made
manifest in eq.~\eqref{eq:skeleton:Sprime-G3}, by recasting the series as \begin{multline}
\Delta \widetilde{G}_{\wild}(k_0 \vert T, L)  \supset  - \frac{1}{3} \sum_{\rho=1}^3 \sum_{N=1}^\infty \frac{N+P}{2} \ 
   \sum_{P=0}^\infty 
   \sum_q \int \frac{d^4 p}{(2 \pi)^4} \, K_\wild(p) \, \left[ \frac{\Sigma(p)}{p^2 + m^2}  \right]^{N-1} \,
\\ \times 
    \Gamma^{\pi\gamma\pi}_{q \rho \bar{q}}(p,k,-p-k)  \frac{1}{(p+k)^2 +m^2 } \left[\frac{\Sigma (p+k)}{(p+k)^2 +m^2 }  \right]^{P-1}
   \Gamma^{\pi\gamma\pi}_{q \rho \bar{q}}(p+k,-k,-p)
   \ ,
   \label{eq:skeleton:Sprime-G3-NP}
\end{multline}
where the factor $N$ has been replaced by $(N+P)/2$. The equality to \eqref{eq:skeleton:Sprime-G3} is proven by
exchanging $P$ and $N$ in the term proportional to $P$, changing variables $p
\to -p-k$, using invariance under sign change of all momenta in the proper
vertices (for a detailed discussion of the symmetries of the proper vertices,
see sec. \ref{subsec:calculation:preliminaries}), and, finally, applying the definition of
$K_\wild(p)$. The number, $N_\text{s}(\ell)$, of lines in $[\ell]_\text{s}$ is
equal to $N+P$. Thus dropping the factor of $N+P$ in~\eqref{eq:skeleton:Sprime-G3-NP}
amounts to dividing out the multiplicity factor, $N_\text{s}(\ell)$, yielding a
contribution to $\Delta \widetilde{G}_\text{s}(k_0 \vert L)$ that is equal to
\eqref{eq:skeleton:piece3} multiplied by an extra factor of $1/2$.

Finally, we use the
explicit expression for $K_\wild(p)$, given in eq.~\eqref{eq:skeleton:Kdef}, and
 resum the geometric series for the dressed propagators in eqs.~\eqref{eq:skeleton:piece4} and \eqref{eq:skeleton:piece3}. This yields the skeleton expansion
\begin{multline}
   \label{eq:skeleton:final}
   \Delta \widetilde{G}_\wild(k_0 \vert T, L)
   =
   - \frac{1}{6} \sum_{u \in U_\wild} \sum_{\rho=1}^3
   \sum_q \int \frac{d^4 p}{(2 \pi)^4} \, 
   \cos \left( u \L p \right)
   D(p) \, 
   \\ \times \Big [
   \Gamma^{\pi\gamma\gamma\pi}_{q \rho \rho \bar{q}}(p,k,-k,-p)
   + \beta_\wild
   \Gamma^{\pi\gamma\pi}_{q \rho \bar{q}}(p,k,-p-k)
   D(p+k)
   \Gamma^{\pi\gamma\pi}_{q \rho \bar{q}}(p+k,-k,-p)
   \Big ]_{\vec{k}=\vec{0}}
   \ ,
\end{multline}
where $\beta_\text{t} = 2$ and $\beta_\text{s}=1$.

We have reached a compact form for $ \Delta \widetilde{G}_{\text{t}}(k_0 \vert T)$ and $ \Delta \widetilde{G}_{\text{s}}(k_0 \vert L)$, expressed in terms of three basic building blocks and completely independent of all details of the effective-field theory. This, however, has come at the cost that the exponentially suppressed scaling is now hidden. This is remedied in the next section.

\section{Relation to partially on-shell 4pt functions}
\label{sec:calculation}

For most of section \ref{sec:diag} we have worked in position space %
and with Euclidean-signature correlators. %
The advantage of this approach lies in the fact that the
exponential decay of the finite-volume corrections to each Feynman diagram can be
understood in terms of a saddle-point expansion. When finite-volume corrections
are written in momentum space, as in eq.~\eqref{eq:skeleton:final}, the
exponential decay is no longer manifest, as $L$ and $T$ appear in oscillatory
factors.

To make the underlying scaling manifest in momentum space, one relies on the fact that the leading large-volume corrections are dominated by single-particle
poles, located in the complex plane, away from the integration
domain. Thus, in this section, we perform the contour deformations of eq.~\eqref{eq:skeleton:final}, necessary to identify the
 $e^{- mL}$ and $e^{- mT}$ behavior. This section is divided into three subsections:
first we introduce some convenient notation and discuss symmetry properties of
various objects entering the calculation; then we evaluate, respectively, the
leading contributions from the finite spatial and temporal extents.

\subsection{Preliminaries}
\label{subsec:calculation:preliminaries}

We are concerned here with the leading finite-$L$ correction given in
eq.~\eqref{eq:stat:DeltaIs} and with the wrapped-pion piece of the leading
finite-$T$ correction given in eq.~\eqref{eq:stat:DeltaIt-WP}. These corrections
are written in terms of the quantity $\Delta G_\wild(x_0 \vert T, L)$, whose Fourier
transform has been conveniently written in eq.~\eqref{eq:skeleton:final} in
terms of infinite-volume dressed propagators and 1PI vertices. We introduce some
convenient notation to make the following equations more compact.

We parametrize the infinite-volume dressed propagator  as
\begin{gather}
   D(p) = \frac{Z(p)}{p^2 + m^2}
   \label{eq:calc:Zdef}
   \ ,
\end{gather}
where $Z(p)$ is a function of $p^2$ only, and in particular it is an analytic
function below the three-pion threshold, i.e.~for $\Re p^2 > -9m^2$. We define
also the functions
\begin{gather}
   \mathcal{A}_{\rho\sigma}(p,k) = \sum_q Z(p-\tfrac{k}{2}) \Gamma^{\pi\gamma\pi}_{q \rho \bar{q}}(p-\tfrac{k}{2},k,-p-\tfrac{k}{2}) Z(p+\tfrac{k}{2}) \Gamma^{\pi\gamma\pi}_{q \sigma \bar{q}}(p+\tfrac{k}{2},-k,-p-\tfrac{k}{2})
   \label{eq:calc:Adef}
   \ , \\
   \mathcal{M}_{\rho\sigma}(p,k) = \sum_q Z(p) \Gamma^{\pi\gamma\gamma\pi}_{q \rho \sigma \bar{q}}(p,k,-k,-p)
   \label{eq:calc:Mdef}
   \ ,
\end{gather}
in terms of which the inverse Fourier transform of $\Delta
\widetilde{G}_\wild(k_0 \vert T, L)$ %
is conveniently
written as
\begin{multline}
   \label{eq:calc:DeltaG-0}
   \Delta G_\wild(x_0 \vert T, L)
   =
   - \frac{1}{6} \sum_{u \in U_\wild} \sum_{\rho=1}^3
   \int \frac{d k_0}{2\pi} e^{ik_0x_0}
   \int \frac{d^4 p}{(2 \pi)^4} \, 
   \frac{
   \cos \left( u \L p \right)
   }{ p^2 + m^2 } \\ \times
   \left[
   \beta_\wild \frac{\mathcal{A}_{\rho\rho}(p+\frac{k}{2},k)}{(p+k)^2+m^2}
   + \mathcal{M}_{\rho\rho}(p,k)
   \right]_{\vec{k}=\vec{0}}
   \ .
\end{multline}
Recall that, thanks to the analysis of section \ref{subsec:diag:2pt-function},
the leading exponentials of  $\Delta a_\text{s}(L)$ and $\Delta
a_\text{t}^\text{WP}(T)$ can be obtained from the leading exponential of $\Delta
G_\wild(x_0 \vert T, L)$ at fixed $x_0$.

We list here the symmetry properties of the functions $\mathcal{A}$ and
$\mathcal{M}$. Invariance under exchange of the two electromagnetic currents
implies
\begin{gather}
   \mathcal{A}_{\rho\sigma}(p,-k) = \mathcal{A}_{\sigma\rho}(p,k)
   \ , \qquad
   \mathcal{M}_{\rho\sigma}(p,-k) = \mathcal{M}_{\sigma\rho}(p,k)
   \ ,
   \label{eq:calc:photonex}
\end{gather}
and invariance under Euclidean parity and Euclidean rotations implies, for $R \in
\text{O}(4)$,
\begin{gather}
   \mathcal{A}_{\rho\sigma}(Rp,Rk) = \sum_{\rho'\sigma'} R_{\rho\rho'} R_{\sigma\sigma'} \mathcal{A}_{\rho'\sigma'}(p,k)
   \ , \qquad
   \mathcal{M}_{\rho\sigma}(Rp,Rk) = \sum_{\rho'\sigma'} R_{\rho\rho'} R_{\sigma\sigma'} \mathcal{M}_{\rho'\sigma'}(p,k)
   \ .
   \label{eq:calc:rotations}
\end{gather}
Under CPT, local fields transform generally as
$\mathcal{O}_{\mu_1,\dots,\mu_n}^\text{CPT}(x) = (-1)^n
\mathcal{O}_{\mu_1,\dots,\mu_n}(-x)^*$ (it is straightforward to check that this
is true both in Minkowskian and Euclidean time). In particular
$\pi_q^\text{CPT}(x) = \pi_q(-x)^*$ and $j_\mu^\text{CPT}(x) = -j_\mu(-x)^*$.
Thus, the action density transforms like $\mathcal{L}_\text{E}^\text{CPT}(x) =
\mathcal{L}_\text{E}(-x)^*$, and the action like $S(\pi^\text{CPT}) =
S(\pi)^*$. One easily checks that this implies
\begin{gather}
   \mathcal{A}_{\rho\sigma}(p^*,k^*) = \mathcal{A}_{\rho\sigma}(p,k)^*
   \ , \qquad
   \mathcal{M}_{\rho\sigma}(p^*,k^*) = \mathcal{M}_{\rho\sigma}(p,k)^*
   \ .
   \label{eq:calc:cpt}
\end{gather}

We will find that the leading finite-$L$ correction of the current
two-point function can be expressed in terms of the forward Compton scattering
amplitude $T(k^2,p  k)$ %
of an off-shell photon with momentum
$k$ against an on-shell pion with momentum $p$. The finite-$T$ correction can
be expressed in terms of a function obtained by analytically continuing
$T(k^2,p  k)$. %
Here we want to write the Compton scattering amplitude
in terms of the functions $\mathcal{A}$ and $\mathcal{M}$.
We begin here by considering the relation of the connected
amputated four-point function $C^{\pi\gamma\gamma\pi}_{q \rho \sigma \bar{q}}$,
defined in eq.~\eqref{eq:skeleton:Cdef}, to its Minkowski-signature counterpart.
The Minkowskian 4-point function is obtained by Wick rotating the temporal
components of the momenta to the imaginary axis
\begin{gather}
   \widetilde{C}^{\pi\gamma\gamma\pi}_{q \rho \sigma \bar{q}}(p_1,k_1,k_2,p_2)
   =
   \eta_\rho \eta_\sigma \lim_{\epsilon \to 0^+} C^{\pi\gamma\gamma\pi}_{q \rho \sigma \bar{q}}(\tilde{p}_1^\epsilon,\tilde{k}_1^\epsilon,\tilde{k}_2^\epsilon,\tilde{p}_2^\epsilon)
   \ ,
\end{gather}
where the repeated indices are not summed. Here we have used \begin{gather}
   \tilde{v}^\epsilon = (i e^{i \epsilon} v_0 , \vec{v} )
   \ , \qquad
   \eta = (1 , -i , -i , -i)
   \ .
\end{gather}
The prefactors $\eta_\mu$ take into account that the Euclidean and Minkoskian
electromagnetic current are not equal, as one can easily check by applying
Noether's theorem to the Euclidean and Minkowskian actions. The forward Compton
tensor, summed over the pion charges, is obtained by means of the LSZ reduction
formula applied to $\widetilde{C}$ in a standard fashion, and is given by
\begin{align}
   T_{\rho\sigma}(k|\vec{p})
   &  =
   \bigg[
   Z(p) \sum_q \widetilde{C}^{\pi\gamma\gamma\pi}_{q \rho\sigma \bar{q}}(p,k,-k,-p)
   \bigg]_{p_0 = E(\vec{p})} \,, \\[5pt]
   & = 
   \lim_{\epsilon \to 0^+} \bigg[
   \eta_\rho \eta_\sigma Z(\tilde{p}^\epsilon)   \sum_q C^{\pi\gamma\gamma\pi}_{q \rho \sigma \bar{q}}(\tilde{p}^\epsilon,\tilde{k}^\epsilon,-\tilde{k}^\epsilon,-\tilde{p}^\epsilon)
   \bigg]_{p_0 = E(\vec{p})}
   \label{eq:calc:compton0}
   \ ,
\end{align}
where we have introduced $E(\vec{p}) = \sqrt{m^2+\vec{p}^2}$. By using the
skeleton expansion~\eqref{eq:skeleton:skeleton} for $C$, the definitions
\eqref{eq:calc:Adef} and \eqref{eq:calc:Mdef}, the observation that
\begin{gather}
   \left[
   \frac{1}{(\tilde{p}^\epsilon \pm \tilde{k}^\epsilon)^2+m^2}
   \right]_{p_0 = E(\vec{p})}
   =
   \frac{1}{
   - k_0^2 + \vec{k}^2 \mp 2 [ k_0 E(\vec{p}) - \vec{p}\vec{k} ] - i \epsilon
   }
   \ ,
\end{gather}
in the $\epsilon \to 0^+$ limit, and the definition
\begin{gather}
   \bar{p} = (i E(\vec{p}),\vec{p})
   \ ,
\end{gather}
one obtains (specializing to $\rho=\sigma$, which is not summed)
\begin{gather}
   g^{\rho\rho}   T_{\rho\rho}(k|\vec{p})
   =
   \bigg[
   \frac{\mathcal{A}_{\rho\rho}(\bar{p}+\frac{\tilde{k}}{2},\tilde{k})}{
   - k_0^2 + \vec{k}^2 - 2 ( k_0 E(\vec{p}) - \vec{p}\vec{k} ) - i \epsilon}
   + \nonumber \\ \hspace{3cm}
   + \frac{\mathcal{A}_{\rho\rho}(\bar{p}-\frac{\tilde{k}}{2},-\tilde{k})}{
   - k_0^2 + \vec{k}^2 + 2 ( k_0 E(\vec{p}) - \vec{p}\vec{k} ) - i \epsilon}
   + \mathcal{M}_{\rho\rho}(\bar{p},\tilde{k})
   \bigg]_{ \tilde{k}=(ik_0,\vec{k}) }
   \label{eq:calc:compton1}
   \ .
\end{gather}
Note that we have dropped the $\epsilon$ dependence in the arguments of
$\mathcal{A}$ and $\mathcal{M}$. This is possible if $E(\vec{p})<2m$, or
equivalently $\vec{p}^2 < 3m^2$, and $k_0$ real, since those functions are
analytic below the two-pion threshold (see theorem~\ref{theo:ana1}). In the following we will be interested in the analytic continuation to complex values
of $k_0$. However $\vec{p}$ and $\vec{k}$ are kept real at all stages. 

We next comment on the strange looking index structure, in which the object $ g^{\rho\rho} T_{\rho\rho}$, with no sum over $\rho$, appears on the left-hand side. 
Note first that, when we do sum over the index, defining
\begin{gather}
   T(k_\mu k^\mu, \bar{p}_\mu k^\mu)
   =
   \sum_{\rho=0}^3 g^{\rho\rho} T_{\rho\rho}(k|\vec{p})
   \ ,
\end{gather}
then this becomes the usual Lorentz invariant Minkowskian trace, equal to $g^{\mu \nu} T_{\mu \nu} = {T^\mu}_\mu$ in the standard index notation.
This allows us to write this quantity as a function of the Lorentz
invariants $k_\mu k^\mu = \sum_{\rho,\sigma} g^{\rho\sigma} k_\rho k_\sigma$ and
$\bar{p}_\mu k^\mu = \sum_{\rho,\sigma} g^{\rho\sigma} \bar{p}_\rho k_\sigma$, as we have done on the left-hand side. The only non-standard aspect is the factors of $A_{\rho \rho}$ and $\mathcal M_{\rho \rho}$, with no factor of $g^{\rho \rho}$, on the right-hand side of eq.~\eqref{eq:calc:compton1}. This mismatch is due to the fact that we are mixing Euclidean and Minkowskian conventions at this intermediate stage of the calculation.

\subsection{Finite \texorpdfstring{$L$}{L}}

We return now to eq.~\eqref{eq:calc:DeltaG-0} for $\wild=\text{s}$ and show that this can be rewritten as given in eq.~\eqref{eq:stat:DeltaGs}. Using the
reality of the functions $\mathcal{A}$ and $\mathcal{M}$ (CPT), and invariance
under parity along the 0 direction, one obtains the following representation
\begin{multline}
   \Delta G_\text{s}(x_0 \vert L)
   =
   - \frac{1}{6} \Re \sum_{ \vec{n} \neq \vec{0} } \sum_{\rho=1}^3
   \int \frac{d k_0}{2\pi} \cos (k_0x_0)
   \int \frac{d^4 p}{(2 \pi)^4} \, 
   \frac{
   e^{i L \vec{n} \vec{p}}
   }{ p^2 + m^2 }
\times  \\ \times \left[
   \frac{\mathcal{A}_{\rho\rho}(p+\frac{k}{2},k)}{(p+k)^2+m^2}
   + \mathcal{M}_{\rho\rho}(p,k)
   \right]_{\vec{k}=\vec{0}}
   \label{eq:scalc:DeltaG3-0}
   \ .
\end{multline}

We next rewrite this equation in a slightly different way, that
will be convenient at a later stage. First, we change variables $\vec{p} \to R^T
\vec{p}$ where $R$ is an SO($3$) matrix such that $R \vec{n} =
(0,0,|\vec{n}|)$. Using also the covariance of the functions $\mathcal{A}$ and
$\mathcal{M}$ under (spatial) rotations, and the invariance of the
$\vec{k}=\vec{0}$ constraint, this amounts simply to replacing
\begin{gather}
   e^{i L \vec{n} \vec{p}}
   \rightarrow
   e^{i L |\vec{n}| p_3} \,,
\end{gather}
in eq.~\eqref{eq:scalc:DeltaG3-0}. Then we exchange the labels $k_0$ and $k_3$ everywhere in eq.~\eqref{eq:scalc:DeltaG3-0} also
exchange $p_0 \leftrightarrow p_3$, which amounts to acting with a particular
O($4$) matrix. 
We reach
\begin{multline}
   \Delta G_\text{s}(x_0 \vert L)
   =
   - \frac{1}{6} \Re \sum_{ \vec{n} \neq \vec{0} } \sum_{\rho=0}^2
   \int \frac{d k_3}{2\pi} \cos(k_3x_0)
   \int \frac{d^4 p}{(2\pi)^4} \frac{e^{i |\vec{n}| L p_0}}{p^2+m^2}
\times \\ \times   \left[
   \frac{\mathcal{A}_{\rho\rho}(p+\frac{k}{2},k)}{(p+k)^2+m^2} + \mathcal{M}_{\rho\rho}(p,k)
   \right]_{k_0=k_1=k_2=0}
   \label{eq:scalc:DeltaG3-1}
   \ .
\end{multline}

Observe that the integrand has four explicit complex poles in $p_0$ and that
these become pairwise degenerate when $k_3=0$. In order to avoid these double
poles, we deform the integrand by replacing, with $\epsilon = 0^+$,
\begin{gather}
   \frac{1}{p^2+m^2} \to \frac{1}{p^2+m^2-i\epsilon} \ , \\
   \frac{1}{(p+k)^2+m^2} \to \frac{1}{(p+k)^2+m^2+i\epsilon} \ .
\end{gather}
We shift the $p_0$ integral in the complex plane to $\mathbb{R} + i m \sqrt{2 +
\sqrt{3}}$, which is allowed because $\mathcal{A}_{\rho\rho}(p+\frac{k}{2},k)$
and $\mathcal{M}_{\rho\rho}(p,k)$ are analytic in the strip $| \! \Im p_0| < 2m$
(see theorem~\ref{theo:ana1}) and $\sqrt{2 + \sqrt{3}} < 2$. 
Introducing the notation $p_\perp=(p_1,p_2)$, we observe that, 
in the shift, one
picks up the pole
\begin{gather}
   p_0 = i \sqrt{m^2-i\epsilon + p_\perp^2 + p_3^2}
   \label{eq:scalc:pole1}
   \ ,
\end{gather}
as long as $\vec{p}^2 = p_\perp^2 + p_3^2 < ( 1 + \sqrt{3} ) m$, and the pole
\begin{gather}
   p_0 = i \sqrt{m^2+i\epsilon + p_\perp^2 + (p_3+k_3)^2} \ ,
   \label{eq:scalc:pole2}
\end{gather}
as long as $p_\perp^2 + (p_3+k_3)^2 < ( 1 + \sqrt{3} ) m$.  The contribution from the shifted contour is seen
to be of order $e^{-\sqrt{2+\sqrt{3}} m L}$, i.e.~it is of the same order as
terms that we have already neglected. Therefore only the above poles contribute
to the leading order of the integral~\eqref{eq:scalc:DeltaG3-1}, yielding
\begin{gather}
  \Delta G_\text{s}(x_0 \vert L)
   =
   - \frac{1}{6} \Re \sum_{ \vec{n} \neq \vec{0} } \sum_{\rho=0}^2
   \int_{\vec{p}^2 < ( 1 + \sqrt{3} )m^2} \frac{d^3 p}{(2\pi)^3}
   \frac{
   e^{-|\vec{n}| E(\vec p) L}
   }{
   2 E(\vec p)
   } \int \frac{d k_0}{2\pi} \cos (k_3 x_0)
   \times \nonumber \\  \label{eq:scalc:DeltaG3-2}  \hspace{3cm} \times
   \left[
   \frac{\mathcal{A}_{\rho\rho}(\bar{p}+\frac{k}{2},k)}{k_3^2+2p_3k_3+2i\epsilon}
   + \frac{\mathcal{A}_{\rho\rho}(\bar{p}-\frac{k}{2},k)}{k_3^2-2p_3k_3-2i\epsilon}
   + \mathcal{M}_{\rho\rho}(\bar{p},k)
   \right]_{k_0=k_1=k_2=0}
  \\ \hspace{3cm} + O\left( e^{-\sqrt{2+\sqrt{3}} m L} \right) \nonumber
   \ .
\end{gather}%

The quantity appearing in square brackets here is almost the Compton tensor, given in eq.~\eqref{eq:calc:compton1},
for $k_0=k_1=k_2=0$ and $\vec{p}^2\le (1+\sqrt{3})m^2<3m^2$,
\begin{gather}
   g^{\rho\rho} T_{\rho\rho}(0,0,0,k_3|\vec{p})
   =
   \left[
   \frac{\mathcal{A}_{\rho\rho}(\bar{p}+\frac{k}{2},k)}{
   k_3^2 + 2 p_3k_3 - i \epsilon}
   + \frac{\mathcal{A}_{\rho\rho}(\bar{p}-\frac{k}{2},-k)}{
   k_3^2 - 2 p_3k_3 - i \epsilon}
   + \mathcal{M}_{\rho\rho}(\bar{p},k)
   \right]_{ k=(0,0,0,k_3) }
   \label{eq:scalc:compton0}
   \ ,
\end{gather}
except for the wrong $i\epsilon$ prescription in the first propagator. However
this difference is immaterial because only the real part contributes, and the
numerator is found to be real (this is a consequence of CPT and parity in the 0
direction). Therefore one obtains
\begin{multline}
  \Delta G_\text{s}(x_0 \vert L)
   =
   - \frac{1}{6} \sum_{ \vec{n} \neq \vec{0} }
   \int_{\vec{p}^2 < (1+\sqrt{3})m^2} \frac{d^3 p}{(2\pi)^3}
   \frac{
   e^{-|\vec{n}| L E(\vec{p})}
   }{
   2 E(\vec{p})
   }
   \int \frac{d k_0}{2\pi} \cos (k_0x_0) \times \\  \times 
   \sum_{\rho=0}^2 g^{\rho\rho} \Re T_{\rho\rho}(0,0,0,k_3|\vec{p})
   + O\left( e^{-\sqrt{2+\sqrt{3}} m L} \right)
   \label{eq:scalc:DeltaG3-3}
   \ .
\end{multline}

The Compton tensor satisfies the electromagnetic Ward identity
$\sum_{\rho \alpha} g^{\rho \alpha} k_\alpha T_{\rho\sigma}(k|\vec{p}) = 0$ 
\cite{\ward}, which specializes in the case of interest to
\begin{gather}
   k_3 T_{33}(0,0,0,k_3|\vec{p}) = 0 \ .
   \label{eq:scalc:ward}
\end{gather}
This must be interpreted as an equation for a tempered distribution, meaning
that $T_{33}$ may be proportional to $\delta(k_3)$. However, using the
analyticity properties of $\mathcal{A}$ and $\mathcal{M}$ in the kinematic region
of interest, one proves that $T_{33}(0,0,0,k_3|\vec{p})$ is analytic for
real values of $k_3$,\footnote{
Multiplying the Ward identity~\eqref{eq:scalc:ward} by $(k_3^2-4p_3^2)$, one
obtains the following equation for analytic functions
\begin{gather}
   0
   =
   - (k_3^2-4p_3^2) k_3 T_{33}(0,0,0,k_3|\vec{p})
   = \\ \nonumber \qquad =
   \left[
   (k_3-2p_3) \mathcal{A}_{33}(\bar{p}+\tfrac{k}{2},k)
   + (k_3+2p_3) \mathcal{A}_{33}(\bar{p}-\tfrac{k}{2},-k)
   + (k_3^2-4p_3^2) k_3 \mathcal{M}_{33}(\bar{p},k)
   \right]_{ k=(0,0,0,k_3) }
   \ .
\end{gather}
Evaluating this at $k_3= \mp 2p_3$, and using the fact that $\mathcal{A}$ is a
continuous function in both variables and in the kinematical region of interest,
one finds
\begin{gather}
   \left[ \mathcal{A}_{33}(\bar{p} \pm \tfrac{k}{2}, \pm k) \right]_{ k=(0,0,0,\mp 2p_3) }
   = 0 \ .
\end{gather}
This fact, with the analyticity of $\mathcal{A}_{33}$ for real values of $p_3$
and $k_3$, implies that
\begin{gather}
   k_3
   \left[
   \frac{
   \mathcal{A}_{\rho\rho}(\bar{p} \pm \frac{k}{2}, \pm k)
   }{
   k_3^2 \pm 2p_3k_3 -i \epsilon
   }
   \right]_{ k=(0,0,0,k_3) }
   =
   \frac{
   \left[ \mathcal{A}_{\rho\rho}(\bar{p} \pm \frac{k}{2}, \pm k) \right]_{ k=(0,0,0,k_3) }
   - \left[ \mathcal{A}_{33}(\bar{p} \pm \tfrac{k}{2}, \pm k) \right]_{ k=(0,0,0,\mp 2p_3)}}{
   k_3 \pm 2 p_3} \,,
\end{gather}
is an analytic function for real values of $p_3$ and $k_3$, and so is $k_3 T_{33}(0,0,0,k_3|\vec{p})$. Therefore eq.~\eqref{eq:scalc:ward} implies $T_{33}(0,0,0,k_3|\vec{p}) = 0$.
} and therefore the Ward identity implies $T_{33}(0,0,0,k_3|\vec{p}) = 0$. It
follows that the sum over $\rho$ in eq.~\eqref{eq:scalc:DeltaG3-3} can be
equivalently extended to $\rho=3$, i.e.
\begin{gather}
   \sum_{\rho=0}^2 g^{\rho\rho} T_{\rho\rho}(0,0,0,k_3|\vec{p})
   =
   \sum_{\rho=0}^3 g^{\rho\rho} T_{\rho\rho}(0,0,0,k_3|\vec{p})
   =
   T(-k_3^2,-p_3k_3)
   \ .
\end{gather}
Note that this quantity does not depend on $p_\perp$. Dropping the
restrictions over the $\vec{p}$ integration domain in
eq.~\eqref{eq:scalc:DeltaG3-3} amounts to an error of
$O(e^{-\sqrt{2+\sqrt{3}}mL})$. After this is done, the integral over $p_\perp$
can be calculated explicitly, yielding eq.~\eqref{eq:stat:DeltaGs}
\begin{multline}
 \Delta G_\text{s}(x_0 \vert L)
   =
   - \sum_{ \vec{n} \neq \vec{0} }
   \int \frac{d p_3}{2\pi}
   \frac{
   e^{-|\vec{n}| L \sqrt{m^2+p_3^2}}
   }{
   24 \pi |\vec{n}| L
   }
   \int \frac{d k_3}{2\pi} \cos (k_3x_0)
   \Re T(-k_3^2,-p_3k_3)
  \\ + O\left( e^{-\sqrt{2+\sqrt{3}} m L} \right)
\nonumber
   \ .
\end{multline}
Note here that we have used a slight abuse of notation in sections~\ref{sec:statement} and \ref{sec:diag}, giving $\Delta G_\text{s}(x_0 \vert L)$ (as well as $\Delta a_{\text{s}}(L)$) as the Compton-amplitude-only piece in the former section, but defining it as the sum over all simple gauge orbits in the latter. This amounts to shuffling the $O ( \cdots \! \, )$ term appearing here into the relation between $\Delta G(x_0 \vert \infty, L)$ and $\Delta G_\text{s}(x_0 \vert L)$. Since the two quantities differ by this scaling anyway, we thought it unnecessary to introduce a separate symbol.

\subsection{Finite \texorpdfstring{$T$}{T}}

We now complete the derivation by analysing the case $\eta=\text{t}$ and showing that %
eq.~\eqref{eq:calc:DeltaG-0} can be rewritten as eqs.~\eqref{eq:stat:DeltaGt} and \eqref{eq:stat:Sdef}.
Here it is convenient to work with the inverse Fourier transform of $\hat S(x_0, \vec p^2)$, defined in \eqref{eq:stat:Sdef}. We thus introduce
\begin{gather}
   S(k_0,\vec{p}^2)
   =
   \lim_{\vec{p}' \to \vec{p}}
   \sum_{\rho=1}^3 \sum_q \int d^4x \ e^{ik_0x_0} \langle \vec{p}',q | \text{T} j_\rho(x) j_\rho(0) | \vec{p},q \rangle
   \label{eq:tcalc:S0}
   \ ,
\end{gather}
where the Euclidean electromagnetic current in the Heisenberg picture
satisfies
\begin{gather}
   j_\rho(x) = e^{x_0H-i\vec{P}\vec{x}} j_\rho(0) e^{-x_0H+i\vec{P}\vec{x}}
   \ .
\end{gather}
Our first goal is to obtain an expression for $S(k_0,\vec{p})$ in terms of 1PI
proper vertices. This is done by relating this quantity to the Compton tensor,
and in particular to
\begin{gather}
   \bar{T}(k_0,\vec{p}^2)
   =
   \sum_{\rho=1}^3 g^{\rho\rho} T_{\rho\rho}(k|\vec{p})
   =
   i \lim_{\vec{p}' \to \vec{p}}
   \sum_{\rho=1}^3 \sum_q g^{\rho\rho}
   \int d^4x \ e^{ik_0x_0} \langle \vec{p}',q | \text{T} \mathcal{J}_\rho(x) \mathcal{J}_\rho(0) | \vec{p},q \rangle
   \ ,
   \label{eq:tcalc:T0}
\end{gather}
and by using the standard skeleton expansion~\eqref{eq:calc:compton1} for the
Compton tensor. We introduce the spectral density
\begin{gather}
   \rho(\omega|\vec{p}^2)
   =
   \lim_{\vec{p}' \to \vec{p}}
   \sum_{\rho=1}^3 \sum_q g^{\rho\rho}
   \langle \vec{p}',q | \mathcal{J}_\rho(0,\vec{x}) (2\pi) \delta(H-\omega) (2\pi)^3 \delta^3(\vec{P}) \mathcal{J}_\rho(0) | \vec{p},q \rangle
   \ .
\end{gather}
By using the relation between Euclidean and Minkowskian electromagnetic
currents, one sees that both $S$ and $\bar T$ have a spectral representation in
terms of the same spectral density, i.e.
\begin{gather}
   S(k_0,\vec{p}^2)
   =
   \int_0^\infty \frac{d\omega}{2\pi} \int dx_0 \ e^{ik_0x_0} e^{- |x_0| \, [\omega - E(\vec{p})]} \rho(\omega|\vec{p}^2)
   \label{eq:tcalc:S1}
   \ , \\
   \bar{T}(k_0^2,E(\vec{p}) k_0)
   =
   i \int_0^\infty \frac{d\omega}{2\pi} \int dx_0 \ e^{ik_0x_0} e^{-i |x_0| \, [\omega - E(\vec{p})]} \rho(\omega|\vec{p}^2)
   \ ,
   \label{eq:tcalc:T1}
\end{gather}
where, as usual, the Fourier transforms have to be interpreted in the sense of
tempered distributions. The one-pion contribution to the spectral density can be
calculated explicitly by using the electromagnetic Ward identities
\begin{gather}
   \rho(\omega|\vec{p}^2)
   =
   - \frac{8 \vec{p}^2}{2E(\vec{p})} 2\pi \delta\left(\omega-E(\vec{p})\right)
   + \theta\left( \omega-\sqrt{(2m)^2+\vec{p}^2} \right) \rho(\omega|\vec{p}^2)
   \ .
\end{gather}
By plugging this decomposition into eqs.~\eqref{eq:tcalc:S1} and
\eqref{eq:tcalc:T1}, and calculating the integrals in $x_0$, one gets
\begin{gather}
   S(k_0,\vec{p}^2)
   =
   - \frac{8 \vec{p}^2}{2E(\vec{p})} 2\pi \delta(k_0)
   + S_\text{MP}(k_0,\vec{p}^2)
   \label{eq:tcalc:S2}
   \ , \\
   S_\text{MP}(k_0,\vec{p}^2)
   =
   \int_{\sqrt{(2m)^2+\vec{p}^2}}^\infty \frac{d\omega}{2\pi} \ 
   \frac{
   2[\omega - E(\vec{p})]\rho(\omega|\vec{p}^2)
   }{[\omega - E(\vec{p})]^2 + k_0^2}
   \label{eq:tcalc:SMP}
   \ ,
\end{gather}
and analogously
\begin{gather}
   \bar{T}(k_0^2,E(\vec{p}) k_0)
   =
   - i \frac{8 \vec{p}^2}{2E(\vec{p})}  2\pi \delta(k_0)
   + \bar{T}_\text{MP}(k_0,\vec{p}^2)
   \label{eq:tcalc:T2}
   \ , \\
   \bar{T}_\text{MP}(k_0,\vec{p}^2)
   =
   \lim_{\epsilon \to 0^+} \int_{\sqrt{(2m)^2+\vec{p}^2}}^\infty \frac{d\omega}{2\pi} \ 
   \frac{
   2[\omega - E(\vec{p})]\rho(\omega|\vec{p}^2)
   }{
   [ \omega - E(\vec{p}) ]^2 - k_0^2 - i \epsilon }
   \label{eq:tcalc:TMP}
   \ .
\end{gather}
Our goal is to relate $S_\text{MP}(k_0,\vec{p}^2)$ and
$\bar{T}_\text{MP}(k_0,\vec{p}^2)$ by analytic continuation. In general this is problematic since $\bar{T}_\text{MP}(k_0,\vec{p}^2)$ is not an analytic function (it is a tempered distribution). However in the kinematical region
$\vec{p}^2<3m^2$, if $\omega$ is in the integration domain of
eqs.~\eqref{eq:tcalc:SMP} and \eqref{eq:tcalc:TMP},
\begin{gather}
   \omega-E(\vec{p}) \ge \sqrt{(2m)^2+\vec{p}^2} - \sqrt{m^2+\vec{p}^2}
   >
   \frac{m}{2}
   \ ,
\end{gather}
which implies that the limit $\epsilon \to 0^+$ in
eq.~\eqref{eq:tcalc:TMP} is obtained simply by setting $\epsilon=0$ for real
values of $k_0$ with $|k_0|<\frac{m}{2}$. Since $\bar{T}$ and $\bar{T}_\text{MP}$ differ only for a delta in $k_0=0$, the following equality holds for $0 < k_0 < \frac{m}{2}$
\begin{gather}
   \label{eq:tcalc:chain}
   \int_{\sqrt{(2m)^2+\vec{p}^2}}^\infty \frac{d\omega}{2\pi} \ 
   \frac{
   2[\omega - E(\vec{p})]\rho(\omega|\vec{p}^2)
   }{
   [ \omega - E(\vec{p}) ]^2 - k_0^2 }
   =
   \bar{T}_\text{MP}(k_0,\vec{p}^2)
   = \\  \nonumber \hspace{1cm} =
   \bar{T}(k_0^2,E(\vec{p}) k_0)
   =
   \sum_{\rho=1}^3 \left[
   \frac{\mathcal{A}_{\rho\rho}(\bar{p}+\frac{\tilde{k}}{2},\tilde{k})}{
   - k_0^2  - 2 k_0 E(\vec{p})}
   + \frac{\mathcal{A}_{\rho\rho}(\bar{p}-\frac{\tilde{k}}{2},-\tilde{k})}{
   - k_0^2 + 2 k_0 E(\vec{p})}
   + \mathcal{M}_{\rho\rho}(\bar{p},\tilde{k})
   \right]_{ \tilde{k}=(ik_0,\vec{0}) }
   \ ,
\end{gather}
where eq.\eqref{eq:calc:compton1} has been used, again with $\epsilon = 0$,
since the denominators never vanishes for $\vec{p}^2<3m^2$ and $\frac{m}{4} <
k_0 < \frac{m}{2}$. 

The above equality extends by analyticity to a much larger
domain. In particular, note that the expression in terms of $\mathcal{A}$ and
$\mathcal{M}$ is analytic for $| \! \Re k_0| = | \! \Im \tilde{k}_0| < 2m -
E(\vec{p})$ as a consequence of theorem~\ref{theo:ana1}, including $k_0=0$ since
the pole cancels in the sum of the two terms with $\mathcal{A}$. Thus the equality also holds on the imaginary axis. By setting $k_0 \to -i k_0$ in eq.~\eqref{eq:tcalc:chain}, and using eq.~\eqref{eq:tcalc:SMP}, one obtains
\begin{gather}
   S_\text{MP}(k_0,\vec{p}^2)
   =
   \sum_{\rho=1}^3 \left[
   \frac{\mathcal{A}_{\rho\rho}(\bar{p}+\frac{k}{2},k)}{
   k_0^2 + 2 i k_0 E(\vec{p})}
   + \frac{\mathcal{A}_{\rho\rho}(\bar{p}-\frac{k}{2},-k)}{
   k_0^2 - 2 i k_0 E(\vec{p})}
   + \mathcal{M}_{\rho\rho}(\bar{p},k)
   \right]_{ k=(k_0,\vec{0}) }
   \ ,
\end{gather}
which is valid in particular for every real value of $k_0$. In combination with
eq.~\eqref{eq:tcalc:S2}, this yields the desired expression for
$S(k_0,\vec{p}^2)$ in terms of 1PI vertices.

To reach a more compact result we use again the fact that the function inside the squared brackets is analytic at $k_0=0$. Applying the distribution identity
\begin{gather}
   \frac{1}{k_0^2 \pm 2iE(\vec{p})k_0+\epsilon} = \frac{1}{k_0 \pm 2iE(\vec{p})} \frac{\text{PV}}{k_0} + \frac{\pi}{2E(\vec{p})} \delta(k_0)
   \ ,
\end{gather}
where PV stands for Cauchy principal value, and the electromagnetic Ward
identity for the 1PI vertices in the form of equation
\begin{gather}
   \sum_{\rho=1}^3 \mathcal{A}_{\rho\rho}(\bar{p},0) = - 8 m^2 \vec{p}^2 \ ,
\end{gather}
one easily obtains our final formula
\begin{gather}
   S(k_0,\vec{p}^2)
   =
   \lim_{\epsilon \to 0^+}
   \sum_{\rho=1}^3 \left[
   \frac{\mathcal{A}_{\rho\rho}(\bar{p}+\frac{k}{2},k)}{
   k_0^2 + 2 i k_0 E(\vec{p}) +\epsilon}
   + \frac{\mathcal{A}_{\rho\rho}(\bar{p}-\frac{k}{2},-k)}{
   k_0^2 - 2 i k_0 E(\vec{p}) +\epsilon}
   + \mathcal{M}_{\rho\rho}(\bar{p},k)
   \right]_{ k=(k_0,\vec{0}) }
   \label{eq:tcalc:S-skeleton}
   \ .
\end{gather}
Note that the propagator in eq.~\eqref{eq:tcalc:S-skeleton} is defined with a
non-conventional $\epsilon$ prescription, as a consequence of the more involved
Wick rotation.

We go back now to the evaluation of integral~\eqref{eq:calc:DeltaG-0}, for
$\wild=\text{t}$. By using the symmetry properties of the functions
$\mathcal{A}$ and $\mathcal{M}$, this can be equivalently written as
\begin{gather}
   \label{eq:tcalc:int-10}
   \Delta G_\text{t}(x_0 \vert T)
   =
   - \frac{1}{3} \sum_{\rho=1}^3
   \int \frac{d k_0}{2\pi} e^{ik_0x_0}
   \int \frac{d^4 p}{(2 \pi)^4} \, 
   \frac{
   e^{i T p_0 }
   }{ p^2 + m^2 }
   \times \\ \nonumber \hspace{4cm} \times
   \left[
   \frac{\mathcal{A}_{\rho\rho}(p+\frac{k}{2},k)}{(p+k)^2+m^2}
   + \frac{\mathcal{A}_{\rho\rho}(p-\frac{k}{2},-k)}{(p-k)^2+m^2}
   + \mathcal{M}_{\rho\rho}(p,k)
   \right]_{\vec{k}=\vec{0}}
   \ .
\end{gather}
The calculation of the leading exponential proceeds in a very similar way as for
the finite-$L$ contribution, with some important technical differences.

The terms in the integrand with the function $\mathcal{A}$ have four explicit
complex poles in $p_0$, which become pairwise degenerate when $k_0=0$. In order
to avoid these double poles, we deform the integrand by replacing
\begin{gather}
   \frac{1}{(p \pm k)^2+m^2} \to \frac{1}{(p \pm k)^2+m^2+\epsilon} \,,
\end{gather}
with $\epsilon = 0^+$. We stress that here we have shifted with a real rather
than imaginary value and that the key point is \textit{not} to deform the
propagator $(p^2+m^2)^{-1}$. This non-standard approach is required because, for
example, the prescription used in the previous section would shift the double
pole but would not resolve it. Now shifting the $p_0$ integral to $\mathbb{R} +
2 i m^-$, one picks up the poles
\begin{gather}
   p_0 = i \sqrt{m^2+\vec{p}^2}
   \label{eq:tcalc:pole1}
   \ , \\
   p_0 = \mp k_0 + i \sqrt{m^2+\epsilon+\vec{p}^2} \ .
   \label{eq:tcalc:pole2}
\end{gather}
The contribution from the shifted contour is seen to be of order $e^{-2mT}$, and can be neglected. Therefore only the above poles contribute to the leading order of the integral~\eqref{eq:tcalc:int-10}. A straightforward calculation, together with eq.~\eqref{eq:tcalc:S-skeleton}, yields
\begin{multline}
   \label{eq:tcalc:int-20}
   \Delta G_\text{t}(x_0 \vert T)
   = - \frac{1}{3}
   \int_{\vec{p}^2 < 3 m^2} \frac{d^3 p}{(2\pi)^3} \frac{ e^{-T E(\vec{p})} }{ 2 E(\vec{p}) }
   \int \frac{d k_0}{2\pi} e^{ik_0x_0}
   S(k_0,\vec{p}^2)
   \\ 
   - \frac{1}{3} \int_{\vec{p}^2 < 3 m^2} \frac{d^3 p}{(2\pi)^3} \frac{ e^{-T E(\vec{p})} }{ 2 E(\vec{p}) }
   \int \frac{dk_0}{2\pi}
   \frac{\big \{ e^{i (T-x_0) k_0} + e^{i (T+x_0) k_0} \big \} \mathcal{A}(\bar{p}+\frac{k}{2},k)}{k_0^2+2ik_0E(\vec{p})-\epsilon}
   \bigg \vert_{\vec{k}=0}
   + O(e^{-2mT})
   \ .
\end{multline}
The integral in the second line is found to be also of order $e^{-2mT}$ (at
fixed $x_0$), and can thus be dropped. To see this, note first that the function
$\mathcal{A}(\bar{p}+\frac{k}{2},k)$ is analytic in $k_0$ in the strip $-m <
\Im k_0 < 2m - E(\vec{p})$. In particular, note that the upper bound is
positive, i.e.~$2m - E(\vec{p})>0$, because of the restriction $\vec{p}^2 < 3
m^2$. Observe further that the explicit propagator has poles only for $\Im k_0 <
0$. The $k_0$ integral can be shifted to $\mathbb{R} + i [ 2m - E(\vec{p}) ]^-$
and, after the shift, the exponentials in the integrand become
\begin{align}
\begin{split}
 \hspace{-20pt}  e^{-T E(\vec{p})} e^{i (T \mp x_0) k_0}
\ \ \     \to \  \ \ 
   e^{-T E(\vec{p})} e^{-(T \mp x_0) [ 2m - E(\vec{p}) ]} e^{i (T-x_0) k_0}
 &  =
   e^{-2m T } e^{\pm  [ 2m - E(\vec{p}) ] x_0} e^{i (T-x_0) k_0} \\
  &  =
   O(e^{-2mT})
   \ ,
   \end{split}
\end{align}
where we have used $E(\vec{p}) \ge m$, and the fact that $x_0$ is kept constant.

Finally we note that the restriction over the $\vec{p}$ integration domain of
the first integral in eq.~\eqref{eq:tcalc:int-20} can be dropped up to an error
of the same order that we are neglecting. We deduce eq.~\eqref{eq:stat:DeltaGt}  
\begin{gather}
   \label{eq:tcalc:WP-leading}
   \Delta G_\text{t}(x_0|T)
   = - \frac{1}{3}
   \int \frac{d^3 p}{(2\pi)^3} \frac{ e^{-T E(\vec{p})} }{ 2 E(\vec{p}) }
   \int \frac{d k_0}{2\pi} e^{ik_0x_0}
   S(k_0,\vec{p}^2)
   + O(e^{-2mT}) \nonumber
   \ .
\end{gather}
 As with $\Delta G_{\text{s}}(x_0,L)$, we have abused notation in section \ref{sec:statement} by absorbing the $O(e^{-2mT})$ term here into the relation between $\Delta G(x_0 \vert T, \infty)$ and $\Delta G_\text{t}(x_0|T)$ such that eq.~\eqref{eq:stat:DeltaGt} does not contain the subleading correction explicitly.

\acknowledgments
We warmly thank Mattia Bruno, Maarten Golterman, Christoph Lehner, and Harvey Meyer for useful discussions. We also acknowledge the CERN-TH Institute ``Advances in Lattice Gauge Theory'', which provided the opportunity to make significant progress on this manuscript. We thank Davide Giusti and Silvano Simula for providing comments on an earlier version of this work, including numerical estimates based in an alternative method for comparison. Finally we are grateful to 
Laurent Lellouch, Finn Stokes and Kalman Szabo, for an incredibly careful reading of this work in which they identified typos and a missing factor of $-(1/3)$ in the finite-$T$ correction that we previously reported.

\appendix

\section{Proofs of theorems concerning gauge fields on graphs}
\label{app:epsilon}

In this appendix we provide the proofs of all theorems referenced in
section~\ref{sec:diag}. Ultimately, the goal of these theorems is to study the
properties of the functions $\hat{\epsilon}_0(n_0)$ and
$\hat{\epsilon}_\text{s}(\vec{n})$ defined in eqs.~\eqref{eq:hatepsilon-0}
and~\eqref{eq:hatepsilon-s}, and to identify the gauge fields which give the
leading finite-volume corrections of $\amu$. We stress again that a large part of the core ideas in the following analysis are not original and have been borrowed from \cite{\MartinStable}.

A number of preparatory results need to be established and are organized in
various subsections. Before being able to study
$\hat{\epsilon}_\text{s}(\vec{n})$, we need to introduce the auxiliary
quantity
\begin{gather}
   \hat{\epsilon}_k(n_k) = 
   \min_{x_k \text{ with } x_k(a)=0}
   \sum_{\ell \in \mathcal{L}} | \delta x_k(\ell) + n_k(\ell) |
   \ , \label{eq:hatepsilon-k}
\end{gather}
in which one spatial direction is considered at a time. The solution of the
minimization problem appearing in the definition of $\hat{\epsilon}_\mu(n_\mu)$
is constructed in section~\ref{app:subsec:epshat}. After introducing the
concept of axial gauge for the gauge fields $n_\mu$ in
subsection~\ref{app:subsec:axial}, we characterize completely the set of
possible values for the functions $\hat{\epsilon}_\mu(n_\mu)$ in
subsection~\ref{app:susec:onehalf}. The gauge fields corresponding to the lower
possible values of $\hat{\epsilon}_k(n_k)$, $\hat{\epsilon}_\text{s}(\vec{n})$
and $\hat{\epsilon}_0(n_0)$ are characterized in the
subsection~\ref{app:subsec:sqrt} and
\ref{app:subsec:one-0} respectively.

In the following subsections, we also need some more graph-theory concepts
which we introduce here. We assume throughout that $\mathcal{G}$ is a connected
graph.

\bigskip

\begin{figure}
   \centering
  \includegraphics[scale=0.7]{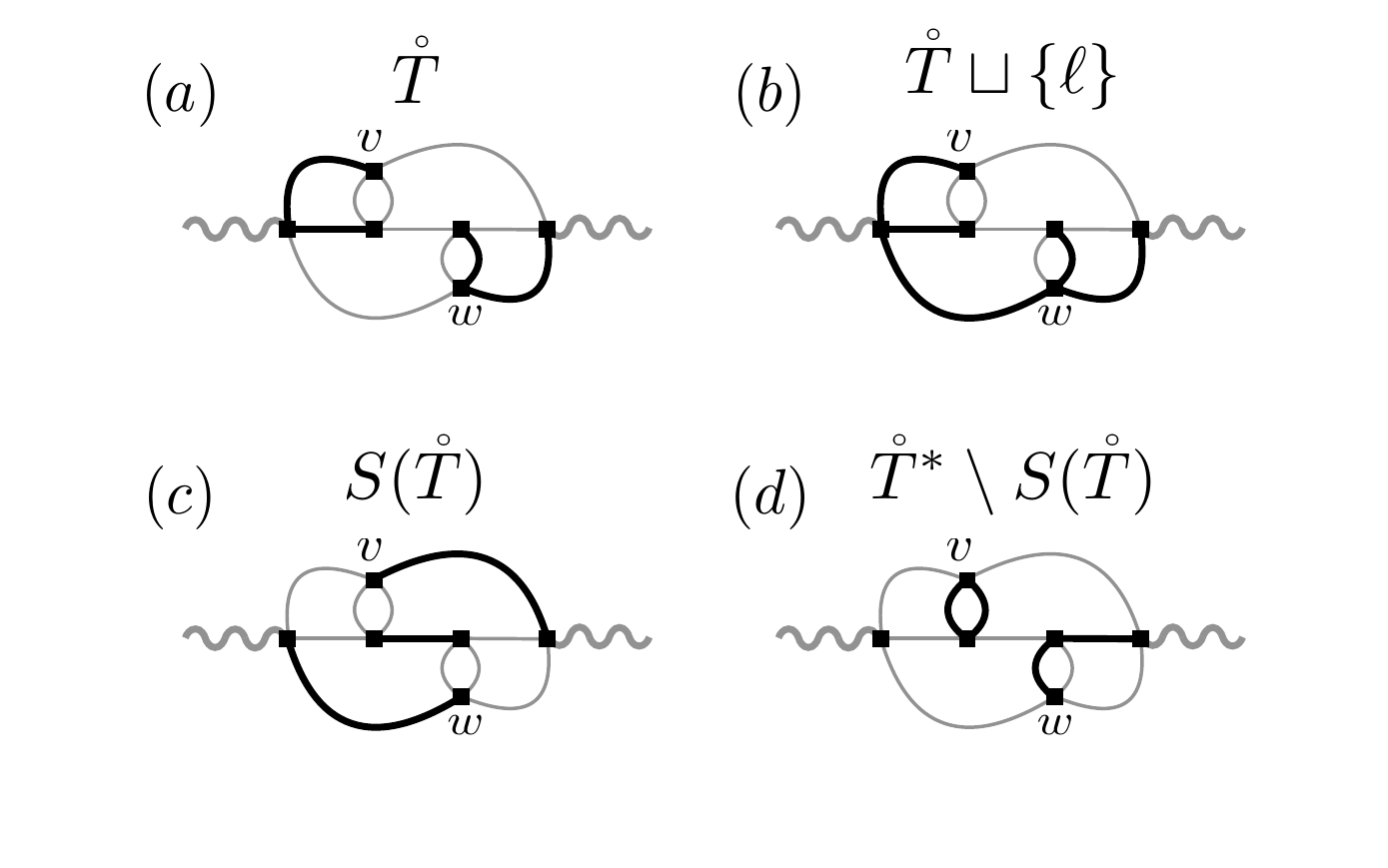} 
   \vspace{-15pt}
   \caption{Examples of sets entering the definition of a 2-tree. Figure (a) shows an example 2-tree $\mathring{T}$, and (b) illustrates the defining property, that at least one line $\ell$ exists in the complement, such that $\mathring{T}\sqcup \{\ell\}$ is a tree. Note that (b) is indeed a maximal subset with no loops, and that a unique path exists from $v$ to $w$. Figure (c) illustrates the cut-set $S(\mathring{T})$ assigned to $\mathring{T}$, and (d) illustrates the complement. As described in the text, each $ \ell \in \mathring{T}^* \setminus S(\mathring{T})$ corresponds to a unique loop in $\mathring{T} \sqcup \{\ell\}$.
   }
   \label{fig:twoTree}
\end{figure}

\textit{Trees.} A tree $T$ is a maximal subset of $\mathcal{L}$ containing no
loops. We denote by $T^*$ the complement of $T$ in $\mathcal{L}$. It can be
shown that, given any pair of vertices $v$ and $w$, there is exactly one path in
$T$ from $v$ to $w$ (\cite{nakanishi1971graph}, theorem 2-5). It follows that
for every line $\ell \in T^*$ there is a unique loop in $T \sqcup \{\ell\}$
(\cite{nakanishi1971graph}, theorem 2-22). The set of such loops is denoted by
$\mathbf{C}(T)$.

\textit{2-trees.} A 2-tree $\mathring{T}$ is a subset of lines with the property
that a line $\ell \in \mathring{T}^*$ exists such that $\mathring{T} \sqcup
\{\ell\}$ is a tree. [See figure~\ref{fig:twoTree}.] Note that there are at least two distinct vertices $v$
and $w$ such that there is no path from $v$ to $w$ in $\mathring{T}$. In this
case we say that $\mathring{T}$ disconnects $v$ and $w$. For each 2-tree
$\mathring{T}$, there is a unique cut-set $S(\mathring{T})$ with the property
that $S(\mathring{T}) \cap \mathring{T}$ is empty (\cite{nakanishi1971graph}, theorem
2-9). It is easy to show that $S(\mathring{T})$ is the set of lines $\ell \in
\mathring{T}^*$ with the property that $\mathring{T} \sqcup \{\ell\}$ is a tree.
If $\mathring{T}$ disconnects $v$ and $w$, for every line $\ell \in
S(\mathring{T})$ there is a unique path from $v$ to $w$ in $\mathring{T} \sqcup
\{\ell\}$. The set of such paths is denoted by $\mathbf{P}_{vw}(\mathring{T})$.
For every line $\ell \in \mathring{T}^* \setminus S(\mathring{T})$ there is a
unique loop in $\mathring{T} \sqcup \{\ell\}$. The set of such loops is denoted
by $\mathbf{C}(\mathring{T})$.

\textit{Parallel transports.} Given the oriented path or loop $P$, the parallel
transport along $P$ is defined by
\begin{gather}
   n(P) = \sum_{\ell \in P} [P:\ell] n(\ell) \ .
\end{gather}
If $C$ is a loop, then $n(C)$ is referred to as Wilson loop along $C$. If $P$ is
a path from $a$ to $b$, then $n_0(P)$ is referred to as Wilson line along $P$.
Note that Wilson loops and Wilson lines are gauge invariant.

\subsection{Solving the minimization problem for \texorpdfstring{$\hat{\epsilon}_\mu(n_\mu)$}{epsilon\_mu(n\_mu)}}
\label{app:subsec:epshat}

The functions $\hat{\epsilon}_\mu(n_\mu)$ defined by
eqs.~\eqref{eq:hatepsilon-0} and \eqref{eq:hatepsilon-k} determine the
asymptotic behaviour of the Feynamn integrals $\mathcal{I}_{T,L}(\mathcal{D},n)$ at
large $L$. Theorem~\ref{theo:epshat} at the end of this subsection provides a
detailed characterization of the solution of the minimization problem that
defines $\hat{\epsilon}_\mu(n_\mu)$. In order to prove this theorem, we will
need to discuss two lemmas.

\bigskip

\begin{lemma} \label{lemma:minumum-k}
   Let us fix $k \in \{1,2,3\}$. An assignment $\mathcal{V} \ni v \mapsto
   \bar{x}_k(v) \in \mathbb{R}$ of numbers to each vertex exists, with the
   following properties:
   \begin{enumerate}[topsep=0pt,itemsep=1ex]
      \item $\bar{x}_k$ satisfies
      \begin{gather}
         \bar{x}_k(a)=0
         \ .
      \end{gather}
      
      \item $\bar{x}_k$ realizes the minimum in eq.~\eqref{eq:hatepsilon-k}, i.e.
      \begin{gather}
         \hat{\epsilon}_k(n_k) = 
         \sum_{\ell \in \mathcal{L}} \left| \bar{x}_k[f(\ell)] - \bar{x}_k[i(\ell)] + n_k(\ell) \right|
         \ .
      \end{gather}
      
      \item A tree $T_k$ exists such that
      \begin{gather}
         \bar{x}_k[f(\ell)] - \bar{x}_k[i(\ell)] + n_k(\ell) = 0 \,,
      \end{gather}
      for every $\ell \in T_k$.
   \end{enumerate}
\end{lemma}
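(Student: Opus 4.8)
The plan is to treat the defining minimization as a finite-dimensional, piecewise-linear convex program and to read off the tree from the support of an \emph{extreme point} of the optimal set. Writing $y(\ell)=\delta x_k(\ell)+n_k(\ell)$, the objective $\Phi(x_k)=\sum_{\ell\in\mathcal L}|y(\ell)|$ is convex and piecewise linear in the variables $\{x_k(v)\}_{v\in\mathcal V}$, and $x_k(a)=0$ is a single affine constraint. First I would show the minimum is attained. Fixing $x_k(a)=0$ and using that $\mathcal G$ is connected, any bound $\Phi(x_k)\le c$ forces $|\delta x_k(\ell)|\le c+\max_{\ell'}|n_k(\ell')|$ on every line; propagating this along a path from $a$ to an arbitrary vertex $v$ bounds $|x_k(v)|$. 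Hence the sublevel sets are compact, the continuous $\Phi$ attains its minimum, and any minimizer satisfies items 1 and 2 by construction.

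The substance of the lemma is item 3. Since $\Phi$ is convex and piecewise linear with compact sublevel sets, the set of minimizers $\{x_k:\ x_k(a)=0,\ \Phi(x_k)=\hat{\epsilon}_k(n_k)\}$ is a bounded polytope, and I would take $\bar x_k$ to be one of its extreme points. The claim is that for such a vertex the zero set $Z=\{\ell:\ y(\ell)=0\}$ contains a spanning tree $T_k$, equivalently that the subgraph $(\mathcal V,Z)$ is connected. Suppose it is not. Then there is a partition $\mathcal V=A\sqcup B$ with $a\in A$ and $B\neq\emptyset$ such that no line of $Z$ crosses the cut, i.e.\ every line joining $A$ and $B$ has $y(\ell)\neq 0$.

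I would then test extremality against the direction $d$ given by $d(v)=1$ for $v\in B$ and $d(v)=0$ for $v\in A$; since $a\in A$ this respects $d(a)=0$. Replacing $\bar x_k\to\bar x_k+td$ changes $y(\ell)$ only on the cut lines, shifting it by $\pm t$, and because each such $y(\ell)\neq0$, for small $|t|$ no sign flips, so $\Phi(\bar x_k+td)=\hat{\epsilon}_k(n_k)+t\,s$ is affine with slope $s=\sum_{i(\ell)\in A,\ f(\ell)\in B}\operatorname{sgn} y(\ell)-\sum_{i(\ell)\in B,\ f(\ell)\in A}\operatorname{sgn} y(\ell)$. Optimality of $\bar x_k$ forces $\Phi\ge\hat{\epsilon}_k(n_k)$ for both signs of $t$, hence $s=0$; but then $\bar x_k\pm td$ are both minimizers for small $t>0$, contradicting that $\bar x_k$ is an extreme point. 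Therefore $(\mathcal V,Z)$ is connected, and any spanning tree $T_k\subseteq Z$ realizes item 3, since $y(\ell)=0$ on $T_k$ is exactly the asserted identity.

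The one genuine subtlety, and the step I expect to be the main obstacle, is the passage from an arbitrary minimizer to an extreme point. At a generic minimizer the cut slope $s$ need not vanish, so the naive attempt to enlarge $Z$ by shifting a whole component can strictly increase $\Phi$; it is precisely extremality (no nondegenerate segment of minimizers through $\bar x_k$) that upgrades the one-sided optimality inequalities into the equality $s=0$ needed to drive a line into $Z$. This in turn relies on compactness of the minimizing polytope, which is why the coercivity estimate of the first step must be secured before invoking the existence of extreme points. Equivalently one could phrase the whole argument as a linear program governed by the totally unimodular incidence matrix of $\mathcal G$, whose basic optimal solutions correspond to spanning trees, but the self-contained extreme-point argument above avoids appealing to that machinery.
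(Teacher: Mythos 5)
Your proof is correct. It rests on the same geometric mechanism as the paper's proof --- perturbing the assignment by the characteristic function $\chi_B$ of a component of vertices not reachable from $a$ through zero lines, and observing that only cut lines are affected --- but the surrounding architecture is genuinely different. The paper selects, among all minimizers, one whose zero set $Z(\bar x_k)$ has \emph{maximal cardinality}; assuming $Z$ fails to connect some vertex to $a$, it studies the one-dimensional piecewise-linear function $g(\alpha)=h(\bar x_k-\alpha\chi_{\mathcal V_2})$, which is coercive because the cut is nonempty, picks a global minimum of $g$ at a kink (where some cut line's argument vanishes), and thereby produces a new minimizer whose zero set strictly contains $Z\sqcup\{\tilde\ell\}$, contradicting maximality. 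You instead select an \emph{extreme point} of the compact convex optimal set and show that disconnectedness of $(\mathcal V,Z)$ makes the objective affine in $t$ along $d=\chi_B$ near $\bar x_k$; two-sided optimality forces the slope $s$ to vanish, so a nondegenerate segment of minimizers passes through $\bar x_k$, contradicting extremality. Both contradictions exploit the identical local computation (no sign flips on cut lines for small $t$), so the approaches are interchangeable here, and both adapt to the companion Lemma for $\hat\epsilon_0$ (shift a component containing neither $a$ nor $b$, so the constraint $x_0(b)\in[0,1/2]$ is untouched). What your version buys is an explicit treatment of attainment and compactness via the coercivity estimate --- a step the paper compresses into ``it is easy to show that such $\bar x_k$ exists,'' even though its own kink argument silently relies on the same coercivity --- together with a clean reduction to standard LP geometry; what the paper's maximal-$|Z|$ device buys is a constructive exchange step (each contradiction produces a concrete better minimizer) that avoids invoking existence of extreme points altogether. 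Your flagged subtlety is also accurately diagnosed: at a generic minimizer the cut slope $s$ need not vanish, and it is precisely the selection principle (extremality for you, maximality of $|Z|$ for the paper) that upgrades one-sided optimality into the statement needed to drive a cut line into the zero set.
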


\begin{proof}
   Consider an assignment $\bar{x}_k$ satisfying \textit{1} and \textit{2},
   such that the number of elements in the set
   \begin{gather}
      Z(\bar{x}_k) = \{ \ell \in \mathcal{L} \text{ s.t. }\bar{x}_k[f(\ell)] - \bar{x}_k[i(\ell)] + n_k(\ell)=0 \} \,,
   \end{gather}
   is the maximum possible. It is easy to show that such $\bar{x}_k$ exists, and
   we want to prove that it satisfies \textit{3}. To do so, it is enough to
   prove that $Z(\bar{x}_k)$ contains a tree, i.e.~that for any vertex $v \neq
   a$ a path in $Z(\bar{x}_k)$ exists from $a$ to $v$.
   
   Let us work by contradiction, and assume that a vertex $\bar{v} \neq a$
   exists such that no path in $Z(\bar{x}_k)$ exists connecting $a$ to
   $\bar{v}$. We want to show that it is possible to construct a new
   $\bar{x}'_k$ satisfying \textit{1} and \textit{2}, such that the number of
   elements in $Z(\bar{x}'_k)$ is strictly larger than the number of elements in
   $Z(\bar{x}_k)$, which is in contradiction with the maximality of
   $Z(\bar{x}_k)$.
   
   Consider the set $\mathcal{V}_1$ containing $a$ and all vertices connected to
   $a$ by paths in $Z(\bar{x}_k)$, and let $\mathcal{V}_2$ be its complement in
   $\mathcal{V}$. Obviously $\bar{v} \in \mathcal{V}_2$. Let $\mathcal{L}_1$
   (resp. $\mathcal{L}_2$) be the set of lines with both endpoints in
   $\mathcal{V}_1$ (resp. $\mathcal{V}_2$), and let $\tilde{\mathcal{L}}$ be the
   set of remaining lines. Since the considered graph is connected,
   $\tilde{\mathcal{L}}$ is not empty. With no loss of generality, we assume
   that, for every $\ell \in \tilde{\mathcal{L}}$, then $i(\ell) \in
   \mathcal{V}_1$ and $f(\ell) \in \mathcal{V}_2$.

   We define the function
   \begin{gather}
      h(x_k)
      =
      \sum_{\ell \in \mathcal{L}}
      | x_k[f(\ell)] - x_k[i(\ell)] + n_k(\ell) |
      \ ,
   \end{gather}
   in terms of which
   \begin{gather}
      \hat{\epsilon}_k(n_k) = \min_{x_k \text{ with } x_k(a)=0 } h(x_k) = h(\bar{x}_k)
      \ .
   \end{gather}

   Given a number $\alpha$ (which does not depend on the vertex), consider the
   following function
   \begin{gather}
      g(\alpha) = h(\bar{x}_k - \alpha \chi_{\mathcal{V}_2})
      = \\ \nonumber \qquad =
      \sum_{j=1,2} \sum_{\ell \in \mathcal{L}_j}
      | \bar{x}_k[f(\ell)] - \bar{x}_k[i(\ell)] + n_k(\ell) |
      + \sum_{\ell \in \tilde{\mathcal{L}}}
      | \!-\alpha + \bar{x}_k[f(\ell)] - \bar{x}_k[i(\ell)] + n_k(\ell) |
      \ ,
   \end{gather}
   where the characteristic function $\chi_{\mathcal{V}_2}(v)$ is equal to 1 if
   $v \in \mathcal{V}_2$ and 0 otherwise. $g$ is a continuous piecewise linear
   function, bounded from below. A value $\bar{\alpha}$ exists such that
   $\bar{\alpha}$ is an absolute minimum for $g$, and $g$ is not differentiable
   in $\bar{\alpha}$. The latter condition implies that a line $\tilde{\ell} \in
   \tilde{\mathcal{L}}$ exists such that
   \begin{gather}
      \bar{\alpha} = \bar{x}_k[f(\tilde{\ell})] - \bar{x}_k[i(\tilde{\ell})] + n_k(\tilde{\ell}) \ .
   \end{gather}
   We define
   \begin{gather}
      \bar{x}'_k = \bar{x}_k - \bar{\alpha} \chi_{\mathcal{V}_2}
      \ .
   \end{gather}
   Obviously $\bar{x}'$ satisfies \textit{1}. Then we observe that
   \begin{gather}
 h(\bar{x}'_k) \geq  \min_{x_k \text{ with } x_k(a)=0 } h(x_k)=  \hat{\epsilon}_k(n_k) = h(\bar{x}_k)  = g(0) \geq \min_\alpha g(\alpha) = g(\bar{\alpha})= h(\bar{x}'_k) \,,
\end{gather}
implying
\begin{gather}
 \hat{\epsilon}_k(n_k)  = h(\bar{x}'_k) \,,
\end{gather}
   i.e.~$\bar{x}'_k$ satisfies \textit{2}, and also
   \begin{gather}
      \bar{x}'_k[f(\ell)] - \bar{x}'_k[i(\ell)] + \bar{x}'_k(\ell)
      =
      \begin{cases}
         \bar{x}_k[f(\ell)] - \bar{x}_k[i(\ell)] + \bar{x}_k(\ell)
         \qquad & \text{if } \ell \in \mathcal{L}_1 \sqcup \mathcal{L}_2 \\
         0 & \text{if } \ell = \tilde{\ell} \in \tilde{\mathcal{L}}
      \end{cases}
      \ ,
   \end{gather}
   which implies $Z(\bar{x}_k) \sqcup \{ \tilde{\ell} \} \subseteq
   Z(\bar{x}'_k)$. This is contradiction with the maximality of $Z(\bar{x}_k)$.
\end{proof}

\bigskip

\begin{lemma} \label{lemma:minumum-0}
   An assignment $\mathcal{V} \ni v \mapsto \bar{x}_0(v) \in \mathbb{R}$ of numbers to each vertex exists, with the following properties:
   \begin{enumerate}[topsep=0pt,itemsep=1ex]
      \item $\bar{x}_0$ satisfies
      \begin{gather}
         \bar{x}_0(a)=0
         \text{ and }
         0 \le \bar{x}_0(b) \le \tfrac{1}{2}
         \ .
      \end{gather}
      
      \item $\bar{x}_0$ realizes the minimum in eq.~\eqref{eq:hatepsilon-0}, i.e.
      \begin{gather}
         \hat{\epsilon}_0(n_0) = 
         \sum_{\ell \in \mathcal{L}} \left| \bar{x}_0[f(\ell)] - \bar{x}_0[i(\ell)] + n_0(\ell) \right|
         \ .
      \end{gather}
      
      \item A 2-tree $\mathring{T}$ which disconnects $a$ and $b$ exists such that
      \begin{gather}
         \bar{x}_0[f(\ell)] - \bar{x}_0[i(\ell)] + n_0(\ell) = 0
      \end{gather}
      for every $\ell \in \mathring{T}$.
   \end{enumerate}
\end{lemma}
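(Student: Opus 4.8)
The plan is to mirror the argument used for Lemma~\ref{lemma:minumum-k}, making the modifications forced by the extra boundary condition on $b$. As there, write $h(x_0) = \sum_{\ell \in \mathcal{L}} |x_0[f(\ell)] - x_0[i(\ell)] + n_0(\ell)|$, a convex, continuous, piecewise-linear function, and work over the constraint set $x_0(a) = 0$, $x_0(b) \in [0,\tfrac12]$. Among all minimizers of $h$ on this set (whose existence follows exactly as in Lemma~\ref{lemma:minumum-k}, using $h \ge 0$), I would choose one, $\bar{x}_0$, for which the zero set $Z(\bar{x}_0) = \{\ell : \bar{x}_0[f(\ell)] - \bar{x}_0[i(\ell)] + n_0(\ell) = 0\}$ has the maximal number of elements. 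Such a $\bar{x}_0$ automatically satisfies properties~1 and~2, so it remains to produce the 2-tree of property~3. The crux is to show that the graph $(\mathcal{V}, Z(\bar{x}_0))$ has at most two connected components, each containing $a$ or $b$; from this a 2-tree disconnecting $a$ and $b$ can be read off.

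First I would establish the component count by a shifting argument. Let $\mathcal{V}_a$ and $\mathcal{V}_b$ be the $Z(\bar{x}_0)$-connected components of $a$ and $b$ (possibly equal), and set $\mathcal{W} = \mathcal{V} \setminus (\mathcal{V}_a \cup \mathcal{V}_b)$. Suppose $\mathcal{W} \neq \emptyset$. Since no line of $Z(\bar{x}_0)$ joins $\mathcal{W}$ to $\mathcal{V}_a \cup \mathcal{V}_b$ (such a line would merge components), every line in the cut $\tilde{\mathcal{L}}$ between $\mathcal{W}$ and its complement carries nonzero current. Define $g(\alpha) = h(\bar{x}_0 - \alpha\,\chi_{\mathcal{W}})$; because $a \notin \mathcal{W}$ and $b \notin \mathcal{W}$, every shifted configuration still lies in the constraint set, so $g(0) = \min_\alpha g(\alpha) = \hat{\epsilon}_0(n_0)$. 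The function $g$ is convex, piecewise linear, and coercive (the cut terms grow as $|\alpha| \to \infty$), so it attains its minimum at a point $\bar{\alpha}$ where it is not differentiable, and the kink forces some $\tilde{\ell} \in \tilde{\mathcal{L}}$ to have vanishing current at $\bar{\alpha}$. Replacing $\bar{x}_0$ by $\bar{x}_0 - \bar{\alpha}\,\chi_{\mathcal{W}}$ leaves the current on every line interior to $\mathcal{W}$ or to its complement unchanged, hence preserves all of $Z(\bar{x}_0)$ while adjoining $\tilde{\ell}$, contradicting maximality. Therefore $\mathcal{W} = \emptyset$ and every component of $(\mathcal{V}, Z(\bar{x}_0))$ contains $a$ or $b$, so there are at most two.

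It then remains to extract the 2-tree, splitting into the two surviving cases. If $\mathcal{V}_a = \mathcal{V}_b$, the subgraph $(\mathcal{V}, Z(\bar{x}_0))$ is connected and spanning, so it contains a spanning tree $T \subseteq Z(\bar{x}_0)$; deleting the single line $\ell_0$ lying on the unique $T$-path from $a$ to $b$ produces $\mathring{T} = T \setminus \{\ell_0\}$, which is a 2-tree (since $\mathring{T} \sqcup \{\ell_0\} = T$) disconnecting $a$ and $b$. If $\mathcal{V}_a \neq \mathcal{V}_b$ there are exactly two components, and any spanning forest $\mathring{T} \subseteq Z(\bar{x}_0)$ is a 2-tree: connectedness of $\mathcal{G}$ guarantees a line of $\mathcal{L}$ bridging the two components, whose adjunction to $\mathring{T}$ yields a spanning tree, and $a$, $b$ lie in its two distinct parts. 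In either case $\mathring{T} \subseteq Z(\bar{x}_0)$, so the current vanishes on all its lines, establishing property~3.

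The main obstacle, relative to the purely spatial Lemma~\ref{lemma:minumum-k}, is precisely the interval constraint $x_0(b) \in [0,\tfrac12]$: it forbids shifting the component of $b$, which is why the conclusion weakens from ``$Z$ contains a spanning tree'' to ``$Z$ contains a 2-tree disconnecting $a$ and $b$''. Care is needed to shift only the components touching neither $a$ nor $b$ (so that both constraints and all existing zero lines are preserved), and then to confirm that the resulting two-component — or, in the degenerate case, one-component — zero set genuinely carries a 2-tree of the required separating type.
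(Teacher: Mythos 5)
Your proof is correct and follows essentially the same strategy as the paper's: pick a minimizer with a maximal zero set $Z(\bar{x}_0)$, then shift a subset of vertices containing neither $a$ nor $b$ (the paper shifts the $Z$-component of a single bad vertex, you shift all bad vertices at once) and use the kink of the resulting convex piecewise-linear function to enlarge $Z(\bar{x}_0)$, contradicting maximality. The only addition is that you spell out the final extraction of the 2-tree via the case split $\mathcal{V}_a = \mathcal{V}_b$ versus $\mathcal{V}_a \neq \mathcal{V}_b$, a step the paper asserts implicitly ("it is enough to prove\dots") without carrying it out.
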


\begin{proof}
   Consider an assignment $\bar{x}_0$ satisfying \textit{1} and \textit{2},
   such that the number of elements in the set
   \begin{gather}
      Z(\bar{x}_0) = \{ \ell \in \mathcal{L} \text{ s.t. }\bar{x}_0[f(\ell)] - \bar{x}_0[i(\ell)] + n_0(\ell)=0 \} \,,
   \end{gather}
   is the maximum possible. It is easy to show that such an $\bar{x}_0$ exists, and
   we want to prove that it satisfies \textit{3}. To do so, it is enough to prove
   that for any vertex $v \neq a,b$ either a path from $a$ to $v$ or a path from
   $b$ to $v$ exists in $Z(\bar{x}_0)$.
   
   Let us work by contradiction, and assume a vertex $\bar{v}$
   such that no path in $Z(\bar{x}_0)$ exists connecting $a$ to $\bar{v}$ and no
   path in $Z(\bar{x}_0)$ exists connecting $b$ to $\bar{v}$. We want to show
   that it is possible to construct a new $\bar{x}'_0$ satisfying \textit{1} and
   \textit{2}, such that the number of elements in $Z(\bar{x}'_0)$ is strictly
   larger than the number of elements in $Z(\bar{x}_0)$, 
   contradicting  the maximality of $Z(\bar{x}_0)$.
   
   Consider the set $\mathcal{V}_2$ containing $\bar{v}$ and all vertices
   connected to $\bar{v}$ by paths in $Z(\bar{x}_0)$, and let $\mathcal{V}_1$ be
   its complement in $\mathcal{V}$. Obviously $a,b \in \mathcal{V}_1$. From this
   point, the construction of $\bar{x}'_0$ follows the same steps as the
   construction of $\bar{x}'_k$ in the proof of lemma \ref{lemma:minumum-k}, and
   we will not repeat it here. As in the other lemma, $\bar{x}_0$ satisfies
   \textit{1} and \textit{2} and $|Z(\bar{x}_0)| < |Z(\bar{x}'_0)|$, %
  giving the desired contradiction.
\end{proof}

\bigskip

\begin{theorem} \label{theo:epshat}
   For every $k=1,2,3$, a tree $T_k$ exists such that
   \begin{gather}
      \hat{\epsilon}_k(n_k) = \sum_{C \in \mathbf{C}(T_k)} |n_k(C)| \ .
      \label{eq:theo:epshat:k}
   \end{gather}
   A 2-tree $\mathring{T}_0$ which disconnects $a$ and $b$ exists such that
   \begin{gather}
      \hat{\epsilon}_0(n_0) = \sum_{C \in \mathbf{C}(\mathring{T}_0)} |n_0(C)|
      + \sum_{P \in \mathbf{P}_{ab}(\mathring{T}_0)} |n_0(P)|
      +  \min \bigg \{0 , \, \frac{ p_0^+(n|\mathring{T}_0) + p_0^0(n|\mathring{T}_0) - p_0^-(n|\mathring{T}_0) }{2} \bigg\}
      \ ,
      \label{eq:theo:epshat:0}
   \end{gather}
   where $p_0^+(n|\mathring{T}_0)$, $p_0^0(n|\mathring{T}_0)$,
   $p_0^-(n|\mathring{T}_0)$ are the number of paths $P$ in
   $\mathbf{P}_{ab}(\mathring{T}_0)$ such that $n_0(P) \ge 1$, $n_0(P) = 0$,
   $n_0(P) \le -1$ respectively.

   The following loose but useful bounds hold
   \begin{gather}
      \hat{\epsilon}_k(n_k) \ge c_k(n|T_k) \ ,
      \label{eq:theo:epshat:bound:k}
   \end{gather}
   where $c_k(n|T_k)$ is the number of loops $C$ in $\mathbf{C}(T_k)$ with $n_k(C) \neq 0$, and
   \begin{gather}
      \hat{\epsilon}_0(n_0) \ge c_0(n|\mathring{T}_0)
      + p_0^+(n|\mathring{T}_0) + \frac{1}{2} p_0^-(n|\mathring{T}_0)
      +   \min \bigg \{ \frac{p_0^-(n|\mathring{T}_0)}{2} , \frac{ p_0^+(n|\mathring{T}_0) + p_0^0(n|\mathring{T}_0) }{2} \bigg \}
      \ ,
      \label{eq:theo:epshat:bound:0}
   \end{gather}
   where $c_0(n|\mathring{T}_0)$ is the number of loops $C$ in $\mathbf{C}(\mathring{T}_0)$ with $n_0(C) \neq 0$.
\end{theorem}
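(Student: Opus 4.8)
The plan is to feed the two minimizers produced by Lemmas~\ref{lemma:minumum-k} and~\ref{lemma:minumum-0} directly into the objective functions, and to read off the fundamental-loop and $a$-to-$b$ path structure of the resulting residuals; the spatial case then closes immediately, while the temporal case reduces to an elementary one-parameter minimization that produces the correction term.

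For the spatial directions I would first invoke Lemma~\ref{lemma:minumum-k} to fix a tree $T_k$ and a minimizer $\bar{x}_k$ with $\bar{x}_k(a)=0$ and $\delta\bar{x}_k(\ell)+n_k(\ell)=0$ for all $\ell\in T_k$, so the tree lines contribute nothing. For each $\ell\in T_k^*$ there is a unique fundamental loop $C_\ell\in\mathbf{C}(T_k)$, oriented so that $[C_\ell:\ell]=1$. Along the tree part of $C_\ell$ the residual vanishes, so the sum of $[C_\ell:\ell']\,n_k(\ell')$ telescopes (the intermediate $\bar{x}_k$ values cancel), giving $\delta\bar{x}_k(\ell)+n_k(\ell)=n_k(C_\ell)$. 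Substituting into the definition yields eq.~\eqref{eq:theo:epshat:k}, and the bound~\eqref{eq:theo:epshat:bound:k} is immediate since each nonzero integer $n_k(C)$ contributes at least $1$ in absolute value.

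The temporal case proceeds identically but with the richer combinatorics of the 2-tree. Using Lemma~\ref{lemma:minumum-0} I would fix $\mathring{T}_0$ disconnecting $a$ and $b$, together with a minimizer $\bar{x}_0$ satisfying $\bar{x}_0(a)=0$, $t:=\bar{x}_0(b)\in[0,1/2]$, and $\delta\bar{x}_0+n_0=0$ on $\mathring{T}_0$. I would then split $\mathcal{L}$ into $\mathring{T}_0$ (residual $0$), the loop lines $\mathring{T}_0^*\setminus S(\mathring{T}_0)$, and the cut-set lines $S(\mathring{T}_0)$. Each loop line closes within a single component of $\mathring{T}_0$, so the same telescoping gives residual $n_0(C)$ for $C\in\mathbf{C}(\mathring{T}_0)$, a quantity independent of $t$ since the common $t$-shift on the component cancels in the difference. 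A cut-set line $\ell$, by contrast, defines a unique path $P_\ell\in\mathbf{P}_{ab}(\mathring{T}_0)$ from $a$ to $b$; because $\bar{x}_0$ vanishes at $a$ but equals $t$ at $b$, the telescoping leaves a leftover, i.e.~$\delta\bar{x}_0(\ell)+n_0(\ell)=n_0(P_\ell)+t$. Since any $t\in[0,1/2]$ gives an admissible assignment (hence an upper bound on $\hat{\epsilon}_0$), while $\bar{x}_0$ realizes $\hat{\epsilon}_0$ exactly, one concludes $\hat{\epsilon}_0=\min_{t\in[0,1/2]}\big[\sum_{C}|n_0(C)|+\sum_{P}|n_0(P)+t|\big]$.

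The only genuinely new computation is this one-parameter minimization, which is also where I expect to have to be most careful. The map $t\mapsto\sum_P|n_0(P)+t|$ is convex and piecewise linear with breakpoints at the integers $-n_0(P)$; the crucial observation is that these all lie outside the open interval $(0,1/2)$ — paths with $n_0(P)\le -1$ break at $t\ge 1$, those with $n_0(P)\ge 1$ at $t\le -1$, and those with $n_0(P)=0$ exactly at the endpoint $t=0$ — so the function is affine on $[0,1/2]$ with right-derivative $p_0^+(n|\mathring{T}_0)+p_0^0(n|\mathring{T}_0)-p_0^-(n|\mathring{T}_0)$ at $t=0$. The minimum therefore sits at $t=0$ when this slope is nonnegative and at $t=1/2$ otherwise, reproducing exactly the correction $\min\{0,\tfrac{1}{2}(p_0^++p_0^0-p_0^-)\}$ of eq.~\eqref{eq:theo:epshat:0}. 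Finally, the loose bound~\eqref{eq:theo:epshat:bound:0} follows by inserting $\sum_C|n_0(C)|\ge c_0(n|\mathring{T}_0)$ and $\sum_P|n_0(P)|\ge p_0^++p_0^-$ into the exact formula and splitting into the two sign cases: in each case the value of $\min\{\tfrac{1}{2}p_0^-,\tfrac{1}{2}(p_0^++p_0^0)\}$ is fixed by the sign of $p_0^++p_0^0-p_0^-$, and a short arithmetic check shows the resulting expression is bounded below by the stated right-hand side. The main obstacle throughout is the sign and orientation bookkeeping in the cut-set telescoping, since it is precisely the surviving $\bar{x}_0(b)$ term there that generates the nontrivial correction.
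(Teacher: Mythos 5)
Your proposal is correct and follows essentially the same route as the paper's proof: both feed the minimizers of Lemmas~\ref{lemma:minumum-k} and~\ref{lemma:minumum-0} into the objective, telescope the residuals into Wilson loops and $a$-to-$b$ Wilson lines, and reduce the temporal case to a one-parameter minimization over the shift of the $b$-side component of $\mathring{T}_0$, which the paper resolves by the same sign analysis that you phrase via convexity and breakpoints. The one compressed step in your write-up --- that every $t\in[0,1/2]$ is realized by an admissible assignment with cost $\sum_{C}|n_0(C)|+\sum_{P}|n_0(P)+t|$ --- is exactly the paper's explicit family $x_0^\alpha=\bar{x}_0+(\alpha-\bar{x}_0(b))\,\chi_{\mathcal{W}}$, a construction your remark about the common $t$-shift cancelling on loop lines already presupposes.
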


\bigskip

\begin{proof}
   Let $\bar{x}_k$ and $T_k$ be as in lemma~\ref{lemma:minumum-k}. Given $\ell
   \in T_k^*$, let $C$ be the only loop in $T_k \sqcup \{\ell\}$. Then
   \begin{gather}
      [ C : \ell ] \big \{ \bar{x}_k[f(\ell)] - \bar{x}_k[i(\ell)] + n_k(\ell) \big  \}
      = \nonumber  \\ \qquad =
      \sum_{\ell' \in C} [ C : \ell' ] \, \big \{ \bar{x}_k[f(\ell')] - \bar{x}_k[i(\ell')] + n_k(\ell') \big \}
      =
      \sum_{\ell' \in C} [ C : \ell' ] n_k(\ell')
      =
      n_k(C)
      \ ,
      \label{eq:theo:epshat:k:C}
   \end{gather}
   where the first equality is based on the fact that all terms in the sum
   vanish except for $\ell'=\ell$, the second equality is based on the
   observation that all the $\bar{x}(v)$'s cancel in pairs. This implies
   \begin{gather}
      \hat{\epsilon}_k(n_k)
      = 
      \sum_{\ell \in T_k^*} \left| \bar{x}_k[f(\ell)] - \bar{x}_k[i(\ell)] + n_k(\ell) \right|
      =
      \sum_{C \in \mathbf{C}(T_k)} |n_k(C)|
      \ .
   \end{gather}
   
   Define the function
   \begin{gather}
      h_0(x_0)
      =
      \sum_{\ell \in \mathcal{L}}
      | x_0[f(\ell)] - x_0[i(\ell)] + n_0(\ell) |
      \ ,
   \end{gather}
   and let $\bar{x}_0$ and $\mathring{T}_0$ be as in
   lemma~\ref{lemma:minumum-0}, then
   \begin{gather}
      \hat{\epsilon}_0(n_0) = \min_{\substack{ x_0 \text{ with } x_0(a)=0 \\ \text{and } x_0(b) \in [0,1/2] } } h_0(x_0) = h_0(\bar{x}_0)
      \ .
      \label{eq:theo:epshat:min}
   \end{gather}
   Let $\mathcal{W}$ be the set of $b$ and all vertices connected to $b$ by
   paths in $\mathring{T}_0$. Given a number $\alpha$ (which does not depend on
   the vertex), consider the following family of coordinate assignments
   \begin{gather}
      x_0^\alpha = \bar{x}_0 + (\alpha-\bar{x}_0(b)) \chi_{\mathcal{W}}
      \ .
   \end{gather}
   Note that
   \begin{gather}
      x_0^\alpha(a) = 0
      \ , \qquad 
      x_0^\alpha(b) = \alpha
      \ .
   \end{gather}
   When $\alpha$ is varied in $[0,1/2]$, $x_0^\alpha$ spans a one-parameter
   family of coordinate assignments $x_0$ satisfying $x_0^\alpha(a)=0$ and $0
   \le x_0^\alpha(b) \le 1/2$. Using eq.~\eqref{eq:theo:epshat:min}, it follows
   that
   \begin{gather}
      \hat{\epsilon}_0(n_0)
      =
      h_0(\bar{x}_0)
      =
      [ h_0(x_0^\alpha) ]_{\alpha = \bar{x}_0(b)}
      =
      \min_{\alpha \in [0,1/2]} h_0(x_0^\alpha)
      \ .
      \label{eq:theo:epshat:minalpha}
   \end{gather}
   We want to find now a more explicit representation for $h_0(x_0^\alpha)$. Given
   $\ell \in \mathring{T}_0^* \setminus S(\mathring{T}_0)$, let $C$ be the only
   loop in $\mathring{T}_0 \sqcup \{\ell\}$. Note that two possibilities are
   given: either both endpoints of $\ell$ are in $\mathcal{W}$, or both
   endpoints of $\ell$ are in $\mathcal{W}^*$. In both cases:
   \begin{gather}
      [ C : \ell ]  \big \{ x_0^\alpha[f(\ell)] - x_0^\alpha[i(\ell)] + n_0(\ell) \big \}
      =
      [ C : \ell ] \big \{ \bar{x}_0[f(\ell)] - \bar{x}_0[i(\ell)] + n_0(\ell) \big \}
      =
      n_0(C)
      \ ,
      \label{eq:theo:epshat:0:C}
   \end{gather}
   where the calculation poceeds as in eq.~\eqref{eq:theo:epshat:k:C}. Given
   $\ell \in S(\mathring{T}_0)$, let $P$ the only path from $a$ to $b$ in
   $\mathring{T}_0 \sqcup \{\ell\}$. One endpoint of $\ell$ is in
   $\mathcal{W}$ and the other is in $\mathcal{W}^*$, therefore
   \begin{multline}
      [ P : \ell ]  \big \{ x_0^\alpha[f(\ell)] - x_0^\alpha[i(\ell)] + n_0(\ell) \big \}
      =
      \alpha - \bar{x}_0(b)
      + [ P : \ell ] \big \{ \bar{x}_0[f(\ell)] - \bar{x}_0[i(\ell)] + n_0(\ell) \big \}
      = \\ \qquad \qquad =
      \alpha - \bar{x}_0(b)
      + \sum_{\ell' \in P} [ P : \ell' ] \, \big \{ \bar{x}_0[f(\ell')] - \bar{x}_0[i(\ell')] + n_0(\ell') \big \}
      = \\ =%
      \alpha + \sum_{\ell' \in P} [ P : \ell' ] n_0(\ell')
      =
      \alpha + n_0(P)
      \ ,
      \label{eq:theo:epshat:0:P}
   \end{multline}
   where the first equality uses the definition of $x_0^\alpha$, the second
   equality is based on the fact that all terms in the sum vanish except for
   $\ell'=\ell$, the third equality is based on the observation that all the
   $\bar{x}_0(v)$'s cancel in pairs, except for $\bar{x}_0(a)=0$. Putting
   everything together
   \begin{multline}
      \hat{\epsilon}_0(n_0)
      =
      \min_{\alpha \in [0,1/2]} h_0(x_0^\alpha)
      = 
      \min_{\alpha \in [0,1/2]} \sum_{\ell \in \mathring{T}_0^*} \left| x_0^\alpha[f(\ell)] - x_0^\alpha[i(\ell)] + n_0(\ell) \right|
    = \\ = %
      \sum_{C \in \mathbf{C}(\mathring{T}_0)} |n_0(C)|
      + \min_{\alpha \in [0,1/2]} \sum_{P \in \mathbf{P}_{ab}(\mathring{T}_0)} |\alpha + n_0(P)|
      \ .
      \label{eq:theo:epshat:0:alpha}
   \end{multline}
   We note that, for $0 \le \alpha \le \tfrac{1}{2}$,
   \begin{gather}
      |\alpha + n_0(P)|
      =
      \begin{cases}
         \alpha + n_0(P) = \alpha + |n_0(P)| \qquad & \text{if } n_0(P) \ge 0 \\
         -\alpha - n_0(P) = -\alpha + |n_0(P)| \qquad & \text{if } n_0(P) \le -1
      \end{cases}
      \ ,
   \end{gather}
   therefore
   \begin{multline}
      \min_{\alpha \in [0,1/2]} \sum_{P \in \mathbf{P}_{ab}(\mathring{T}_0)} |\alpha + n_0(P)|
      =
      \sum_{P \in \mathbf{P}_{ab}(\mathring{T}_0)} |n_0(P)|
      + \min_{\alpha \in [0,1/2]} \alpha [ p_0^+(n|\mathring{T}_0) + p_0^0(n|\mathring{T}_0) - p_0^-(n|\mathring{T}_0) ]
     = \\ =
      \sum_{P \in \mathbf{P}_{ab}(\mathring{T}_0)} |n_0(P)|
      + \frac{1}{2} \min \left\{0 , [ p_0^+(n|\mathring{T}_0) + p_0^0(n|\mathring{T}_0) - p_0^-(n|\mathring{T}_0) ] \right\}
      \ .
   \end{multline}
   Eq.~\eqref{eq:theo:epshat:0} follows from this and eq.~\eqref{eq:theo:epshat:0:alpha}.
   
   The bounds follow trivially from the observations
   \begin{gather}
      \sum_{C \in \mathbf{C}(T_k)} |n_k(C)| \ge c_k(n|T_k)
      \ , \\
      \sum_{C \in \mathbf{C}(\mathring{T}_0)} |n_0(C)| \ge c_0(n|\mathring{T}_0)
      \ , \\
      \sum_{P \in \mathbf{P}_{ab}(\mathring{T}_0)} |n_0(P)| \ge p_0^+(n|\mathring{T}_0) + p_0^-(n|\mathring{T}_0)
      \ .
   \end{gather}
\end{proof}
 
\subsection{Axial gauge: definition and applications}
\label{app:subsec:axial}

The axial gauge provides a convenient way to select a representative gauge field
per gauge orbit. In this subsection we provide the definition and the proof of
existence and uniqueness of the axial gauge (theorem~\ref{theo:axial}). Using
the axial gauge as a tool, we provide a characterization of gauge equivalence
(theorems~\ref{theo:gauge-equivalence-k} and~\ref{theo:gauge-equivalence-0}) in
terms of a minimal set of gauge-invariant quantities, i.e.~certain Wilson loops
and Wilson lines. Finally, corollary~\ref{corollary:epshat:axial} is a
reformulation of theorem~\ref{theo:epshat} in axial gauge, which will turn out
to be particularly useful.

\bigskip

\textit{Axial gauge.} The following definitions are given:
\begin{enumerate}[topsep=0pt,itemsep=1ex]
   \item Given a tree $T$, the gauge field $n_k$ with $k \in \{1,2,3\}$ is
   said to be in axial gauge with respect to $T$ if and only if $n_k(\ell)=0$ for any $\ell
   \in T$.
   \item Given a 2-tree $\mathring{T}$ which disconnects $a$ and $b$, the gauge
   field $n_0$ is said to be in axial gauge with respect to $\mathring{T}$ iff
   $n_0(\ell)=0$ for any $\ell \in \mathring{T}$.
   \item Consider a fourplet $\mathcal{T} = (\mathring{T}_0,T_1,T_2,T_3)$, where
   $\mathring{T}_0$ is a 2-tree which disconnects $a$ and $b$ and $T_k$ with
   $k=1,2,3$ are trees. The gauge field $n$ is said to be in axial gauge with
   respect to $\mathcal{T}$ if and only if $n_0$ is in axial gauge with respect to
   $\mathring{T}_0$ and $n_k$ is in axial gauge with respect to $T_k$ for
   $k=1,2,3$.
\end{enumerate}

\bigskip

\begin{theorem} \label{theo:axial}
   Consider a fourplet $\mathcal{T} = (\mathring{T}_0,T_1,T_2,T_3)$, where
   $\mathring{T}_0$ is a 2-tree which disconnects $a$ and $b$ and $T_k$ with
   $k=1,2,3$ are trees. Given a gauge field $n$, there is a unique gauge field
   $n^{\mathcal{T}}$ which is gauge equivalent to $n$ and is in axial gauge with
   respect to $\mathcal{T}$.
\end{theorem}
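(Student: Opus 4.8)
The plan is to use the fact that the four Lorentz components decouple under gauge transformations: since $n_\mu^\lambda(\ell) = n_\mu(\ell) + \lambda_\mu[f(\ell)] - \lambda_\mu[i(\ell)]$, the component $\lambda_\mu$ acts only on $n_\mu$, and the admissibility conditions split as $\lambda_k(a) = 0$ (one condition, for each $k=1,2,3$) and $\lambda_0(a) = \lambda_0(b) = 0$ (two conditions). I will therefore solve four independent scalar problems, in each case constructing the gauge transformation explicitly and then checking it is the unique admissible one. The axial-gauge requirement reads $\lambda_\mu[f(\ell)] - \lambda_\mu[i(\ell)] = -n_\mu(\ell)$ on the chosen (2-)tree, so everything reduces to integrating this difference equation along a loop-free subgraph subject to the prescribed base points.

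For a spatial direction $k$, I would recall that $T_k$ is a spanning, connected, loop-free subgraph, hence contains a unique path $P(v)$ from $a$ to every vertex $v$. Imposing $\lambda_k(a)=0$ and integrating along $T_k$ gives the explicit solution
\begin{gather*}
\lambda_k(v) = - \sum_{\ell \in P(v)} [P(v):\ell]\, n_k(\ell) \ .
\end{gather*}
Existence holds because $T_k$ has no loops, so no Wilson-loop consistency condition can obstruct the integration; uniqueness holds because every vertex is reached by a unique tree path from the single base point $a$.

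For the temporal component the essential structural input is that a 2-tree disconnecting $a$ and $b$ has exactly two connected components, $\mathcal{V}_a \ni a$ and $\mathcal{V}_b \ni b$, on each of which $\mathring{T}_0$ restricts to a spanning tree. The two temporal boundary conditions now match the two components precisely: I would fix $\lambda_0$ on $\mathcal{V}_a$ by integrating from $\lambda_0(a)=0$ and on $\mathcal{V}_b$ by integrating from $\lambda_0(b)=0$, exactly as in the spatial case on each piece. This yields an admissible $\lambda_0$ putting $n_0$ in axial gauge, again uniquely on each component.

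Assembling $\lambda = (\lambda_0,\lambda_1,\lambda_2,\lambda_3)$ gives an admissible transformation with $n^{\mathcal{T}} := n^\lambda$ in axial gauge with respect to $\mathcal{T}$. For uniqueness of $n^{\mathcal{T}}$, I would observe that any two axial-gauge representatives of $[n]$ differ by an admissible $\mu$ that preserves the axial gauge, so $\mu_k[f(\ell)]-\mu_k[i(\ell)]=0$ on $T_k$ and $\mu_0[f(\ell)]-\mu_0[i(\ell)]=0$ on $\mathring{T}_0$; connectedness then forces $\mu_k$ constant on $T_k$ and $\mu_0$ constant on each component of $\mathring{T}_0$, and the base-point conditions $\mu_k(a)=0$ and $\mu_0(a)=\mu_0(b)=0$ collapse $\mu$ to zero. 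The main obstacle is the temporal bookkeeping: one must verify that the disconnecting property places $a$ and $b$ in \emph{distinct} components, so that the two conditions $\lambda_0(a)=0$ and $\lambda_0(b)=0$ are independent and together rigidify $\lambda_0$ without over- or under-determining it. This matching of boundary-condition count to component count is exactly what makes the axial gauge well posed in both the temporal and the spatial sectors.
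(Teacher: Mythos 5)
Your proposal is correct and follows essentially the same route as the paper: decouple the four components, construct $\lambda_k$ explicitly by integrating $-n_k$ along the unique tree paths from $a$, and construct $\lambda_0$ separately on the two components $\mathcal{V}_a$ and $\mathcal{V}_b$ of the 2-tree (which are distinct precisely because $\mathring{T}_0$ disconnects $a$ and $b$), with the two temporal base-point conditions matching the two components. The only cosmetic difference is that the paper packages existence and uniqueness into an ``if and only if'' characterization of $\lambda$, whereas you prove uniqueness by showing that the residual admissible transformation preserving the axial gauge must vanish; the two arguments are equivalent.
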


\begin{proof}
   Consider an admissible gauge transformation $\lambda$, and let $n^\lambda$ be
   the result of transforming $n$ with $\lambda$.
   Fix $k \in \{1,2,3\}$. Given a vertex $v$, a unique path $P_a(v)$ from $a$ to
   $v$ exists in $T_k$. One easily checks that $n^\lambda_k$ is in axial gauge
   with respect to $T_k$ if and only if
   \begin{gather}
      \lambda_k(v) = - \sum_{\ell \in P_a(v)} [P_a(v) : \ell] \, n(\ell)
      \ .
   \end{gather}
   
   Let $\mathcal{V}_a$ (resp. $\mathcal{V}_b$) be the set of vertices connected
   to $a$ (resp. $b$) by paths in $\mathring{T}_0$. Then $\mathcal{V} =
   \mathcal{V}_a \sqcup \mathcal{V}_b$. If $v \in \mathcal{V}_a$ a unique path
   $P_a(v)$ from $a$ to $v$ exists in $\mathring{T}_0$, and if $v \in
   \mathcal{V}_b$ a unique path $P_b(v)$ from $b$ to $v$ exists in
   $\mathring{T}_0$.  One easily checks that $n^\lambda_0$ is in axial gauge
   with respect to $\mathring{T}_0$ if and only if
   \begin{gather}
      \lambda_0(v) =
      \begin{dcases}
         - \sum_{\ell \in P_a(v)} [P_a(v) : \ell] \, n(\ell)
         \qquad & \text{if } v \in \mathcal{V}_a
         \\
         - \sum_{\ell \in P_b(v)} [P_b(v) : \ell] \, n(\ell)
         \qquad & \text{if } v \in \mathcal{V}_b
      \end{dcases}
      \ .
   \end{gather}
   
   This is enough to prove existence and uniqueness of $n^{\mathcal{T}}$.
\end{proof}

\bigskip

\begin{theorem} \label{theo:gauge-equivalence-k}
   Consider a tree $T$ and two gauge fields $n^1_k$ and $n^2_k$. The following statements are equivalent:
   \begin{enumerate}[topsep=0pt,itemsep=1ex]
      \item $n^1_k$ and $n^2_k$ are gauge equivalent;
      \item $n^1_k(C) = n^2_k(C)$ for every $C \in \mathbf{C}(T)$.
   \end{enumerate}
\end{theorem}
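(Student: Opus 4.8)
The plan is to prove the two implications separately, using the axial-gauge machinery of theorem~\ref{theo:axial} to handle the nontrivial direction. The underlying point is that the Wilson loops along the fundamental loops $\mathbf{C}(T)$ constitute a \emph{complete} set of gauge invariants for a $\mathbb{Z}$-valued gauge field on a connected graph.

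For the implication \textit{1} $\Rightarrow$ \textit{2} I would simply invoke the gauge invariance of Wilson loops. If $n^2_k = (n^1_k)^\lambda$ for some gauge transformation $\lambda_k$, then for any loop $C$,
\begin{gather*}
   n^2_k(C) = \sum_{\ell \in C} [C:\ell]\, \big\{ n^1_k(\ell) + \lambda_k[f(\ell)] - \lambda_k[i(\ell)] \big\} = n^1_k(C) \ ,
\end{gather*}
since the $\lambda_k$ contributions cancel telescopically around the closed loop, by the orientation rules defining $[C:\ell]$. In particular this holds for every $C \in \mathbf{C}(T)$, so statement~\textit{2} follows.

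For the converse, \textit{2} $\Rightarrow$ \textit{1}, I would reconstruct each gauge field from its Wilson loops using the axial gauge. By theorem~\ref{theo:axial}, $n^1_k$ and $n^2_k$ are each gauge equivalent to a unique axial-gauge representative, say $\bar{n}^1_k$ and $\bar{n}^2_k$, with respect to $T$; by definition these satisfy $\bar{n}^i_k(\ell)=0$ for every $\ell \in T$. For a line $\ell \in T^*$, let $C$ be the unique loop in $T \sqcup \{\ell\}$. Exactly as in the telescoping computation of eq.~\eqref{eq:theo:epshat:k:C} (now with the constant coordinate assignment $\bar{x}_k \equiv 0$), all lines of $C$ lying in $T$ contribute zero, so only $\ell$ survives in the Wilson loop, giving $\bar{n}^i_k(\ell) = [C:\ell]\, \bar{n}^i_k(C) = [C:\ell]\, n^i_k(C)$, where in the last step I used the gauge invariance established above. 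Thus the axial representative is entirely determined by the numbers $\{ n^i_k(C) : C \in \mathbf{C}(T) \}$.

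Under the hypothesis $n^1_k(C)=n^2_k(C)$ for all $C \in \mathbf{C}(T)$, this forces $\bar{n}^1_k(\ell) = \bar{n}^2_k(\ell)$ on every $\ell \in T^*$, while both vanish on $T$; hence $\bar{n}^1_k = \bar{n}^2_k$. By transitivity of gauge equivalence one concludes $n^1_k \sim \bar{n}^1_k = \bar{n}^2_k \sim n^2_k$, which is statement~\textit{1}. The only mildly delicate point — which I would state precisely rather than belabour — is the identity $\bar{n}^i_k(\ell) = [C:\ell]\, n^i_k(C)$ expressing that in axial gauge each fundamental Wilson loop collapses to its single non-tree line; since this is just eq.~\eqref{eq:theo:epshat:k:C} specialized to the vanishing axial configuration, no genuine obstacle remains once theorem~\ref{theo:axial} is available.
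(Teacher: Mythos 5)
Your proposal is correct and follows essentially the same route as the paper's proof: the forward implication via gauge invariance of Wilson loops, and the converse by passing to the unique axial-gauge representatives (theorem~\ref{theo:axial}) and showing they coincide line by line, since on $T$ both vanish by definition and on each $\ell \in T^*$ the fundamental Wilson loop collapses to $[C:\ell]\,\tilde{n}^j_k(\ell)$. The only cosmetic difference is that you make the telescoping cancellations and the transitivity step explicit, which the paper leaves implicit.
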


\begin{proof}
   Note that the implication $1 \Rightarrow 2$ follows trivially from gauge
   invariance of Wilson loops. We only need to prove the implication $2
   \Rightarrow 1$.
   
   For $j=1,2$, let $\tilde{n}^j_k$ be the only representative in
   $[{n}^j_k]$ which is in axial gauge with respect to $T$. Existence and
   uniqueness of $\tilde{n}^j_k$ are guaranteed by theorem \ref{theo:axial}. In
   order to prove that $n^1_k$ and $n^2_k$ are gauge equivalent, it is enough to
   prove that $\tilde{n}^1_k=\tilde{n}^2_k$.
   
   By definition of axial gauge $\tilde{n}^1_k(\ell)=0=\tilde{n}^2_k(\ell)$ if
   $\ell \in T$.
   
   Consider $\ell \in T^*$, then a unique loop $C$ exists in $T \sqcup
   \{\ell\}$, and
   \begin{gather}
      n^j_k(C) = \tilde{n}^j_k(C) = [C : \ell] \tilde{n}^j_k(\ell) \ ,
   \end{gather}
   where we have used the gauge invariance of Wilson loops. By definition $C \in
   \mathbf{C}(T)$, and by hypothesis
   \begin{gather}
      [C : \ell] \tilde{n}^1_k(\ell) = n^1_k(C) =
      n^2_k(C) = [C : \ell] \tilde{n}^2_k(\ell)
      \ ,
   \end{gather}
   i.e.~$\tilde{n}^1_k(\ell) = \tilde{n}^2_k(\ell)$. This proves that
   $\tilde{n}^1_k=\tilde{n}^2_k$.
\end{proof}

\bigskip

\begin{theorem} \label{theo:gauge-equivalence-0}
   Consider a 2-tree $\mathring{T}$ which disconnects $a$ and $b$, a tree $T$ and two gauge fields $n^1_0$ and $n^2_0$. The following statements are equivalent:
   \begin{enumerate}[topsep=0pt,itemsep=1ex]
      \item $n^1_0$ and $n^2_0$ are gauge equivalent;
      \item $n^1_0(C) = n^2_0(C)$ for every $C \in \mathbf{C}(\mathring{T})$, and
      $n^1_0(P) = n^2_0(P)$ for every $P \in \mathbf{P}_{ab}(\mathring{T})$;
      \item $n^1_0(C) = n^2_0(C)$ for every $C \in \mathbf{C}(T)$, and $n^1_0(P)
      = n^2_0(P)$ where $P$ is the only path from $a$ to $b$ in $T$.
   \end{enumerate}
\end{theorem}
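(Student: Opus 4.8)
The plan is to treat the three statements as a web of implications anchored at statement~1, exactly as in the proof of Theorem~\ref{theo:gauge-equivalence-k}. Both $1\Rightarrow 2$ and $1\Rightarrow 3$ are immediate: the Wilson loops $n_0(C)$ and the Wilson line $n_0(P)$ from $a$ to $b$ are gauge invariant (as recorded in the \emph{Parallel transports} paragraph), so gauge-equivalent fields necessarily agree on all such quantities. It therefore remains to establish $2\Rightarrow 1$ and $3\Rightarrow 1$. For both I would pass to the difference cochain $\delta = n^1_0 - n^2_0$, which is again a legitimate temporal gauge field and for which gauge equivalence of $n^1_0$ and $n^2_0$ is precisely the statement that $\delta$ is \emph{pure}, i.e.~$\delta(\ell)=\lambda[f(\ell)]-\lambda[i(\ell)]$ for some $\lambda$ with $\lambda(a)=\lambda(b)=0$. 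Since $n_0(C)$ and $n_0(P)$ are additive in the field, hypotheses~2 and~3 translate directly into the vanishing of the corresponding Wilson loops and lines of $\delta$.

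For $3\Rightarrow 1$ I would invoke the standard fact (\cite{nakanishi1971graph}, around theorem~2-22) that the fundamental loops $\mathbf{C}(T)$ of a spanning tree form a $\mathbb{Z}$-basis of the cycle space, so that every loop is an integer combination of elements of $\mathbf{C}(T)$ and the parallel transport is additive under this decomposition. Hence $\delta(C)=0$ for all $C\in\mathbf{C}(T)$ already forces $\delta(C)=0$ for \emph{every} loop $C$. A cochain whose sum around every loop vanishes is a coboundary: setting $\lambda(v)=\delta(P_{av})$ for any path $P_{av}$ from $a$ to $v$ yields a well-defined function (independence of the chosen path is exactly the vanishing of the loop sums) with $\lambda(a)=0$ and $\delta(\ell)=\lambda[f(\ell)]-\lambda[i(\ell)]$. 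It remains only to secure the boundary condition at $b$: evaluating along the distinguished path $P$ from $a$ to $b$ in $T$ gives $\delta(P)=\lambda(b)-\lambda(a)=\lambda(b)$, and the second half of hypothesis~3 asserts $\delta(P)=0$, so $\lambda(b)=0$. Thus $\lambda$ is admissible and $\delta$ is pure, proving $n^1_0\sim n^2_0$.

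For $2\Rightarrow 1$ I would instead follow the axial-gauge route of Theorem~\ref{theo:gauge-equivalence-k}, since the axial gauge for $n_0$ is defined relative to a 2-tree. By Theorem~\ref{theo:axial} let $\tilde n^j_0$ be the unique representative of $[n^j_0]$ in axial gauge with respect to $\mathring{T}$, so that $\tilde n^j_0$ vanishes on every line of $\mathring{T}$. Splitting $\mathring{T}^* = S(\mathring{T})\sqcup(\mathring{T}^*\setminus S(\mathring{T}))$, each $\ell\in\mathring{T}^*\setminus S(\mathring{T})$ closes a unique loop $C\in\mathbf{C}(\mathring{T})$ with $n^j_0(C)=[C:\ell]\,\tilde n^j_0(\ell)$, while each $\ell\in S(\mathring{T})$ completes a unique $a$-to-$b$ path $P\in\mathbf{P}_{ab}(\mathring{T})$ with $n^j_0(P)=[P:\ell]\,\tilde n^j_0(\ell)$, all other contributions being killed by the axial-gauge condition. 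Hypothesis~2 then equates $\tilde n^1_0(\ell)$ and $\tilde n^2_0(\ell)$ first on $\mathring{T}^*$ and hence on all of $\mathcal{L}$, giving $\tilde n^1_0=\tilde n^2_0$ and therefore gauge equivalence.

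The main obstacle throughout is the role of the special vertex $b$: because admissible temporal transformations must fix $\lambda_0(b)=0$ in addition to $\lambda_0(a)=0$, the Wilson line from $a$ to $b$ is an \emph{independent} gauge invariant, over and above the Wilson loops, and one must check that the path data in hypotheses~2 and~3 supplies exactly this extra invariant. In the axial-gauge proof of $2\Rightarrow 1$ this reduces to the routine verification that $\mathring{T}^*$ is exhaustively partitioned into loop-closing lines and cut-set lines completing $a$-to-$b$ paths; in a purely cohomological treatment it would instead surface as the statement that $\mathbf{C}(\mathring{T})$ spans a subspace of the cycle space of codimension $|S(\mathring{T})|-1$, whose complement is generated by the differences of the paths in $\mathbf{P}_{ab}(\mathring{T})$. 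Getting this bookkeeping right is the one delicate point; the remainder parallels the spatial case verbatim.
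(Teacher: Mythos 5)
Your proposal is correct. The implications $1\Rightarrow 2$, $1\Rightarrow 3$ and your proof of $2\Rightarrow 1$ coincide with the paper's: the paper likewise passes to the unique axial-gauge representatives $\tilde n^j_0$ with respect to $\mathring{T}$ (via theorem~\ref{theo:axial}), partitions $\mathring{T}^*$ into $S(\mathring{T})$ and $\mathring{T}^*\setminus S(\mathring{T})$, and matches the representatives line by line using the bijections with $\mathbf{P}_{ab}(\mathring{T})$ and $\mathbf{C}(\mathring{T})$.

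Where you genuinely diverge is $3\Rightarrow 1$. The paper stays inside the axial-gauge machinery: it deletes a line $\bar\ell$ from the unique tree path $P$ to form the 2-tree $\mathring{T}=T\setminus\{\bar\ell\}$, passes to axial gauge with respect to this $\mathring{T}$, fixes $\tilde n^j_0(\bar\ell)$ from the Wilson-line hypothesis, and then propagates equality to all of $T^*$ through the fundamental loops of $T$ (which may now contain $\bar\ell$, whence the extra term $[C:\bar\ell]\,\tilde n^j_0(\bar\ell)$). You instead work with the difference cochain $\delta=n^1_0-n^2_0$ and build the gauge transformation explicitly as a potential, $\lambda(v)=\delta(P_{av})$, using the fact that the fundamental cycles $\mathbf{C}(T)$ generate the integer cycle space so that vanishing of $\delta$ on them kills every loop holonomy and makes $\lambda$ path-independent; the hypothesis on the tree path $P$ then supplies exactly the extra boundary condition $\lambda(b)=0$ needed for admissibility. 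Both are sound. The paper's route has the advantage of reusing only facts it has already proved (existence and uniqueness of axial gauge), with no appeal to the $\mathbb{Z}$-basis property of fundamental cycles; your route is more conceptual and makes transparent why the $a$-to-$b$ Wilson line is precisely the one gauge invariant beyond the loop holonomies, but it quietly relies on two standard facts you should make explicit if writing it out in full: that the difference of two paths with common endpoints is an integer combination of fundamental cycles (needed for well-definedness of $\lambda$), and that parallel transport is additive under that decomposition.
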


\begin{proof}
   Note that the implications $1 \Rightarrow 2$ and $1 \Rightarrow 3$ follow
   trivially from gauge invariance of Wilson loops and Wilson lines from $a$ to
   $b$. We only need to prove the implications $2 \Rightarrow 1$ and $3
   \Rightarrow 1$.
   
   \textit{Implication $2 \Rightarrow 1$.} For $j=1,2$, let $\tilde{n}^j_0$ be
   the only representative in $[{n}^j_0]$ which is in axial gauge with
   respect to $\mathring{T}$. Existence and uniqueness of $\tilde{n}^j_0$ are
   guaranteed by theorem \ref{theo:axial}. In order to prove that $n^1_0$ and
   $n^2_0$ are gauge equivalent, it is enough to prove that
   $\tilde{n}^1_0=\tilde{n}^2_0$.
   
   By definition of axial gauge $\tilde{n}^1_k(\ell)=0=\tilde{n}^2_k(\ell)$ if
   $\ell \in \mathring{T}$.

   Consider $\ell \in S(\mathring{T})$, then a unique path $P$ from $a$ to $b$
   exists in $\mathring{T} \sqcup \{\ell\}$, and
   \begin{gather}
      n^j_0(P) = \tilde{n}^j_0(P) = [P : \ell] \tilde{n}^j_0(\ell) \ ,
   \end{gather}
   where we have used the gauge invariance of Wilson lines from $a$ to $b$. By
   definition $P \in \mathbf{P}_{ab}(\mathring{T})$, and by hypothesis
   \begin{gather}
      [P : \ell] \tilde{n}^1_0(\ell) = n^1_0(P) =
      n^2_0(P) = [P : \ell] \tilde{n}^2_0(\ell)
      \ ,
   \end{gather}
   i.e.~$\tilde{n}^1_0(\ell) = \tilde{n}^2_0(\ell)$.
   
   Consider $\ell \in \mathring{T}^* \setminus S(\mathring{T})$, then a unique
   loop $C$ exists in $\mathring{T} \sqcup \{\ell\}$, and
   \begin{gather}
      n^j_0(C) = \tilde{n}^j_0(C) = [C : \ell] \tilde{n}^j_0(\ell) \ ,
   \end{gather}
   where we have used the gauge invariance of Wilson loops. By definition $C \in
   \mathbf{C}(\mathring{T})$, and by hypothesis
   \begin{gather}
      [C : \ell] \tilde{n}^1_0(\ell) = n^1_0(C) =
      n^2_0(C) = [C : \ell] \tilde{n}^2_0(\ell)
      \ ,
   \end{gather}
   i.e.~$\tilde{n}^1_0(\ell) = \tilde{n}^2_0(\ell)$. This proves that
   $\tilde{n}^1_0=\tilde{n}^2_0$.

   \textit{Implication $3 \Rightarrow 1$.} Let $P$ the only path from $a$ to $b$
   in $T$, and let $\bar{\ell}$ some line in $P$. Then $\mathring{T} = T
   \setminus \{ \bar{\ell} \}$ is a 2-tree which disconnects $a$ and $b$.
   
   For $j=1,2$, let $\tilde{n}^j_0$ be the only representative in
   $[{n}^j_0]$ which is in axial gauge with respect to $\mathring{T}$.
   Existence and uniqueness of $\tilde{n}^j_0$ are guaranteed by theorem
   \ref{theo:axial}. In order to prove that $n^1_0$ and $n^2_0$ are gauge
   equivalent, it is enough to prove that $\tilde{n}^1_0=\tilde{n}^2_0$.
   
   By definition of axial gauge $\tilde{n}^1_0(\ell)=0=\tilde{n}^2_0(\ell)$ if
   $\ell \in \mathring{T}$.
   
   Gauge invariance of the Wilson line along $P$ then implies,
   \begin{gather}
      n^j_0(P) = \tilde{n}^j_0(P) = [P : \bar{\ell}] \tilde{n}^j_0(\bar{\ell}) \ .
   \end{gather}
   By hypothesis
   \begin{gather}
      [P : \bar{\ell}] \tilde{n}^1_0(\bar{\ell}) = n^1_0(P) =
      n^2_0(P) = [P : \bar{\ell}] \tilde{n}^2_0(\bar{\ell})
      \ ,
   \end{gather}
   i.e.~$\tilde{n}^1_0(\bar{\ell}) = \tilde{n}^2_0(\bar{\ell})$.
   
   Consider $\ell \in T^*$, then a unique loop $C$ exists in $T \sqcup \{\ell\}
   = \mathring{T} \sqcup \{\bar{\ell},\ell\}$, and
   \begin{gather}
      n^j_0(C) = \tilde{n}^j_0(C) = [C : \ell] \tilde{n}^j_0(\ell) + [C : \bar{\ell}] \tilde{n}^j_0(\bar{\ell}) \ ,
   \end{gather}
   where we have used the gauge invariance of Wilson loops. Note that $[C :
   \bar{\ell}]$ is zero if $\bar{\ell}$ is not in $C$. By definition $C \in
   \mathbf{C}(T)$, and by hypothesis
   \begin{gather}
      [C : \ell] \tilde{n}^1_0(\ell) + [C : \bar{\ell}] \tilde{n}^1_0(\bar{\ell}) = n^1_0(C) =
      n^2_0(C) = [C : \ell] \tilde{n}^2_0(\ell) + [C : \bar{\ell}] \tilde{n}^2_0(\bar{\ell})
      \ .
   \end{gather}
   We have already proved that $\tilde{n}^1_0(\bar{\ell}) =
   \tilde{n}^2_0(\bar{\ell})$, therefore it follows that $\tilde{n}^1_0(\ell) =
   \tilde{n}^2_0(\ell)$. This proves that $\tilde{n}^1_0=\tilde{n}^2_0$.
\end{proof}

\bigskip

\begin{corollary} \label{corollary:epshat:axial}
   Let $\mathring{T}_0$ be a 2-tree and let $T_k$ for $k=1,2,3$ be trees as in
   theorem~\ref{theo:epshat}. Since the functions $\hat{\epsilon}_\mu$ are
   gauge-invariant, with no loss of generality, we can assume $n$ to be in axial
   gauge with respect to the fourplet $\mathcal{T} =
   (\mathring{T}_0,T_1,T_2,T_3)$ (thanks to theorem~\ref{theo:axial}). The lines
   in the cut-set $S(\mathring{T}_0)$ are assumed to be oriented from $a$ to
   $b$, i.e.~if $\ell \in S(\mathring{T}_0)$ then $i(\ell)$ is connected to $a$
   in $\mathring{T}_0$ and $f(\ell)$ is connected to $b$ in $\mathring{T}_0$.
   
   It follows that
   \begin{gather}
      \hat{\epsilon}_k(n_k) = \sum_{\ell \in T_k^*} |n_k(\ell)|
      \label{eq:corollary:epshat:axial:k}
      \ , \\
      \hat{\epsilon}_0(n_0) = \sum_{\ell \in \mathring{T}_0^*} |n_0(\ell)|
      +   \min \bigg \{0 , \, \frac{ p_0^+(n|\mathring{T}_0) + p_0^0(n|\mathring{T}_0) - p_0^-(n|\mathring{T}_0)  }{2} \bigg \}
      \ ,
      \label{eq:corollary:epshat:axial:0}
   \end{gather}
   where $p_0^+(n|\mathring{T}_0)$, $p_0^0(n|\mathring{T}_0)$,
   $p_0^-(n|\mathring{T}_0)$ are equal to the number of lines $\ell$ in the
   cut-set $S(\mathring{T}_0)$ such that $n_0(\ell) \ge 1$, $n_0(\ell) = 0$,
   $n_0(\ell) \le -1$ respectively.

   The following loose but useful bounds hold
   \begin{gather}
      \hat{\epsilon}_k(n_k) \ge c_k(n|T_k) \ ,
      \label{eq:corollary:epshat:bound:k}
   \end{gather}
   where $c_k(n|T_k)$ is equal to the number of lines $\ell$ in $T_k^*$ with
   $n_k(\ell) \neq 0$, and
   \begin{gather}
      \hat{\epsilon}_0(n_0) \ge c_0(n|\mathring{T}_0)
      + p_0^+(n|\mathring{T}_0) + \frac{1}{2} p_0^-(n|\mathring{T}_0)
      +  \min \bigg\{ \frac{p_0^-(n|\mathring{T}_0)}{2} ,\frac{ p_0^+(n|\mathring{T}_0) + p_0^0(n|\mathring{T}_0) }{2} \bigg\}
      \ ,
      \label{eq:corollary:epshat:bound:0}
   \end{gather}
   where $c_0(n|\mathring{T}_0)$ is the number of lines $\ell$ in
   $\mathring{T}_0^* \setminus S(\mathring{T}_0)$ with $n_0(\ell) \neq 0$.
\end{corollary}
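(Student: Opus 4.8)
The plan is to deduce the corollary directly from theorem~\ref{theo:epshat} by evaluating every Wilson loop and Wilson line appearing there under the axial-gauge condition. By theorem~\ref{theo:axial}, each gauge orbit contains a unique representative in axial gauge with respect to the fourplet $\mathcal{T}=(\mathring{T}_0,T_1,T_2,T_3)$, and since $\hat{\epsilon}_\mu$ is gauge invariant we lose nothing by assuming $n$ to be this representative, so that $n_k(\ell)=0$ for $\ell\in T_k$ and $n_0(\ell)=0$ for $\ell\in\mathring{T}_0$.

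First I would treat the spatial functions. For each $\ell\in T_k^*$ there is a unique loop $C\in\mathbf{C}(T_k)$ contained in $T_k\sqcup\{\ell\}$, and this provides a bijection between $\mathbf{C}(T_k)$ and $T_k^*$. Expanding the Wilson loop $n_k(C)=\sum_{\ell'\in C}[C:\ell']n_k(\ell')$ and using that every line of $C$ other than $\ell$ lies in $T_k$ (hence carries $n_k=0$ in axial gauge), only the term $\ell'=\ell$ survives, so $n_k(C)=[C:\ell]n_k(\ell)$ and $|n_k(C)|=|n_k(\ell)|$. Substituting into eq.~\eqref{eq:theo:epshat:k} yields eq.~\eqref{eq:corollary:epshat:axial:k}, and the same collapse shows that a loop has $n_k(C)\neq0$ if and only if the associated line has $n_k(\ell)\neq0$, giving the bound~\eqref{eq:corollary:epshat:bound:k}.

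Next I would handle the temporal function, splitting $\mathring{T}_0^*=S(\mathring{T}_0)\sqcup(\mathring{T}_0^*\setminus S(\mathring{T}_0))$. For $\ell\in\mathring{T}_0^*\setminus S(\mathring{T}_0)$ the unique loop in $\mathring{T}_0\sqcup\{\ell\}$ again collapses to $|n_0(C)|=|n_0(\ell)|$, so $\sum_{C\in\mathbf{C}(\mathring{T}_0)}|n_0(C)|=\sum_{\ell\in\mathring{T}_0^*\setminus S(\mathring{T}_0)}|n_0(\ell)|$. For $\ell\in S(\mathring{T}_0)$ the unique $a$-to-$b$ path $P\in\mathbf{P}_{ab}(\mathring{T}_0)$ in $\mathring{T}_0\sqcup\{\ell\}$ likewise reduces to $n_0(P)=[P:\ell]n_0(\ell)$, and here the stated orientation convention is essential: since $i(\ell)$ is on the $a$-side and $f(\ell)$ on the $b$-side of $\mathring{T}_0$, the path traverses $\ell$ in its own direction, so $[P:\ell]=+1$ and $n_0(P)=n_0(\ell)$. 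Hence $\sum_{P}|n_0(P)|=\sum_{\ell\in S(\mathring{T}_0)}|n_0(\ell)|$, and the two line-sums combine into $\sum_{\ell\in\mathring{T}_0^*}|n_0(\ell)|$. Because $n_0(P)=n_0(\ell)$ with the correct sign, the path-counts $p_0^+,p_0^0,p_0^-$ of theorem~\ref{theo:epshat} equal the corresponding line-counts over $S(\mathring{T}_0)$, so eq.~\eqref{eq:theo:epshat:0} becomes eq.~\eqref{eq:corollary:epshat:axial:0}; the bound~\eqref{eq:corollary:epshat:bound:0} follows from eq.~\eqref{eq:theo:epshat:bound:0} by the same identifications.

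The only genuinely delicate step is the sign bookkeeping in the cut-set: I expect the main obstacle to be verifying that the orientation choice forces $[P:\ell]=+1$ uniformly for all $\ell\in S(\mathring{T}_0)$, since only then does the classification $n_0(P)\ge1$, $=0$, $\le-1$ transfer verbatim to $n_0(\ell)$ and preserve the asymmetric $\min\{\cdots\}$ term. Everything else is the mechanical collapse of Wilson loops and lines afforded by the axial gauge together with the tree and 2-tree bijections already established for $\mathbf{C}(T_k)$, $\mathbf{C}(\mathring{T}_0)$ and $\mathbf{P}_{ab}(\mathring{T}_0)$.
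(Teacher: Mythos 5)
Your proposal is correct and follows essentially the same route as the paper's own proof: establish the bijections $T_k^* \leftrightarrow \mathbf{C}(T_k)$, $\mathring{T}_0^* \setminus S(\mathring{T}_0) \leftrightarrow \mathbf{C}(\mathring{T}_0)$ and $S(\mathring{T}_0) \leftrightarrow \mathbf{P}_{ab}(\mathring{T}_0)$, collapse each Wilson loop and Wilson line in axial gauge to the single surviving term, and use the $a$-to-$b$ orientation of the cut-set lines to get $[P:\ell]=+1$ so that $n_0(P)=n_0(\ell)$ with sign preserved, which transfers the counts $p_0^{\pm}$, $p_0^0$ and $c_\mu$ verbatim into theorem~\ref{theo:epshat}. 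The "delicate step" you flag is handled in the paper exactly as you argue it: since $i(\ell)$ lies on the $a$-side and $f(\ell)$ on the $b$-side, the unique path traverses $\ell$ along its own orientation.
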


\begin{proof}
   Note that the set $T_k^*$ is in a bijective correspondence with
   $\mathbf{C}(T_k)$. In fact, by definition, if $C \in \mathbf{C}(T_k)$ then a
   line $\ell \in T_k^*$ exists such that $C$ is the unique loop in $T_k \sqcup
   \{\ell\}$. Invertibility follows from the observation that the line is
   uniquely determined by the loop via $\{ \ell \} = C \setminus T_k$. Observe
   that
   \begin{gather}
      |n_k(C)|
      =
      \left| \sum_{\ell' \in C} [C:\ell'] \, n_k(\ell') \right|
      =
      \left| [C:\ell] \, n_k(\ell) + \sum_{\ell' \in C \cap T_k} [C:\ell'] \, n_k(\ell') \right|
      =
      |n_k(\ell)|
      \ ,
   \end{gather}
   where we have used the fact that $n_k$ is in axial gauge with respect to
   $T_k$. Therefore
   \begin{gather}
      \sum_{C \in \mathbf{C}(T_k)} |n_k(C)| = \sum_{\ell \in T_k^*} |n_k(\ell)| \ .
   \end{gather}
   
   The set $\mathring{T}_0^* \setminus S(\mathring{T}_0)$ is in a bijective
   correspondence with $\mathbf{C}(\mathring{T}_0)$. The construction is
   identical to the previous case, therefore
   \begin{gather}
      \sum_{C \in \mathbf{C}(\mathring{T}_0)} |n_0(C)| = \sum_{\ell \in \mathring{T}_0^* \setminus S(\mathring{T}_0)} |n_0(\ell)| \ .
   \end{gather}
   The set $S(\mathring{T}_0)$ is in a bijective correspondence with
   $\mathbf{P}_{ab}(\mathring{T}_0)$. In fact, by definition, if $P \in
   \mathbf{P}_{ab}(\mathring{T}_0)$ then a line $\ell \in S(\mathring{T}_0)$
   exists such that $P$ is the unique path from $a$ to $b$ in $\mathring{T}_0
   \sqcup \{\ell\}$. Invertibility follows from the observation that the line is
   uniquely determined by the loop via $\{ \ell \} = P \setminus
   \mathring{T}_0$. Thanks to the particular choice of orientation of the lines
   in $S(\mathring{T}_0)$, $[P : \ell]=1$. Observe that
   \begin{gather}
      n_0(P)
      =
      \sum_{\ell' \in P} [P:\ell'] \, n_0(\ell')
      =
      [P:\ell] \, n_0(\ell) + \sum_{\ell' \in P \cap \mathring{T}_0} [P:\ell'] \, n_0(\ell')
      =
      n_0(\ell)
      \ ,
   \end{gather}
   where we have used the fact that $n_0$ is in axial gauge with respect to
   $\mathring{T}_0$. Therefore
   \begin{gather}
      \sum_{P \in \mathbf{P}(\mathring{T}_0)} |n_0(P)| = \sum_{\ell \in S(\mathring{T}_0)} |n_0(\ell)| \ .
   \end{gather}
   The corollary is a simple application of these relations to
   theorem~\ref{theo:epshat}.
\end{proof}
 
\subsection{Possible values of \texorpdfstring{$\hat{\epsilon}_\mu(n_\mu)$}{epsilon\_mu(n\_mu)}}
\label{app:susec:onehalf}

The deceptively simple theorem discussed in this subsection concludes the
characterization of the function $\hat{\epsilon}_\mu(n_\mu)$, and constitutes
the  backbone of the analysis of the asymptotic behaviour of Feynman
integrals in the large-$L$ limit presented in sections~\ref{subsec:diag:asymptotic} and \ref{subsec:diag:separation}
and following.

\bigskip

\begin{theorem} \label{theo:onehalf}
   The following statements hold:
   \begin{enumerate}[topsep=0pt,itemsep=1ex]
      \item For any gauge field $n$, $\hat{\epsilon}_k(n_k) \in \mathbb{N}$ for
      $k=1,2,3$, and $\hat{\epsilon}_0(n_0) \in \frac{\mathbb{N}}{2} \setminus
      \{ \frac{1}{2} \}$.
      \item For $\mu \in \{0,1,2,3\}$, $\hat{\epsilon}_\mu(n_\mu) = 0$ if and
      only if $n_\mu$ is a pure gauge field.
   \end{enumerate}
\end{theorem}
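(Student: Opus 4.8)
The plan is to read both statements directly off the closed-form expressions for $\hat{\epsilon}_\mu(n_\mu)$ established in theorem~\ref{theo:epshat} (equivalently, the axial-gauge version in corollary~\ref{corollary:epshat:axial}), combined with the gauge-equivalence characterizations of theorems~\ref{theo:gauge-equivalence-k} and~\ref{theo:gauge-equivalence-0}. The only input beyond these is the one-particle irreducibility of the relevant graphs (proposition~\ref{prop:1PI}), which I expect to enter \emph{solely} to exclude the value $\tfrac12$. Throughout I use that Wilson loops $n_\mu(C)=\sum_{\ell\in C}[C:\ell]\,n_\mu(\ell)$ and Wilson lines $n_0(P)$ are integers.

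For the spatial directions, eq.~\eqref{eq:theo:epshat:k} gives $\hat{\epsilon}_k(n_k)=\sum_{C\in\mathbf{C}(T_k)}|n_k(C)|$, a finite sum of non-negative integers, so $\hat{\epsilon}_k(n_k)\in\mathbb{N}$. For the temporal direction, eq.~\eqref{eq:theo:epshat:0} writes $\hat{\epsilon}_0(n_0)=A+B+\min\{0,(p_0^++p_0^0-p_0^-)/2\}$ with $A=\sum_C|n_0(C)|\in\mathbb{N}$, $B=\sum_P|n_0(P)|\in\mathbb{N}$, and a last term that is either $0$ or a negative half-integer; hence $\hat{\epsilon}_0(n_0)\in\tfrac12\mathbb{Z}$, and since it is the minimum of a sum of absolute values it is non-negative, giving $\hat{\epsilon}_0(n_0)\in\tfrac{\mathbb{N}}{2}$.

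The delicate point, and the one I expect to be the main obstacle, is ruling out $\hat{\epsilon}_0(n_0)=\tfrac12$; note that this value is genuinely attained on a single line joining $a$ and $b$ with $n_0(\ell)=-1$, so the argument must use more than connectedness. I would argue by contradiction from eq.~\eqref{eq:theo:epshat:0}. If $p_0^++p_0^0\ge p_0^-$ the $\min$ vanishes and $\hat{\epsilon}_0=A+B\in\mathbb{N}$, excluding $\tfrac12$. Otherwise, using $B=\sum_P|n_0(P)|\ge p_0^++p_0^-$ (each non-trivial Wilson line contributes at least $1$), one obtains $\hat{\epsilon}_0\ge A+\tfrac12(3p_0^++p_0^0+p_0^-)$, so $\hat{\epsilon}_0=\tfrac12$ would force $2A+3p_0^++p_0^0+p_0^-\le1$; since we are in the regime $p_0^-\ge1$, the unique non-negative-integer solution is $A=0$ and $(p_0^+,p_0^0,p_0^-)=(0,0,1)$. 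This says $\mathbf{P}_{ab}(\mathring{T}_0)$ — equivalently the cut-set $S(\mathring{T}_0)$ that disconnects $a$ and $b$ — consists of a single line. But the graphs at hand are 1PI (proposition~\ref{prop:1PI}), and 1PI between $a$ and $b$ yields two disjoint $a$--$b$ paths (theorem 2.2 of ref.~\cite{\MartinStable}), so every cut-set separating $a$ and $b$ has at least two lines, a contradiction. Hence $\hat{\epsilon}_0(n_0)\neq\tfrac12$.

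Finally, for statement~2 I would treat the two implications separately. If $n_\mu$ is pure, then $n_\mu(\ell)=\lambda_\mu[f(\ell)]-\lambda_\mu[i(\ell)]$ for an admissible $\lambda$; choosing $x_\mu(v)=-\lambda_\mu(v)$ makes every summand $\delta x_\mu(\ell)+n_\mu(\ell)$ vanish, and admissibility ($\lambda(a)=0$, and $\lambda_0(b)=0$ for $\mu=0$, which places $x_0(b)=0$ in the allowed interval $[0,\tfrac12]$) guarantees the assignment is admissible in the minimizations~\eqref{eq:hatepsilon-k} and~\eqref{eq:hatepsilon-0}, so $\hat{\epsilon}_\mu(n_\mu)=0$. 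Conversely, since the bounds~\eqref{eq:theo:epshat:bound:k} and~\eqref{eq:theo:epshat:bound:0} are sums of non-negative terms, $\hat{\epsilon}_\mu(n_\mu)=0$ forces all relevant Wilson loops $n_\mu(C)$ with $C\in\mathbf{C}(T_\mu)$ — and, for $\mu=0$, also all Wilson lines $n_0(P)$ with $P\in\mathbf{P}_{ab}(\mathring{T}_0)$ — to vanish. By theorems~\ref{theo:gauge-equivalence-k} and~\ref{theo:gauge-equivalence-0}, the vanishing of precisely these invariants is equivalent to $n_\mu$ being gauge equivalent to the zero field, i.e.~pure, which completes the proof.
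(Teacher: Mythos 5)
Your proof is correct and follows essentially the same route as the paper's: both read statement~1 off the closed forms of theorem~\ref{theo:epshat}, exclude the value $\tfrac{1}{2}$ by a contradiction that forces the cut-set $S(\mathring{T}_0)$ to consist of a single line disconnecting $a$ and $b$ (impossible since the graph is 1PI between $a$ and $b$, proposition~\ref{prop:1PI}), and obtain statement~2 from the non-negativity of the terms in the bounds. The only cosmetic difference is that you remain in the gauge-invariant Wilson-loop/Wilson-line formulation and invoke theorems~\ref{theo:gauge-equivalence-k} and~\ref{theo:gauge-equivalence-0} for the ``$0\Rightarrow$ pure'' direction, whereas the paper passes to axial gauge via corollary~\ref{corollary:epshat:axial} and shows $n_\mu=0$ directly in that gauge.
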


\begin{proof}
   
   We prove each point separately.

   \begin{enumerate}[topsep=0pt,itemsep=1ex]
      
      \item With no loss of generality, we assume that $n$ is in axial gauge as
      in corollary~\ref{corollary:epshat:axial}. The observation that
      \begin{gather}
         \hat{\epsilon}_k(n_k) \in \mathbb{N}
         \quad \text{and} \quad
         \hat{\epsilon}_0(n_0) \in \tfrac{\mathbb{N}}{2} \,,
      \end{gather}
      follows trivially from eqs.~\eqref{eq:corollary:epshat:axial:k}
      and~\eqref{eq:corollary:epshat:axial:0}. We only need to prove that the
      value $\hat{\epsilon}_0(n_0) = 1/2$ is excluded. Let us work by
      contradiction and assume that $\hat{\epsilon}_0(n_0)=1/2$. Since
      $\hat{\epsilon}_0(n_0)=1/2$ is not an integer, thanks to
      eq.~\eqref{eq:corollary:epshat:axial:0}, we must have
      \begin{gather}
         p_0^+(n|\mathring{T}_0) + p_0^0(n|\mathring{T}_0) - p_0^-(n|\mathring{T}_0) < 0
         \ .
         \label{eq:theo:onehalf:ineq}
      \end{gather}
      The bound~\eqref{eq:corollary:epshat:bound:0} implies that
      \begin{gather}
         \frac{1}{2}
         =
         \hat{\epsilon}_0(n_0)
         \ge
         c_0(n|\mathring{T}_0)
         + \frac{3}{2} p_0^+(n|\mathring{T}_0)
         + \frac{1}{2} p_0^-(n|\mathring{T}_0)
         + \frac{1}{2} p_0^0(n|\mathring{T}_0)
         \ ,
      \end{gather}
      which implies, together with the inequality~\eqref{eq:theo:onehalf:ineq},
      \begin{gather}
         p_0^+(n|\mathring{T}_0) = p_0^0(n|\mathring{T}_0) = 0
         \ , \qquad
         p_0^-(n|\mathring{T}_0) = 1
         \ .
      \end{gather}
      In particular this means that $S(\mathring{T}_0)$ contains only one
      element. Let $\ell$ be the only line in $S(\mathring{T}_0)$. Every path
      from $a$ to $b$ must contain $\ell$, which means that $\ell$ disconnects
      $a$ and $b$. This is in contradiction with the fact that $\mathcal{G}$ is
      1PI between $a$ and $b$ (proposition~\ref{prop:1PI}).
      
      \item We need to prove two implications. First note that
      eqs.~\eqref{eq:theo:epshat:k} and \eqref{eq:theo:epshat:0} are gauge
      invariant. If $n_\mu$ is pure, up to a gauge transformation $n_\mu=0$. By
      using eqs.~\eqref{eq:theo:epshat:k} and \eqref{eq:theo:epshat:0} in this
      gauge, one trivially finds $\hat{\epsilon}_\mu(n_\mu)=0$.
      
      Let us prove the other implication. We assume that $n$ is in axial gauge
      as in corollary~\ref{corollary:epshat:axial}, and we want to prove that
      $\hat{\epsilon}_\mu(n_\mu) = 0$ implies $n_\mu=0$.
      
      If $\hat{\epsilon}_k(n_k) = 0$,
      bound~\eqref{eq:corollary:epshat:bound:k} becomes simply
      \begin{gather}
         c_k(n|T_k) \le 0 \ .
      \end{gather}
      Since $c_k(n|T_k)$ is the number of lines $\ell$ in $T_k^*$ with
      $n_k(\ell) \neq 0$, this means that $c_k(n|T_k)=0$, and $n_k(\ell)=0$ for
      all lines in $T_k^*$. On the other hand, since $n_k$ is in axial gauge
      with respect to $T_k$, $n_k(\ell)=0$ for all lines in $T_k$. Therefore
      $n_k=0$.

      If $\hat{\epsilon}_0(n_0) = 0$,
      using bound~\eqref{eq:corollary:epshat:bound:0} one sees that
      \begin{gather}
         c_0(n|\mathring{T}_0)
         + p_0^+(n|\mathring{T}_0) + \frac{1}{2} p_0^-(n|\mathring{T}_0)
         +   \min \bigg\{ \frac{p_0^-(n|\mathring{T}_0)}{2} , \frac{ p_0^+(n|\mathring{T}_0) + p_0^0(n|\mathring{T}_0) }{2} \bigg\} \le 0
         \ .
      \end{gather}
      Since the left-hand side is a sum of non-negative terms, this implies
      \begin{gather}
         c_0(n|\mathring{T}_0) = p_0^+(n|\mathring{T}_0) = p_0^-(n|\mathring{T}_0) = 0
         \ .
      \end{gather}
      Again, by looking at the definitions of these numbers, one easily sees
      that this implies that $n_0(\ell)=0$ for all lines in $\mathring{T}_0^*$.
      On the other hand, since $n_0$ is in axial gauge with respect to
      $\mathring{T}_0$, $n_0(\ell)=0$ for all lines in $\mathring{T}_0$.
      Therefore $n_0=0$.

   \end{enumerate}

\end{proof}

\subsection{Characterization of gauge fields with \texorpdfstring{$\hat{\epsilon}_\text{s}(\vec{n}) < \sqrt{2+\sqrt{3}}$}{epsilon\_s(n)<1.93}}
\label{app:subsec:sqrt}

In this appendix we use the following definitions, which are compatible with the
ones provided in sections~\ref{subsec:diag:leading-L} and
\ref{subsec:diag:leading-T}.

\textit{Localized gauge fields. Localizable and simple gauge fields and orbits.}
Given a subset of lines $A \subset \mathcal L$, the gauge field $n_k$ (resp.
$\vec{n}$) is said to be localized on $A$ if and only if it is zero for each line
in $\mathcal{L} \setminus A $, and non-zero on each line in $A$. The gauge field
$n_k$ (resp. $\vec{n}$) and the gauge orbit $[n_k]$ (resp. $[\vec{n}]$) are said
to be localizable on $A$ if and only $n_k$ (resp. $\vec{n}$) is gauge equivalent
to a gauge field that is localized on $A$. The gauge field $n_k$ (resp.
$\vec{n}$) and the gauge orbit $[n_k]$ (resp. $[\vec{n}]$) are said to be simple
if and only if they are localizable on a single line.

\bigskip

\begin{theorem} \label{theo:one-k}
   If $\hat{\epsilon}_k(n_k) = 1$ then, up to a gauge transformation, $n_k$ is
   localized on a line $\ell$ and $|n_k(\ell)| = 1$.
\end{theorem}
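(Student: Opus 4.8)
The plan is to reduce the claim to an elementary statement about a sum of non-negative integers, exploiting the axial-gauge representation of $\hat{\epsilon}_k$ that has already been established. First I would invoke Theorem~\ref{theo:epshat} to fix a tree $T_k$ adapted to $n_k$. Since $\hat{\epsilon}_k$ is gauge invariant and, by Theorem~\ref{theo:axial}, every gauge orbit contains a unique representative in axial gauge with respect to $T_k$, there is no loss of generality in assuming $n_k$ itself to be in axial gauge with respect to $T_k$, i.e.~$n_k(\ell)=0$ for every $\ell \in T_k$. This is exactly the hypothesis under which Corollary~\ref{corollary:epshat:axial} applies.

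In this gauge, Corollary~\ref{corollary:epshat:axial} [eq.~\eqref{eq:corollary:epshat:axial:k}] gives the clean formula
\begin{gather}
   \hat{\epsilon}_k(n_k) = \sum_{\ell \in T_k^*} |n_k(\ell)| \,. \nonumber
\end{gather}
The next and essentially final step is to impose the hypothesis $\hat{\epsilon}_k(n_k)=1$. Because each $|n_k(\ell)|$ is a non-negative integer and the $T_k^*$-sum equals one, exactly one line $\ell \in T_k^*$ can carry a nonzero value, and that value must satisfy $|n_k(\ell)|=1$; all remaining lines in $T_k^*$ then have $n_k=0$. Combined with the axial-gauge condition $n_k(\ell')=0$ for $\ell' \in T_k$, this shows that $n_k$ vanishes on every line except the single line $\ell$, on which $|n_k(\ell)|=1$. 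Hence the axial-gauge representative of $n_k$ is localized on one line with unit value, which is precisely the assertion of the theorem (up to the gauge transformation relating $n_k$ to its representative).

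I do not expect any genuine obstacle here: all the graph-theoretic work has been absorbed into Theorem~\ref{theo:epshat} and the axial-gauge machinery of Corollary~\ref{corollary:epshat:axial}, so the remaining content is the integrality of $\hat{\epsilon}_k$ (part~1 of Theorem~\ref{theo:onehalf}) together with the trivial fact that a sum of non-negative integers equal to $1$ has a single unit summand. If anything requires care, it is only the bookkeeping in returning from the chosen axial-gauge representative to a statement ``up to a gauge transformation'' about the original $n_k$, but this is immediate from gauge invariance of $\hat{\epsilon}_k$ and the existence part of Theorem~\ref{theo:axial}.
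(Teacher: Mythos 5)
Your proof is correct and is essentially the paper's own argument: both pass to the axial-gauge representative via Theorem~\ref{theo:axial}, apply the formula $\hat{\epsilon}_k(n_k) = \sum_{\ell \in T_k^*} |n_k(\ell)|$ of Corollary~\ref{corollary:epshat:axial}, and conclude from integrality that exactly one line in $T_k^*$ carries $|n_k(\ell)|=1$ while the axial-gauge condition kills everything on $T_k$. The paper's version is simply a more compressed write-up of the same steps.
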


\begin{proof}
   
   If $\hat{\epsilon}_k(n_k)=1$,
   eq.~\eqref{eq:corollary:epshat:axial:k} implies that a line $\bar{\ell}$
   exists such that $|n_k(\bar{\ell})|=1$, and $n_k(\ell)=0$ for $\ell \in
   T_k^* \setminus \{\bar{\ell}\}$. The thesis follows from the observation
   that $n_k$ is in axial gauge with respect to $T_k$, therefore
   $n_k(\ell)=0$ for every $\ell \neq \bar{\ell}$.
   
\end{proof}

\bigskip

\begin{theorem} \label{theo:s-equivalence}
   The following notions of $s$-equivalence between two lines $\ell_1$ and
   $\ell_2$ are logically equivalent:
   \begin{enumerate}[topsep=0pt,itemsep=1ex]
      \item $\ell_1$ and $\ell_2$ are said to be $s$-equivalent if and only if either
      $\ell_1 = \ell_2$ or $\{\ell_1,\ell_2\}$ is a cut-set.
      \item $\ell_1$ and $\ell_2$ are said to be $s$-equivalent if and only if every loop
      containing $\ell_1$ contains also $\ell_2$.
   \end{enumerate}
   The notion of $s$-equivalence is indeed an equivalence relation, and
   $s$-equivalence classes are given by
   \begin{gather}
      [ \ell ]_s = \bigcap_{\substack{ C \textrm{ is a loop} \\ \textrm{with } \ell \in C }} C \ .
   \end{gather}
\end{theorem}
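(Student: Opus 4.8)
The plan is to establish the three claims in the order stated: first the logical equivalence of the two definitions of $s$-equivalence, then the equivalence-relation property, and finally the explicit formula for $[\ell]_s$. The reason to pin down definition~2 (loop containment) early is that it trivializes the later steps: transitivity and the intersection formula are essentially immediate once $s$-equivalence is phrased in terms of loops, whereas they are awkward to read off directly from the cut-set definition. Throughout I would assume, as is automatic for the 1PI graphs of interest (Proposition~\ref{prop:1PI}), that every line lies on at least one loop; a line on no loop is a bridge and forms a singleton class under definition~1, a degenerate case I would flag separately since for it the empty-intersection formula must be read by convention.

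The heart of the matter is a lemma relating the two definitions: for distinct lines $\ell_1,\ell_2$, the set $\{\ell_1,\ell_2\}$ is a cut-set if and only if every loop containing $\ell_1$ contains $\ell_2$ and vice versa. For the forward direction I would use that a cut-set $\{\ell_1,\ell_2\}$ splits $\mathcal{G}$ into exactly two components $A$ and $B$, with $\ell_1$ and $\ell_2$ the only lines joining them: minimality forces each of the two lines individually to reconnect $A$ and $B$, and no third crossing line can exist or the deletion would not disconnect. A loop through $\ell_1$ crosses the $A$--$B$ boundary along $\ell_1$ and, being a closed walk with distinct lines, must cross back; the only remaining crossing line is $\ell_2$, so the loop contains $\ell_2$. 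The converse is the more delicate direction and I expect it to be the main obstacle. Here I would argue that if every loop through $\ell_1$ contains $\ell_2$, then $\ell_1$ lies on no loop of $\mathcal{G}-\{\ell_2\}$, i.e.\ $\ell_1$ is a bridge of $\mathcal{G}-\{\ell_2\}$; hence deleting both lines separates the endpoints of $\ell_1$ and disconnects $\mathcal{G}$. Minimality then follows because $\ell_1$ is on a loop (our standing assumption), and therefore so is $\ell_2$ as it lies on that same loop, so neither line is itself a bridge of $\mathcal{G}$ and neither single deletion disconnects the graph. This shows $\{\ell_1,\ell_2\}$ is a genuine cut-set and, by the forward direction with the roles reversed, that the one-sided loop-containment condition is automatically symmetric once $\ell_1$ lies on a loop.

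With the lemma in hand, the equivalence-relation property is straightforward from definition~2: reflexivity and symmetry are immediate, and transitivity is a short loop-chase, since if every loop through $\ell_1$ contains $\ell_2$ and every loop through $\ell_2$ contains $\ell_3$, then any loop through $\ell_1$ is in particular a loop through $\ell_2$ and hence contains $\ell_3$, with the reverse inclusion following symmetrically. For the class formula I would show $[\ell]_s=\bigcap_{C\ni\ell}C$ by identifying the right-hand side with the set of lines $\ell'$ contained in every loop through $\ell$. One inclusion is just the one-sided containment condition; for the other I would invoke the lemma, noting that the statement ``$\ell'$ lies on every loop through $\ell$'' already puts $\ell'$ on some loop (any loop through $\ell$, which exists by assumption) and hence, by the converse direction, forces $\{\ell,\ell'\}$ to be a cut-set and $\ell'\sim\ell$ in the full symmetric sense. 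The only point requiring care is again the bridge case, where the intersection is over the empty family and must be assigned the value $\{\ell\}$ to remain consistent with definition~1; this is exactly the situation excluded for 1PI graphs.
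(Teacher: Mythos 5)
Your proof is correct and follows essentially the same route as the paper's: both reduce the theorem to the logical equivalence of the two notions, prove the cut-set-to-loop direction via the fact that both lines of the cut-set must cross between the two components of $\mathcal{G}-S$, and use the 1PI property (in your case as ``every line lies on a loop'', in the paper's as connectedness of $\mathcal{G}-S$ when $S$ is not a cut-set) for the converse, after which the equivalence-relation property and the class formula are immediate from notion~2. The only cosmetic difference is that the paper argues the converse by contradiction, constructing a loop through $\ell_1$ avoiding $\ell_2$, whereas you argue it directly through the bridge property of $\ell_1$ in $\mathcal{G}-\{\ell_2\}$ together with an explicit minimality check.
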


\begin{proof}
   Reflexivity and symmetry of $s$-equivalence are manifest in notion 1, while
   transitivity is manifest in notion 2. The characterization of the
   $s$-equivalence classes follows immediately from notion 2. We are left with
   the task to prove the logical equivalence of the two notions.
   
   \textit{Implication} 1 $\Rightarrow$ 2. The claim is trivial for
   $\ell_1=\ell_2$. Assume $\ell_1 \neq \ell_2$. We need to prove that, if $S =
   \{\ell_1,\ell_2\}$ is a cut-set made of two lines, every loop which contains
   $\ell_1$ contains also $\ell_2$. In fact, assume by contradiction that $C$ is
   a loop containing $\ell_1$ but not $\ell_2$, then $C \setminus \{ \ell_1 \}$
   is a path in $\mathcal{G}-S$ connecting the two endpoints of $\ell_1$.
   However, from the definition of cut-set, it easily follows that the two
   endpoints of every line in $S$ belong to different connected components of
   $\mathcal{G}-S$, which proves that the loop $C$ does not exist.

   \textit{Implication} 2 $\Rightarrow$ 1. We need to prove that, given
   $\ell_1 \neq \ell_2$, if every loop containing $\ell_1$ contains also
   $\ell_2$, then $S = \{\ell_1,\ell_2\}$ is a cut-set. In fact, assume by
   contradiction that $S$ is not a cut-set. Since $\mathcal{G}$ is 1PI it
   follows that $\mathcal{G}-S$ is connected. Therefore a path $P$ exists in
   $\mathcal{G}-S$ which connects the two endpoints of $\ell_1$. Then $P \sqcup
   \{\ell_1\}$ is a loop containing $\ell_1$ but not $\ell_2$, in contradiction
   with the hypothesis.
\end{proof}

\bigskip

\begin{theorem} \label{theo:s-localization}
   The orientation of the lines of $\mathcal{G}$ can be chosen in such a way
   that, if $\ell$ and $\ell'$ are $s$-equivalent and $C$ is a loop that
   contains both, then $[C:\ell]=[C:\ell']$, i.e.~either both $\ell$ and $\ell'$
   have the same orientation as $C$, or they both have the opposite orientation
   of $C$. This choice of orientation is assumed here.
   
   Let $n_k$ and $n'_k$ be gauge fields localized on $\ell$ and $\ell'$
   respectively. $n_k$ and $n'_k$ are gauge equivalent if and only if $\ell$ and
   $\ell'$ are $s$-equivalent and $n_k(\ell)=n'_k(\ell')$.
   
   Let $\vec{n}$ and $\vec{n}'$ be gauge fields localized on $\ell$ and $\ell'$
   respectively. $\vec{n}$ and $\vec{n}'$ are gauge equivalent if and only if
   $\ell$ and $\ell'$ are $s$-equivalent and $\vec{n}(\ell)=\vec{n}'(\ell')$.
\end{theorem}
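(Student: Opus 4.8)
The plan is to prove the three assertions in turn, treating the orientation claim first, since the two gauge-equivalence statements follow almost mechanically once a consistent orientation is in hand. First I would note that $\mathcal{G}$, being 1PI, has no bridges, so every line lies on a loop; in particular, by theorem~\ref{theo:s-equivalence}, any $s$-equivalence class $[\ell]_s$ with more than one element is threaded by at least one common loop, which necessarily contains all of its lines. Fix such a class $E = \{\ell_1,\dots,\ell_m\}$ and consider the multigraph $H$ obtained from $\mathcal{G}$ by contracting every line not in $E$; its vertices are the connected components of $\mathcal{G}-E$ and its edges are $\ell_1,\dots,\ell_m$. I would then show that $H$ is a single cycle $C_m$: contraction preserves 2-edge-connectivity; no $\ell_i$ can become a self-loop and no two of them can become parallel, for otherwise a path in $\mathcal{G}-E$ would close one $\ell_i$ (or a pair) into a loop using only a proper subset of $E$, contradicting $s$-equivalence; and, again by $s$-equivalence, every cycle of $H$ must use all $m$ edges, so $H$ has a unique cycle and hence cycle rank one, which together with 2-edge-connectivity forces $H \cong C_m$. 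Orienting the $\ell_i$ consistently around this cycle is the desired choice: any loop $C$ of $\mathcal{G}$ through $E$ projects to a single traversal of $C_m$, so all class lines are crossed in the same sense and $[C:\ell_i]=[C:\ell_j]$ for all $i,j$.

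Second, for the statement about $n_k$ I would invoke theorem~\ref{theo:gauge-equivalence-k}, by which $n_k$ and $n'_k$ are gauge equivalent if and only if $n_k(C)=n'_k(C)$ for every loop $C$ (Wilson loops being gauge invariant and linear on the cycle space, equality on the basis $\mathbf{C}(T)$ is equality on all loops). For a field localized on a single line one has $n_k(C)=[C:\ell]\,n_k(\ell)$ and likewise $n'_k(C)=[C:\ell']\,n'_k(\ell')$. If $\ell$ and $\ell'$ are $s$-equivalent and $n_k(\ell)=n'_k(\ell')$, then every loop either contains both lines, in which case the two Wilson loops agree because $[C:\ell]=[C:\ell']$ by the chosen orientation, or contains neither, giving $0=0$; hence the fields are gauge equivalent. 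Conversely, choosing a loop through $\ell$ (which exists) and using $n_k(\ell)\neq 0$ shows that every loop through $\ell$ must contain $\ell'$ and vice versa, i.e.\ $\ell$ and $\ell'$ are $s$-equivalent; evaluating on a common loop then forces $n_k(\ell)=n'_k(\ell')$.

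Third, the vector statement reduces to the scalar one componentwise: a single $\mathbb{Z}^3$-valued gauge transformation is exactly the data of three independent $\mathbb{Z}$-valued ones, so $\vec{n}$ and $\vec{n}'$ are gauge equivalent if and only if $n_k$ and $n'_k$ are for each $k=1,2,3$. The only care needed is with vanishing components: a field localized on a line with nonzero value is never pure, since its Wilson loop along a loop through that line is nonzero, so by proposition~\ref{prop:pure-s} a component that is nonzero on $\ell$ cannot be gauge equivalent to one vanishing on $\ell'$. Applying the scalar result to a component where $\vec{n}(\ell)\neq \vec{0}$ yields that $\ell$ and $\ell'$ are $s$-equivalent, and then matching each component, using non-purity to exclude the mixed zero/nonzero case, gives $\vec{n}(\ell)=\vec{n}'(\ell')$. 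I expect the genuine obstacle to be the orientation claim for classes with $m\geq 3$: the naive pairwise cut-set argument fixes only the relative orientation of each pair and leaves global consistency unproven, which is precisely what the identification $H\cong C_m$ supplies.
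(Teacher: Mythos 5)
Your proof is correct, and for the first (orientation) claim it takes a genuinely different route from the paper. The paper argues pairwise: for a reference line $\ell$ it defines a relative sign $\omega_\ell(\ell') = [C:\ell]/[C:\ell']$ using any loop $C$ through $\ell$ (hence through $\ell'$), shows this is independent of the choice of $C$ by applying the loop/cut-set orthogonality relation (theorem 2-14 of ref.~\cite{\GraphBook}) to the cut-set $\{\ell,\ell'\}$, and then flips orientations according to $\omega_\ell$. Your closing worry---that a pairwise argument cannot give global consistency for classes with three or more lines---is overstated: since every loop through any two class members automatically passes through the reference line $\ell$ (by $s$-equivalence), anchoring all signs to $\ell$ yields $[C:\ell']=[C:\ell]=[C:\ell'']$ at once, which is exactly how the paper closes the argument. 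Your alternative, contracting everything outside the class and identifying the contracted multigraph as the cycle $C_m$ (no self-loops or parallel pairs by $s$-equivalence, unique cycle hence cycle rank one, bridgelessness preserved under contraction), is valid and arguably more illuminating: it exhibits the $s$-equivalence class structurally as a set of lines ``in series,'' at the cost of somewhat heavier graph-theoretic machinery. Your treatment of the two gauge-equivalence statements coincides with the paper's (Wilson loops of localized fields, case analysis on loops containing both lines or neither, theorem~\ref{theo:gauge-equivalence-k}); for the vector statement the paper only remarks that it is ``a simple application of the second part,'' and your componentwise reduction---including the observation that a component vanishing on $\ell$ is pure and so cannot be gauge equivalent to a component that is nonzero on $\ell'$---is a correct and complete filling-in of that gap (the citation of proposition~\ref{prop:pure-s} there is unnecessary, since the Wilson-loop argument you also give suffices).
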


\begin{proof}
   Choose a line $\ell$, then one can define the map $\omega_\ell : [\ell]_s \to
   \{-1,1\}$ in the following way. Let $C$ be a loop that contains $\ell$ (this
   loop exists since $\mathcal{G}$ is 1PI) and a particular orientation for $C$,
   then we define for $\ell' \in [\ell]_s$
   \begin{gather}
      \omega_\ell(\ell') = \frac{[C : \ell]}{[C : \ell']} \ .
   \end{gather}
   We want to see that this definition depends neither on the orientation of
   $C$, nor on the loop $C$ (as long as $\ell \in C$). This is obvious for
   $\ell'=\ell$, since $\omega_\ell(\ell)=1$. Therefore we can assume that
   $\ell' \in [\ell]_s$ and $\ell' \neq \ell$.
   
   First we note that flipping the orientation of $C$ is equivalent to
   replacing $[C : \ell''] \to -[C : \ell'']$ for every line $\ell''$.
   $\omega_\ell(\ell')$ is invariant under this replacement, i.e.
   $\omega_\ell(\ell')$ does not depend on the orientation of $C$. Now assume
   that $C'$ is another loop that contains $\ell$. Thanks to
   theorem~\ref{theo:s-equivalence}, $C'$ contains also $\ell'$. Also, thanks to
   the same theorem, $S=\{\ell,\ell'\}$ is a cut-set. Choose an orientation for
   $S$. Then theorem 2-14 in \cite{nakanishi1971graph} implies that
   \begin{gather}
      [ C : \ell ] [ S : \ell ] + [ C : \ell' ] [ S : \ell' ] = 0 \ , \\
      [ C' : \ell ] [ S : \ell ] + [ C' : \ell' ] [ S : \ell' ] = 0 \ ,
   \end{gather}
   which imply
   \begin{gather}
      \frac{[ C : \ell ]}{[ C : \ell' ]} = - \frac{[ S : \ell' ]}{[ S : \ell ]} = \frac{[ C' : \ell ]}{[ C' : \ell' ]} \ .
   \end{gather}
   This is exactly the statement that $\omega_\ell(\ell')$ does not depend on
   the particular choice of $C$.
   
   One can define a new orientation for the lines $\ell' \in [\ell]_s$ in the
   following way: one keeps the original orientation of $\ell'$ if
   $\omega_\ell(\ell')=+1$, and one flips the orientation of $\ell'$ if
   $\omega_\ell(\ell')=-1$. With this new orientation is it straightforward to
   prove that, for every loop $C$ that contains both $\ell$ and $\ell'$,
   $\frac{[C : \ell]}{[C : \ell']}=1$. This concludes the proof of the first
   part of the theorem.
   
   In the following we assume that lines are oriented as explained above. We
   observe that the third part of the theorem is a simple application of the
   second part. We assume that $n_k$ and $n'_k$ are gauge fields localized on
   $\ell$ and $\ell'$ respectively, and we prove the two implications of the
   second part separately.
   
   $\bullet \ $ \textit{$n_k$ and $n'_k$ are gauge equivalent $\Rightarrow$
   $\ell$ and $\ell'$ are $s$-equivalent and $n_k(\ell)=n'_k(\ell')$.}
      
   To prove that $\ell$ and $\ell'$ are $s$-equivalent, we need to prove that
   any loop containing $\ell$ contains also $\ell'$. Let $C$ be a loop which
   contains $\ell$. Note that
   \begin{gather}
      [C:\ell'] n'_k(\ell') = n'_k(C) = n_k(C) = [C:\ell] n_k(\ell) \neq 0 \ ,
      \label{theo:s-localization:eq0}
   \end{gather}
   where we have used the fact that $n'_k$ is localized on $\ell'$ in the first
   equality, gauge invariance of the Wilson loops in the second equality, the
   fact that $n_k$ is localized on $\ell$ in the third equality, and finally the
   assumption that $\ell \in C$ in the final inequality. It follows that
   $[C:\ell'] \neq 0$, i.e.~$\ell' \in C$, and thus that $\ell$ and $\ell'$
   are $s$-equivalent. Because of the particular choice of orientation for the
   lines of the equivalence class, $[C : \ell]=[C : \ell']$. Together with the
   eq.~\ref{theo:s-localization:eq0}, this implies immediately
   $n_k(\ell)=n'_k(\ell')$.
   
   $\bullet \ $ \textit{$\ell$ and $\ell'$ are $s$-equivalent and
   $n_k(\ell)=n'_k(\ell')$ $\Rightarrow$ $n_k$ and $n'_k$ are gauge equivalent.}

   Consider a loop $C$. Because of $s$-equivalence, two
   possibilities are given: either $C$ contains both $\ell$ and $\ell'$, or it
   contains neither of the two.
   If $C$ contains both $\ell$ and $\ell'$,
   \begin{gather}
      n'_k(C) = [C:\ell'] n'_k(\ell') = [C:\ell] n_k(\ell) = n_k(C) \ ,
   \end{gather}
   where we have used the fact that $n'_k$ is localized on $\ell'$ in the first
   equality, the fact that $[{C}:\ell']=[{C}:\ell]$ (choice of
   orientation) and $n'_k(\ell')=n_k(\ell)$ (hypothesis) in the second equality,
   and finally the fact that $n_k$ is localized on $\ell$ in the third equality.
   On the other hand, if $C$ contains neither $\ell$ nor $\ell'$, by hypothesis
   of localization,
   \begin{gather}
      n'_k(C) = 0 = n_k(C) \ .
   \end{gather}
   Therefore, for all loops $C$, $n'_k(C)=n_k(C)$. Gauge equivalence of $n_k$
   and $n'_k$ follows from theorem~\ref{theo:gauge-equivalence-k}.

\end{proof}

\begin{theorem} \label{theo:pure-s}
   $\hat{\epsilon}_\text{s}(\vec{n}) = 0$ if and only if $\vec{n}$ is a pure
   gauge field. If $\vec{n}$ is not a pure gauge field, then
   $\hat{\epsilon}_\text{s}(\vec{n}) \ge 1$.
\end{theorem}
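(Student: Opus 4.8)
The plan is to reduce the three-dimensional minimization defining $\hat{\epsilon}_\text{s}(\vec{n})$ to the one-dimensional problems already solved for $\hat{\epsilon}_k(n_k)$ in theorem~\ref{theo:onehalf}, exploiting that the Euclidean norm is sandwiched between its $\ell^\infty$ and $\ell^1$ counterparts. First I would record the pointwise inequalities $|v_k| \le \|\vec{v}\|_2 \le \sum_{j} |v_j|$ applied to $\vec{v} = \delta\vec{x}(\ell) + \vec{n}(\ell)$. Summing the left inequality over all lines $\ell$ at a fixed vertex assignment $\vec{x}$ gives, for each $k$,
\[
\sum_{\ell} \|\delta\vec{x}(\ell) + \vec{n}(\ell)\|_2 \ge \sum_{\ell} |\delta x_k(\ell) + n_k(\ell)| \ge \hat{\epsilon}_k(n_k),
\]
and minimizing the left-hand side over all $\vec{x}$ with $\vec{x}(a) = \vec{0}$ yields the key lower bound $\hat{\epsilon}_\text{s}(\vec{n}) \ge \hat{\epsilon}_k(n_k)$ for every $k \in \{1,2,3\}$. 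The companion upper bound $\hat{\epsilon}_\text{s}(\vec{n}) \le \sum_k \hat{\epsilon}_k(n_k)$ follows by inserting the product assignment built from the minimizers $\bar{x}_k$ of lemma~\ref{lemma:minumum-k} and using the $\ell^1$ bound; it is not strictly needed below but confirms the picture.

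Next I would settle the role of purity. A spatial gauge transformation $\vec{n}(\ell) \to \vec{n}(\ell) + \vec{\lambda}[f(\ell)] - \vec{\lambda}[i(\ell)]$ with $\vec{\lambda}(v) \in \mathbb{Z}^3$ acts independently on each Cartesian component, so $\vec{n}$ is gauge equivalent to zero if and only if each $n_k$ is individually gauge equivalent to zero; that is, $\vec{n}$ is pure precisely when all three $n_k$ are pure. Moreover $\hat{\epsilon}_\text{s}$ is gauge invariant, since the shift $\vec{x}(v) \to \vec{x}(v) - \vec{\lambda}(v)$ (admissible because $\vec{\lambda}(a) = \vec{0}$) compensates the gauge transformation in each summand while preserving the constraint $\vec{x}(a) = \vec{0}$. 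Hence if $\vec{n}$ is pure I may take $\vec{n} = \vec{0}$ without loss, and choosing $\vec{x} \equiv \vec{0}$ shows $\hat{\epsilon}_\text{s}(\vec{n}) = 0$.

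Finally I would close both remaining directions using theorem~\ref{theo:onehalf}, which states that each $\hat{\epsilon}_k(n_k)$ is a non-negative integer vanishing exactly when $n_k$ is pure. If $\hat{\epsilon}_\text{s}(\vec{n}) = 0$, the lower bound forces $\hat{\epsilon}_k(n_k) = 0$ for all $k$, so every $n_k$ is pure and therefore $\vec{n}$ is pure, establishing the ``only if'' direction of the first claim. If instead $\vec{n}$ is not pure, then some component $n_k$ is not pure, whence $\hat{\epsilon}_k(n_k) \ge 1$ by integrality, and the lower bound gives $\hat{\epsilon}_\text{s}(\vec{n}) \ge \hat{\epsilon}_k(n_k) \ge 1$. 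I expect no serious obstacle: the only point demanding care is that the three spatial coordinates, although coupled inside the Euclidean norm, decouple both in the componentwise lower bound and under gauge transformations, so that the deep input — integrality and the characterization of vanishing — can be imported wholesale from the single-direction analysis of theorem~\ref{theo:onehalf}.
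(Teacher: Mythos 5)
Your proof is correct and takes essentially the same route as the paper's: the componentwise lower bound $\hat{\epsilon}_\text{s}(\vec{n}) \ge \hat{\epsilon}_k(n_k)$, gauge invariance of $\hat{\epsilon}_\text{s}$ to handle the pure case, and the integrality/vanishing characterization of theorem~\ref{theo:onehalf} to close the other two directions. The extra upper bound $\hat{\epsilon}_\text{s}(\vec{n}) \le \sum_k \hat{\epsilon}_k(n_k)$ and the explicit remark that purity of $\vec{n}$ is equivalent to purity of each component are harmless additions the paper leaves implicit or states in passing.
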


\begin{proof}
   Note that, for any $k \in \{1,2,3\}$ one has $\left\| \delta \vec{x}(\ell)
   + \vec{n}(\ell) \right\|_2 \ge | \delta x_k(\ell) + n_k(\ell) |$, which
   implies the loose bound
   \begin{gather}
      \hat{\epsilon}_\text{s}(\vec{n}) \ge \hat{\epsilon}_k(n_k)
      \ , \label{eq:hatepsilon-s-k-ineq}
   \end{gather}
   where $\hat{\epsilon}_k(n_k)$ is defined in eq.~\eqref{eq:hatepsilon-k}. The
    statement is proven by the following observations:
   \begin{itemize}[topsep=0pt,itemsep=0ex]
      \item If $\hat{\epsilon}_\text{s}(\vec{n}) = 0$, by
      inquality~\eqref{eq:hatepsilon-s-k-ineq}, also $\hat{\epsilon}_k(n_k) =
      0$. Theorem~\ref{theo:onehalf} implies that $n_k$ is a pure gauge field
      for all $k \in \{1,2,3\}$, which is equivalent to say that $\vec{n}$ is a
      pure gauge field.
      \item If $\vec{n}$ is a pure gauge field, since
      $\hat{\epsilon}_\text{s}(\vec{n})$ is gauge invariant,
      $\hat{\epsilon}_\text{s}(\vec{n}) = \hat{\epsilon}_\text{s}(\vec{0}) = 0$.
      The second equality follows trivially from the fact that the minimum in
      eq.~\eqref{eq:hatepsilon-s} for $\vec{n}=\vec{0}$ is realized for
      $\vec{x}(v)=\vec{0}$.
      \item If $\vec{n}$ is not a pure gauge field, then a value of $k \in
      \{1,2,3\}$ exists such that $n_k$ is not pure. Thanks to
      inequality~\eqref{eq:hatepsilon-s-k-ineq} and theorem~\ref{theo:onehalf},
      $\hat{\epsilon}_\text{s}(\vec{n}) \ge \hat{\epsilon}_k(n_k) \ge 1$.
   \end{itemize}
\end{proof}

\bigskip

\begin{theorem} \label{theo:sqrt}
   If $1 \le \hat{\epsilon}_\text{s}(\vec{n}) < \sqrt{2+\sqrt{3}}$ then
   $\vec{n}$ is simple.
\end{theorem}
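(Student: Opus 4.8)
The plan is to reduce the genuinely three‑dimensional, coupled minimization defining $\hat{\epsilon}_\text{s}(\vec{n})$ to the one‑dimensional problems already solved in this appendix, to localize each Cartesian component on a single line, and only then to dispatch a two‑line residual case by a sharp geometric estimate. First I would combine the inequality $\hat{\epsilon}_\text{s}(\vec{n})\ge\hat{\epsilon}_k(n_k)$ [eq.~\eqref{eq:hatepsilon-s-k-ineq}] with theorem~\ref{theo:onehalf}, which gives $\hat{\epsilon}_k(n_k)\in\mathbb{N}$. Since $\sqrt{2+\sqrt{3}}<2$, the hypothesis forces $\hat{\epsilon}_k(n_k)\in\{0,1\}$ for each $k$. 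Those $k$ with $\hat{\epsilon}_k(n_k)=0$ have $n_k$ pure (theorem~\ref{theo:onehalf}) and I gauge them to zero; those with $\hat{\epsilon}_k(n_k)=1$ can, by theorem~\ref{theo:one-k}, be gauged so that $n_k$ is supported on a single line $\ell_k$ with $|n_k(\ell_k)|=1$. As an admissible gauge transformation acts componentwise (its parts $\lambda_k$ are independent, each constrained only by $\lambda_k(a)=0$), all of these can be imposed at once. The outcome is a representative in which $\vec{n}$ is supported on $\mathcal{S}=\{\ell_k:\hat{\epsilon}_k(n_k)=1\}$, with $\vec{n}(\ell)\in\{-1,0,1\}^3$; since each component lives on one line, the vectors $\{\vec{n}(\ell)\}_{\ell\in\mathcal{S}}$ have pairwise disjoint coordinate supports.

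If $\mathcal{S}=\varnothing$ then $\vec{n}$ is pure and $\hat{\epsilon}_\text{s}=0$, contradicting $\hat{\epsilon}_\text{s}\ge1$, so $\mathcal{S}\neq\varnothing$; if $|\mathcal{S}|=1$ the field is localized on one line and $\vec{n}$ is simple. For $|\mathcal{S}|\ge2$ I would split on $s$‑equivalence (theorem~\ref{theo:s-equivalence}). If all lines of $\mathcal{S}$ lie in one $s$‑equivalence class, theorem~\ref{theo:s-localization} transports each localized component, value preserved, onto a single common representative line $\ell_\ast$, after which $\vec{n}$ is supported on $\{\ell_\ast\}$ and is simple. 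This settles everything except the case in which $\mathcal{S}$ contains two lines $\ell,\ell'$ that are \emph{not} $s$‑equivalent.

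For that residual case the statement becomes a lower bound: two non‑$s$‑equivalent support lines must force $\hat{\epsilon}_\text{s}(\vec{n})\ge\sqrt{2+\sqrt{3}}$, contradicting the hypothesis and closing the argument by exhaustion. Writing $\vec{a}=\vec{n}(\ell)$ and $\vec{b}=\vec{n}(\ell')$, disjointness of supports gives $\vec{a}\cdot\vec{b}=0$ with $\|\vec{a}\|,\|\vec{b}\|\ge1$ and $\|\vec{a}-\vec{b}\|\ge\sqrt{2}$. Dropping the nonnegative terms on any subgraph that still contains $\ell,\ell'$ only decreases the sum, so it suffices to bound the reduced two‑source problem, for which the natural tool is the convex dual
\begin{equation}
\hat{\epsilon}_\text{s}(\vec{n})=\max\Big\{\textstyle\sum_{\ell}\vec{w}_\ell\cdot\vec{n}(\ell)\ :\ \|\vec{w}_\ell\|_2\le1,\ \vec{w}\ \text{a circulation}\Big\},
\end{equation}
obtained by exchanging min and max in $\sum_\ell\|\delta\vec{x}(\ell)+\vec{n}(\ell)\|_2=\sum_\ell\max_{\|\vec{w}_\ell\|\le1}\vec{w}_\ell\cdot(\delta\vec{x}(\ell)+\vec{n}(\ell))$, the minimization over $\vec{x}$ enforcing divergence‑freeness of $\vec{w}$. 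A lower bound then amounts to exhibiting one admissible circulation. Non‑$s$‑equivalence (theorem~\ref{theo:s-equivalence}) supplies a loop through $\ell$ avoiding $\ell'$ and one through $\ell'$ avoiding $\ell$, which I would assemble into a vector circulation carried by three ``strands'' with unit weights at mutual angles of $120^\circ$ in the $(\vec{a},\vec{b})$‑plane. This reproduces the Fermat–Torricelli length of the lattice triangle $\{0,\vec{a},\vec{b}\}$, namely $\sqrt{\tfrac{\|\vec{a}\|^2+\|\vec{b}\|^2+\|\vec{a}-\vec{b}\|^2}{2}+2\sqrt{3}\,\mathrm{Area}}\ge\sqrt{2+\sqrt{3}}$, with equality at $\vec{a}=e_i,\vec{b}=e_j$ realized by the sunset topology of figure~\ref{fig:diags}(b).

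The hard part will be making this $120^\circ$ circulation rigorous for an \emph{arbitrary} 1PI graph rather than the clean sunset. The obstruction is that two non‑$s$‑equivalent lines need not share any loop at all (e.g.\ two cycles meeting in a single vertex), so there is no universal ``theta'' subgraph to contract: in such looser configurations the two scalar circulations can be made edge‑disjoint and the dual value rises to $2$, whereas in the sunset they are forced to overlap and the optimum drops to exactly $2\cos(\pi/12)=\sqrt{2+\sqrt{3}}$. I would therefore quantify the minimal \emph{unavoidable} edge overlap of the two supporting loops through the cycle/cut (block) structure of $\mathcal{G}$ encoded by theorem~\ref{theo:s-equivalence}, and verify that no such overlap can ever depress the dual objective below $\sqrt{2+\sqrt{3}}$, the extremal value being attained precisely on the sunset.
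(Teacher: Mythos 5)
The first half of your argument coincides with the paper's: the componentwise bound $\hat{\epsilon}_\text{s}(\vec{n})\ge\hat{\epsilon}_k(n_k)$, theorem~\ref{theo:onehalf} to force $\hat{\epsilon}_k(n_k)\in\{0,1\}$, theorem~\ref{theo:one-k} to localize each nontrivial component on a single line, and theorems~\ref{theo:s-equivalence}/\ref{theo:s-localization} to conclude simplicity whenever all support lines fall in one $s$-equivalence class. All of that is sound (and the remark that admissible gauge transformations act independently on each spatial component is correct).

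The gap is in the residual case, and you have flagged it yourself: you never actually prove that two non-$s$-equivalent support lines $\ell_1,\ell_2$ force $\hat{\epsilon}_\text{s}(\vec{n})\ge\sqrt{2+\sqrt{3}}$ on an arbitrary 1PI graph. Your convex dual (the circulation formulation) is legitimate, and your Fermat--Torricelli value is the right target, but the program of ``quantifying the minimal unavoidable edge overlap'' of two loops is left as a plan, and as stated it is genuinely hard: arbitrary loops through $\ell_1$ and $\ell_2$ can intersect in several disjoint segments, and the circulation analysis then does not reduce to a single theta graph. The missing idea is the paper's tree construction, which eliminates this mess at the outset. Non-$s$-equivalence plus one-particle irreducibility means, by theorem~\ref{theo:s-equivalence}, that $\{\ell_1,\ell_2\}$ is \emph{not} a cut-set, so $\mathcal{G}-\{\ell_1,\ell_2\}$ is connected; one then chooses a spanning tree $T$ of $\mathcal{G}-\{\ell_1,\ell_2\}$, which is automatically a tree of $\mathcal{G}$ with both $\ell_1$ and $\ell_2$ as chords. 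The two fundamental loops $C_1$, $C_2$ of these chords intersect, as sets of lines, in either the empty set or a single path of $T$ (the intersection of two tree paths is a path). This yields exactly two cases: disjoint loops, where the triangle inequality applied to the orthogonal unit Wilson-loop vectors $\ul{n}(C_1)$, $\ul{n}(C_2)$ gives the bound $2$; and a theta subgraph of three internally disjoint paths, where the same projection-plus-triangle-inequality step reduces the minimization to the planar Fermat--Torricelli problem for a right isosceles triangle with unit legs, whose value is $\sqrt{2+\sqrt{3}}$. With this tree trick in hand, your dual circulation argument would close the case just as well as the paper's primal estimate (your $120^\circ$/$2\cos(\pi/12)$ computation is the dual of that same Fermat point); without it, the final and decisive step of the theorem is missing.
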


\begin{proof}
   Thanks to inequality~\eqref{eq:hatepsilon-s-k-ineq}, by hypothesis we have
   \begin{gather}
      \hat{\epsilon}_k(n_k) \le \hat{\epsilon}_\text{s}(\vec{n}) < \sqrt{2+\sqrt{3}} < 2  \,.
   \end{gather}
   As a consequence of theorem~\ref{app:susec:onehalf}, two possibilities are
   given: either $\hat{\epsilon}_k(n_k)=0$, or $\hat{\epsilon}_k(n_k)=1$.
   Thanks to theorems~\ref{app:susec:onehalf} and~\ref{theo:one-k}, $\vec{n}$ is
   gauge equivalent to a field $\vec{n}'$ with the following property:
   \begin{itemize}[topsep=0pt,itemsep=0ex]
      \item if $\hat{\epsilon}_k(n_k)=0$, $n'_k=0$ (i.e.~$n_k$ is pure);
      \item if $\hat{\epsilon}_k(n_k)=1$, $n'_k$ is localized on a line $\ell_k$
      and $|n'_k(\ell_k)|=1$ (in particular, $n_k$ is simple).
   \end{itemize}
   By theorem~\eqref{theo:pure-s}, since $\hat{\epsilon}_\text{s}(\vec{n}) \ge
   1$ then $\vec{n}$ is not pure, which means that at least one of its
   components is not pure, hence simple. Up to an irrelevant relabelling of the
   coordinates we can assume that $n_1$ is simple. Note that, if $n_2$ and
   $n_3$ are both pure then $\vec{n}$ is simple.
   
   Let us assume by contradiction that $\vec{n}$ is not simple. Then either $n_2$ and $n_3$, or both, has the
   following properties (using $n_2$ for concreteness): $n_2$ is simple, and $n_2$ is not gauge equivalent to
   any field localized on $\ell_1$. Then theorem~\ref{theo:s-localization}
   implies that $\ell_1$ and $\ell_2$ are not $s$-equivalent, which in turn
   implies $\ell_1 \neq \ell_2$ and $\{ \ell_1 , \ell_2 \}$ is not a cut-set
   (theorem \ref{theo:s-equivalence}). In particular $\mathcal{G} - \{ \ell_1 ,
   \ell_2 \}$ is connected. Let $T$ be a tree of $\mathcal{G} - \{ \ell_1 ,
   \ell_2 \}$, then $T$ is also a tree of $\mathcal{G}$
   (\cite{nakanishi1971graph}, theorem 2-6). Let $C_j$ be the only loop in $T
   \sqcup \{ \ell_j \}$ with $j=1,2$. Two possibilities are given: either $C_1
   \cap C_2$ is empty, or it is a path in $T$. We will use this fact in a
   moment.
   
   The third component of the field $\vec{n}$ plays no role in the what follows.
   Given a generic 3-component vector $\vec{a}$, it is convenient to introduce
   the projected 2-component vector
   \begin{gather}
      \ul{a} = (a_1,a_2) \ .
   \end{gather}
   Note that
   \begin{gather}
      \sum_{\ell \in \mathcal{L}} \left\| \vec{x}[f(\ell)] - \vec{x}[i(\ell)] + \vec{n}(\ell) \right\|_2
      \ge
      \sum_{\ell \in \mathcal{L}} \left\| \ul{x}[f(\ell)] - \ul{x}[i(\ell)] + \ul{n}(\ell) \right\|_2
      \ ,
   \end{gather}
   which implies
   \begin{gather}
      \hat{\epsilon}_\text{s}(\vec{n})
      \ge
      \min_{\ul{x} \text{ with } \ul{x}(a)=\ul{0}}
      \sum_{\ell \in \mathcal{L}} \left\| \ul{x}[f(\ell)] - \ul{x}[i(\ell)] + \ul{n}(\ell) \right\|_2
      \ .
   \end{gather}
   Gauge invariance of the Wilson loops yields:
   \begin{gather}
      \ul{n}(C_1) = \ul{n}'(C_1) = \left( n'_1(\ell_1) , 0 \right)
      \label{eq:fermat:nC1}
      \ , \\
      \ul{n}(C_2) = \ul{n}'(C_2) = \left( 0, n'_2(\ell_2) \right)
      \label{eq:fermat:nC2}
      \ ,
   \end{gather}
   where the orientation of $C_j$ has been chosen in such a way that $[C_j :
   \ell_j]=1$. We recall that $|n'_1(\ell_1)| = |n'_2(\ell_2)| = 1$, which implies
   \begin{gather}
      \| \ul{n}(C_1) \|_2 = \| \ul{n}(C_2) \|_2 = 1 \ .
      \label{eq:fermat:nC12}
   \end{gather}
   
   Let us analyze now the possibility that $C_1 \cap C_2 = \varnothing$. One
   obtains
   \begin{gather}
      \sum_{\ell \in \mathcal{L}} \left\| \ul{x}[f(\ell)] - \ul{x}[i(\ell)] + \ul{n}(\ell) \right\|_2
      \ge \nonumber \\ \qquad \ge
      \sum_{j=1}^2 \sum_{\ell \in C_j} \big \| [C_j : \ell] \{ \ul{x}[f(\ell)] - \ul{x}[i(\ell)] + \ul{n}(\ell) \} \big \|_2
      \ge \nonumber \\ \qquad \ge
      \sum_{j=1}^2 
      \left\| \sum_{\ell \in C_j} [C_j : \ell] \{ \ul{x}[f(\ell)] - \ul{x}[i(\ell)] + \ul{n}(\ell) \} \right\|_2
      = \nonumber \\ \qquad =
      \sum_{j=1}^2
      \left\| \sum_{\ell \in C_j} [C_j : \ell] \ul{n}(\ell) \right\|_2
      = \nonumber \\ \qquad =
      \|\ul{n}(C_1)\|_2 + \|\ul{n}(C_2)\|_2
      =
      2 \ .
      \label{eq:fermat:in1}
   \end{gather}
   In the first inequality we have used $C_1 \cap C_2 = \varnothing$, and $[C_j
   : \ell] = \pm 1$ if $\ell \in C_j$. The second inequality is nothing but the
   triangular inequality. Then we have used the fact that all $\ul{x}$'s cancel
   in pairs in the sum over $\ell \in C_j$, and finally
   eq.~\eqref{eq:fermat:nC12}. By minimizing inequality~\eqref{eq:fermat:in1}
   with respect to $\ul{x}$ one obtains $\hat{\epsilon}_\text{s}(\vec{n}) \ge
   2$, which is contradiction with the hypothesis.
   
   If $C_1 \cap C_2 \neq \varnothing$, then one proves that the following sets
   \begin{gather}
      P_0 = C_1 \cap C_2 \ , \qquad
      P_1 = C_1 \setminus P_0 \ , \qquad
      P_2 = C_2 \setminus P_0 \ ,
   \end{gather}
   are disjoint paths with $\ell_1 \in P_1$ and $\ell_2 \in P_2$. It is clear
   that the three paths connect the same pair of vertices $v$ and $w$. We will
   choose the orientation of the three paths in such a way that they all go from
   $v$ to $w$. In a similar way to inequality~\eqref{eq:fermat:in1}, we
   calculate
   \begin{gather}
      \sum_{\ell \in \mathcal{L}} \left\| \ul{x}[f(\ell)] - \ul{x}[i(\ell)] + \ul{n}(\ell) \right\|_2
      \ge \nonumber \\ \qquad \ge
      \sum_{j=0}^2 \sum_{\ell \in P_j} \left\| [P_j : \ell] \{ \ul{x}[f(\ell)] - \ul{x}[i(\ell)] + \ul{n}(\ell) \} \right\|_2
      \ge \nonumber \\ \qquad \ge
      \sum_{j=0}^2 
      \left\| \sum_{\ell \in P_j} [P_j : \ell] \{ \ul{x}[f(\ell)] - \ul{x}[i(\ell)] + \ul{n}(\ell) \} \right\|_2
      = \nonumber \\ \qquad =
      \sum_{j=0}^2
      \left\| \ul{x}(w) - \ul{x}(v) + \sum_{\ell \in P_j} [P_j : \ell] \ul{n}(\ell) \right\|_2
      = \nonumber \\ \qquad =
      \sum_{j=0}^2
      \left\| \ul{x}(w) - \ul{x}(v) + \ul{n}(P_j) \right\|_2
      \ .
      \label{eq:fermat:in2}
   \end{gather}
   In this case, all $\ul{x}$'s cancel in pairs in the sum over $\ell \in P_j$,
   except the ones corresponding to the endpoints of the three paths. By
   minimizing inequality~\eqref{eq:fermat:in1} with respect to $\ul{x}$, and
   defining $\ul{z} = \ul{x}(v) - \ul{x}(w)$, one obtains
   \begin{gather}
      \hat{\epsilon}_\text{s}(\vec{n}) \ge
      \min_{\ul{z}} \sum_{j=0}^2
      \left\| \ul{z} - \ul{n}(P_j) \right\|_2
      \ .
      \label{eq:fermat:min}
   \end{gather}
        This is the classical Fermat-Toricelli problem (see e.g. \cite{Boltyanski}).
   If the three vectors $\ul{n}(P_j)$ are interpreted as the vertices of a
   triangle $ABC$, the minimum $\vec{z}$ is the Fermat point of the given
   triangle. The solution of this minimization problem is known in terms of
   geometrical properties of the triangle.

   In particular, we identify
   \begin{gather}
      \overrightarrow{OA} = \ul{n}(P_0)
      \ , \qquad
      \overrightarrow{OB} = \ul{n}(P_1)
      \ , \qquad
      \overrightarrow{OC} = \ul{n}(P_2)
      \ ,
   \end{gather}
   which imply
   \begin{gather}
      \overrightarrow{AB} = \ul{n}(P_1) - \ul{n}(P_0) = \ul{n}(C_1)
      \ , \\
      \overrightarrow{AC} = \ul{n}(P_2) - \ul{n}(P_0) = \ul{n}(C_2)
      \ .
   \end{gather}
   Using eq.~\eqref{eq:fermat:nC12} it turns out that the triangle $ABC$ is right-angled in $A$, and it is isosceles with $|AB|=|AC|=1$. The location of the Fermat point $P$ is illustrated in figure~\ref{fig:fermat}. Elementary trigonometry yields
   \begin{gather}
      |PB| = |PC| = |BC| \frac{\sin 30^\circ}{\sin 120^\circ} = \frac{\sqrt{6}}{3}
      \ , \\
      |PA| = |AB| \frac{\sin 15^\circ}{\sin 120^\circ} = 
      \frac{3\sqrt{2}-\sqrt{6}}{6}
      \ .
   \end{gather}
   The minimization problem~\eqref{eq:fermat:min} is solved by
   \begin{gather}
      \hat{\epsilon}_\text{s}(\vec{n}) \ge
      \min_{\ul{z}} \sum_{j=0}^2
      \left\| \ul{z} - \ul{n}(P_j) \right\|_2
      =
      |PA|+|PB|+|PC|
      =
      \frac{\sqrt{6}+\sqrt{2}}{2}
      =
      \sqrt{2 + \sqrt{3}}
      \ .
   \end{gather}

   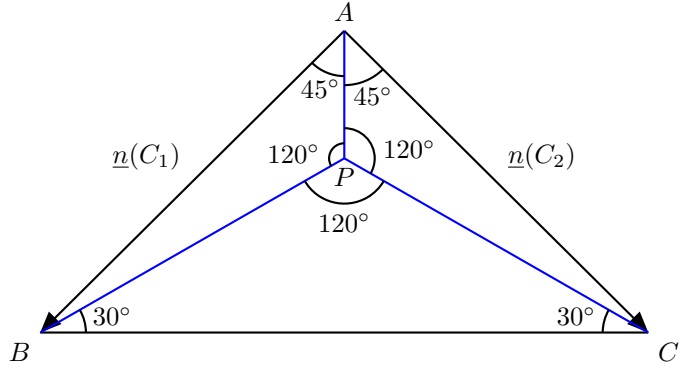
\begin{figure}[tb]
      \centering
       
      \begin{tikzpicture}[scale=4,thick]
         \coordinate [label=above:$A$]  (A) at (0,1);
         \coordinate [label=below left:$B$] (B) at (-1,0);
         \coordinate [label=below right:$C$] (C) at (1,0);
         \coordinate [label=below:$P$] (P) at (0,0.57735);

         \draw (B) -- (C);
         \draw[-triangle 45] (A) -> node[above left,inner sep=4pt] {$\ul{n}(C_1)$} (B);
         \draw[-triangle 45] (A) -> node[above right,inner sep=4pt] {$\ul{n}(C_2)$} (C);
         
         \draw[blue] (A) -- (P);
         \draw[blue] (B) -- (P);
         \draw[blue] (C) -- (P);
         
         \draw ($(A)+({.15*cos(270)},{.15*sin(270)})$) arc (270:225:.15) node[pos=0.7,below] {$45^\circ$};
         \draw ($(A)+({.18*cos(270)},{.18*sin(270)})$) arc (270:315:.18) node[pos=0.7,below] {$45^\circ$};

         \draw ($(B)+(.15,0)$) arc (0:30:.15) node[pos=0.7,right] {$30^\circ$};
         \draw ($(C)+(-.15,0)$) arc (180:150:.15) node[pos=0.7,left] {$30^\circ$};

         \draw ($(P)+({0,.05})$) arc (90:210:.05) node[pos=0.6,left] {$120^\circ$};
         \draw ($(P)+({0,.1})$) arc (90:-30:.1) node[pos=0.6,right] {$120^\circ$};
         \draw ($(P)+({.15*cos(210)},{.15*sin(210)})$) arc (210:330:.15) node[pos=0.5,below] {$120^\circ$};

      \end{tikzpicture}
      
      \caption{
      The minimization problem in eq.~\eqref{eq:fermat:min} is equivalent to
      finding the Fermat point $P$ of the triangle $ABC$, which is right-angled in
      $A$, and has side lengths $|AB|=|AC|=1$ and $|BC|=\sqrt{2}$. The Fermat point
      is characterized by the property that the three angles in $P$ are all equal
      to $120^\circ$.
      } \label{fig:fermat}
   \end{figure}

   This is in contradiction with the hypothesis.
\end{proof}

\subsection{Characterization of gauge fields with \texorpdfstring{$\hat{\epsilon}_0(n_0)=1$}{epsilon\_0(n\_0)=1}}
\label{app:subsec:one-0}

\begin{theorem} \label{theo:one-0}
   If $\hat{\epsilon}_0(n_0) = 1$, then at least one of the following two
   possibilities is realized:
   \begin{enumerate}[topsep=0pt,itemsep=1ex]
      \item Up to a gauge transformation, $n_0$ is localized on a line $\ell$
      and $|n_0(\ell)| = 1$.
      \item Up to a gauge transformation, $n_0$ is localized on a cut-set
      $S=\{\ell_1,\ell_2\}$ which disconnects $a$ and $b$. Assuming that, with
      no loss of generality, $i(\ell_{1,2})$ is connected to $a$ in $\mathcal G - S$,
      Then $n_0(\ell_{1,2}) = -1$.
   \end{enumerate}
\end{theorem}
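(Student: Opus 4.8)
The plan is to work entirely in axial gauge, so that the gauge-invariant quantity $\hat{\epsilon}_0(n_0)$ is given by the explicit expressions of corollary~\ref{corollary:epshat:axial}. Thanks to that corollary we may fix a 2-tree $\mathring{T}_0$ disconnecting $a$ and $b$, orient the lines of the cut-set $S(\mathring{T}_0)$ from $a$ to $b$, and assume $n_0$ to be in axial gauge with respect to the associated fourplet. In this gauge $n_0$ is supported on $\mathring{T}_0^*$, and the entire argument becomes a finite case analysis in the three non-negative integers $p_0^+$, $p_0^0$, $p_0^-$ counting the lines of $S(\mathring{T}_0)$ with $n_0\ge 1$, $n_0=0$, $n_0\le -1$, together with the loop count $c_0$.

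First I would feed the hypothesis $\hat{\epsilon}_0(n_0)=1$ into the lower bound~\eqref{eq:corollary:epshat:bound:0}. Since all the terms on its right-hand side are non-negative, the resulting inequality $c_0+p_0^++\tfrac12 p_0^-+\min\{\tfrac12 p_0^-,\tfrac12(p_0^++p_0^0)\}\le 1$ immediately forces $p_0^-\le 2$ and excludes $p_0^-\ge 3$ outright. One also records the structural fact, inherited from one-particle irreducibility (proposition~\ref{prop:1PI}), that no single line disconnects $a$ and $b$; hence $|S(\mathring{T}_0)|=p_0^++p_0^0+p_0^-\ge 2$. These two observations cut the problem down to the cases $p_0^-\in\{0,1,2\}$, in each of which the bound further pins $c_0$ and $p_0^+$ to small values.

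The remaining step is to upgrade these counting constraints into the two geometric alternatives using the \emph{exact} formula~\eqref{eq:corollary:epshat:axial:0}. For $p_0^-=0$ the correction term vanishes and $\hat{\epsilon}_0=\sum_{\ell\in\mathring{T}_0^*}|n_0(\ell)|$; combined with the bound $c_0+p_0^+\le 1$ this forces $n_0$ to be supported on a single line carrying value $\pm 1$, which is case~1. For $p_0^-=1$ the bound gives $c_0=p_0^+=0$, while $|S(\mathring{T}_0)|\ge 2$ gives $p_0^0\ge 1$; this makes the half-integer correction vanish and again leaves a single line of value $-1$, so we land once more in case~1, and the otherwise dangerous value $1/2$ excluded in theorem~\ref{theo:onehalf} never arises. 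Finally, for $p_0^-=2$ the bound forces $c_0=p_0^+=p_0^0=0$, so $S(\mathring{T}_0)=\{\ell_1,\ell_2\}$ is a two-line cut-set disconnecting $a$ and $b$; the exact formula then reads $\hat{\epsilon}_0=|n_0(\ell_1)|+|n_0(\ell_2)|-1=1$, forcing $n_0(\ell_1)=n_0(\ell_2)=-1$ with the chosen orientation. This is precisely case~2.

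The main obstacle I anticipate is not any single inequality but the bookkeeping that keeps the bound and the exact formula in lockstep: the bound sees only the \emph{counts} $c_0,p_0^\pm$ and so cannot distinguish unit charges from larger magnitudes, whereas collapsing each surviving case to $|n_0|=1$ requires re-inserting that count into~\eqref{eq:corollary:epshat:axial:0}, which sees $\sum_\ell|n_0(\ell)|$. A secondary delicate point is matching conventions in case~2: one must check that the orientation of $S(\mathring{T}_0)$ fixed in corollary~\ref{corollary:epshat:axial} reproduces the sign $n_0(\ell_{1,2})=-1$ demanded in the statement, and that the exclusion of $\hat{\epsilon}_0=1/2$ is handled cleanly by the $|S(\mathring{T}_0)|\ge 2$ consequence of 1PI rather than as a separate contradiction.
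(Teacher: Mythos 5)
Your proof is correct and follows essentially the same route as the paper's: work in the axial gauge of corollary~\ref{corollary:epshat:axial}, use the lower bound~\eqref{eq:corollary:epshat:bound:0} to restrict the counts $(c_0,p_0^+,p_0^0,p_0^-)$, then re-insert each surviving configuration into the exact formula~\eqref{eq:corollary:epshat:axial:0} to force the unit magnitudes and signs in the two alternatives. Your organization of the cases by $p_0^-\in\{0,1,2\}$, with the explicit appeal to 1PI (via $|S(\mathring{T}_0)|\ge 2$) to eliminate the $(p_0^-,p_0^0)=(1,0)$ corner, is just a slightly more careful bookkeeping of the same case analysis the paper performs.
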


\begin{proof}
      
   We assume $\hat{\epsilon}_0(n_0)=1$, and we look at all possible solutions of the bound~\eqref{eq:corollary:epshat:bound:0}.
   \begin{enumerate}[topsep=0pt,itemsep=1ex]
      \item $c_0(n|\mathring{T}_0) + p_0^+(n|\mathring{T}_0) =
      p_0^-(n|\mathring{T}_0) = 0$, $p_0^0(n|\mathring{T}_0) \ge 0$. This
      implies that $n_0(\ell)=0$ for every $\ell \in \mathring{T}_0^*$, i.e.~$n$
      is a pure gauge fields and $\hat{\epsilon}_0(n_0)=0$.
      
      \item $c_0(n|\mathring{T}_0) + p_0^+(n|\mathring{T}_0) +
      p_0^-(n|\mathring{T}_0) = 1$, $p_0^0(n|\mathring{T}_0) \ge
      p_0^-(n|\mathring{T}_0)$, which implies that one line $\ell$ exists such
      that $|n_0(\ell)|=1$, and $n_0(\ell')=0$ for $\ell' \neq \ell$, i.e.
      \begin{gather}
         n_0(\star) = n_0(\ell) \delta_{\star,\ell} \ .
      \end{gather}
      Then eq.~\eqref{eq:corollary:epshat:axial:0} implies
      \begin{gather}
         1 = \hat{\epsilon}_0(n_0) = \sum_{\ell' \in \mathring{T}_0^*} |n_0(\ell')|
         =
         |n_0(\ell)|
         \ .
      \end{gather}
      
      \item $p_0^-(n|\mathring{T}_0) = 2$, $c_0(n|\mathring{T}_0) =
      p_0^+(n|\mathring{T}_0) = p_0^0(n|\mathring{T}_0) = 0$, which implies that
      the cutset $S(\mathring{T}_0)$ contains exacly two lines $\ell_1$ and
      $\ell_2$, and
      \begin{gather}
         n_0(\star) = \sum_{j=1,2} n_0(\ell_j) \delta_{\star,\ell_j} \ ,
      \end{gather}
      with $n_0(\ell_j) < 0$. Then eq.~\eqref{eq:corollary:epshat:axial:0}
      implies
      \begin{gather}
         1
         =
         \hat{\epsilon}_0(n_0) = \sum_{\ell' \in \mathring{T}_0^*} |n_0(\ell')|
         - 1
         =
         \sum_{j=1,2} |n_0(\ell_j)| - 1
         \ ,
      \end{gather}
      which is solved only by $|n_0(\ell_j)| = 1$.

   \end{enumerate}
      
\end{proof}

\bigskip

\begin{theorem} \label{theo:t-equivalence}
   The following notions of $t$-equivalence between two lines $\ell_1$ and
   $\ell_2$ are logically equivalent:
   \begin{enumerate}[topsep=0pt,itemsep=1ex]
      \item $\ell_1$ and $\ell_2$ are said to be $t$-equivalent if and only if either
      $\ell_1 = \ell_2$ or $\{\ell_1,\ell_2\}$ is a cut-set which does not
      disconnect $a$ and $b$.
      \item $\ell_1$ and $\ell_2$ are said to be $t$-equivalent if and only if every loop
      containing $\ell_1$ contains also $\ell_2$, and every path from $a$ to $b$
      containing $\ell_1$ contains also $\ell_2$.
      \item $\ell_1$ and $\ell_2$ are said to be $t$-equivalent if and only if every loop
      containing $\ell_1$ contains also $\ell_2$, and a path from $a$ to $b$
      exists which contains both $\ell_1$ and $\ell_2$.
   \end{enumerate}
   The notion of $t$-equivalence is indeed an equivalence relation,
   $t$-equivalence implies $s$-equivalence, and $t$-equivalence classes are
   given by
   \begin{gather}
      [ \ell ]_t =
      \begin{dcases}
         [ \ell ]_s \cap P_1 \qquad & \text{if } \ell \in P_1 \,,
         \\
         [ \ell ]_s \cap P_2 \qquad & \text{if } \ell \in P_2    \,,     
         \\
         [ \ell ]_s \qquad & \text{otherwise} \,,
      \end{dcases}
   \end{gather}
   where $P_1$ and $P_2$ are two disjoint paths from $a$ to $b$.
\end{theorem}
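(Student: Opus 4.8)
The plan is to mirror, almost line by line, the proof of Theorem~\ref{theo:s-equivalence}, promoting every statement about loops to a statement about loops \emph{and} $a$--$b$ paths. The one genuinely new structural input is that $\mathcal{G}$ is 1PI between $a$ and $b$ (Proposition~\ref{prop:1PI}), which guarantees the existence of two disjoint paths $P_1$ and $P_2$ from $a$ to $b$. These two paths are the temporal analogue of a single loop in the spatial problem: the gauge-invariant content of $n_0$ is the collection of Wilson loops \emph{together with} the Wilson line along any $a$--$b$ path (cf.~Theorem~\ref{theo:gauge-equivalence-0}), so a correct equivalence of lines must respect both.

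I would first establish the logical equivalence of the three notions, reusing the cut-set machinery of Theorem~\ref{theo:s-equivalence}. The cleanest route is $1\Leftrightarrow 2$, proved exactly as in the spatial case for the loop clause: if $S=\{\ell_1,\ell_2\}$ is a cut-set then any loop entering the other component of $\mathcal{G}-S$ through $\ell_1$ can only return through $\ell_2$, and conversely, if $S$ is not a cut-set then $\mathcal{G}-S$ is connected and supplies a loop through $\ell_1$ avoiding $\ell_2$. The path clause is where the hypothesis ``$S$ does not disconnect $a$ and $b$'' enters: with $a$ and $b$ in the same component of $\mathcal{G}-S$, any $a$--$b$ path that crosses to the other component through $\ell_1$ must cross back through $\ell_2$, so it contains both. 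Notion~3 is then matched to notion~2 by observing that its path-existence clause simply records the situation in which $\ell_1$ and $\ell_2$ both lie on a common $a$--$b$ path; combined with the loop clause this forces the full ``every path'' statement of notion~2. Care is needed for lines lying on no $a$--$b$ path at all, where the path clauses become vacuous and notion~3 must be read in conjunction with $s$-equivalence.

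The equivalence-relation axioms follow as in Theorem~\ref{theo:s-equivalence}: reflexivity and symmetry are manifest in notion~1, and transitivity is manifest in notion~2, since the implications ``contains $\ell_2$ whenever it contains $\ell_1$'' compose for both loops and paths. That $t$-equivalence implies $s$-equivalence is immediate, because the loop clause of notion~2 is verbatim notion~2 of $s$-equivalence; hence $[\ell]_\text{t}\subseteq[\ell]_\text{s}$ always.

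The remaining, and hardest, task is the explicit class formula, which amounts to describing precisely how the $a$--$b$ path condition refines each $s$-class. Here I would use the series (``chain'') structure of an $s$-class, encoded by $[\ell]_\text{s}=\bigcap_{C\ni\ell}C$ from Theorem~\ref{theo:s-equivalence}: the lines of a class are traversed wholesale by every loop, and likewise by every $a$--$b$ path that enters the chain, \emph{unless} $a$ or $b$ is an interior vertex of the chain. In the generic case the path condition is therefore non-splitting and $[\ell]_\text{t}=[\ell]_\text{s}$, the ``otherwise'' branch. When $a$ (or $b$) is interior to the chain, the two disjoint paths $P_1,P_2$ must leave that vertex in the two chain directions, so the class breaks into the lines reached along $P_1$ and those reached along $P_2$: a line of $[\ell]_\text{s}$ on $P_1$ shares the path $P_1$ with $\ell$ and is $t$-equivalent to it (notion~3), whereas $P_1$ itself is an $a$--$b$ path through $\ell$ that omits any line off $P_1$, forbidding $t$-equivalence (notion~2). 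This yields $[\ell]_\text{t}=[\ell]_\text{s}\cap P_j$ for $\ell\in P_j$. I expect the main obstacle to be proving rigorously that splitting occurs exactly when $a$ or $b$ is interior to the chain, and that the result is independent of the chosen representatives $P_1,P_2$; this is precisely where the $n$PI--disjoint-path fact and the cut-set/loop orthogonality relations of \cite{nakanishi1971graph} (theorem~2-14) must be combined with the chain structure to exclude any other way an $a$--$b$ path could thread an $s$-class.
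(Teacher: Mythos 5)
Your handling of the notion equivalences, the equivalence-relation axioms, and $t\Rightarrow s$ follows the paper's route and is essentially sound, but your argument for the class formula rests on a structural claim that is false, and it fails precisely in the case this theorem exists to handle. You assume that an $s$-class is a contiguous chain and that splitting into $t$-classes occurs exactly when $a$ or $b$ is an \emph{interior vertex} of that chain. Neither holds in general. Counterexample: let $\mathcal{G}$ be two triangles $(v_1,v_2,a)$ and $(w_1,w_2,b)$ joined by the two lines $g_1=(v_1,w_1)$ and $g_2=(v_2,w_2)$. This graph is connected and 1PI, and $[g_1]_s=\{g_1,g_2\}$, since any loop crossing the cut-set $\{g_1,g_2\}$ must cross it twice. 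The two lines share no vertex, so the ``chain'' has no interior vertices at all, and neither $a$ nor $b$ lies on it; your criterion therefore predicts no splitting, $[g_1]_t=\{g_1,g_2\}$. But the path $a\to v_1\to w_1\to b$ contains $g_1$ and not $g_2$, so by notion 2 the class does split: $[g_1]_t=\{g_1\}$, in agreement with the theorem's formula for $P_1=\{(a,v_1),g_1,(w_1,b)\}$ and $P_2=\{(a,v_2),g_2,(w_2,b)\}$. This is not a marginal case: it is exactly the configuration of figure~\ref{fig:second}, where the two self-energy chains on either side of the current insertions form a single $s$-class that must break into two $t$-classes, and the distinction $\beta_\text{t}=2$ versus $\beta_\text{s}=1$ in eq.~\eqref{eq:skeleton:final} hinges on it. The correct splitting criterion is not ``$a$ or $b$ is interior to the chain'' but ``some pair of class lines forms a cut-set separating $a$ from $b$.''

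The paper's proof avoids any structure theory of $s$-classes with one observation you did not make: since $P_1$ and $P_2$ are disjoint, $C=P_1\sqcup P_2$ is a loop, so for any line $\ell$ either $[\ell]_s\cap C=\varnothing$ or $[\ell]_s\subseteq C$. In the first case, for every $\ell'\in[\ell]_s$ the cut-set $\{\ell,\ell'\}$ leaves both $P_1$ and $P_2$ intact, hence cannot disconnect $a$ from $b$, and notion 1 gives $[\ell]_t=[\ell]_s$; in the second case $\ell$ lies on, say, $P_1$, and one sandwiches $[\ell]_s\cap P_1\subseteq[\ell]_t$ (notion 3, with $P_1$ as the common $a$--$b$ path) against $[\ell]_t\subseteq[\ell]_s\cap P_1$ (notion 2, since $P_1$ is a path through $\ell$). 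You would need to replace your interior-vertex analysis by this dichotomy, or prove the cut-separation criterion above, for the proof to go through. Two smaller points: your claim that notion 3 plus the loop clause ``forces'' notion 2 still requires the paper's $3\Rightarrow 1$ argument (a common $a$--$b$ path through both lines prevents the cut $\{\ell_1,\ell_2\}$ from separating $a$ and $b$); and the converse $2\Rightarrow 3$, which you rightly flag for lines lying on no $a$--$b$ path, is genuinely false there (for two lines of a cycle attached to the rest of the graph at a single vertex, notions 1 and 2 hold while notion 3 fails) --- a defect of the theorem's statement that the paper itself sidesteps by proving only $1\Leftrightarrow 2$ and $3\Rightarrow 1$, and that no amount of ``care'' can repair.
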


\begin{proof}
   Reflexivity and symmetry of $t$-equivalence are manifest in notion 1, while
   transitivity is manifest in notion 2. The characterization of the
   $s$-equivalence classes follows immediately from notion 2.
   It is also evident that $t$-equivalence implies $s$-equivalence. In
   particular it follows that $[\ell]_t \subseteq [\ell]_s$ for any line $\ell$.
   Since $\mathcal{G}$ is 1PI, two disjoint paths $P_1$ and $P_2$ from $a$ to
   $b$ exist. Let $\ell$ be a line. Since $C = P_1 \sqcup P_2$ is a loop, two
   possibilities are given:
   \begin{enumerate}[topsep=0pt,itemsep=1ex]
      \item $[\ell]_s \cap C = \varnothing$. For any $\ell' \in [\ell]_s$ with
      $\ell' \neq \ell$, $\mathcal{G} - \{\ell,\ell'\}$ contains $P_1$ and $P_2$
      as paths. Therefore $\{\ell,\ell'\}$ is a cut-set which does not
      disconnect $a$ and $b$. Notion 1 implies that $\ell$ and $\ell'$ are
      $t$-equivalent. Therefore $[\ell]_s \subseteq [\ell]_t$, which implies
      $[\ell]_s = [\ell]_t$.
      \item $[\ell]_s \subset C$. Then $\ell$ must be an element of either $P_1$
      or $P_2$. Let us say $\ell \in P_1$. Notion 3 implies that $[\ell]_s \cap
      P_1 \subset [\ell]_t$, while notion 3 implies that $[\ell]_t \subseteq
      [\ell]_s \cap P_1$. It follows that $[\ell]_t = [\ell]_s \cap P_1$.
      Symmetrically, if $\ell \in P_2$ then $[\ell]_t = [\ell]_s \cap P_2$.
   \end{enumerate}
   We are left with the task to prove the logical equivalence of the three
   notions.
   
   \textit{Implication} 1 $\Rightarrow$  2. The claim is trivial for
   $\ell_1=\ell_2$. Assume $\ell_1 \neq \ell_2$. Since $S=\{\ell_1,\ell_2\}$ is
   a cut-set, then $\ell_1$ and $\ell_2$ are $s$-equivalent. By
   theorem~\ref{theo:s-equivalence}, every loop containing $\ell_1$ contains
   also $\ell_2$. We only need to prove that any path from $a$ to $b$ containing
   $\ell_1$ contains also $\ell_2$. Let $P$ a path from $a$ to $b$ containing
   $\ell_1$ and assume that it does not contain $\ell_2$. With no loss of
   generality, we can assume that $[P:\ell_1]=1$. Then $P \setminus  S = P
   \setminus \{ \ell_1 \}$ is the disjoint union of two paths: a path from $a$
   to $i(\ell_1)$, and a path from $f(\ell_1)$ to $b$. Therefore, $a$ belongs to
   the same connected components as $i(\ell_1)$ in $\mathcal{G}-S$, and $b$
   belongs to the same connected components as $f(\ell_1)$ in $\mathcal{G}-S$.
   Since $S$ is a cut-set, the endpoins of $\ell_1$ belong to different
   connected components of $\mathcal{G}-S$. Therefore also $a$ and $b$ belong to
   different connected components of $\mathcal{G}-S$, contradicting the assumption that
   $S$ does not disconnect $a$ and $b$.
   
   \textit{Implication} 2 $\Rightarrow$ 1. Trivial.
   
 \textit{Implication} 3 $\Rightarrow$ 1. If $\ell_1=\ell_2$ the implication
   is trivial. Let us assume $\ell_1 \neq \ell_2$. Since every loop containing
   $\ell_1$ contains also $\ell_2$, then $\ell_1$ and $\ell_2$ are
   $s$-equivalent. By theorem~\ref{theo:s-equivalence}, $S=\{\ell_1,\ell_2\}$ is
   a cut-set. We need to prove that $S$ does not disconnect $a$ and $b$. By
   assumption, a path $P$ from $a$ to $b$ exists which contains $\ell_1$ and
   $\ell_2$. Note that $P  \setminus  S$ is the disjoint union of three paths.
   With no loss of generality, we can assume that $a$ is connected to
   $i(\ell_1)$ in $P  \setminus  S$, $f(\ell_1)$ is connected to $i(\ell_2)$ in
   $P \setminus  S$, and $f(\ell_2)$ is connected to $b$ in $P  \setminus  S$.
   Since $S$ is a cut-set, the endpoins of $\ell_1$ belong to different
   connected components of $\mathcal{G}-S$, so do the endpoints of $\ell_2$. It
   follows that $a$ and $b$ belong to the same connected component of
   $\mathcal{G}-S$.
\end{proof}

\bigskip

\begin{theorem} \label{theo:t-localization}
   The orientation of the lines of $\mathcal{G}$ can be chosen in such a way
   that, if $\ell$ and $\ell'$ are $t$-equivalent and $C$ is a loop that
   contains both, then $[C:\ell]=[C:\ell']$, i.e.~either both $\ell$ and $\ell'$
   have the same orientation as $C$, or they both have the opposite orientation
   of $C$. This choice of orientation is assumed here.
   
   Let $n_0$ and $n'_0$ be gauge fields localized on $\ell$ and $\ell'$
   respectively. $n_0$ and $n'_0$ are gauge equivalent if and only if $\ell$ and
   $\ell'$ are $t$-equivalent and $n_0(\ell)=n'_0(\ell')$.
\end{theorem}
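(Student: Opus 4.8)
The plan is to follow closely the proof of theorem~\ref{theo:s-localization}, modifying it only where the additional structure of $t$-equivalence---namely the role of paths from $a$ to $b$ on top of loops---enters. For the orientation claim I would not construct anything new: since $t$-equivalence implies $s$-equivalence (theorem~\ref{theo:t-equivalence}), any pair of $t$-equivalent lines $\ell,\ell'$ lying on a common loop $C$ is in particular $s$-equivalent, so the orientation already fixed in theorem~\ref{theo:s-localization} (which guarantees $[C:\ell]=[C:\ell']$ for $s$-equivalent lines sharing a loop) serves verbatim. I simply declare that this same orientation is used.

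For the forward implication I would assume $n_0$ and $n'_0$ (localized on $\ell$ and $\ell'$) are gauge equivalent and verify both defining properties of $t$-equivalence in the form of notion~2 of theorem~\ref{theo:t-equivalence}. Using gauge invariance of Wilson loops, for any loop $C \ni \ell$ one has $[C:\ell']\,n'_0(\ell') = n'_0(C)=n_0(C)=[C:\ell]\,n_0(\ell)\neq 0$, forcing $\ell'\in C$; using gauge invariance of the Wilson line, for any path $P$ from $a$ to $b$ with $\ell\in P$ one similarly gets $[P:\ell']\,n'_0(\ell')=n'_0(P)=n_0(P)=[P:\ell]\,n_0(\ell)\neq 0$, forcing $\ell'\in P$. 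These two facts are exactly $t$-equivalence. Choosing then a common loop $C$ (which exists because $\mathcal{G}$ is 1PI and $\ell,\ell'$ are $s$-equivalent) and invoking $[C:\ell]=[C:\ell']$ yields $n_0(\ell)=n'_0(\ell')$.

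For the reverse implication I would assume $\ell,\ell'$ are $t$-equivalent with $n_0(\ell)=n'_0(\ell')$ and check the hypotheses of theorem~\ref{theo:gauge-equivalence-0} in the form of its condition~3, i.e.\ equality of the Wilson loops over $\mathbf{C}(T)$ and of the Wilson line along the unique $a$-$b$ path $P$ in a tree $T$. The loop equalities are handled exactly as in the $s$-case: $s$-equivalence forces every loop to contain both or neither of $\ell,\ell'$, and in the ``both'' case the orientation gives $n_0(C)=[C:\ell]n_0(\ell)=[C:\ell']n'_0(\ell')=n'_0(C)$. By $t$-equivalence the path $P$ likewise contains both or neither; the ``neither'' case is trivial, so the whole argument reduces to the single nontrivial statement $[P:\ell]=[P:\ell']$ for an $a$-$b$ path carrying both lines.

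This last statement is the part I expect to be the main obstacle, since the orientation claim controls only loops, not paths. I would establish it from the cut-set/loop orthogonality relation (\cite{nakanishi1971graph}, theorem~2-14). Because $\ell,\ell'$ are $s$-equivalent, $S=\{\ell,\ell'\}$ is a cut-set (theorem~\ref{theo:s-equivalence}); applying orthogonality to $S$ and any common loop $C$ gives $[C:\ell][S:\ell]+[C:\ell'][S:\ell']=0$, and since $[C:\ell]=[C:\ell']\neq 0$ this forces $[S:\ell]=-[S:\ell']$, i.e.\ $i(\ell)$ and $i(\ell')$ lie in different connected components of $\mathcal{G}-S$. Since $t$-equivalence means $S$ does not disconnect $a$ and $b$, both endpoints $a,b$ sit in one component, so an $a$-$b$ path containing $\ell$ and $\ell'$ crosses the cut exactly twice, once in each direction. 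A short case check on the two possible native labellings of the endpoints of $\ell$ and $\ell'$ then shows $[P:\ell]=[P:\ell']$ in every case, completing the verification and hence, via theorem~\ref{theo:gauge-equivalence-0}, the gauge equivalence of $n_0$ and $n'_0$.
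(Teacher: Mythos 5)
Your proof is correct, and while the orientation claim and the forward implication coincide with the paper's own argument (the paper likewise inherits the orientation from theorem~\ref{theo:s-localization} and runs the identical Wilson-loop/Wilson-line computation to extract $t$-equivalence and $n_0(\ell)=n'_0(\ell')$), your reverse implication takes a genuinely different route. The paper never proves your path-orientation lemma. Instead it uses 1PI to obtain two disjoint paths $P_1,P_2$ from $a$ to $b$, observes that $t$-equivalence plus disjointness forces at least one of them, say $P_1$, to contain neither $\ell$ nor $\ell'$ (so both Wilson lines along $P_1$ vanish trivially), and then applies theorem~\ref{theo:gauge-equivalence-0} in its condition~3 with a tree chosen to \emph{contain} $P_1$, so that the unique $a$--$b$ tree path is $P_1$ itself and the ``both lines on the path'' case never arises. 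You instead verify condition~3 for an \emph{arbitrary} tree, which requires exactly the statement the paper dodges: that any $a$--$b$ path through both lines satisfies $[P:\ell]=[P:\ell']$. Your derivation of this is sound: by theorem~\ref{theo:s-equivalence} the pair $S=\{\ell,\ell'\}$ is a cut-set, the orthogonality relation of theorem 2-14 in ref.~\cite{nakanishi1971graph} applied to a common loop forces $[S:\ell]=-[S:\ell']$ (so $i(\ell)$ and $i(\ell')$ lie in different components of $\mathcal{G}-S$), and since $a$ and $b$ lie in the same component, an $a$--$b$ path containing both lines crosses the cut exactly twice, once in each direction, which aligns both traversals either with or against the native orientations simultaneously. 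What the paper's tree-selection trick buys is brevity---no case analysis and no second appeal to cut-set/loop orthogonality. What your argument buys is a stronger structural fact: the orientation convention fixed for loops in the first part of the theorem automatically propagates to $a$--$b$ paths through $t$-equivalent pairs, making the Wilson-line check independent of the choice of tree.
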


\begin{proof}
   The proof of the first part of the theorem is identical to the proof of the
   corresponding part in theorem~\ref{theo:s-localization}.
   
   In the following we assume that lines are oriented as explained in the first
   part of the theorem. Let $n_0$ and $n'_0$ be gauge fields localized on $\ell$
   and $\ell'$ respectively. Now we prove the two implications of the second
   part separately.
   
   $\bullet \ $ \textit{$n_0$ and $n'_0$ are gauge equivalent $\Rightarrow$
   $\ell$ and $\ell'$ are $t$-equivalent and $n_0(\ell)=n'_0(\ell')$.}

   We first show that any path $P$ from $a$ to $b$ containing $\ell$ contains
   also $\ell'$. Note that
   \begin{gather}
      [P:\ell'] n'_0(\ell') = n'_0(P) = n_0(P) = [P:\ell] n_0(\ell) \neq 0 \ ,
   \end{gather}
   where we have used the fact that $n'_0$ is localized on $\ell'$ in the first
   equality, gauge invariance of the Wilson line in the second equality, the
   fact that $n_0$ is localized on $\ell$ in the third equality, and finally the
   assumption that $\ell \in P$ in the inequality. It follows that $[P:\ell']
   \neq 0$, i.e.~$\ell' \in P$.
   
  Next we need to show that any loop $C$ containing $\ell$ contains also
   $\ell'$. This is done analogously observing that
   \begin{gather}
      [C:\ell'] n'_0(\ell') = n'_0(C) = n_0(C) = [C:\ell] n_0(\ell) \neq 0 \,,
      \label{theo:t-localization:eq0}
   \end{gather}
   implies $[C:\ell'] \neq 0$, i.e.~$\ell' \in C$. 
   
   Because of the particular
   choice of orientation for the lines of the equivalence class, we also have $[C :
   \ell]=[C : \ell']$. Together with eq.~\eqref{theo:t-localization:eq0}, this
   implies immediately $n_0(\ell)=n'_0(\ell')$.
   
   $\bullet \ $ \textit{$\ell$ and $\ell'$ are $t$-equivalent and
   $n_0(\ell)=n'_0(\ell')$ $\Rightarrow$ $n_0$ and $n'_0$ are gauge equivalent.}

   Consider a loop $C$. Because of $t$-equivalence, two
   possibilities are given: either $C$ contains both $\ell$ and $\ell'$, or it
   contains neither of the two.
   If $C$ contains both $\ell$ and $\ell'$,
   \begin{gather}
      n'_0(C) = [C:\ell'] n'_0(\ell') = [C:\ell] n_0(\ell) = n_0(C) \ ,
   \end{gather}
   where we have used the fact that $n'_0$ is localized on $\ell'$ in the first
   equality, the fact that $[ {C}:\ell']=[ {C}:\ell]$ (choice of
   orientation) and $n'_0(\ell')=n_0(\ell)$ (hypothesis) in the second equality,
   and finally the fact that $n_0$ is localized on $\ell$ in the third equality.
   On the other hand, if $C$ contains neither $\ell$ nor $\ell'$, by hypothesis
   of localization,
   \begin{gather}
      n'_0(C) = 0 = n_0(C) \ .
   \end{gather}
   Therefore, for all loops $C$, $n'_0(C)=n_0(C)$.
   Since $\mathcal{G}$ is 1PI, two disjoint paths $P_1$ and $P_2$ from $a$ to
   $b$ exist. Because of $t$-equivalence, at least one of these two path (let us
   say $P_1$) contains neither $\ell$ nor $\ell'$. By hypothesis of
   localization,
   \begin{gather}
      n'_0(P_1) = 0 = n_0(P_1) \ .
   \end{gather}
   Gauge equivalence of $n_0$ and $n'_0$ follows from
   theorem~\ref{theo:gauge-equivalence-0} (with the observation that it is
   always possible to find a tree that contains $P_1$).
\end{proof}

\bigskip

\begin{theorem}\label{theo:simple-0b}
   Let $n_0^1$ be a gauge field localized on $\ell_1$ and let $n_0^2$ be a gauge
   field localized on a cut-set $S = \{\ell_2,\ell'_2\}$ which disconnects $a$
   and $b$. Then $n_0^1$ and $n_0^2$ are not gauge equivalent.
\end{theorem}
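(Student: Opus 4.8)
The plan is to separate $n_0^1$ and $n_0^2$ using a single gauge-invariant quantity, namely the Wilson line $n_0(P)$ along a path $P$ from $a$ to $b$ (see the discussion of parallel transports). Since Wilson lines are gauge invariant, it is enough to exhibit one path $P$ from $a$ to $b$ on which $n_0^1(P) \neq n_0^2(P)$; gauge equivalence would force all such Wilson lines to agree (equivalently, this is precisely one of the invariants appearing in Theorem~\ref{theo:gauge-equivalence-0}).

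First I would analyze the component structure induced by the cut-set. Because $S = \{\ell_2,\ell'_2\}$ is a cut-set that disconnects $a$ and $b$ and $\mathcal{G}$ is connected, $\mathcal{G}-S$ has exactly two connected components, $\mathcal{V}_a \ni a$ and $\mathcal{V}_b \ni b$. Minimality of the cut-set forces each of $\ell_2$ and $\ell'_2$ to have one endpoint in $\mathcal{V}_a$ and one in $\mathcal{V}_b$, while every line outside $S$ must have both endpoints in the same component (otherwise $\mathcal{V}_a$ and $\mathcal{V}_b$ would remain connected in $\mathcal{G}-S$). Consequently a path changes component exactly when it traverses a line of $S$, so any path from $a$ to $b$, starting in $\mathcal{V}_a$ and ending in $\mathcal{V}_b$, must use an odd number of lines of $S$. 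As a path is simple, each line is used at most once, and with $|S|=2$ this means it uses exactly one of $\ell_2,\ell'_2$. Hence for every path $P$ from $a$ to $b$ one has $n_0^2(P) = [P:\ell_2]\,n_0^2(\ell_2)$ or $n_0^2(P) = [P:\ell'_2]\,n_0^2(\ell'_2)$, which is nonzero because (by the definition of localization) $n_0^2$ is nonzero on both lines of $S$.

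Next I would use one-particle irreducibility to produce the distinguishing path. Since $\mathcal{G}$ is 1PI between $a$ and $b$ (Proposition~\ref{prop:1PI}), deleting the single line $\ell_1$ leaves $a$ and $b$ connected, so there exists a path $P_0$ from $a$ to $b$ with $\ell_1 \notin P_0$. Because $n_0^1$ is localized on $\ell_1$, we get $n_0^1(P_0)=0$, whereas the previous step gives $n_0^2(P_0)\neq 0$. The two fields therefore disagree on the Wilson line along $P_0$, and since gauge-equivalent fields share all Wilson lines, $n_0^1$ and $n_0^2$ cannot be gauge equivalent.

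The main obstacle is the component-counting step: one must argue carefully, from minimality of the cut-set, that both lines of $S$ straddle $\mathcal{V}_a$ and $\mathcal{V}_b$ while every other line stays inside a single component, and then convert this into the parity statement that each $a$-to-$b$ path crosses $S$ an odd—hence, for a two-element cut-set, exactly one—number of times. Everything after that is immediate. I note that the precise values $n_0^2(\ell_j)=-1$ from Theorem~\ref{theo:one-0} play no role here; only $n_0^2(\ell_j)\neq 0$ is used.
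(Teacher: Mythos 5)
Your proof is correct, and it reaches the same distinguishing invariant as the paper --- the gauge-invariant Wilson line $n_0(P)$ along a path from $a$ to $b$ --- but it constructs the separating path by a genuinely different route. The paper invokes the consequence of one-particle irreducibility stated in its preliminaries (essentially Menger's theorem, via theorem 5-4 of the graph-theory reference): two \emph{disjoint} paths $P$, $P'$ from $a$ to $b$ exist; since $S$ disconnects $a$ and $b$ it meets both, disjointness forces $\ell_2 \in P$, $\ell'_2 \in P'$ with one $S$-line per path, and $\ell_1$ can lie in at most one of the two paths, so the other one serves as the witness. You instead prove a stronger intermediate statement --- that \emph{every} path from $a$ to $b$ carries exactly one line of $S$, hence a nonzero $n_0^2$-Wilson line --- via the component/parity analysis of the two-element cut-set, and then you only need the bare definition of 1PI (deleting the single line $\ell_1$ leaves $a$ and $b$ connected) to produce a path avoiding $\ell_1$. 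Your route is more elementary, in that it bypasses the disjoint-paths theorem entirely, at the cost of the cut-set structure lemma (each line of $S$ straddles the two components of $\mathcal{G}-S$, which you correctly derive from minimality); the paper's route is shorter given that the disjoint-paths fact is already established and reused throughout its appendix. Both proofs correctly use only $n_0^2(\ell_{2}), n_0^2(\ell'_2) \neq 0$ rather than the specific values $-1$, and your appeal to gauge invariance of $a$-to-$b$ Wilson lines for the temporal component (which holds because admissible gauge transformations satisfy $\lambda_0(a)=\lambda_0(b)=0$) matches the implication $1 \Rightarrow 2$ of Theorem~\ref{theo:gauge-equivalence-0}.
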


\begin{proof}
   Since $\mathcal{G}$ is 1PI, two disjoint paths $P$ and $P'$ from $a$ to $b$
   exist. Since $S$ disconnects $a$ and $b$, it intesects both paths. With no
   loss of generality we can assume $\ell_2 \in P$ and $\ell'_2 \in P'$. Also,
   with no loss of generality, we can assume that $\ell_1 \not\in P$. Then
   \begin{gather}
      n^1_0(P) = [P : \ell_1] n^1_0(\ell_1) = 0
      \ , \\
      n^2_0(P) = [P : \ell_2] n^2_0(\ell_2) \neq 0
      \ .
   \end{gather}
   Since the Wilson line is gauge invariant, $n^1_0$ and $n^2_0$ are not gauge
   equivalent.
\end{proof}

\bigskip

\begin{theorem}\label{theo:simple-0c}
   Let $n_0^1$ and $n_0^2$ be gauge fields localized on
   $S_1=\{\ell_{11},\ell_{12}\}$ and $S_2=\{\ell_{21},\ell_{22}\}$ respectively,
   where $S_1$ and $S_2$ are cut-sets which disconnect $a$ and $b$. With no loss
   of generality we assume that $i(\ell_{jk})$ is connected with $a$ in
   $\mathcal{G}-S_j$, for $j,k=1,2$. Assume that
   $n_0^j(\ell_{j1})=n_0^j(\ell_{j2})=\nu_j$ for $j=1,2$. $n_0^1$ and $n_0^2$
   are gauge equivalent if and only if $\nu_1=\nu_2$.
\end{theorem}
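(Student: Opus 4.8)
The plan is to reduce the statement entirely to the characterization of gauge equivalence in terms of Wilson loops and Wilson lines provided by theorem~\ref{theo:gauge-equivalence-0}. Fixing a $2$-tree $\mathring{T}$ which disconnects $a$ and $b$, that theorem says two gauge fields are equivalent if and only if they agree on every Wilson loop $n_0(C)$, $C \in \mathbf{C}(\mathring{T})$, and on every Wilson line $n_0(P)$, $P \in \mathbf{P}_{ab}(\mathring{T})$. The key preliminary observation, which I would establish first, is that a gauge field localized on a two-element cut-set $S=\{\ell_1,\ell_2\}$ disconnecting $a$ and $b$, with equal value $\nu$ on both lines, has an especially rigid profile: \emph{all} its Wilson loops vanish, while \emph{every} Wilson line from $a$ to $b$ equals $\nu$.

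For the Wilson loops I would invoke the cycle--cut orthogonality relation (theorem 2-14 of \cite{nakanishi1971graph}), exactly as used in the proof of theorem~\ref{theo:s-localization}. Orienting $S_j$ so that $[S_j:\ell_{j1}]=[S_j:\ell_{j2}]=1$---consistent with the stated convention that $i(\ell_{jk})$ lies on the $a$-side of $\mathcal{G}-S_j$, hence both initial endpoints are in the same component---the relation gives, for any loop $C$, that $[C:\ell_{j1}][S_j:\ell_{j1}]+[C:\ell_{j2}][S_j:\ell_{j2}]=0$. In particular a loop cannot contain exactly one line of the cut-set (consistent with the two lines being $s$-equivalent, theorem~\ref{theo:s-equivalence}). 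Therefore $n_0^j(C)=\nu_j\big([C:\ell_{j1}]+[C:\ell_{j2}]\big)=0$ if $C$ contains both lines, and $n_0^j(C)=0$ trivially if it contains neither; all Wilson loops of $n_0^j$ vanish.

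For the Wilson line I would argue that any path $P$ from $a$ to $b$ uses exactly one line of $S_j$. Since $\mathcal{G}-S_j$ has precisely the two components $\mathcal{G}_a\ni a$ and $\mathcal{G}_b\ni b$, and each line of $S_j$ joins $\mathcal{G}_a$ to $\mathcal{G}_b$, a path with distinct vertices must switch components an odd number of times, so it traverses exactly one of $\ell_{j1},\ell_{j2}$. With the chosen orientation this line is crossed from its $i$-endpoint to its $f$-endpoint, whence $[P:\ell_{jk}]=+1$ and $n_0^j(P)=\nu_j$, independently of $P$.

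With these two facts the theorem is immediate. For the forward direction, gauge equivalence of $n_0^1$ and $n_0^2$ forces equality of their gauge-invariant Wilson lines along a common path $P$ from $a$ to $b$, so $\nu_1=n_0^1(P)=n_0^2(P)=\nu_2$. Conversely, if $\nu_1=\nu_2=\nu$, then both fields have vanishing Wilson loops and Wilson line equal to $\nu$ along every element of $\mathbf{P}_{ab}(\mathring{T})$, so theorem~\ref{theo:gauge-equivalence-0} yields gauge equivalence. I expect the only delicate point to be the orientation and sign bookkeeping---checking that the cut-set orientation $[S_j:\cdot]$ is the one matching the convention on $i(\ell_{jk})$, and that the odd-crossing count collapses to exactly one crossing---while the remaining algebra is routine.
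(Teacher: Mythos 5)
Your proof is correct, and its skeleton matches the paper's: show that both fields have vanishing Wilson loops and that their Wilson lines from $a$ to $b$ equal $\nu_j$, then conclude via gauge invariance of Wilson lines and theorem~\ref{theo:gauge-equivalence-0}. The differences are in execution. For the loop part the two arguments are the same, both resting on cycle--cut orthogonality (theorem 2-14 of ref.~\cite{nakanishi1971graph}); the paper phrases it through $s$-equivalence of the two cut lines (theorem~\ref{theo:s-equivalence}), you apply the orthogonality relation directly. For the path part the routes diverge: the paper invokes the 1PI property (proposition~\ref{prop:1PI}) to exhibit two \emph{disjoint} paths $P_1,P_2$ from $a$ to $b$, evaluates the Wilson lines only along those, and closes with statement~3 of theorem~\ref{theo:gauge-equivalence-0} applied to a tree containing $P_1$; you instead prove the stronger universal claim that \emph{every} $a$-to-$b$ path crosses the cut-set exactly once (the odd-crossing parity argument, using that a path visits each line at most once), so every Wilson line equals $\nu_j$, and you close with statement~2 on a 2-tree. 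Your variant buys a small amount of generality---it needs only connectedness, not 1PI---at the cost of carrying the exactly-one-crossing argument, which the paper sidesteps by getting its paths for free from 1PI. The only point you should make explicit in the converse direction is the existence of a 2-tree disconnecting $a$ and $b$ (delete one line of the $a$--$b$ path from any tree), so that statement~2 of theorem~\ref{theo:gauge-equivalence-0} is actually applicable; this is standard and used implicitly throughout appendix~\ref{app:epsilon}, so it is a presentational gap rather than a mathematical one.
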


\begin{proof}
   Let $C$ be any loop. Since $\ell_{11}$ and $\ell_{12}$ are $s$-equivalent,
   (theorem \ref{theo:s-equivalence}) two possibilities are given: either $C
   \cap S_1 = \varnothing$ or $S_1 \subseteq C$. If $C \cap S_1 = \varnothing$,
   since $n^1_0$ is localized on $S_1$, trivially $n^1_0(C)=0$. If $S_1 \subseteq C$,
   \begin{gather}
      n^1_0(C) = [C:\ell_{11}] n_0^1(\ell_{11}) + [C:\ell_{12}] n_0^1(\ell_{12}) = [C:\ell_{11}] \nu_1 + [C:\ell_{12}] \nu_1 \ .
   \end{gather}
   Since $S_1$ is a cut-set, and given the orientation of $\ell_{1j}$, it is
   clear that $C$ must have the same direction as one of the two lines and the
   opposite direction of the other (this is also a consequence of theorem 2-14
   of \cite{nakanishi1971graph}), i.e.~$[C:\ell_{11}] = - [C:\ell_{12}]$. It
   follows that $n^1_0(C) = 0$.
   The same argument can be repeated symmetrically for $n^2_0$ yielding
   \begin{gather}
      n^1_0(C) = 0 = n^2_0(C) \ ,
   \end{gather}
   for any loop $C$.
   Since $\mathcal{G}$ is 1PI, two disjoint paths $P_1$ and $P_2$ from $a$ to
   $b$ exist. Since $S_j$ disconnects $a$ and $b$, it intesects both paths. With
   no loss of generality we can assume $\ell_{j1} \in P_1$ and $\ell_{j2} \in
   P_2$. Also, given the orientation of $\ell_{1j}$, it is clear that
   $[P_k:\ell_{jk}]=1$. It follows that
   \begin{gather}
      n^j_0(P_k) = [P_k:\ell_{jk}] \nu_j = \nu_j \ ,
   \end{gather}
   for $j,k=1,2$.
   By theorem \ref{theo:gauge-equivalence-0} (with the observation that it is
   always possible to find a tree that contains $P_1$), $n^1_0$ and $n^2_0$ are
   gauge equivalent if and only if $n^1_0(P_1) = n^2_0(P_1)$, i.e.~$\nu_1 =
   \nu_2$.
\end{proof}

\section{Analitycity of 1PI vertices}
\label{app:analyticity}

\begin{theorem} \label{theo:ana1}
   To all order in perturbation theory, the 1PI proper vertices
   \begin{gather}
      \Gamma^{\pi\gamma\pi}_{q \mu \bar{q}}(p-\tfrac{k}{2},k,-p-\tfrac{k}{2})
      \ , \qquad
      \Gamma^{\pi\gamma\gamma\pi}_{q \mu_1 \mu_2 \bar{q}}(p,k,-k,-p)
      \ ,
   \end{gather}
   which are initially defined for $(p,k) \in \mathbb{R}^4 \times \mathbb{R}^4$,
   analytically extend to the domain
   \begin{gather}
      \{ (p,k) \in \mathbb{C}^4 \times \mathbb{C}^4 \ | \ (\Im p \pm \Im k)^2 < 4m^2 \}
      \ .
   \end{gather}
   In particular, if $k$ and $\vec{p}$ are real, the 1PI proper vertices are
   analytic in the strip $|\! \Im p_0| < 2m$.
\end{theorem}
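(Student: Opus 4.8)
The plan is to prove the statement diagram by diagram: each proper vertex is a (UV-regulated) sum of momentum-space Feynman integrals over 1PI graphs, and I will show that each such integral, manifestly analytic for real external momenta, extends analytically to the stated domain by a contour deformation of the loop momenta. The only input about propagators is an elementary mass-gap lemma. Writing a Euclidean line momentum as $q=q_R+iq_I$ with $q_R$ real, one has $\Re(q^2+m^2)=q_R^2-q_I^2+m^2$, which is strictly positive for every real $q_R$ as soon as $q_I^2<m^2$. Hence, if the integration contours can be deformed so that \emph{every} internal line carries an imaginary momentum $q_I$ with $q_I^2<m^2$, then every denominator stays in the open right half-plane, the integrand is jointly analytic in the external momenta, and—thanks to the assumed ultraviolet cutoff—is locally uniformly integrable, so the integral defines an analytic function of $(p,k)$.

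The problem thus reduces to a purely graph-theoretic one: given the external imaginary momenta as sources, does there exist a momentum routing (an $\mathbb{R}^4$-valued flow on the internal lines, consistent with momentum conservation at every vertex, including the current-insertion vertices) in which each line's imaginary momentum has square below $m^2$? Here the 1PI property enters decisively. As recalled in the main text (following \cite{\MartinStable}, theorem~2.2, and \cite{\GraphBook}, theorem~5-4), a graph that is 1PI between two vertices admits two line-disjoint paths joining them. I will use this to distribute each unit of external imaginary momentum equally over two disjoint paths, so that each internal line on a given path carries only half of it. For $\Gamma^{\pi\gamma\gamma\pi}(p,k,-k,-p)$ I route $\Im p$ between the two external pion legs over two disjoint paths and $\Im k$ between the two current legs over two further disjoint paths; a line lying on one path of each type then carries $\tfrac12(\pm\Im p\pm\Im k)$, whose square is bounded by $\tfrac14\max\{(\Im p+\Im k)^2,(\Im p-\Im k)^2\}$, while a line on a single path carries $\tfrac12\Im p$ or $\tfrac12\Im k$. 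The hypothesis $(\Im p\pm\Im k)^2<4m^2$ makes the first square smaller than $m^2$, and—using $|\Im p|,|\Im k|\le\tfrac12(|\Im(p+k)|+|\Im(p-k)|)<2m$—it also controls the single-path lines, so every internal line sits below the mass gap.

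For the three-point vertex $\Gamma^{\pi\gamma\pi}(p-\tfrac{k}{2},k,-p-\tfrac{k}{2})$ the external imaginary momenta are $\Im(p-\tfrac{k}{2})$, $\Im k$ and $-\Im(p+\tfrac{k}{2})$; the binding cuts are the three one-leg cuts, each of which, by 1PI-ness, contains at least two lines, giving thresholds $(\Im(p\pm\tfrac{k}{2}))^2<4m^2$ and $(\Im k)^2<4m^2$. Since $\Im(p\pm\tfrac{k}{2})=\tfrac12\Im(p\pm k)+\tfrac12\Im p$, the stated domain is contained in this (larger) one, and the same disjoint-path routing certifies analyticity there; restricting to $(\Im p\pm\Im k)^2<4m^2$ then gives the theorem. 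Finally, analyticity is obtained not merely at the endpoint but along the whole continuation: scaling $(\Im p,\Im k)\to t(\Im p,\Im k)$ for $t\in[0,1]$ and scaling the routing accordingly keeps every line strictly below the gap for all $t$, so no propagator singularity is ever crossed and the deformed contour computes the analytic continuation of the real integral (Morera/Cauchy).

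The hard part will be the combinatorial flow construction in full generality. The clean two-disjoint-path splitting is transparent for the four-point vertex, but for general graphs one must show that the multi-terminal, $\mathbb{R}^4$-valued routing keeping every line below $m$ exists throughout the domain—essentially a vector-valued max-flow/min-cut statement in which the minimal cuts are guaranteed to contain at least two lines by the 1PI hypothesis, and in which the current insertions, which attach to internal pion lines rather than contributing independent propagators, must be treated with care. A convenient technical vehicle for packaging the bound uniformly over all graphs is the Schwinger-parameter (Symanzik) representation: after the Gaussian loop integrations the exponent is $-m^2\sum_\ell s_\ell-\mathcal{F}(s,P)/\mathcal{U}(s)$, and one must show that $\Re[\mathcal{F}(s,P)/\mathcal{U}(s)]\ge-(m^2-\delta)\sum_\ell s_\ell$ on the domain, with the factor $4m^2$ emerging precisely from the two-line lower bound on cuts. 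This is where the graph theory of the earlier appendices—cut-sets, 2-trees, and disjoint paths—is put to work.
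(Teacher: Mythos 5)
For context: the paper does not actually prove this theorem itself; its ``proof'' is a one-line citation of theorem 2.3 in ref.~\cite{\MartinStable}. Your strategy---diagram-by-diagram contour shifts justified by a momentum routing whose existence follows from one-particle irreducibility---is precisely the technique behind that result, and it is the same method the paper uses explicitly for the neighbouring theorem~\ref{theo:ana2}. Your treatment of the four-point vertex $\Gamma^{\pi\gamma\gamma\pi}(p,k,-k,-p)$ is correct and, contrary to the caution in your closing paragraph, already fully general: two line-disjoint paths for $\Im p$ between the pion legs and two for $\Im k$ between the current insertions (both pairs exist by 1PI-ness) leave every internal line carrying one of $\tfrac12\Im p$, $\tfrac12\Im k$, $\tfrac12(\pm\Im p\pm\Im k)$, all with Euclidean square below $m^2$ on the stated domain, and your scaling-in-$t$ argument correctly shows no singularity is crossed during the deformation.

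The genuine gap is in the three-point case. You claim analyticity on the larger domain carved out by the one-leg cuts, $(\Im(p\pm\tfrac{k}{2}))^2<4m^2$ and $(\Im k)^2<4m^2$, on the grounds that each such cut contains at least two lines and ``the same disjoint-path routing certifies analyticity there.'' That is a scalar min-cut argument, and it fails for vector-valued flows: for the one-loop triangle, a valid routing is a single real four-vector $c$ lying simultaneously in three balls of radius $m$ whose centers are pairwise separated by the norms of the three leg momenta. Taking all three imaginary leg momenta of norm $2m-\epsilon$ and summing to zero (mutual angles $120^\circ$), the centers form an equilateral triangle of side $2m-\epsilon$ with circumradius $(2m-\epsilon)/\sqrt{3}>m$, so no such $c$ exists---this is exactly the Fermat--Toricelli obstruction the paper itself exploits in theorem~\ref{theo:sqrt} to obtain $\sqrt{2+\sqrt{3}}<2$ instead of the naive cut value $2$. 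Hence your intermediate claim is unsupported and the deduction ``analytic on the larger domain, hence on the stated one'' collapses. The theorem's own domain can still be handled by your method, but it requires a routing you never construct: send $\Im p$ in halves along two disjoint pion-to-pion paths, and split $\Im k$ at the photon vertex into two halves carried along line-disjoint paths to the two pion legs (these exist by 1PI-ness via a Menger-type argument); then every line carries $\tfrac12\Im p$, $\tfrac12\Im k$ or $\tfrac12(\Im p\pm\Im k)$ up to sign---for the triangle this is the choice $c=-\tfrac12\Im p$. Without this construction (and with your final paragraph conceding that the general combinatorial lemma is left open), the proposal is a correct plan with its key step missing in the three-point case.
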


\begin{proof}
   Modulo a trivial mapping of notation, this theorem is nothing but theorem 2.3
   in~\cite{\MartinStable}.
\end{proof}

\bigskip

\begin{theorem} \label{theo:ana2}
   Define $\bar{p} = ( i E(\vec{p}) , \vec{p} )$ and $E(\vec{p}) = \sqrt{m^2 +
   \vec{p}^2}$. Let $\vec{p}$ be a real vector satisfying $\vec{p}^2 <
   3m^2$, and let $\vec{k}$ be a generic real vector. To all orders in
   perturbation theory, the 1PI proper vertex
   \begin{gather}
      \Gamma^{\pi\gamma\pi}_{q \mu \bar{q}}(\bar{p},k,-\bar{p}-k)
      \ ,
   \end{gather}
   is an analytic function of $k_0$ in the strip
   \begin{gather}
      -m < \Im k_0 < 2m - E(\vec{p}) \ .
   \end{gather}
\end{theorem}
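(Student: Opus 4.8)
The plan is to reach the claimed strip by a dispersive (spectral) representation of the half‑on‑shell vertex in the photon energy $k_0$, with Theorem~\ref{theo:ana1} used only to anchor the continuation rather than to supply the whole domain. First I would record why a single application of Theorem~\ref{theo:ana1} is not sufficient. Matching $\Gamma^{\pi\gamma\pi}_{q\mu\bar q}(\bar p,k,-\bar p-k)$ to the symmetric routing $\Gamma^{\pi\gamma\pi}_{q\mu\bar q}(P-\tfrac{K}{2},K,-P-\tfrac{K}{2})$ forces $P=\bar p+\tfrac{k}{2}$, $K=k$, so that (with $\vec p,\vec k$ real) $\Im(P+K)=(E(\vec p)+\tfrac32\Im k_0,\vec 0)$ and $\Im(P-K)=(E(\vec p)-\tfrac12\Im k_0,\vec 0)$. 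The tube condition $(\Im(P\pm K))^2<4m^2$ is then met only on a proper sub‑strip, narrower than $2m-E(\vec p)$ on the upper side. Hence Theorem~\ref{theo:ana1} guarantees analyticity of the 1PI building blocks in an overlapping real window of $k_0$, but the full strip must come from the additional structure carried by the on‑shell leg.

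Second, I would set up the spectral representation in close analogy with eqs.~\eqref{eq:tcalc:T1}--\eqref{eq:tcalc:TMP}. Writing the vertex as it enters $\mathcal{A}$ in eq.~\eqref{eq:calc:Adef}, i.e.~after LSZ on the $\bar p$ leg, I would express it as a matrix element of time‑ordered operators between $|\pi_q(\vec p)\rangle$ and the vacuum, with the off‑shell leg carrying fixed spatial momentum, insert a complete set of intermediate states, and separate the two time orderings. Each ordering produces a term with an energy denominator controlled by the lowest intermediate‑state energy $\omega$ in its channel; because the vertex is 1PI, all single‑pion poles are absent, so the continua begin only at genuine multi‑particle thresholds. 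The role of Theorem~\ref{theo:ana1} is then to certify that this spectral form agrees with the skeleton expression of eq.~\eqref{eq:calc:compton1} on the real overlap window, so that the spectral form supplies the analytic continuation beyond it.

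Third, I would locate the edges. The normal thresholds are harmless inside the target strip: the photon‑channel branch point $k^2=-(2m)^2$ gives $|\Im k_0|\ge 2m$, and the pion‑channel branch point $p_2^2=(\bar p+k)^2=-(3m)^2$ likewise lies outside $(-m,2m-E(\vec p))$, as one checks directly from $p_2^2=-m^2+2iE(\vec p)k_0+2\vec p\!\cdot\!\vec k+k_0^2+\vec k^2$. The binding constraints therefore originate from the on‑shell leg. The decisive input is the hypothesis $\vec p^2<3m^2$, equivalently $E(\vec p)<2m$, which places the incoming pion below the two‑particle threshold: the nearest singularity generated when the current energy drives a two‑pion intermediate state on shell then appears only at $\Im k_0=2m-E(\vec p)$, yielding the upper edge, while the complementary ordering pins the lower edge at $\Im k_0=-m$. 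The nonempty width of the strip is exactly the statement that $E(\vec p)<2m$.

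The hard part will be controlling the \emph{anomalous} (Landau) singularities: because the external pion sits on its own mass shell, triangle‑type subgraphs can push singularities off the normal thresholds, and the whole content of the theorem is that none of these enters the open strip $-m<\Im k_0<2m-E(\vec p)$. This is precisely where the bound $E(\vec p)<2m$ and the $G$‑parity/isospin selection of intermediate states (only the appropriate even or odd pion numbers contribute in each channel) must be used, and it is what forces a non‑perturbative argument: the spectral representation resums all diagrams at once and is what makes the anomalous‑threshold bookkeeping tractable, reducing the claim to the elementary energy estimate $\omega-E(\vec p)\ge\sqrt{(2m)^2+\vec p^2}-\sqrt{m^2+\vec p^2}>0$ already encountered in the finite‑$T$ analysis.
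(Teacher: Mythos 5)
Your strategy (a spectral/dispersive representation of the half-on-shell vertex in $k_0$ at fixed real $\vec p$, $\vec k$) is genuinely different from the paper's proof, which is purely diagrammatic: there, 1PI-ness guarantees two disjoint paths between the two pion legs, the on-shell momentum is split as $\alpha \bar p$ and $(1-\alpha)\bar p$ along them with $k$ routed along a third path, and the per-line condition $[\Im Q(\ell)]^2 < m^2$ yields the strip, the constraint $1 - m/E(\vec p) < \alpha < m/E(\vec p)$ being solvable precisely because $\vec p^2 < 3m^2$. Your opening observation --- that a single application of Theorem~\ref{theo:ana1} with $P = \bar p + \tfrac{k}{2}$, $K=k$ only gives $\Im k_0 < \tfrac{2}{3}\,(2m - E(\vec p))$ on the upper side --- is correct and well taken. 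However, as a proof the proposal has a genuine gap: its core step is, by your own admission, not carried out. You defer ``controlling the anomalous (Landau) singularities'' to the bound $E(\vec p)<2m$ plus G-parity, and assert that the spectral representation ``reduces the claim to the elementary energy estimate''; but that assertion \emph{is} the content of the theorem. Note that if the spectral argument is set up correctly --- exponential decay of the Euclidean position-space correlator in both time directions implies analyticity of its Fourier transform in the corresponding strip --- then no separate Landau bookkeeping is needed at all: singularities cannot lie inside the strip where the Fourier transform converges. That your plan still contains an unresolved ``hard part'' signals that the decay-rate argument was never actually pushed through.

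Second, the threshold bookkeeping you do present is incorrect for this object, because you have imported the channels of the 4-point Compton function into the 3-point vertex. For $\langle 0 | \mathrm{T}\{ j_\mu \, \phi \} | \pi_q(\vec p)\rangle$ at fixed real $\vec p,\vec k$, one time ordering is controlled by the vacuum--current channel, whose lowest state is a two-pion state of energy $\geq \sqrt{4m^2+\vec k^2}$ (there is no one-pion term, since $\langle 0 | j_\mu | \pi \rangle = 0$), and the other by the pion-interpolator channel, where G-parity allows only $1\pi$ (removed by amputation, i.e.\ by the factor $D^{-1}(p_2)$ whose self-energy cut you must also track) and then $3\pi$, of energy $\geq \sqrt{9m^2+(\vec p+\vec k)^2}$. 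Neither channel produces an edge at $-m$ or at $2m-E(\vec p)$: a correct execution of your argument would give a strictly \emph{wider} strip, and the paper's edges are artifacts of its momentum-routing construction, not physical thresholds of the vertex. (Proving a wider strip would of course still prove the theorem; but your claim that two-pion states ``driven by the current'' sit at $\Im k_0 = 2m - E(\vec p)$, with ``the complementary ordering'' pinning $-m$, transplants the structure of eqs.~\eqref{eq:tcalc:SMP}--\eqref{eq:tcalc:TMP}, where intermediate states carry momentum $\vec p$ and the denominators involve $\omega - E(\vec p)$, into the 3-point function where it does not apply.) Finally, the statement is ``to all orders in perturbation theory'': the physical-spectrum decomposition applies to the exact correlator, and you would still need to argue that the perturbative spectral functions have, order by order, the same support --- a step the paper's per-diagram routing proof gets for free.
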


\begin{proof}
   Let $\mathcal{G}$ be the abstract graph associated to a generic Feynman diagram
   contributing to the proper vertex function. Let $v$ and $w$ be the vertices
   associated to the insertions of the external pion fields, and let $a$ be the vertex
   associated to the insertion of the electromagnetic current. Since
   $\mathcal{G}$ is 1PI, two disjoint paths $P_1$ and $P_2$ from $v$ to $w$
   exist. One can also easily construct a path $P_\gamma$ from $a$ to a vertex
   $z$ with the properties that $P_\gamma \cap (P_1 \cup P_2) = \varnothing$ and
   $z$ is an endpoint of one of the lines in $P_1 \cup P_2$. Note that
   $P_\gamma$ may be empty if $a$ itself is an endpoint of one of the lines in
   $P_1 \cup P_2$. In this case, we set $z=a$. With no loss of generality we can
   assume that $z$ is an endpoint of one of the lines in $P_1$, then $P_1$ is
   split in the disjoint union of two paths: $P_{vz}$ from $v$ to $z$, and
   $P_{zw}$ from $z$ to $w$. Note that one of these two paths may be empty, if
   $z$ coincides with either $v$ or $w$.
   
   We can use the paths constructed above to define a flow $Q(\ell)$ of the
   external momenta through the graph $\mathcal{G}$ as follows
   \begin{alignat}{3}
      Q(\ell) = \, \, & \alpha \bar{p}  &\hspace{2cm}& \text{if } \ell \in P_{vz} \ , \\
      Q(\ell) = \, \, & \alpha \bar{p} + k          && \text{if } \ell \in P_{zw} \ , \\
      Q(\ell) = \, \, & (1-\alpha) \bar{p}          && \text{if } \ell \in P_2 \ , \\
      Q(\ell) = \, \, & k                           && \text{if } \ell \in P_\gamma \ , \\
      Q(\ell) = \, \, & 0                           && \text{otherwise} \ ,
   \end{alignat}
   where $\alpha \in [0,1]$ is an adjustable parameter.
   
   The inverse propagator associated to the line $\ell$ is
   \begin{gather}
      [ Q(\ell)+q(\ell) ]^2 + m^2 = 2 i [ \Re Q(\ell)+q(\ell) ] \Im Q(\ell) + [ \Re Q(\ell)+q(\ell) ]^2 - [ \Im Q(\ell) ]^2 + m^2 \ ,
   \end{gather}
   where $q(\ell)$ is a real loop momentum. The condition $[ \Im Q(\ell) ]^2 <
   m^2$ for every line $\ell$ is sufficient to guarantee regularity of the
   Feynman integrand, hence analyticity of the Feynman integral. With the momentum routing that we have chosen, this condition
   is equivalent to a pair of  inequalities
   \begin{gather}
      1 - \frac{m}{E(\vec{p})} < \alpha < \frac{m}{E(\vec{p})} \ , \qquad \qquad %
      -m < \Im k_0 < m - \alpha E(\vec{p}) \ .
   \end{gather}
The first inequality admits a solution because, in the kinematic regime of interest,
   \begin{gather}
      \frac{1}{2} < \frac{m}{E(\vec{p})} \le 1 \ .
   \end{gather}
To give an explicit construction we choose a small $\epsilon>0$, and set
   \begin{gather}
      \alpha = 1 - \frac{m}{E(\vec{p})} + \frac{\epsilon}{E(\vec{p})} \ .
   \end{gather}
   The Feynman integral is thus analytic for
   \begin{gather}
      -m < \Im k_0 < 2m - E(\vec{p}) - \epsilon \ .
   \end{gather}
\end{proof}

\section{Pole and regular parts of the Compton scattering amplitude}
\label{sec:CompDecom}
In order to estimate the finite-$L$ corrections, we need to provide an expression
for the Compton scattering amplitude in the space-like region. A useful step in this direction is to decompose the amplitude into a pole and regular piece as we have done in eq.~\eqref{eq:estL:Compton}. 
To derive this relation we begin by substituting the
identiy
\begin{gather}
   \mathcal{J}_\rho(x) = e^{i P_\mu x^\mu} \mathcal{J}_\rho(0) e^{-i P_\mu x^\mu} \ ,
\end{gather}
into
eq.~\eqref{eq:stat:Compton}. Setting the Minkowski four-vectors to $p=(E(p_3),p_3 \vec{e}_3)$ and $k=(0,k_3 \vec{e}_3)$  (with $E(p) = \sqrt{m^2+p^2}$), 
and  integrating over $x$, one obtains the
representation
\begin{multline}
   T(-k_3^2,-p_3k_3)
   =
   \lim_{p'_3 \to p_3} \sum_{q=0,\pm 1}
   \langle p'_3 \vec{e}_3, q | \mathcal{J}_\rho(0) \frac{(2\pi)^3 \delta(P_1) \delta(P_2) \delta(P_3-p_3-k_3)}{H-\sqrt{m^2+p_3^2}-i\epsilon} \mathcal{J}^\rho(0) | p_3 \vec{e}_3, q \rangle
   \\+ (k_3 \to -k_3)
   \ .
   \label{eq:stat:Compton-Hamiltonian}
\end{multline}

The one-pion and multi-pion contributions to the Compton scattering amplitude
can then be separated by inserting the identity between the two currents in the form
\begin{gather}
   1 = | \Omega \rangle \langle \Omega | + \sum_{q=0,\pm 1} \int \frac{d^3 p}{(2\pi)^3 2 E(\vec{p}) } | \vec{p},q \rangle \langle \vec{p},q | + \theta(M-2m) \ ,
\end{gather}
where $M = (P_\mu P^\mu)^{1/2}$ is the mass operator, and we have used the fact
that $M$ has two discrete eigenvalues corresponding to the vacuum and the
one-pion states, and then a gap up to the two-pion threshold. 

The one-pion
matrix elements of the electromagnetic current are written in terms of the
electromagnetic form factor $F(Q^2)$ of the pion, defined in eq.~\eqref{eq:pionFF}.
Some lengthy but straightforward algebra yields the expression
\begin{gather}
   T^\text{1P}(k_3,p_3)
   =
   \frac{\mathcal{F}(k_3,p_3)}{ k_3^2 + 2k_3p_3 - i\epsilon }
   + \frac{\mathcal{F}(-k_3,p_3)}{ k_3^2 - 2k_3p_3 - i\epsilon }
   \ ,
\end{gather}
with the definition
\begin{multline}
   \mathcal{F}(k_3,p_3)
   =
   \left[ 2m^2 + 2 E(p_3) E(p_3+k_3) - 2 p_3 (p_3+k_3) \right]
   \left[ E(p_3) + E(p_3+k_3) \right] 
 \times \\ \times
  [E(p_3+k_3)]^{-1}  \ F \!\left[ 2m^2 - 2 E(p_3) E(p_3+k_3) + 2 p_3 (p_3+k_3) \right]^2
   \ .
\end{multline}
 One readily proves that the argument of the form factor satisfies
\begin{gather}
   2m^2 - 2 E(p_3) E(p_3+k_3) + 2 p_3 (p_3+k_3) \le - k_3^2 \le 0 \ .
\end{gather}
Since the form factor $F(Q^2)$ is analytic for $Q^2 < (2m)^2$ \cite{\BD,Barton:1965}, the function $\mathcal{F}(k_3,p_3)$ is infinitely
differentiable in both its variables. Using the fact that
$\mathcal{F}(0,p_3)=8m^2$ does not depend on $p_3$, one can decompose
\begin{gather}
   \mathcal{F}(k_3,p_3) = \mathcal{F}(k_3,-\tfrac{k_3}{2}) + 2 k_3  \big ( p_3 + \tfrac{k_3}{2} \big) \mathcal{G}(k_3,p_3)
   \ ,
\end{gather}
where $\mathcal{G}$ is infinitely differentiable in both its variables, and
\begin{gather}
   \mathcal{F}(k_3,-\tfrac{k_3}{2})
   =
   2(4m^2 + k_3^2) \, F(-k_3^2)^2
   \ .
\end{gather}
Therefore, the one-pion contribution to the Compton scattering amplitude has the
following representation
\begin{gather}
   T^\text{1P}(k_3,p_3)
   =
   \frac{2(4m^2 + k_3^2) \, F(-k_3^2)^2}{ k_3^2 + 2k_3p_3 - i\epsilon }
   + \frac{2(4m^2 + k_3^2) \, F(-k_3^2)^2}{ k_3^2 - 2k_3p_3 - i\epsilon }
   + \mathcal{G}(k_3,p_3) + \mathcal{G}(-k_3,p_3)
   \ .
\end{gather}

The multi-pion contribution $T^\text{MP}(-k_3^2,-p_3k_3)$ can be seen to be
analytic in $p_3$ and $k_3$ as long as $p_3^2 < 3m^2$. Also note that the
integral in eq.~\eqref{eq:stat:DeltaGs} can be restricted to $p_3^2 < 3m^2$ up
to an error of order $e^{-2mL}$ that we are already neglecting. Therefore only
the analytic region contributes to the leading exponentials. It is convenient to
separate the pole and regular parts, yielding the decomposition
\begin{gather}
   T(-k_3^2,-p_3k_3) = T^\text{pole}(-k_3^2,-p_3k_3) + T^\text{reg}(-k_3^2,-p_3k_3)
   \label{eq:stat:Compton-dec}
   \ , \\
   T^\text{pole}(-k_3^2,-p_3k_3)
   =
   \frac{2(4m^2 + k_3^2) \, F(-k_3^2)^2}{ k_3^2 + 2k_3p_3 - i\epsilon }
   + \frac{2(4m^2 + k_3^2) \, F(-k_3^2)^2}{ k_3^2 - 2k_3p_3 - i\epsilon }
   \label{eq:stat:Compton-pole}
   \ , \\
   T^\text{reg}(-k_3^2,-p_3k_3)
   =
   \mathcal{G}(k_3,p_3) + \mathcal{G}(-k_3,p_3) + T^\text{MP}(-k_3^2,-p_3k_3)
   \label{eq:stat:Compton-reg}
   \ .
\end{gather}
This completes the demonstration of eq.~\eqref{eq:estL:Compton} and gives alternative definitions to the quantities appearing in that equation.

Finally we look at the integral over $\vec{p}$ in eq.~\eqref{eq:stat:DeltaGs}, defining
\begin{gather}
   \mathcal{T}(k_3^2 | L )
   =
   \int \frac{d p_3}{2\pi}
   e^{-L \sqrt{m^2+p_3^2}}
   \Re T(-k_3^2,-p_3k_3)
   \ ,
\end{gather}
and the corresponding quantities $\mathcal{T}_\text{pole}$ and
$\mathcal{T}_\text{reg}$, obtained by substituting the
decomposition~\eqref{eq:stat:Compton-dec}. The pole part is conveniently written
as
\begin{gather}
   \mathcal{T}_\text{pole}(k_3^2|L)
   =
   2(4m^2 + k_3^2) \, F(-k_3^2)^2 \, \zeta(k_3^2|L)
   \ , \\
   \zeta(k_3^2|L)
   =
   \int \frac{d p_3}{2\pi}
   \frac{
   e^{-L \sqrt{m^2+(p_3-\frac{k_3}{2})^2}}
   - e^{- L \sqrt{m^2+(p_3+\frac{k_3}{2})^2}}
   }{ 2k_3p_3 }
   \ .
   \label{eq:zetadef}
\end{gather}
An alternative form of $ \zeta(k_3^2|L)$ is given in eq.~\eqref{eq:estL:zeta}.

\bibliographystyle{JHEP}      
\bibliography{refs.bib}

\end{document}